\newcommand{\DREICNF}{\textsc{3-Cnf}}
\pgfplotsset{compat=1.4}
\pgfplotsset{filter discard warning=false}
\definecolor{shadecolor}{named}{lightgray}
\newcommand{\kk}{\alpha}
 \newcommand{\decprob}[3]{%
 {\def\descriptionlabel##1{\hspace\labelsep\quad{}\it{}##1}%
 \par\vspace{\topsep}\noindent%
 \begin{compactdesc}
 \item[\textsc{#1}]
 \item[Input:] #2
 \item[Question:] #3
\end{compactdesc}
}\vspace{\topsep}}
\newcommand{\nifty}{\textsc{Nifty}}
\newcommand{\upath}{undirected path}
\newcommand{\dpath}{directed path}
\newcommand{\ucycle}{undirected cycle}
\newcounter{casecnt}  
\newenvironment{caselist}
 {\begin{list}{Case \arabic{casecnt})~~}{\usecounter{casecnt} \labelsep=0em \labelwidth=0em \leftmargin=0em \itemsep=0em \itemindent=\parindent \parindent=1em \parskip=0em \topsep=0em}}
 {\end{list}}
\def\bigO{O}
\def\bigo{o}
\renewcommand{\cite}{\citep}
\def\NAT@spacechar{~}%
\newdefinition{definition}{Definition}
\newtheorem{theorem}{Theorem}
\newtheorem{lemma}{Lemma}
\newproof{proof}{Proof}
\newcommand{\Sol}{\ensuremath{S}}
\newcommand{\Sin}{\ensuremath{\mathcal S}}
\newcommand{\kred}{polynomial-time many-one reduction}
\newcommand{\wei}{\ensuremath{\omega}}
\newcommand{\tw}{\ensuremath{t}}
\newcommand{\fp}{fi\-xed-pa\-ra\-me\-ter}
\newcommand{\poly}{\ensuremath{\operatorname{poly}}}
\newtheorem{observation}{Observation}
\newtheorem{construction}{Construction}{\rm\bfseries}{}
\newtheorem{proc}{Procedure}{\rm\bfseries}{}
\newtheorem{reductionrule}{Reduction Rule}
\date {January 22, 2016}
\newcommand{\DAGP}{\textsc{DAG Partitioning}}
\newcommand{\partSet}{partitioning set}
\newcommand{\mcset}{multiway cut}
\newcommand{\MC}{\textsc{Multiway Cut}}
\DeclareRobustCommand{\NoPolyKernelAssume}{$\text{NP} \subseteq \text{coNP/poly}$}
\DeclareRobustCommand{\NegNoPolyKernelAssume}{$\text{NP} \nsubseteq \text{coNP/poly}$}
\newcommand{\N}{\ensuremath{\mathbb{N}}}
\begin{document}

\def\sectionautorefname{Section} \def\subsectionautorefname{Section}

\begin{frontmatter}
  \title{Fixed-Parameter Algorithms for DAG Partitioning\tnoteref{t1}}
  \tnotetext[t1]{A preliminary version of this article appeared in the
    \emph{Proceedings of the 8th International Conference on Algorithms
    and Complexity} (CIAC'13)~\citep{BBC+13}.
    Besides providing full proof details, this
    revised and extended version improves our $\bigO(2^k\cdot n^2)$-time
    algorithm to run in $\bigO(2^k\cdot(n+m))$~time and provides
    linear-time executable data reduction rules. Moreover, we
    experimentally evaluated the algorithm and compared it to known
    heuristics by \citet{LBK09}. We prove the latter to work optimally
    on trees.}

  \author[nsu,tub]{René van Bevern\corref{cor1}} \ead{rvb@nsu.ru}

  \author[tub]{Robert Bredereck}
  \ead{robert.bredereck@tu-berlin.de}

  \author[ulm]{Morgan Chopin} 

  \author[tub]{Sepp Hartung} %

  \author[tub]{Falk Hüffner} 

  \author[tub]{André Nichterlein} \ead{andre.nichterlein@tu-berlin.de}

  \author[prg,tub]{Ondřej Suchý} \ead{ondrej.suchy@fit.cvut.cz}

  \cortext[cor1]{Corresponding author.}

  \address[nsu]{Novosibirsk State University, Novosibirsk, Russian Federation}

  \address[tub]{Institut für Softwaretechnik und Theoretische
    Informatik, TU Berlin, Germany}

  \address[ulm]{Institut für Optimierung und Operations Research,
    Universität Ulm, Germany}

  \address[prg]{Faculty of Information Technology, Czech Technical
    University in Prague, Czech Republic}

\begin{abstract}
  Finding the origin of short phrases propagating through the web has
  been formalized by \citeauthor{LBK09}~[ACM SIGKDD~2009] as \textsc{DAG
    Partitioning}: given an arc-weighted directed acyclic graph on
  $n$~vertices and $m$~arcs, delete arcs with total weight at most~$k$
  such that each resulting weakly-connected component contains exactly
  one sink---a~vertex without outgoing arcs. \DAGP{} is NP-hard.

  We show an algorithm to solve \DAGP{} in
  $\bigO(2^k \cdot (n+m))$~time, that is, in linear time for fixed~$k$.
  We complement it with linear-time executable data reduction rules. Our
  experiments show that, in combination, they can \emph{optimally} solve
  \DAGP{} on simulated citation networks within five minutes for
  $k\leq190$ and $m$ being~$10^7$ and larger.  We use our obtained
  optimal solutions to evaluate the solution quality of
  \citeauthor{LBK09}'s heuristic.

  We show that \citeauthor{LBK09}'s heuristic works optimally on trees
  and generalize this result by showing that \DAGP{} is solvable in
  $2^{\bigO(\tw^2)}\cdot n$ time if a width-$\tw{}$ tree decomposition
  of the input graph is given.  Thus, we improve an algorithm and answer
  an open question of \citeauthor{AM12}~[WAW 2012].

  We complement our algorithms by lower bounds on the running time of
  exact algorithms and on the effectivity of data reduction.
\end{abstract}

\begin{keyword}
  NP-hard problem\sep graph algorithms\sep polynomial-time data
  reduction \sep multiway cut \sep linear-time algorithms \sep algorithm
  engineering \sep evaluating heuristics
\end{keyword}
\end{frontmatter}

\section{Introduction}
\thispagestyle{empty}

The \DAGP{} problem was introduced by \citet{LBK09} in order to analyze
how short, distinctive phrases (typically, parts or mutations of
quotations, also called \emph{memes}) spread to various news sites and
blogs. To demonstrate their approach, \citet{LBK09} collected and
analyzed phrases from 90 million articles from the time around the
United States presidential elections in 2008; the results were featured
in the New York Times~\citep{Loh09}. Meanwhile, their approach has grown
up into \nifty{}, a system that allows near real-time observation of
the rise and fall of trends, ideas, and topics in the internet
\citep{SHE+13}.

A core component in the approach of \citet{LBK09} is a heuristic to
solve the NP-hard \DAGP{} problem. They use it to cluster short phrases,
which may undergo modifications while propagating through the web, with
respect to their origins.  To this end, they create an arc-weighted
directed acyclic graph with phrases as vertices and draw an arc from
phrase~$p$ to phrase~$q$ if $p$~presumably originates from~$q$, where
the weight of an arc represents the support for this hypothesis: the
weight assigned to an arc~$(p,q)$ is chosen inversely proportional to
the time difference between~$p$ and~$q$ and their Levenshtein distance
when using words as tokens, whereas it is proportional to the total
number of documents in the corpus that contain the
phrase~$q$.\footnote{Unfortunately, \citet{LBK09} neither give a more
  precise description nor the range of their weights. All our hardness results
  even hold for unit weights, while all our algorithms work whenever the
  weight of each arc is at least one; however, we choose the weights to
  be positive integers to avoid representation issues.}
A vertex without outgoing arcs is called a \emph{sink} and can be
interpreted as the origin of a phrase. If a phrase has \dpath{}s to more
than one sink, its ultimate origin is ambiguous and, in the model of
\citet{LBK09}, at least one of the ``$p$ originates from~$q$''
hypotheses is wrong.  \citet{LBK09} introduced \DAGP{} with the aim to
resolve these inconsistencies by removing a set of arcs (hypotheses)
with least support:

\decprob{\DAGP}
{A directed acyclic graph~$G = (V,A)$ with positive integer arc
  weights~$\wei{}\colon A \rightarrow \N$ and a positive integer~$k \in
  \N$.}
{Is there a set~$\Sol \subseteq A$ with $\sum_{a\in \Sol}
  \wei{}(a)\leq k$ such that each weakly-connected component in~$G' =
  (V,A \setminus \Sol)$ has exactly one sink?}

\noindent Herein, the model of \citet{LBK09} exploits that a
weakly-connected component of a directed acyclic graph contains exactly
one sink if and only if all its vertices have \dpath{}s to only one
sink. We call a set~$\Sol \subseteq A$ such that each weakly-connected 
component in~$G' = (V,A \setminus \Sol)$ has exactly one sink a~\emph{\partSet{}}.

\citet{LBK09} showed that \DAGP{} is NP-hard and presented a heuristic
to find \partSet{}s of small weight. \citet{AM12} showed that, for
fixed~$\varepsilon>0$, even approximating the minimum weight of
a~\partSet{} within a factor of~$\bigO(n^{1-\varepsilon})$ is NP-hard. In
the absence of approximation algorithms, exact solutions to \DAGP{}
become interesting for the reason of evaluating the quality of known
heuristics alone.

We aim for solving \DAGP{} exactly using \emph{fixed-parameter
  algorithms}---a~framework to obtain algorithms to optimally solve
NP-hard problems that run efficiently given that certain parameters of
the input data are small \cite{FG06, Nie06, DF13}. A natural parameter
to consider is the minimum weight~$k$ of the \partSet{} sought, since
one would expect that wrong hypotheses have little support.

\paragraph{Known results} To date, there are only few studies on \DAGP.
\Citet{LBK09} showed that \DAGP{} is NP-hard and present heuristics. \citet{AM12} showed that, on $n$-vertex graphs,
\DAGP{} is hard to approximate in the sense that if $\text{P} \neq
\text{NP}$, then there is no polynomial-time factor-$(n^{1-\varepsilon})$
approximation algorithm for any fixed $\varepsilon > 0$,
even if the input graph has unit-weight arcs, maximum outdegree three,
and only two sinks. Moreover, \citet{AM12} showed that \DAGP{} can be
solved in~$2^{\bigO(\tw{}^2)} \cdot n$ time if a width-$\tw{}$ path
decomposition of the input graph is given.

\DAGP{} is very similar to the well-known NP-hard \MC{} problem~\citep{DJP+94}: given an undirected edge-weighted graph and a
subset of the vertices called \emph{terminals}, delete edges of total
weight at most~$k$ such that each terminal is separated from all others.
\DAGP{} can be considered as \MC{} with the sinks being terminals and the
additional constraint that not all arcs outgoing from a vertex may be
deleted, since this would create a new sink. \Citet{Xia10} gave an
algorithm to solve \MC{} in $\bigO(2^k\cdot \min(n^{2/3},m^{1/2})\cdot
nm)$~time. Interestingly, in contrast to \DAGP{}, \MC{} is
constant-factor approximable (see, e.\,g., \citet{KKS+04}).
 
\paragraph{Our results} We provide algorithmic as well as intractability results.  On the algorithmic side, we present an $\bigO(2^k\cdot (n+m))$~time algorithm for \DAGP{} and complement it with linear-time executable data reduction rules. We experimentally evaluated both and, in combination, they solved instances with~$k\leq190$~optimally within five minutes, the number of input arcs being~$10^7$ and larger. Moreover, we use the optimal solutions found by our algorithm to evaluate the quality of \citet{LBK09}'s heuristic and find that it finds optimal solutions for most instances that our algorithm solves quickly, but performs worse by a factor of more than two on other instances.

Also, we give an algorithm that solves \DAGP{} in~$2^{\bigO(\tw{}^2)} \cdot n$
time if a width-$\tw{}$ tree decomposition of the input graph is
given. We thus answer an open question by \citet{AM12}. Since every
width-$\tw$ path decomposition is a width-$\tw$ tree decomposition but
not every graph allowing for a width-$\tw$ tree decomposition allows for
a width-$\tw$ path decomposition, our algorithm is an improvement over
the $2^{\bigO({\tw{}}^2)} \cdot n$-time algorithm of \citet{AM12}, which
requires a path decomposition as input.

On the side of intractability results, we strengthen the NP-hardness
results of \citet{LBK09} and \citet{AM12} to graphs of diameter two and
maximum degree three and we show that our $\bigO(2^k\cdot (n+m))$~time
algorithm cannot be improved to $\bigO(2^{\bigo(k)} \cdot \poly(n))$~time unless
the Exponential Time Hypothesis fails.  Moreover, we show that \DAGP{} does not admit polynomial-size problem kernels with respect to~$k$
unless \NoPolyKernelAssume.

\paragraph{Organization of this paper} %
In \autoref{sec:prelims}, we introduce necessary notation and two basic structural
observations for \DAGP{} that are important in our proofs.

In \autoref{sec:smallsol}, we present our $\bigO(2^k\cdot (n+m))$~time algorithm and
its experimental evaluation.
With the help of the optimal solutions computed by our algorithm we also evaluate the
quality of a heuristic presented by \citet{LBK09}.
Moreover, we discuss the limits of parameterized algorithms and problem kernelization
for \DAGP{} parameterized by~$k$.

\autoref{sec:tw} presents our $2^{\bigO(\tw^2)}\cdot n$~time algorithm. It follows that
\DAGP{} is linear-time solvable when at least one of the parameters~$k$ or~$\tw$ is fixed.
We further show that the heuristic presented by \citet{LBK09} works optimally on trees.

\autoref{sec:classical} then shows that other parameters are not as helpful in solving
\DAGP{}: \DAGP{} remains NP-hard even when graph parameters like the diameter or maximum
degree are constants.

\section{Preliminaries and basic observations}\label{sec:prelims}

We consider finite simple directed graphs~$G=(V,A)$ with \emph{vertex
  set}~$V(G):=V$ and \emph{arc set}~$A(G):=A\subseteq V\times V$, as
well as finite simple undirected graphs~$G=(V,E)$ with vertex set~$V$
and \emph{edge set}~$E(G):=E\subseteq\{\{u,v\} \mid u,v\in V\}$.
For a directed graph~$G$, the \emph{underlying undirected graph}~$G'$
is the graph that has undirected edges in the places where~$G$ has arcs,
that is, $G'= (V(G), \{\{v,w\} : (v,w) \in A(G)\})$.
We will use $n$~to denote the number of vertices and $m$~to denote
the number of arcs or edges of a graph.

For a (directed or undirected) graph~$G$, we denote by $G \setminus A'$
the subgraph obtained by removing from~$G$ the arcs or edges in~$A'$ and
by $G-V'$ the subgraph obtained by removing from~$G$ the vertices
in~$V'$. For $V'\subseteq V$, we denote by $G[V']:=G-(V\setminus V')$
the subgraph of~$G$ \emph{induced} by the vertex set~$V'$.

The set of \emph{out-neighbors} and \emph{in-neighbors} of a vertex~$v$
in a directed graph is~$N^+ (v) := \{u \mid (v,u) \in A\}$ and $N^- (v) :=
\{u \mid (u,v) \in A\}$, respectively. The \emph{outdegree}, the
\emph{indegree}, and the \emph{degree} of a vertex~$v \in V$ are $d^+
(v) := |N^+ (v)|$, $d^- (v) := |N^- (v)|$, and $d(v) := d^+ (v) + d^-
(v)$, respectively. A vertex~$v$ is a \emph{sink} if $d^+(v)=0$; it is
\emph{isolated} if $d(v)=0$.

A \emph{path} of length~$\ell-1$ from~$v_1$ to~$v_\ell$ in an undirected
graph~$G$ is a tuple~$(v_1,\dots,v_\ell)\in V^\ell$ such that
$\{v_i,v_{i+1}\}$ is an edge in~$G$ for~$1\leq i\leq\ell-1$. An
\emph{\upath{}} in a directed graph is a path in its underlying undirected
graph. A \emph{\dpath{}} of length~$\ell-1$ from~$v_1$ to~$v_\ell$
in a directed graph~$G$ is a tuple~$(v_1,\dots,v_\ell)\in V^\ell$ such
that $(v_i,v_{i+1})$ is an arc in~$G$ for~$1\leq i\leq\ell-1$.

We say that $u\in V$ \emph{can reach} $v\in V$ or that $v$ \emph{is
  reachable from}~$u$ in~$G$ if there is a \dpath{} from~$u$ to~$v$
in~$G$. We say that $u$~and~$v$ are \emph{connected} if there is a (not
necessarily directed) path from~$u$ to~$v$ in~$G$.  In particular,
$u$~is reachable from~$u$ and connected to~$u$. We use \emph{connected
  component} as an abbreviation for \emph{weakly connected component},
that is, a maximal set of pairwise connected vertices. The
\emph{diameter} of~$G$ is the maximum length of a shortest path between
two different vertices in the underlying undirected graph of~$G$.

\paragraph{Fixed-parameter algorithms} The main idea in \fp{} algorithms
is to accept the super-polynomial running time, which is seemingly
inevitable when optimally solving NP-hard problems, but to restrict it to
one aspect of the problem, the \emph{parameter}. More precisely, a
problem $\Pi$ is \emph{\fp{} tractable~(FPT)} with respect to a
parameter~$k$ if there is an algorithm solving any instance of $\Pi$ with
size~$n$ in $f(k) \cdot \poly(n)$~time for some computable function~$f$
\citep{FG06,Nie06,DF13,CFK+15}.
Such an algorithm is called \emph{fixed-parameter algorithm}.
Since \citet{SHE+13} point out that the input instances of \DAGP{} can
be so large that even running times quadratic in the input size are
prohibitively large, we focus on finding algorithms that run in
\emph{linear time} if the parameter~$k$ is a constant.
An important ingredient of our algorithms is linear-time data reduction,
which recently received increased interest since data reduction is
potentially applied to large input data \citep{PDS09,BHK+12,Hag12,Bev14b,
Kam15,FK15}.

\paragraph{Problem kernelization} One way of deriving \fp{} algorithms is \emph{(problem) kernelization} \citep{GN07,Bod09}. As a formal approach of describing efficient data reduction that preserves optimal solutions, problem kernelization is a powerful tool for attacking NP-hard problems.  A \emph{kernelization algorithm} consists of \emph{data reduction rules} that, applied to any instance $x$ with parameter~$k$, yield an instance $x'$ with parameter~$k'$ in time polynomial in $|x|+k$ such that $(x,k)$~is a yes-instance if and only if $(x',k')$~is a yes-instance, and if both $|x'|$ and $k'$ are bounded by some functions $g$ and $g'$ in~$k$, respectively. The function~$g$ is referred to as the \emph{size} of the \emph{problem kernel}~$(x',k')$.

Note that it is the parameter that allows us to measure the
effectiveness of polynomial-time executable data reduction, since a
statement like ``the data reduction shrinks the input by a
factor~$\alpha$'' would imply that we can solve NP-hard problems in
polynomial time.

From a practical point of view, problem kernelization is potentially applicable to speed up exact and heuristic algorithms to solve a problem.  Since kernelization is applied to shrink potentially \emph{large} input instances, recently the running time of kernelization has stepped into the focus and linear-time kernelization algorithms have been developed for various NP-hard problems \citep{PDS09,BHK+12,Hag12,Bev14b,Kam15,FK15}.

\paragraph{Two basic observations.} The following easy to prove structural observations will be exploited in many proofs of our work. 
The first observation states that a minimal \partSet{} does not introduce new sinks.
\begin{observation}\label{lem:no_new_sinks}
  Let $G$~be a directed acyclic graph and~$\Sol{}$ be a
  minimal \partSet{} for~$G$. Then, a vertex is a sink in
  $G\setminus\Sol{}$ if and only if it is a sink in~$G$.
\end{observation}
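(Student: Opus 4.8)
The statement is an equivalence, and its two directions are of very different difficulty. My plan is to dispose of the direction ``sink in~$G$ implies sink in~$G\setminus\Sol{}$'' immediately: since $G\setminus\Sol{}=(V,A\setminus\Sol{})$ is obtained from~$G$ only by deleting arcs, a vertex with no outgoing arc in~$G$ still has none in~$G\setminus\Sol{}$, so it remains a sink. This direction uses neither acyclicity nor minimality. All the work goes into the converse, which I would phrase contrapositively: every vertex that is \emph{not} a sink in~$G$ stays a non-sink in~$G\setminus\Sol{}$; equivalently, the \partSet{}~$\Sol{}$ never deletes \emph{all} outgoing arcs of a vertex.

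To prove the contrapositive I would argue by contradiction, exploiting minimality. Suppose some vertex~$v$ has outgoing arcs in~$G$ but is a sink in~$G\setminus\Sol{}$. Then all outgoing arcs of~$v$ lie in~$\Sol{}$; fix one of them, say~$(v,u)\in\Sol{}$, and set $\Sol{}'=\Sol{}\setminus\{(v,u)\}$. The goal is to show that $\Sol{}'$ is itself a \partSet{}, contradicting the minimality of~$\Sol{}$. Note first that $G\setminus\Sol{}'$ is a subgraph of the acyclic graph~$G$ and hence still acyclic, so each of its weakly connected components contains at least one sink: follow outgoing arcs until the walk, which cannot repeat a vertex, stops.

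The verification that $\Sol{}'$ is a \partSet{} hinges on the local nature of being a sink: reinserting the single arc~$(v,u)$ changes only the out-degree of its tail~$v$, so the only vertex whose sink status changes is~$v$, which ceases to be a sink. I would then consider whether $v$ and~$u$ lie in the same weakly connected component of~$G\setminus\Sol{}$. The key structural fact is that in an acyclic component with a unique sink every vertex reaches that sink by a \dpath{} (again by following outgoing arcs to their only possible terminus). If $v$ and~$u$ were in the same component of~$G\setminus\Sol{}$, then, since $v$ is the unique sink of that component, $u$ would reach~$v$; together with the arc~$(v,u)$ this would form a directed cycle, contradicting the acyclicity of~$G$. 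Hence $v$ and~$u$ lie in distinct components~$C_v$ and~$C_u$, and reinserting~$(v,u)$ merges exactly these two. Since $v$ was the \emph{unique} sink of~$C_v$ and loses its sink status, while $C_u$ keeps its unique sink and no other vertex is affected, the merged component has exactly one sink; all remaining components are untouched. Thus $\Sol{}'$ is a \partSet{}.

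The main obstacle is precisely the same-component case just handled: a naive argument would fear that removing~$(v,u)$ from~$\Sol{}$ leaves the component of~$v$ with no sink at all, violating the ``exactly one sink'' requirement. The clean way around this is the acyclicity and reachability argument, which rules out the same-component case outright, so that reinserting the arc always merges two distinct single-sink components into one single-sink component. Everything else is routine bookkeeping about which vertices change their out-degree.
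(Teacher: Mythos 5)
Your proof is correct and takes essentially the same route as the paper's: the direction ``sink in~$G$ implies sink in~$G\setminus\Sol{}$'' is immediate from arc deletion, and the converse is a minimality contradiction obtained by removing one deleted outgoing arc of the would-be new sink from~$\Sol{}$ and verifying that the strictly smaller set is still a \partSet{} because the merged component inherits exactly one sink. If anything, you are slightly more careful than the paper, which silently treats the two connected components as distinct, whereas you explicitly rule out the same-component case via reachability plus acyclicity (in a common component the unique sink~$v$ would be reachable from~$u$, so reinserting~$(v,u)$ would close a directed cycle).
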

\begin{proof}
  Clearly, deleting arcs from a directed acyclic graph cannot turn a sink into a non-sink. Therefore, it remains to show that every sink in~$G\setminus\Sol{}$ was a sink already in~$G$. Towards a contradiction, assume that there is a vertex~$s$ that is a sink in~$G\setminus\Sol{}$ but not in~$G$. Then, there is an arc~$(s, v)$ in~$G$ for some vertex~$v$ of~$G$. Let $C_v$ and~$C_s$ be the connected components in~$G\setminus\Sol{}$ containing~$v$ and~$s$, respectively, and let~$s_v$ be the sink in~$C_v$. Then, for $\Sol{}':=\Sol{}\setminus\{(s,v)\}$, the connected component $C_v \cup C_s$ in~$G\setminus\Sol{}'$ has only one sink, namely $s_v$.  Thus, $\Sol{}'$~is also a \partSet{} for~$G$, but $\Sol{}' \subsetneq \Sol{}$, a contradiction to $\Sol{}$~being minimal.  \qed
\end{proof}

The second observation is that each vertex in a directed acyclic graph is connected to exactly one sink if and only if it can reach that sink.

\begin{observation}\label{lem:dirundirequiv}
  Let $G$~be a directed acyclic graph. An arc set~$\Sol{}$ is
  a \partSet{} for~$G$ if and only if each vertex in~$G$ can reach
  exactly one sink in~$G\setminus\Sol{}$.
\end{observation}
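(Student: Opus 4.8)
The plan is to prove both implications, relying on two basic facts about the graph $H:=G\setminus\Sol{}$. First, $H$ is again a directed acyclic graph, so every vertex has a \dpath{} to at least one sink: starting from any vertex and repeatedly following an outgoing arc must terminate, and it can only terminate at a sink. Second, if a vertex $v$ can reach a sink $s$, then $v$ and $s$ lie in the same connected component of $H$, since a \dpath{} is in particular an \upath{}. Thus the substance of the statement is to match up the number of sinks \emph{reachable from a single vertex} with the number of sinks \emph{contained in a component}.

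For the forward direction I would assume that $\Sol{}$ is a \partSet{}, fix an arbitrary vertex $v$, and let $C$ be its component and $s$ the unique sink of $C$. By the first fact $v$ reaches some sink, and by the second fact every sink reachable from $v$ lies in $C$ and hence equals $s$. So $v$ reaches exactly one sink, as desired.

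For the converse I would assume that every vertex reaches exactly one sink and define $f(v)$ to be that unique sink. The key step is to show that $f$ is constant along arcs: if $(u,w)$ is an arc of $H$, then $u$ reaches $f(w)$ through $w$, and since $f(w)$ is a sink and $u$ reaches only the sink $f(u)$, we get $f(u)=f(w)$. Traversing an \upath{} edge by edge, and applying this to whichever orientation each arc has, then shows that $f$ is constant on each connected component. Since $f(s)=s$ for every sink $s$ (a sink reaches only itself), any two sinks in the same component must be equal; combined with the first fact (each component contains at least one sink), every component contains exactly one sink, so $\Sol{}$ is a \partSet{}.

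The step requiring the most care is the converse. One is tempted to argue that the two sinks of a component are reachable from a single common vertex, but this is false in general: the natural candidate, a vertex where two \dpath{}s meet, may well reach two \emph{other} sinks instead of the prescribed pair. What one actually needs---that the existence of two distinct sinks in one component contradicts the hypothesis---is exactly what the invariance of $f$ delivers, since two unequal sinks $s_1\neq s_2$ in one component would force $f(s_1)=s_1$ and $f(s_2)=s_2$ to coincide. Stating this invariance and drawing the component-wise conclusion cleanly is the crux, and it avoids any delicate case analysis of path orientations.
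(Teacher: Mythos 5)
Your proof is correct and takes essentially the same route as the paper's: the forward direction is identical, and in the converse your function $f$ mapping each vertex to its unique reachable sink is a relabeling of the paper's partition of a component into the classes $A_i$ of vertices reaching sink $s_i$, with your ``$f$ is constant along arcs'' step being exactly the fact the paper invokes when it derives a contradiction from an arc $(v,w)$ crossing between $A_i$ and $A_j$. No gaps.
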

\newcommand{\mydag}{directed acyclic graph}
\begin{proof}
  If $\Sol{}$~is a \partSet{} for~$G$, then, by definition, each
  connected component of~$G\setminus\Sol{}$ contains exactly one
  sink. Therefore, each vertex in~$G\setminus\Sol{}$ can reach
  \emph{at most} one sink.  Moreover, since $G\setminus\Sol{}$ is a
  \mydag{} and each vertex in a \mydag{} can reach \emph{at least} one
  sink, it follows that each vertex in~$G\setminus\Sol{}$ can reach
  \emph{exactly} one sink.

  Now, assume that each vertex in~$G\setminus\Sol{}$ can
  reach exactly one sink.  We show that each connected component
  of~$G\setminus\Sol{}$ contains exactly one sink.  For the sake of
  contradiction, assume that a connected component~$C$
  of~$G\setminus\Sol{}$ contains multiple sinks~$s_1,\dots,s_t$.
  For~$i\in[t]$, let $A_i$~be the set of vertices that reach~$s_i$.
  These vertex sets are pairwise disjoint and, since every vertex in a
  directed acyclic graph reaches some sink, they partition the vertex
  set of~$C$.

  Since $C$~is a connected component, there are~$i,j\in[t]$ with~$i\ne
  j$ and some arc~$(v,w)$ in~$G\setminus\Sol{}$ from some vertex~$v\in
  A_i$ to some vertex~$w\in A_j$.  This is a contradiction, since
  $v$~can reach~$s_i$ as well as~$s_j$. \qed
\end{proof}

\section{Parameter weight of the \partSet{} sought}\label{sec:smallsol}

This section investigates the influence of the parameter ``weight~$k$ of
the \partSet{} sought'' on \DAGP{}. First, in \autoref{sec:klintime},
we present an $O(2^k\cdot(n+m))$-time algorithm and design linear-time
executable data reduction rules. 
In \autoref{sec:experiments}, we experimentally evaluate the algorithm
and data reduction rules and---with the help of the optimal solutions
computed by our algorithm---we also evaluate the quality of a heuristic
presented by \citet{LBK09}.
In \autoref{sec:kkern}, we  investigate the question whether the
provided data reduction rules might have a provable shrinking effect on
the input instance in form of a polynomial-size problem kernel.
We will see that, despite the fact that data reduction rules work very
effectively in experiments, polynomial-size problem kernels for \DAGP{}
do not exist under reasonable complexity-theoretic assumptions. 
Moreover, \autoref{sec:kkern} will also show that the algorithm presented
in \autoref{sec:klintime} is essentially optimal.

\subsection{Constant-weight \partSet{}s in linear
  time}\label{sec:klintime}

We now present an algorithm to compute \partSet{}s of
weight~$k$ in $O(2^k\cdot(n+m))$~time. Interestingly, although both problems are
NP-hard, it will turn out that \DAGP{} is substantially easier to solve  exactly
than the closely related \MC{} problem, for which a sophisticated
algorithm running in $\bigO(2^k \min(n^{2/3},m^{1/2})nm)$~time was given by
\citet{Xia10}. This is in contrast to \MC{} being constant-factor
approximable \citep{KKS+04}, while \DAGP{} is inapproximable unless
P${}={}$NP \citep{AM12}.

The main structural advantage of \DAGP{} over \MC{} is the alternative characterization of \partSet{}s given in \autoref{lem:dirundirequiv}: we only have to decide which sink each vertex~$v$ will reach in the optimally partitioned graph.  To this end, we first decide the question for all out-neighbors of~$v$.  This natural processing order allows us to solve \DAGP{} by the simple search tree algorithm shown in \autoref{alg:simple-st}, which we explain in the proof of the following~theorem.

\begin{theorem}
  \label{thm:search-tree-DAGP} \autoref{alg:simple-st} solves \DAGP{} in
  $\bigO(2^k \cdot (n+m))$ time.
\end{theorem}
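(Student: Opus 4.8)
The plan is to design a bounded-search-tree algorithm that exploits \autoref{lem:dirundirequiv}: instead of reasoning about connected components directly, it suffices to decide, for each vertex, which unique sink it reaches in the partitioned graph. Since $G$ is a DAG, I would process vertices in reverse topological order, so that when a vertex~$v$ is handled, the target sink of each out-neighbor has already been fixed. The key observation driving the branching is this: if $v$ has out-neighbors $w_1,\dots,w_d$ that are slated to reach two or more distinct sinks, then in any solution $v$ can keep at most one ``group'' of outgoing arcs (those leading toward the chosen sink) and must delete all outgoing arcs toward the other sinks. So the algorithm branches on which sink $v$ will reach, and in each branch deletes the outgoing arcs of $v$ that lead to the other sinks, charging their weight against the budget~$k$.

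The central efficiency argument is to bound the size of the search tree by $2^k$. I would argue that whenever a genuine branching step is forced at a vertex~$v$ (i.e.\ its out-neighbors currently target at least two distinct sinks), any choice of target sink for~$v$ causes the deletion of at least one arc, and since arc weights are positive integers bounded below by one, this decreases the remaining budget by at least one. Moreover, I want each branching node to have at most two children rather than $d^+(v)$ children; the trick is to binary-split the set of candidate sinks (or equivalently to branch on whether $v$ reaches the sink of its first out-neighbor or not), guaranteeing a two-way branch each of which strictly reduces the budget, yielding a search tree with at most $2^k$ leaves. Vertices whose out-neighbors already all agree on a single sink require no branching: $v$ simply inherits that sink, contributing only to the polynomial overhead.

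The running-time accounting is the part that needs the most care to reach the linear factor $O(n+m)$ rather than $O(n^2)$ or $O(nm)$. I would precompute a topological order once in $O(n+m)$ time. The subtle point is that the per-node work along each root-to-leaf path must total $O(n+m)$, not be repeated afresh at every search-tree node. I expect the main obstacle to be arguing that the bookkeeping---propagating the ``assigned sink'' labels down the topological order, detecting when out-neighbors disagree, and deleting the appropriate arcs---can be done incrementally so that the whole computation along one branch costs $O(n+m)$, giving the claimed $O(2^k\cdot(n+m))$ bound. In particular, I would avoid recomputing sink-reachability from scratch and instead maintain for each vertex a representative of its current target sink, updating it in amortized constant time per arc as the topological sweep proceeds. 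Correctness then follows from \autoref{lem:dirundirequiv}: the algorithm accepts exactly when it finds an assignment of a single reachable sink to every vertex using total deleted weight at most~$k$, and \autoref{lem:no_new_sinks} guarantees that restricting attention to the original sinks loses no optimal solution.
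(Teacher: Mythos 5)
Your proposal is correct and follows essentially the same route as the paper: process vertices in reverse topological order, use \autoref{lem:dirundirequiv} to reduce the problem to assigning each vertex a single reachable sink, use \autoref{lem:no_new_sinks} to argue that the vertex must keep at least one outgoing arc (so branching over the feasible sinks~$D$ is exhaustive), and charge the deleted arcs against the budget~$k$. Two analysis details differ. First, where you binary-split the branching (``does $v$ reach the first candidate sink or not?'') to force two-way branches that each cost at least one unit of budget, the paper keeps the natural $|D|$-way branch and bounds the tree directly via the recurrence $T(\kk)\leq |D|\cdot T(\kk+|D|-1)$ together with the inequality $|D|\leq 2^{|D|-1}$, which also yields at most $2^k$ terminal calls; the two counting arguments are interchangeable. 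Second, the incremental per-path amortization you flag as the delicate point is not actually needed: since every non-terminal call spawns at least two children, the number of internal search-tree nodes is bounded by the number of leaves, so there are $\bigO(2^k)$ calls in total, and the paper simply charges a full $\bigO(n+m)$ sweep (label propagation plus construction of the augmented solution set) to \emph{each} call, which already gives $\bigO(2^k\cdot(n+m))$. Your amortized bookkeeping would also work and is tighter in principle, but it buys nothing for the stated bound and complicates the backtracking; the paper's coarser per-call accounting is the simpler and sufficient argument.
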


\begin{proof}
  \autoref{alg:simple-st} is based on recursive branching and computes a \partSet{}~$\Sol{}$ of weight at most~$k$ for a directed acyclic graph~$G=(V,A)$.  It exploits the following structural properties of a minimal \partSet{}~\(\Sol\): by \autoref{lem:dirundirequiv}, a vertex~$v$ is connected to a sink~$s$ in~$G\setminus\Sol{}$ if and only if it can reach that sink~$s$ in~$G\setminus\Sol{}$.  Thus, consider a vertex~$v$ of~$G$ and assume that we know, for each out-neighbor~$w$ of~$v$ in~$G$, the sink~$s$ that $w$~can reach in~$G\setminus\Sol{}$. We call a sink~$s$ of~$G$ \emph{feasible} for~$v$ if an out-neighbor of~$v$ in~$G$ can reach~$s$ in~$G\setminus \Sol{}$. Let $D$~be the set of feasible sinks for~$v$. Since $v$~may be connected to only one sink in~$G\setminus\Sol{}$, at least~$|D|-1$ arcs outgoing from~$v$ are deleted by~$\Sol{}$. However, $\Sol{}$~does not disconnect $v$~from all sinks in~$D$, since then $\Sol{}$ would delete all arcs outgoing from~$v$, contradicting \autoref{lem:no_new_sinks}. Hence, exactly one sink~$s\in D$ must be reachable by~$v$ in~$G\setminus\Sol{}$. For each such sink~$s\in D$, the \partSet{}~$\Sol{}$ has to delete at least~$|D|-1$ arcs outgoing from~$v$.  We simply try out all these possibilities, which gives rise to the following search tree algorithm.
  \begin{algorithm}[t]
    \caption{Compute a \partSet{} of weight at most~$k$.}
    \label{alg:simple-st}
    \KwIn{A directed acyclic graph~$G=(V,A)$ with arc weights~$\wei{}$ and a positive integer~$k$.}
    \KwOut{A \partSet{}~$S$ of weight at most~$k$ if it exists; otherwise `no'.}

    \vspace{1em}
    $(v_1,v_2,\dots,v_n)\gets{}$reverse topological order of the vertices
    of~$G$\;
    $L\gets{}$array with $n$~entries\; %

    searchtree($1,\emptyset$)\tcp*{\textrm{start with vertex~$v_1$ and~$S=\emptyset$}}
    \textbf{output} `no'\tcp*{\textrm{there is no partitioning set of weight at most~$k$}}
    \vspace{1em}
    \SetKwProg{myproc}{Procedure}{}{}
    \myproc{searchtree($i$, $\Sol{}$)\tcp*[f]{\textrm{vertex counter~$i$; (partial) partitioning set~$S$}}}{
      \While(\nllabel{lin:skipverts}){$v_i$ is a sink~$s$ or there is a sink~$s$ such that $\forall w\in N^+(v_i): L[w]=s$}{
        $L[v_i]\gets s$\nllabel{lin:asself}\tcp*{\textrm{associate~$v_i$ with sink~$s$}}
        $i\gets i+1$\tcp*{\textrm{continue with next vertex}}
     }
     \If(\tcp*[f]{\textrm{all vertices have been handled, $\Sol{}$ is a \partSet{}}}){$i>n$}{
     \lIf(\nllabel{lin:outputS}){$\wei(\Sol{})\leq k$}{\textbf{output}~$\Sol{}$\tcp*[f]{\textrm{a partitioning set of weight at most~$k$ has been found}}}
}
     \Else{%
       $D\gets\{L[w]\mid w\in N^+(v_i)\}$\nllabel{lin:D}\tcp*{\textrm{the set of feasible sinks for~$v_i$}}
       \If(\nllabel{dsmallcheck}\tcp*[f]{\textrm{check whether we are allowed to delete $|D|-1$~arcs}}){$|D|-1\leq k-\wei(\Sol{})$}{
         \ForEach(\nllabel{lin:branch}\tcp*[f]{\textrm{try to associate~$v_i$ to each feasible sink~$s$}}){$s\in D$}{
           $L[v_i]\gets s$\nllabel{lin:asss}\;
           $S'\gets S\cup\{(v_i,w)\mid w\in N^+(v_i)\text{ and } L[w]\ne s\}$\nllabel{lin:Sprime}\;
           searchtree($i+1, S'$)\nllabel{lin:reccall}\;
       }
     }
   }
 }
  \end{algorithm}

  \autoref{alg:simple-st} starts with~$\Sol{}=\emptyset{}$ and processes
  the vertices of~$G$ in reverse topological order, that is, each vertex
  is processed by procedure~''searchtree'' after its out-neighbors. The
  procedure exploits the invariant that, when processing a vertex~$v$,
  each out-neighbor~$w$ of~$v$ has already been \emph{associated} with
  the sink~$s$ that it reaches in~$G\setminus\Sol{}$ (in terms of
  \autoref{alg:simple-st}, $L[w]=s$). It then tries all possibilities of
  associating $v$~with a feasible sink and augmenting~$\Sol{}$
  accordingly, so that the invariant also holds for the vertex processed after~$v$.

  Specifically, if $v$~is a sink, then \autoref{lin:asself} associates
  $v$~with itself. If all out-neighbors of~$v$ are associated with the
  same sink~$s$, then \autoref{lin:asself} associates
  $v$~with~$s$. Otherwise, \autoref{lin:D} computes the set~$D$ of
  feasible sinks for~$v$. In lines~\ref{lin:branch}--\ref{lin:reccall}, the algorithm
  branches into all possibilities of associating~$v$ with one of
  the~$|D|$ feasible sinks~$s\in D$ (by way of setting~$L[v]\gets s$ in
  \autoref{lin:asss}) and augmenting~$\Sol{}$ so
  that $v$~only reaches~$s$ in~$G\setminus\Sol{}$. That is, in each of
  the $|D|$~branches, it adds to~$\Sol{}$ the arcs
  outgoing from~$v$ to the out-neighbors of~$v$ that are associated with
  sinks different from~$s$ (the weight of~$\Sol{}$ increases by at
  least~$|D|-1$). Then, \autoref{lin:reccall} continues with the next
  vertex in the reverse topological order. After processing the last
  vertex, each vertex of~$G\setminus\Sol{}$ can reach exactly one sink,
  that is, $\Sol{}$~is a \partSet{}. If a branch finds a \partSet{} with
  weight at most~$k$, \autoref{lin:outputS} outputs it.

 \looseness=-1 We analyze the running time of this algorithm. To this end, we first bound the total number of times that procedure ``searchtree'' is called. To this end, we first analyze the number of \emph{terminal calls}, that is, calls that do not recursively call the procedure. Let $T(\alpha)$~denote the maximum possible number of terminal calls caused by the procedure ``searchtree'' when called with a set~$\Sol{}$ satisfying~$\wei(\Sol{})\geq \kk$, including itself if it is a terminal call.  Note that procedure ``searchtree'' calls itself only in \autoref{lin:reccall}, that is, for each sink~$s$ of some set~$D$ of feasible sinks with~$1\leq |D|-1\leq k-\kk$, it calls itself with a set~$S'$ of weight at least~$\alpha+|D|-1$. Thus, we have
  \[T(\kk)\leq |D|\cdot T(\kk+|D|-1).\] We now inductively show
  that~$T(\alpha)\leq 2^{k-\alpha}$ for~$0\leq\alpha\leq k$.  Then, it
  follows that there is a total number of $T(0)\leq 2^k$~terminal
  calls.  For the induction base case, observe that $T(k)=1$, since,
  if procedure ``searchtree'' is called with a set~$\Sol{}$ of weight
  at least~$k$, then any recursive call is prevented by the check in
  \autoref{dsmallcheck}.  Now, assume that $T(\kk')\leq
  2^{k-\kk'}$~holds for all~$\kk'$ with~$\alpha\leq \kk'\leq k$.  We show
  $T(\alpha-1)\leq 2^{k-(\alpha -1)}$ by exploiting $2\leq|D|\leq
  2^{|D|-1}$ as follows:
  \begin{align*}
    T(\alpha-1)&\leq |D|\cdot T(\alpha-1+|D|-1)
    \leq |D|\cdot 2^{k-(\alpha-1)-(|D|-1)} \leq |D|\cdot
    \frac{2^{k-(\alpha-1)}}{2^{|D|-1}} \leq 2^{k-(\alpha-1)}.
  \end{align*}
  It follows that there are at most~$T(0)=2^k$ terminal calls to
  procedure ``searchtree''. 

  In order to bound the total number of calls to procedure ``searchtree'', observe the following: if each inner node of a tree has at least two children, then the number of inner nodes in a tree is at most its number of leaves.  Now, since procedure ``searchtree'' calls itself only in \autoref{lin:reccall}, that is, for each sink~$s$ of some set~$D$ of feasible sinks with~$1\leq |D|-1\leq k-\kk$, each non-terminal call causes at least two new calls.  Thus, since there are $\bigO(2^k)$~terminal calls, there are also $\bigO(2^k)$~non-terminal calls. 

  It follows that there are $\bigO(2^k)$~total calls of procedure ``searchtree''.  For each such call, we iterate, in the worst case, over all out-neighbors of all vertices in the graph in lines~\ref{lin:skipverts}--\ref{lin:D}, which works in $\bigO(n+m)$~time. Moreover, for each call of procedure ``searchtree'', we compute a set~$S'$ in \autoref{lin:Sprime} in $\bigO(n+m)$~time. Hence, a total amount of~$\bigO(2^k\cdot (n+m))$~time is spent in procedure ``searchtree''. Initially, \autoref{alg:simple-st} uses $\bigO(n+m)$~time to compute a reverse topological ordering~\citep[Section~22.4]{CLRS01}.  \qed
\end{proof}

\noindent The experimental results in \autoref{sec:experiments} will show that
\autoref{alg:simple-st} alone cannot solve even moderately large
instances. Therefore, we complement it by linear-time executable data
reduction rules that will allow for a significant speedup. The following data
reduction rule is illustrated in \autoref{fig:halfblind}.

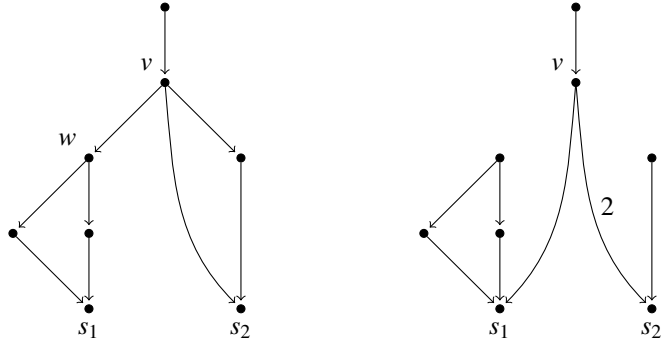
\begin{figure}
  \centering
  \begin{tikzpicture}[shorten >= 0.5mm]
    \tikzstyle{vertex}=[circle,draw,fill=black,minimum size=3pt,inner sep=0pt]

    \node[vertex,label=below:$s_1$] (s1) at (1,0) {};
    \node[vertex,label=below:$s_2$] (s2) at (3,0) {};

    \node[vertex] (a) at (0,1) {};
    \node[vertex] (b) at (1,1) {};
    \node[vertex,label=above left:$w$] (c) at (1,2) {};
    \node[vertex,label=above left:$v$] (d) at (2,3) {};
    \node[vertex] (e) at (2,4) {};
    \node[vertex] (f) at (3,2) {};

    \draw[->] (a)--(s1);
    \draw[->] (b)--(s1);
    \draw[->] (c)--(b);
    \draw[->] (d)--(c);
    \draw[->] (d) to[out=-85](s2);
    \draw[->] (c) to(a);
    \draw[->] (e)--(d);
    \draw[->] (d)--(f);
    \draw[->] (f)--(s2);
  \end{tikzpicture}\hspace{2cm}
  \begin{tikzpicture}[shorten >= 0.5mm]
    \tikzstyle{vertex}=[circle,draw,fill=black,minimum size=3pt,inner sep=0pt]

    \node[vertex,label=below:$s_1$] (s1) at (1,0) {};
    \node[vertex,label=below:$s_2$] (s2) at (3,0) {};

    \node[vertex] (a) at (0,1) {};
    \node[vertex] (b) at (1,1) {};
    \node[vertex] (c) at (1,2) {};
    \node[vertex,label=above left:$v$] (d) at (2,3) {};
    \node[vertex] (e) at (2,4) {};
    \node[vertex] (f) at (3,2) {};

    \draw[->] (a)--(s1);
    \draw[->] (b)--(s1);
    \draw[->] (c)--(b);
    \draw[->] (d) to[out=-95,in=45](s1);
    \draw[->] (d) to[out=-85] node[right]{$2$} (s2);
    \draw[->] (c) to(a);
    \draw[->] (e)--(d);
    \draw[->] (f)--(s2);
  \end{tikzpicture}

  \caption{The left side shows an input graph with unit weights to which \autoref{halfblind} is applicable.  The right side shows the same graph to which \autoref{halfblind} has been applied as often as possible. Since~$v$ can reach multiple sinks and its out-neighbors each can reach only one sink, its arcs got redirected. Unlabeled arcs have weight one.}
  \label{fig:halfblind}
\end{figure}

\begin{reductionrule}\label{halfblind}
  If there is an arc~$(v,w)$ such that $w$~can reach exactly one
  sink~$s\neq w$ and $v$~can reach multiple sinks, then
  \begin{itemize}
  \item if there is no arc~$(v,s)$, then add it with weight~$\wei(v,w)$,
  \item otherwise, increase $\wei{}(v,s)$ by $\wei{}(v,w)$, and
  \end{itemize}
  delete the arc~$(v,w)$.
\end{reductionrule}

\noindent Note that, in the formulation of the data reduction rule, both
$v$~and~$w$ may be \emph{connected} to an arbitrary number of sinks 
by an \upath{}. However, we require that $w$~\emph{can reach}
exactly one sink and that $v$~\emph{can reach} multiple sinks, that is,
using a \dpath{}.

\begin{lemma}
  Let $(G,\wei{},k)$ be a \DAGP{} instance and consider the
  graph~$G'$ with weights~$\wei{}'$ output by \autoref{halfblind}
  applied to an arc~$(v,w)$ of~$G$. Then $(G,\wei{},k)$~is a
  yes-instance if and only if~$(G',\wei{}',k)$~is a yes-instance.
\end{lemma}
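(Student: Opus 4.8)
The plan is to prove the two directions by exhibiting a \emph{weight-preserving translation} of partitioning sets between $G$ and $G'$, phrased throughout in terms of sink-reachability via \autoref{lem:dirundirequiv}. First I would record a few preliminary facts. Since $v$ reaches multiple sinks it is not a sink, so $v\neq s$; and since $w$ reaches $s$ in $G$, the sink $s$ lies strictly after $v$ in any topological order, so redirecting $(v,w)$ to $(v,s)$ creates no cycle and $G'$ is again a directed acyclic graph. Moreover, we only removed an \emph{in}-arc of $w$, so $w$ still reaches exactly $s$ in $G'$. The central observation I would establish is that in \emph{any} partitioning set $S$ of $G$, the vertex $w$ reaches exactly the sink $s$ in $G\setminus S$: this is forced because $w$ reaches only $s$ in $G$ (hence at most $s$ in any subgraph), yet every vertex of a directed acyclic graph reaches at least one sink. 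Consequently, leaving $v$ an arc "through~$w$'' gives $v$ access to precisely the single sink $s$, exactly as a direct arc $(v,s)$ would; this interchangeability is the heart of the proof. The same statement holds verbatim for partitioning sets of $G'$.

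With $a:=\wei(v,w)$ and $b$ denoting the original weight of $(v,s)$ (with $b:=0$ if that arc is absent), so that $\wei'(v,s)=a+b$, I would translate solutions by a case distinction on whether $v$ reaches $s$ in the solved subgraph. If $v$ reaches $s$, then by minimality of the solution the arcs $(v,w)$ and $(v,s)$ are \emph{not} deleted: adding back an arc whose head reaches only the already-reached sink $s$ introduces no new reachable sink (in a directed acyclic graph an out-arc of $v$ cannot affect which vertices reach $v$), so keeping it would contradict minimality. In this case I take $S':=S$ (forward) resp.\ $S:=S'$ (backward); the weight is unchanged because neither $(v,w)$ nor $(v,s)$ lies in the solution. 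If instead $v$ does \emph{not} reach $s$, then cutting $(v,w)$ is \emph{forced}---since $w$ keeps reaching $s$, the only way to stop $v$ from reaching $s$ through $w$ is to delete $(v,w)$ at $v$---as is deleting $(v,s)$ if present. Here I exchange the set $\{(v,w)\}$ together with the original $(v,s)$ (if present) for the single merged arc $(v,s)$ of $G'$; the weights $a+b$ match exactly, so the total weight is preserved.

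It then remains to verify correctness of each translation, which I expect to be the main obstacle. In every case I would check that the original and the modified solved subgraphs induce the \emph{same out-neighbourhood at every vertex}: they agree away from $v$ (we only ever alter arcs leaving $v$, and the solutions agree there too), and at $v$ they differ only by trading the route through $w$ for the direct arc to $s$, both of which give access to exactly the one sink $s$. From this I conclude by a straightforward induction along the reverse topological order that every vertex reaches the same set of sinks in both subgraphs, after which \autoref{lem:dirundirequiv} transfers the partitioning property in both directions. The delicate point is making this reachability preservation airtight: I must argue that modifying only the out-arcs of $v$ leaves the reachable-sink set of every vertex intact once the route-through-$w$ versus direct-to-$s$ equivalence is invoked, which relies on acyclicity (out-neighbours of $v$ never reach back to $v$) so that downstream reachable-sink sets are unaffected. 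The remaining bookkeeping---the subcase where $(v,s)$ already exists in $G$ and the accounting that $\wei(S)=\wei'(S')$---is then routine.
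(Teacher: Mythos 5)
Your proposal is correct in substance and follows the same skeleton as the paper's proof: both directions are handled by a case distinction on whether the solution disconnects~$v$ from~$s$, with identical weight bookkeeping ($\wei'(v,s)=\wei(v,s)+\wei(v,w)$, declaring $\wei(v,s)=0$ if the arc is absent), and both hinge on the fact that $w$'s sink is pinned to~$s$. Where you diverge is in the verification that the translated set is again a \partSet{}: the paper argues by contradiction, tracing an \emph{undirected} path between two sinks through the swapped arc and showing the same two sinks would already be connected in the other graph; you instead show that the two solved subgraphs assign every vertex the same reachable-sink set, by induction along a common reverse topological order (valid for both graphs since $v$ precedes $s$ in any topological order of~$G$), and then invoke \autoref{lem:dirundirequiv}. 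Your route is arguably more systematic---it verifies both directions and both cases with one uniform argument rather than four ad hoc path-tracing contradictions---at the cost of having to set up the induction and check that the sink sets of the two subgraphs coincide (which holds because $v$ retains an out-arc in every case, by \autoref{lem:no_new_sinks}).

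One imprecision you should repair: your ``central observation'' that in \emph{any} \partSet{}~$S$ of~$G$ the vertex~$w$ reaches exactly~$s$ in $G\setminus S$ is false as stated, and your justification (``at most $s$ in any subgraph'' plus ``every vertex reaches at least one sink'') does not close the gap. The sink that $w$ reaches in $G\setminus S$ is a sink \emph{of the subgraph}, which may be a newly created sink: a non-minimal \partSet{} can delete all out-arcs of a vertex on $w$'s paths to~$s$, leaving $w$ to reach only that new sink. The claim is true for \emph{minimal} \partSet{}s, via \autoref{lem:no_new_sinks}: minimality guarantees no new sinks, so the sink $w$ reaches in $G\setminus S$ is a sink of~$G$, hence equals~$s$. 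Since you assume minimality everywhere else (and passing to a minimal subset of a weight-$\le k$ solution is harmless because weights are positive), the fix is a one-line restriction plus a citation of \autoref{lem:no_new_sinks}---exactly the tool the paper uses at the corresponding step.
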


\begin{proof}
  First, assume that $(G,\wei{},k)$ is a yes-instance and that~$\Sol{}$
  is a minimal \partSet{} of weight at most~$k$ for~$G$. We show how to
  transform~$\Sol{}$ into a \partSet{} of equal weight for~$G'$. We
  distinguish two cases: either $\Sol{}$ disconnects~$v$ from~$s$ or
  not, where $s$~is the only sink that~$w$ can reach.
  \begin{caselist}
  \item Assume that $S$~disconnects~$v$ from~$s$. Note that every
    subgraph of a directed acyclic graph is again a directed acyclic
    graph and that every vertex in a directed acyclic graph is not only
    connected to, but also can reach some sink. Hence, by
    \autoref{lem:no_new_sinks}, $\Sol{}$ cannot disconnect~$w$ from~$s$,
    since $w$~can only reach~$s$ in~$G$ and would have to reach some
    other, that is, new sink in~$G\setminus\Sol{}$. It follows that
    $\Sol{}$~contains the arc~$(v,w)$. Now, however,
    $S':=(S\setminus\{v,w\})\cup\{(v,s)\}$ is a \partSet{} for~$G'$,
    since $G\setminus S=G'\setminus S'$. Moreover, since
    $\wei{}'(v,s)=\wei{}(v,s)+\wei{}(v,w)$, we have
    $\wei{}'(S')=\wei{}(S)$, where we, for convenience, declare $\wei{}(v,s)=0$ if
    there is no arc~$(v,s)$ in~$G$.

  \item Assume that $\Sol{}$~does not disconnect~$v$ from~$s$ and, for
    the sake of a contradiction, that $\Sol{}$ is not a \partSet{}
    for~$G'$. Observe that $\Sol{}$~contains neither~$(v,w)$
    nor~$(v,s)$, because it is a minimal \partSet{} and does not
    disconnect~$v$ from~$s$. Therefore, $G'\setminus S$ differs
    from~$G\setminus S$ only in the fact that~$G'\setminus S$ does not
    have the arc~$(v,w)$ but an arc~$(v,s)$ that was possibly not
    present in~$G\setminus S$. Hence, since $\Sol{}$ is a \partSet{}
    for~$G$ but not for~$G'$, two sinks are connected to each other
    in~$G'\setminus\Sol{}$ via an \upath{} using the arc~$(v,s)$. Thus,
    one of the two sinks is~$s$ and the \upath{} consists of~$(v,s)$ and
    a subpath~$p$ between~$v$ and some sink~$s'$. Then, however, $s$~is
    connected to~$s'$ also in~$G\setminus\Sol{}$ via an \upath{}
    between~$s$ and~$w$ ($\Sol{}$~cannot disconnect~$s$ from~$w$ by
    \autoref{lem:no_new_sinks}), the arc~$(v,w)$ and the \upath{}~$p$
    from~$v$ to~$s'$. This contradicts $\Sol{}$~being a \partSet{}
    for~$G$.  We conclude that $S$~is a \partSet{} for~$G'$.  Moreover, since $S$~contains neither~$(v,w)$ nor~\((v,s)\), one has $\wei(S)=\wei'(S)$.
  \end{caselist}

  Now, assume that~$(G',\wei{}',k)$ is a yes-instance and that~$\Sol{}$
  is a minimal \partSet{} of weight at most~$k$ for~$G'$. We show how to
  transform~$\Sol{}$ into a \partSet{} of equal weight for~$G$. Again,
  we distinguish between two cases: either $\Sol{}$ disconnects~$v$
  from~$s$ or not.

  \begin{caselist}
  \item Assume that $S$~disconnects~$v$ from~$s$. Then, $(v,s)\in
    S$. Now, $S':=S\cup\{(v,w)\}$ is a \partSet{} for~$G$,
    since~$G\setminus S'=G'\setminus S$. Moreover, since
    $\wei{}'(v,s)=\wei{}(v,s)+\wei{}(v,w)$, we have
    $\wei{}(S')=\wei{}'(S)$, where we assume that $\wei{}(v,s)=0$ if
    there is no arc~$(v,s)$ in~$G$.

  \item Assume that $S$~does not disconnect~$v$ from~$s$ and, for the
    sake of a contradiction, assume that~$\Sol{}$ is not a \partSet{}
    for~$G$. Then, since $\Sol{}$~is minimal, $\Sol{}$ does not
    contain~$(v,s)$. Now, observe that $G\setminus S$
    and~$G'\setminus\Sol{}$ differ only in the fact
    that~$G\setminus\Sol{}$ has an additional arc~$(v,w)$ and that,
    possibly, $(v,s)$ is missing. Hence, since~$\Sol{}$ is a \partSet{}
    for~$G'$ but not for~$G$, there is an \upath{} between two sinks
    in~$G\setminus\Sol{}$ through~$(v,w)$. Because~$\Sol{}$ by
    \autoref{lem:no_new_sinks} cannot disconnect~$w$ from~$s$, one of
    these sinks is~$s$ and the \upath{} consists of a subpath between~$s$
    and~$w$, the arc~$(v,w)$, and a subpath~$p$ between~$v$ and a
    sink~$s'$. Then, however, $s$~and~$s'$ are also connected
    in~$G'\setminus\Sol{}$ via the arc~$(v,s)$ and the subpath~$p$
    between~$v$ and~$s'$. This contradicts $\Sol{}$~being a \partSet{}
    for~$G'$.  Finally, since $S$~does not contain~$(v,s)$, one has $\wei(S)=\wei'(S)$.\qed
  \end{caselist}
\end{proof}

\noindent After applying \autoref{halfblind} exhaustively, that is, as
often as it is applicable, we apply a second data reduction rule, which
is illustrated in \autoref{fig:kill-loners}.
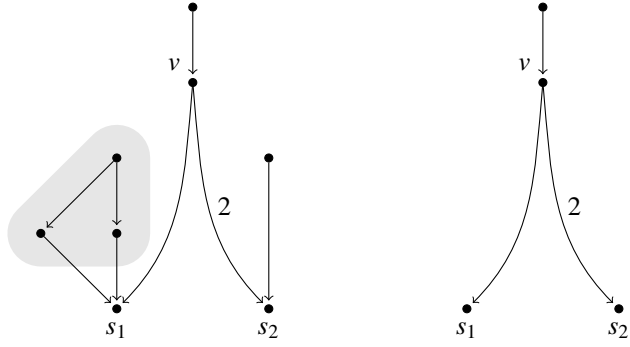
\begin{figure}
  \centering
    \begin{tikzpicture}[shorten >= 0.5mm]
    \tikzstyle{vertex}=[circle,draw,fill=black,minimum size=3pt,inner sep=0pt]

    \node[vertex,label=below:$s_1$] (s1) at (1,0) {};
    \node[vertex,label=below:$s_2$] (s2) at (3,0) {};

    \node[vertex] (a) at (0,1) {};
    \node[vertex] (b) at (1,1) {};
    \node[vertex] (c) at (1,2) {};
    \node[vertex,label=above left:$v$] (d) at (2,3) {};
    \node[vertex] (e) at (2,4) {};
    \node[vertex] (f) at (3,2) {};

    \draw[->] (a)--(s1);
    \draw[->] (b)--(s1);
    \draw[->] (c)--(b);
    \draw[->] (d) to[out=-95,in=45](s1);
    \draw[->] (d) to[out=-85] node[right]{$2$} (s2);
    \draw[->] (c) to(a);
    \draw[->] (e)--(d);
    \draw[->] (f)--(s2);

    \begin{pgfonlayer}{background}
      \draw[fill=black!10,line width=25pt,line join=round, line cap=round, draw=black!10] (a.center)--(b.center)--(c.center)--cycle;

    \end{pgfonlayer}
  \end{tikzpicture}\hspace{2cm}
    \begin{tikzpicture}[shorten >= 0.5mm]
    \tikzstyle{vertex}=[circle,draw,fill=black,minimum size=3pt,inner sep=0pt]

    \node[vertex,label=below:$s_1$] (s1) at (1,0) {};
    \node[vertex,label=below:$s_2$] (s2) at (3,0) {};

    \node[vertex,label=above left:$v$] (d) at (2,3) {};
    \node[vertex] (e) at (2,4) {};

    \draw[->] (d) to[out=-95,in=45](s1);
    \draw[->] (d) to[out=-85] node[right]{$2$} (s2);
    \draw[->] (e)--(d);
  \end{tikzpicture}
  \caption{On the left side, a graph that cannot be reduced by \autoref{halfblind} is shown.  The gray background highlights the non-sink vertices that can only reach the sink~$s_1$.  The right side shows the same graph to which \autoref{kill-loners} has been applied as often as possible. Unlabeled arcs have weight one.}
  \label{fig:kill-loners}
\end{figure}

\begin{reductionrule}\label{kill-loners}
  If, for some sink~\(s\), the set~\(L\) of non-sink vertices that can
  reach only~$s$ is nonempty, then delete all vertices in~$L$.
\end{reductionrule}

\begin{lemma}
  Let $G$~be a graph that is exhaustively reduced with respect to
  \autoref{halfblind} and~$G':=G-L$ be the graph output by
  \autoref{kill-loners} when applied to~$G$ for some sink~$s$.
  Then, any \partSet{} for~$G$ is a \partSet{} of equal weight for~$G'$
  and vice versa.
\end{lemma}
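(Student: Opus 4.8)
The plan is to first pin down the local structure of~$L$ in the reduced graph and then to recast both directions in the reachability language of~\autoref{lem:dirundirequiv}, under which deleting~$L$ becomes transparent.

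First I would prove a structural lemma about~$L$ that uses exhaustive reduction with respect to \autoref{halfblind}. For~$v\in L$, every out-neighbour lies in~$L\cup\{s\}$: a vertex reachable from~$v$ can reach only the sinks~$v$ can reach, namely~$s$, so an out-neighbour of~$v$ is either~$s$ itself or a non-sink reaching only~$s$, that is, a member of~$L$. For the in-neighbours, suppose~$u\notin L$ with~$(u,v)\in A(G)$. Then~$u$ reaches~$s$ through~$v$; being a non-sink outside~$L$, it must reach a second sink. But then~$v$ reaches exactly one sink~$s\neq v$ while~$u$ reaches several, so \autoref{halfblind} would apply to~$(u,v)$, contradicting exhaustive reduction. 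Hence every arc of~$G$ incident to~$L$ lies inside~$L\cup\{s\}$, no arc enters~$L$ from~$V\setminus L$, every \dpath{} starting in~$V\setminus L$ stays in~$V\setminus L$, and every \dpath{} starting in~$L$ stays in~$L\cup\{s\}$ and reaches the unique sink~$s$. Since~$G'=G-L$ is induced, the surviving arcs and their weights are unchanged, the only missing arcs lie inside~$L\cup\{s\}$, and---because no out-arc of a~$V\setminus L$ vertex is deleted---the sinks of~$G'$ are exactly the sinks of~$G$ ($L$~contains none).

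For the direction~$G'\to G$, let~$\Sol{}$ be a \partSet{} for~$G'$; then~$\Sol{}\subseteq A(G')$ and its weight is the same in both graphs. I verify via \autoref{lem:dirundirequiv} that each vertex of~$G\setminus\Sol{}$ reaches exactly one sink. For~$u\in V\setminus L$, every \dpath{} from~$u$ stays in~$V\setminus L$ and uses only arcs of~$A(G')$, so the sinks~$u$ reaches in~$G\setminus\Sol{}$ coincide with those in~$G'\setminus\Sol{}$, a single sink by assumption. For~$v\in L$, all arcs inside~$L\cup\{s\}$ survive (none lie in~$\Sol{}\subseteq A(G')$), so~$v$ still reaches~$s$; and since~$v$ can reach only~$s$ in~$G$, it reaches exactly that one sink. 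Thus~$\Sol{}$ is a \partSet{} for~$G$ of equal weight.

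For the direction~$G\to G'$, let~$\Sol{}$ be a \partSet{} for~$G$. Setting~$\Sol{}':=\Sol{}\cap A(G')$ gives~$G'\setminus\Sol{}'=(G\setminus\Sol{})[V\setminus L]$, and the same confinement of reachability to~$V\setminus L$ shows each~$u\in V\setminus L$ reaches in~$G'\setminus\Sol{}'$ exactly the one sink it reaches in~$G\setminus\Sol{}$; so~$\Sol{}'$ is a \partSet{} for~$G'$. The main obstacle is matching the weight \emph{exactly}, and this is where the argument has real content: I must show that the arcs of~$\Sol{}$ incident to~$L$ (all inside~$L\cup\{s\}$) carry no benefit. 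In~$G\setminus\Sol{}$ every vertex of~$L\cup\{s\}$ still reaches~$s$, hence all of~$L\cup\{s\}$ lies in the single weak component of~$s$; re-inserting any deleted arc~$(x,y)$ with~$x,y\in L\cup\{s\}$ therefore merges no distinct components and creates no sink, and since vertices of~$V\setminus L$ cannot reach into~$L$ it introduces no second reachable sink anywhere---so re-insertion keeps a \partSet{}. Consequently a weight-minimal \partSet{} for~$G$ deletes no arc incident to~$L$, giving~$\Sol{}\subseteq A(G')$ and~$\wei(\Sol{}')=\wei(\Sol{})$, while for arbitrary~$\Sol{}$ the restriction~$\Sol{}'$ is a \partSet{} for~$G'$ of weight at most~$\wei(\Sol{})$. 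Isolating exactly why deleting arcs inside~$L\cup\{s\}$ is never worthwhile is the crux; together with the~$G'\to G$ direction it yields that the minimum \partSet{} weights of~$G$ and~$G'$ coincide, so the two instances are equivalent.
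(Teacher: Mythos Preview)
Your structural analysis and the $G'\to G$ direction are correct and match the paper's proof (the paper records the same three facts about~$L$: no arcs enter~$L$ from outside, no arcs leave~$L$ except to~$s$, and a minimal \partSet{} deletes no arc inside~$L\cup\{s\}$).

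There is, however, a genuine gap in your $G\to G'$ argument. You assert that ``in~$G\setminus\Sol{}$ every vertex of~$L\cup\{s\}$ still reaches~$s$'' and base the re-insertion argument on this. For an \emph{arbitrary} \partSet{}~$\Sol{}$ this is false: $\Sol{}$~may delete all out-arcs of some~$v\in L$, turning~$v$ into an isolated sink. Concretely, take $L=\{v,t_1,t_2\}$ with arcs $(v,t_1),(v,t_2),(t_1,s),(t_2,s)$ and $\Sol{}=\{(t_1,s),(t_2,s),(v,t_1)\}$; this is a \partSet{}, yet re-inserting~$(v,t_1)$ makes~$v$ reach the two sinks~$t_1,t_2$, so $\Sol{}\setminus\{(v,t_1)\}$ is \emph{not} a \partSet{}. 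Hence re-insertion does not preserve the \partSet{} property in general, and your deduction ``a weight-minimal \partSet{} deletes no arc incident to~$L$'' is not yet justified.

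The fix is exactly what the paper does: invoke \autoref{lem:no_new_sinks} \emph{first}. For a minimal~$\Sol{}$, no vertex of~$L$ becomes a sink in~$G\setminus\Sol{}$, so each~$v\in L$ still reaches some sink of~$G$, necessarily~$s$; \emph{then} your claim that all of~$L\cup\{s\}$ lies in the component of~$s$ holds, and the redundancy of any arc inside~$L\cup\{s\}$ follows by minimality. With this correction your proof coincides with the paper's, phrased through \autoref{lem:dirundirequiv} rather than directly through components. Note also---as you implicitly observe---that the lemma as literally stated (``any \partSet{}'') is only established for minimal \partSet{}s in the paper as well; this suffices for the intended equivalence of instances.
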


\begin{proof}
  In order to prove the lemma, we first make three structural
  observations about the set~$L$.
  \begin{enumerate}[i)]
  \item \label{obsL1} There is no arc~$(v,w)$ from a vertex~$v\notin L$
    to a vertex~$w\in L$ in~$G$: for the sake of a contradiction, assume that
    such an arc exists. Then, since $v\notin L$ and $v$~is obviously not
    a sink, $v$~can reach a sink~$s'\neq s$. It follows that $v$~can
    reach two sinks: $s$ via~$w$ and~$s'$. This contradicts the
    assumption that \autoref{halfblind} is not applicable.

  \item \label{obsL2} There is no arc~$(v,w)$ from a vertex~$v\in L$ to
    a vertex~$w\notin L$ with $w\neq s$ in~$G$: for the sake of a
    contradiction, assume that such an arc exists. Then, since $w\notin L$
    and~$w\neq s$, it follows that $w$~can reach a sink~$s'$ different
    from~$s$. Then, also $v$~can reach two sinks: $s'$~via~$w$
    and~$s$. This contradicts~$v\in L$.

  \item \label{obsL3} A minimal \partSet{}~$\Sol{}$ for~$G$ does not contain any arc between vertices in~$L\cup\{s\}$: this is because, by \autoref{lem:no_new_sinks}, no minimal \partSet{}~$\Sol{}$ can disconnect any vertex~$v\in L$ from~$s$, since otherwise~$v$ would reach another, that is, new sink in~$G\setminus\Sol{}$.
  \end{enumerate}
  Now, let $\Sol{}$~be a minimal \partSet{} for~$G$. Then, $\Sol{}$~is also a \partSet{} for~$G'$, since $G'\setminus\Sol{}=(G\setminus\Sol{})-L$ and deleting~$L$ from~$G\setminus\Sol{}$ cannot create new sinks.
  
  In the opposite direction, let $S$~be a \partSet{} for~$G'$. Then, $S$~is also a \partSet{} for~$G$, since $G\setminus\Sol{}$~is just $G'\setminus\Sol{}$~with the vertices in~$L$ and their arcs added. These, however, can reach only the sink~$s$ and are only connected to vertices to which~$s$ is connected. Hence, they do not create new sinks or connect distinct components of~$G'\setminus\Sol{}$.\qed
\end{proof}

\noindent We now show how to exhaustively apply both data reduction
rules in linear time. To this end, we apply \autoref{reducealg}: in
lines~\ref{startinit}--\ref{dynprog}, it computes an array~$L$ such
that, for each vertex~$v\in V$, we have $L[v]=\{s\}$ if $v$~reaches
exactly one sink~$s$ and $L[v]=\emptyset$~otherwise. It uses this
information to apply \autoref{halfblind} in
lines~\ref{apprule1}--\ref{delarc} and \autoref{kill-loners} in
lines~\ref{apprule2} and~\ref{apprule2end}.

\begin{algorithm}[t]
  \caption{Apply Reduction Rules~\ref{halfblind} and~\ref{kill-loners} exhaustively.}
  \label{reducealg}

  \KwIn{A directed acyclic graph~$G=(V,A)$ with arc weights~$\wei{}$.}
  \KwOut{The result of exhaustively applying Reduction
    Rules~\ref{halfblind} and~\ref{kill-loners} to~$G$.}
  $\Sin\gets{}$sinks of~$G$\nllabel{startinit}\; $L\gets{}$array with
  $n$~entries\; \lForEach(\nllabel{endinit}\tcp*[f]{\textrm{$L[v]=\{s\}$ for any sink~$s\in V$ will mean that~$v$ only reaches the sink~$s$.}}){$v\in
    \Sin$}{$L[v]\gets\{v\}$} \ForEach(\nllabel{dynprog}){$v\in
    V\setminus \Sin$ in reverse topological
    order}{$L[v]\gets\smashoperator{\bigcap\limits_{u\in N^+(v)}}L[u]$\nllabel{lin:bicap}\tcp*[f]{\textrm{Invariant: the intersection contains at most one sink.}}}
  \ForEach(\nllabel{apprule1}\tcp*[f]{\rm Application of \autoref{halfblind}}
){$v\in V$ with $L[v]=\emptyset$}{
    \ForEach(\nllabel{arrayhere}){$w\in N^+(v)$ with $L[w]=\{s\}$ for some~$s\in \Sin$ and
      $w\notin \Sin$}{ \lIf(\nllabel{isarcthere}){$(v,s)\notin A$}{add
        $(v,s)$ with $\wei{}(v,s):=0$ to~$A$}
      $\wei{}(v,s)\gets\wei{}(v,s)+\wei{}(v,w)$\nllabel{weiinc}\; delete
      arc $(v,w)$\nllabel{delarc}\;} } 
  \ForEach(\nllabel{apprule2}\tcp*[f]{\rm Application of
    \autoref{kill-loners}}){
    $v\in V\setminus \Sin$ such that $L[v]=\{s\}$ for some~$s\in \Sin$}{delete
    vertex~$v$\nllabel{apprule2end}}
  \Return{$(G,\wei{})$}\nllabel{returnred}\;
\end{algorithm}

\begin{lemma}\label{lem:datareductionrules}
  Given a directed acyclic graph~$G$ with weights $\wei{}$, in
  $\bigO(n+m)$~time \autoref{reducealg} produces a directed acyclic
  graph~$G'$ with weights $\wei{}'$ such that $G'$~is exhaustively
  reduced with respect to \autoref{halfblind} and \autoref{kill-loners}.
  In particular, $(G,\wei{},k)$ is a yes-instance if and only
  if~$(G',\wei{}',k)$ is a yes-instance.
\end{lemma}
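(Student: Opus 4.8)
The plan is to verify four things in turn: (i) that the array~$L$ computed in lines~\ref{startinit}--\ref{dynprog} satisfies $L[v]=\{s\}$ exactly when $v$~reaches the unique sink~$s$ and $L[v]=\emptyset$ exactly when $v$~reaches at least two sinks; (ii) that the two rule-application phases produce a graph exhaustively reduced with respect to both rules; (iii) that everything runs in $\bigO(n+m)$~time; and (iv) that the yes-instance equivalence then follows from the two correctness lemmas already proved. For (i) I would induct along the reverse topological order. Each sink~$s$ reaches only itself, matching the initialization $L[s]=\{s\}$. For a non-sink~$v$, the sinks reachable from~$v$ are the union of the sinks reachable from its out-neighbours, and I would split into two cases: if $v$~reaches a single sink~$s$, then every out-neighbour reaches~$s$ and only~$s$, so by induction every $L[u]=\{s\}$ and the intersection in line~\ref{lin:bicap} equals~$\{s\}$; if $v$~reaches at least two sinks, then either some out-neighbour already reaches two sinks (its $L$-value is~$\emptyset$) or two out-neighbours reach distinct single sinks, and in both subcases the intersection is empty. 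This simultaneously proves the stated invariant that each intersection contains at most one sink, which is what permits storing each entry as a single sink or~$\emptyset$.

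For~(ii), the key structural fact is that \autoref{halfblind} never changes the set of sinks reachable from any vertex: redirecting $(v,w)$ to $(v,s)$, where $w$~reaches only the sink~$s$, replaces the contribution~$\{s\}$ of~$w$ to~$v$'s reachable sinks by the same~$\{s\}$ reached directly, leaves every other vertex's out-arcs untouched, and cannot create a cycle because $s$~is a sink. Hence the array~$L$ stays valid throughout the first phase, each redirection is a genuine application of \autoref{halfblind} (the conditions $L[v]=\emptyset$, $w\notin\Sin$, $L[w]=\{s\}$ translate precisely into ``$v$ reaches several sinks'' and ``$w$ is a non-sink reaching the single sink~$s\neq w$''), and since the only arcs added point to sinks, after the phase no vertex reaching several sinks still has a non-sink out-neighbour reaching one sink, so \autoref{halfblind} is no longer applicable. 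For \autoref{kill-loners} I would reuse the structural observations from its correctness proof---in particular that, in a graph reduced with respect to \autoref{halfblind}, every in-neighbour of a vertex reaching only~$s$ also reaches only~$s$. It follows that any path from a vertex reaching at least two sinks to one of its sinks avoids all such ``loner'' vertices; therefore deleting, in one sweep, every non-sink~$v$ with $L[v]=\{s\}$ leaves the reachable-sink set of every surviving vertex unchanged, creates no new loner, and removes exactly the vertices to which \autoref{kill-loners} applies, so both rules are inapplicable on the output.

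For the running time, computing the sinks, the reverse topological order, and the array~$L$ all take $\bigO(n+m)$~time, the intersection at~$v$ costing $\bigO(d^+(v))$ since it only compares singletons. The delicate point---and the one I expect to require the most care---is carrying out the redirections of \autoref{halfblind} without a quadratic blow-up from the membership test $(v,s)\notin A$. I would process all out-neighbours of a fixed~$v$ together, accumulating the redirected weights per target sink in an auxiliary array indexed by sinks, recording during the same scan which arcs~$(v,s)$ already exist, and then resetting only the touched entries via a per-$v$ list; this gives $\bigO(d^+(v))$ per vertex and $\bigO(n+m)$ overall. Deleting the loner vertices together with their incident arcs is likewise $\bigO(n+m)$, since each arc is removed at most once.

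Finally, (iv) is immediate from the structure established in~(ii): the first phase is a sequence of applications of \autoref{halfblind}, and the second a sequence of applications of \autoref{kill-loners}, each of the latter performed on a graph already reduced with respect to \autoref{halfblind}. Since the two lemmas preceding this one show that each individual application preserves the yes-instance status, chaining them over the whole sequence yields that $(G,\wei{},k)$ is a yes-instance if and only if $(G',\wei{}',k)$ is.
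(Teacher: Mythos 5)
Your proposal is correct and follows essentially the same route as the paper's proof: the same semantic characterization of the array~$L$ via reverse topological order, the same key observation that \autoref{halfblind} preserves reachable-sink sets (so $L$ need not be updated and exhaustiveness follows), the same chaining of the two rule-correctness lemmas, and the same lazy per-vertex sink-indexed array to make the $(v,s)\in A$ test constant-time. Your variant of resetting touched entries via a per-$v$ list is only a cosmetic implementation difference from the paper's overwrite-before-use bookkeeping.
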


\begin{proof}
  We first discuss the semantics of \autoref{reducealg}, then its running time.  After \autoref{lin:bicap}, $L[v]=\{s\}$ for some vertex~$v$ if~$v$ can reach exactly one sink~$s$ and $L[v]=\emptyset$ otherwise: this is, by definition, true for all~$L[v]$ with~$v\in \Sin$. For $v\in V\setminus \Sin$ it also holds, since $v$~can reach exactly one sink~$s$ if and only if all of its out-neighbors~$u\in N^+(v)$ can reach~$s$ and no other sinks, that is, if and only if $L[u]=\{s\}$ for all out-neighbors~$u\in N^+(v)$ of~$v$.

  Hence, the loop in lines~\ref{apprule1}--\ref{delarc} applies
  \autoref{halfblind} to all arcs to which \autoref{halfblind} is
  applicable. Moreover, \autoref{halfblind} does not change which sinks
  are reachable from any vertex and, hence, cannot create new arcs to
  which \autoref{halfblind} may be applied. Hence, when reaching
  \autoref{apprule2}, the graph will be exhaustively reduced with respect to
  \autoref{halfblind} and we do not have to update the array~$L$.

  The loop in lines~\ref{apprule2} and~\ref{apprule2end} now applies
  \autoref{kill-loners}, which is allowed, since the graph is exhaustively reduced
  with respect to \autoref{halfblind}. Moreover, an application of
  \autoref{kill-loners} cannot create new vertices to which
  \autoref{kill-loners} may become applicable or arcs to which
  \autoref{halfblind} may become applicable. Hence, \autoref{returnred}
  indeed returns a graph that is exhaustively reduced with respect to
  both data reduction rules.

  It remains to analyze the running time. Obviously, lines \ref{startinit}--\ref{endinit} of \autoref{reducealg} work in $\bigO(n)$~time. To execute \autoref{dynprog} in $\bigO(n+m)$~time, we iterate over the vertices in~$V\setminus \Sin$ in reverse topological order, which can be computed in $\bigO(n+m)$~time \citep[Section~22.4]{CLRS01}. Hence, when computing~$L[v]$ for some vertex in \autoref{dynprog}, we already know the values~$L[u]$ for all~$u\in N^+(v)$. Moreover, $L[v]$~is the intersection of sets with at most one element and, therefore, also contains at most one element. It follows that we can compute~$L[v]$ in $\bigO(|N^+(v)|)$~time for each vertex~$v\in V\setminus \Sin$ and, therefore, in $\bigO(n+m)$~total time for all vertices. The rest of the algorithm only iterates once over all arcs and vertices. Hence, to show that it works in $\bigO(n+m)$~time, it remains to show how to execute lines~\ref{isarcthere} and~\ref{weiinc} in constant time.

  Herein, the main difficulty is that an adjacency list cannot answer queries of the form ``$(v,s)\in A$?'' in constant time. However, since we earlier required to iterate over all out-neighbors of a vertex~$v$ in~$\bigO(|N^+(v)|)$~time, we cannot just use an adjacency matrix instead.  We exploit a different trick, which, for the sake of clarity is not made explicit in the pseudo code: assume that, when considering a vertex~$v\in V$ in \autoref{apprule1}, we have an $n$-element array~$A$ such that~$A[s]$ holds a pointer to the value~$\wei(v,s)$ if~$(v,s)\in A$ and $A[s]=\bot$ otherwise. Then, we could in constant time check in \autoref{isarcthere} whether~$A[s]=\bot$ to find out whether $(v,s)\in A$ and, if this is the case, get a pointer to (and increment) the weight~$\wei(v,s)$ in constant time in \autoref{weiinc}.  However, we cannot afford initializing an \(n\)-entry array~\(A\) for each vertex~\(v\in V\) and we cannot make assumptions on the value of uninitialized entries.  Luckily, we access~$A[s]$ only if there is a vertex~$w\in N^+(v)$ with $L[w]=\{s\}$ for some~$s$. Hence, we can create an \(n\)-entry array~\(A\) once in the beginning of the algorithm and then, between lines~\ref{apprule1} and~\ref{arrayhere}, set up~\(A\) for~\(v\in V\) as follows: for each~$w\in N^+(v)$ with $L[w]=\{s\}$, set $A_v[s]:=\bot$. Then, for each $s\in N^+(v)$ with $s\in\Sin$, let $A_v[s]$ point to $\wei(v,s)$. \qed
\end{proof}
\noindent
\autoref{lem:datareductionrules} shows that we can exhaustively apply
the two Reduction Rules~\ref{halfblind} and~\ref{kill-loners} in linear
time using \autoref{reducealg}.  A natural approach for evaluating the
quality of our preprocessing would be to provide a performance guarantee
in terms of small problem kernels.  Unfortunately, in
\autoref{sec:kkern} we show that, under widely accepted
complexity-theoretic assumptions, there is no problem kernel with size
polynomial in~\(k\) for \DAGP{}.  Nevertheless, the next section shows
that our data reduction technique performs remarkably well in empirical
tests.  Furthermore, we show that the running time of
\autoref{alg:simple-st} is significantly improved when it is applied
after \autoref{reducealg}.  We achieve the largest speedup of
\autoref{alg:simple-st} by interleaving the application of
\autoref{reducealg} and the branching; a technique that generally works
well for search tree algorithms~\citep{NR00}.

\subsection{Experimental evaluation}\label{sec:experiments}
In this section, we aim for giving a proof of concept of our $\bigO(2^k\cdot(n+m))$~time
search tree algorithm presented in \autoref{sec:klintime} by demonstrating to which
extent instances of \DAGP{} are solvable within a time frame of five minutes.
Moreover, using the optimal solutions found by our algorithm,
we evaluate the quality of a heuristic presented by \citet{LBK09},
which can be considered a variant of our search tree algorithm (\autoref{alg:simple-st}):
The difference is that, while our search tree algorithm branches into all possibilities
of putting a vertex into a connected component with some sink, the heuristic just puts
each vertex into the connected component with the sink it would be most expensive
to disconnect from.
This is the strategy described by \citet{LBK09} as yielding the best results and is in more detail
described by \citet{SHE+13}. The pseudocode of the heuristics (\autoref{alg:simple-trees}) can be found in \autoref{sec:tree}, where we prove that the heuristic works optimally on trees.

\paragraph{Implementation details}  We implemented the search tree algorithm
as well as the heuristic in three variants:
\begin{enumerate}
\item without data reduction,
\item with initially applying the data reduction algorithm presented in
  \autoref{reducealg}, and
\item with interleaving the data reduction using \autoref{reducealg}
  with the branching of \autoref{alg:simple-st}.
\end{enumerate}
The source code uses about 1000 lines of C++ and is freely
available.\footnote{\url{http://fpt.akt.tu-berlin.de/dagpart/}} The
experiments were run on a computer with a 3.6\,GHz Intel Xeon processor
and 64\,GiB RAM under Linux~3.2.0, where the source code has been
compiled using the GNU C++ compiler version~4.7.2 and using the
highest optimization level~(-O3).

\paragraph{Data} We tried to apply our algorithm to the data set
described by \citet{LBK09}; unfortunately, its optimum \partSet{}s have
too large weight to be found by our algorithm.\footnote{The exact
  weights of the optimum \partSet{}s remain unknown, since our algorithm
  could not compute them within several hours.} In order to prove the
feasibility of solving large instances with \emph{small}
minimum \partSet{}s, we generated artificial instances. Herein, however,
we stick to the clustering motivation of \DAGP{} and test our algorithm
using simulated citation networks: vertices in a graph represent
articles and if an article~$v$ cites an article~$w$, there is an
arc~$(v,w)$.  Herein, we consider only directed acyclic graphs, which
model that an article only cites older articles. A partitioning of such
a network into connected components of which each contains only one sink
can be interpreted as a clustering into different topics of which we
want to identify the origins.

To simulate citation networks, we employ preferential attachment graphs---a random graph model commonly used to model citations between articles~\cite{Pri76,BA99,JNB03}.  Preferential attachment graphs model the natural growth of citation networks, in which new articles are published over time and with high probability cite the already highly-cited articles. Indeed, \citet{JNB03} empirically verified that in this preferential attachment model, the probability of an article being cited is linear in the number of times the article has been cited in the past.

To create a preferential attachment graph, we first choose two parameters: the number~$c$ of sinks to create and the maximum outdegree~$d$ of each vertex. After creating $c$~sinks, $n$~new vertices are introduced one after another. After introducing a vertex, we add to it $d$~outgoing unit-weight arcs: for each of these \(d\)~outgoing arcs, the target is chosen independently at random among previously introduced vertices such that each vertex is chosen as the target with a probability proportional to its indegree plus one.  We do not add an outgoing arc twice, which might result in a vertex having less than \(d\)~outgoing arcs if the target nodes of two arcs to be added coincide.

We compared our algorithm to the heuristic of \citet{LBK09} on these
graphs, but our algorithm could solve only instances with up to 300~arcs optimally,
since the optimum solution weight grows too quickly in the sizes of the
generated graphs. To show the running time behavior of our algorithm on
larger graphs with small solution sizes, we employ an additional
approach: we generate multiple connected components, each being a
preferential attachment graph with one sink, and randomly add
$k$~additional arcs between these connected components in a way so that
the graph remains acyclic.
Then, obviously, an optimal \partSet{} cannot be larger than~$k$.
We call the set of $k$~randomly added arcs \emph{embedded partitioning set} 
and it can be viewed as noise in data that clusters well.

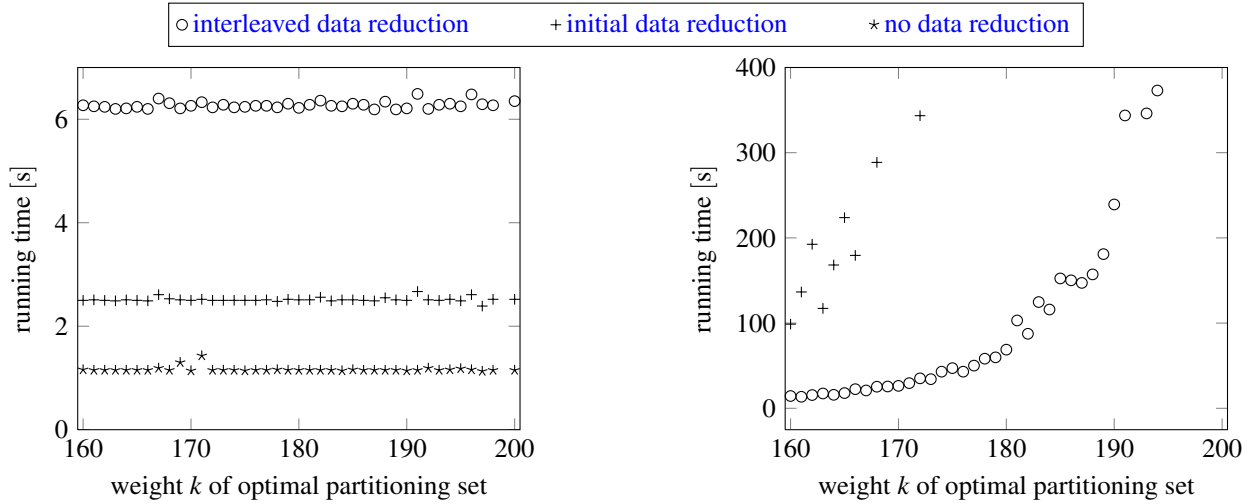
\begin{figure}
  \centering\ref{dagp-legends}
  \begin{tikzpicture}
    \begin{axis}[ymin=0, ymax=7, ylabel=running time,
      xlabel=weight~$k$ of optimal \partSet{}, y unit=s,
      width=0.45\textwidth, legend to name=dagp-legends, legend
      columns=-1, legend style={/tikz/every even column/.append
        style={column sep=1cm}}, xmax=200.5, xmin=159.5]
      
      \addplot[color=black,mark=o, only marks] table
      [x=STiDRk, y=MCiDRs, col sep=comma] {
V,E,DRV,DRE,DRs,MCiDRs,MCDRs,MCs,STiDRs,STDRs,MCiDRk,MCDRk,MCk,STiDRk,STDRk
1000000 547,18764908 1074,547,1074,2.05,6.27,2.5,1.16,14.4,98.77,160,160,160,160,160
1000000 714,18770910 1408,714,1408,2.03,6.25,2.51,1.15,13.52,136.53,161,161,161,161,161
1000000 507,18766714 994,507,994,2.06,6.24,2.5,1.15,15.54,192.47,162,162,162,162,162
1000000 384,18746771 748,384,748,2.04,6.2,2.49,1.15,17.38,117.24,163,163,163,163,163
1000000 491,18759591 962,491,962,2.06,6.21,2.51,1.15,15.81,168.18,164,164,164,164,164
1000000 517,18779564 1014,517,1014,2.03,6.24,2.5,1.15,17.86,223.77,165,165,165,165,165
1000000 451,18760003 882,451,882,2.05,6.2,2.49,1.15,22.45,179.45,166,166,166,166,166
1000000 4697,18764259 9481,4697,9481,2.07,6.4,2.61,1.19,20.93,,167,167,167,167,
1000000 790,18762325 1562,790,1562,2.05,6.31,2.53,1.15,25.32,288.61,168,168,168,168,168
1000000 603,18768885 1187,603,1187,2.06,6.21,2.51,1.3,25.59,,169,169,169,169,
1000000 636,18772622 1252,636,1252,2.03,6.26,2.5,1.14,26.32,442.82,170,170,170,170,170
1000000 1083,18763249 2151,1083,2151,2.05,6.33,2.52,1.43,29.52,430.91,171,171,171,171,171
1000000 625,18755903 1230,625,1230,2.04,6.23,2.5,1.15,35.14,343.45,172,172,172,172,172
1000000 498,18752502 976,498,976,2.05,6.28,2.5,1.15,34.17,541.18,173,173,173,173,173
1000000 758,18760875 1496,758,1496,2.05,6.23,2.5,1.15,42.98,570.39,174,174,174,174,174
1000000 639,18761487 1258,639,1258,2.05,6.24,2.5,1.14,47.11,711.18,175,175,175,175,175
1000000 479,18744904 938,479,938,2.04,6.26,2.5,1.15,42.98,491.98,176,176,176,176,176
1000000 824,18753844 1629,824,1629,2.05,6.26,2.51,1.15,50.1,741.57,177,177,177,177,177
1000000 436,18757283 853,436,853,2.04,6.23,2.48,1.16,58.23,644.88,178,178,178,178,178
1000000 1217,18760997 2419,1217,2419,2.06,6.3,2.52,1.15,59.77,867.64,179,179,179,179,179
1000000 816,18764816 1613,816,1613,2.05,6.22,2.51,1.15,68.82,,180,180,180,180,
1000000 868,18758176 1716,868,1716,2.05,6.28,2.51,1.15,103.09,1402.45,181,181,181,181,181
1000000 2117,18762995 4222,2117,4222,2.05,6.36,2.56,1.15,87.46,,182,182,182,182,
1000000 614,18763594 1208,614,1208,2.05,6.26,2.49,1.15,124.71,,183,183,183,183,
1000000 742,18762792 1464,742,1464,2.05,6.25,2.51,1.14,115.92,2026.76,184,184,184,184,184
1000000 729,18763622 1438,729,1438,2.05,6.3,2.51,1.16,152.4,2601.62,185,185,185,185,185
1000000 596,18772695 1172,596,1172,2.05,6.28,2.5,1.15,150.19,,186,186,186,186,
1000000 575,18765267 1130,575,1130,2.04,6.19,2.49,1.15,147.16,,187,187,187,187,
1000000 1916,18759066 3827,1916,3827,2.06,6.34,2.55,1.15,157.14,1630.15,188,188,188,188,188
1000000 602,18758051 1184,602,1184,2.04,6.19,2.51,1.15,180.95,,189,189,189,189,
1000000 698,18759017 1376,698,1376,2.04,6.21,2.5,1.14,239.23,,190,190,190,190,
1000000 6702,18765728 13631,6702,13631,2.07,6.49,2.67,1.15,343.69,,191,191,191,191,
1000000 794,18764629 1571,794,1571,2.04,6.2,2.51,1.19,420.52,,192,192,192,192,
1000000 806,18760556 1594,806,1594,2.05,6.28,2.5,1.15,346.09,,193,193,193,193,
1000000 670,18759452 1321,670,1321,2.06,6.3,2.52,1.16,372.98,,194,194,194,194,
1000000 555,18760070 1091,555,1091,2.06,6.25,2.49,1.18,450.42,,195,195,195,195,
1000000 4102,18759508 8255,4102,8255,2.07,6.48,2.61,1.16,516.69,,196,196,196,196,
1000000 820,18759429 1620,820,1620,2.05,6.29,2.39,1.13,521.22,,197,197,197,197,
1000000 922,18756012 1825,922,1825,2.06,6.27,2.52,1.15,593.02,,198,198,198,198,
1000000 1847,18753561 3680,1847,3680,2.04,6.35,2.42,1.14,,,199,199,199,,
1000000 1151,18754132 2284,1151,2284,2.05,6.35,2.52,1.15,748.9,,200,200,200,200,
};
      \addlegendentry{interleaved data reduction};

      \addplot[color=black,mark=+, only marks] table
      [x=STiDRk, y=MCDRs, col sep=comma] {V,E,DRV,DRE,DRs,MCiDRs,MCDRs,MCs,STiDRs,STDRs,MCiDRk,MCDRk,MCk,STiDRk,STDRk
1000000 547,18764908 1074,547,1074,2.05,6.27,2.5,1.16,14.4,98.77,160,160,160,160,160
1000000 714,18770910 1408,714,1408,2.03,6.25,2.51,1.15,13.52,136.53,161,161,161,161,161
1000000 507,18766714 994,507,994,2.06,6.24,2.5,1.15,15.54,192.47,162,162,162,162,162
1000000 384,18746771 748,384,748,2.04,6.2,2.49,1.15,17.38,117.24,163,163,163,163,163
1000000 491,18759591 962,491,962,2.06,6.21,2.51,1.15,15.81,168.18,164,164,164,164,164
1000000 517,18779564 1014,517,1014,2.03,6.24,2.5,1.15,17.86,223.77,165,165,165,165,165
1000000 451,18760003 882,451,882,2.05,6.2,2.49,1.15,22.45,179.45,166,166,166,166,166
1000000 4697,18764259 9481,4697,9481,2.07,6.4,2.61,1.19,20.93,,167,167,167,167,
1000000 790,18762325 1562,790,1562,2.05,6.31,2.53,1.15,25.32,288.61,168,168,168,168,168
1000000 603,18768885 1187,603,1187,2.06,6.21,2.51,1.3,25.59,,169,169,169,169,
1000000 636,18772622 1252,636,1252,2.03,6.26,2.5,1.14,26.32,442.82,170,170,170,170,170
1000000 1083,18763249 2151,1083,2151,2.05,6.33,2.52,1.43,29.52,430.91,171,171,171,171,171
1000000 625,18755903 1230,625,1230,2.04,6.23,2.5,1.15,35.14,343.45,172,172,172,172,172
1000000 498,18752502 976,498,976,2.05,6.28,2.5,1.15,34.17,541.18,173,173,173,173,173
1000000 758,18760875 1496,758,1496,2.05,6.23,2.5,1.15,42.98,570.39,174,174,174,174,174
1000000 639,18761487 1258,639,1258,2.05,6.24,2.5,1.14,47.11,711.18,175,175,175,175,175
1000000 479,18744904 938,479,938,2.04,6.26,2.5,1.15,42.98,491.98,176,176,176,176,176
1000000 824,18753844 1629,824,1629,2.05,6.26,2.51,1.15,50.1,741.57,177,177,177,177,177
1000000 436,18757283 853,436,853,2.04,6.23,2.48,1.16,58.23,644.88,178,178,178,178,178
1000000 1217,18760997 2419,1217,2419,2.06,6.3,2.52,1.15,59.77,867.64,179,179,179,179,179
1000000 816,18764816 1613,816,1613,2.05,6.22,2.51,1.15,68.82,,180,180,180,180,
1000000 868,18758176 1716,868,1716,2.05,6.28,2.51,1.15,103.09,1402.45,181,181,181,181,181
1000000 2117,18762995 4222,2117,4222,2.05,6.36,2.56,1.15,87.46,,182,182,182,182,
1000000 614,18763594 1208,614,1208,2.05,6.26,2.49,1.15,124.71,,183,183,183,183,
1000000 742,18762792 1464,742,1464,2.05,6.25,2.51,1.14,115.92,2026.76,184,184,184,184,184
1000000 729,18763622 1438,729,1438,2.05,6.3,2.51,1.16,152.4,2601.62,185,185,185,185,185
1000000 596,18772695 1172,596,1172,2.05,6.28,2.5,1.15,150.19,,186,186,186,186,
1000000 575,18765267 1130,575,1130,2.04,6.19,2.49,1.15,147.16,,187,187,187,187,
1000000 1916,18759066 3827,1916,3827,2.06,6.34,2.55,1.15,157.14,1630.15,188,188,188,188,188
1000000 602,18758051 1184,602,1184,2.04,6.19,2.51,1.15,180.95,,189,189,189,189,
1000000 698,18759017 1376,698,1376,2.04,6.21,2.5,1.14,239.23,,190,190,190,190,
1000000 6702,18765728 13631,6702,13631,2.07,6.49,2.67,1.15,343.69,,191,191,191,191,
1000000 794,18764629 1571,794,1571,2.04,6.2,2.51,1.19,420.52,,192,192,192,192,
1000000 806,18760556 1594,806,1594,2.05,6.28,2.5,1.15,346.09,,193,193,193,193,
1000000 670,18759452 1321,670,1321,2.06,6.3,2.52,1.16,372.98,,194,194,194,194,
1000000 555,18760070 1091,555,1091,2.06,6.25,2.49,1.18,450.42,,195,195,195,195,
1000000 4102,18759508 8255,4102,8255,2.07,6.48,2.61,1.16,516.69,,196,196,196,196,
1000000 820,18759429 1620,820,1620,2.05,6.29,2.39,1.13,521.22,,197,197,197,197,
1000000 922,18756012 1825,922,1825,2.06,6.27,2.52,1.15,593.02,,198,198,198,198,
1000000 1847,18753561 3680,1847,3680,2.04,6.35,2.42,1.14,,,199,199,199,,
1000000 1151,18754132 2284,1151,2284,2.05,6.35,2.52,1.15,748.9,,200,200,200,200,
};
      \addlegendentry{initial data reduction};

      \addplot[color=black,mark=star, only marks]
      table [x=STiDRk, y=MCs, col sep=comma] {V,E,DRV,DRE,DRs,MCiDRs,MCDRs,MCs,STiDRs,STDRs,MCiDRk,MCDRk,MCk,STiDRk,STDRk
1000000 547,18764908 1074,547,1074,2.05,6.27,2.5,1.16,14.4,98.77,160,160,160,160,160
1000000 714,18770910 1408,714,1408,2.03,6.25,2.51,1.15,13.52,136.53,161,161,161,161,161
1000000 507,18766714 994,507,994,2.06,6.24,2.5,1.15,15.54,192.47,162,162,162,162,162
1000000 384,18746771 748,384,748,2.04,6.2,2.49,1.15,17.38,117.24,163,163,163,163,163
1000000 491,18759591 962,491,962,2.06,6.21,2.51,1.15,15.81,168.18,164,164,164,164,164
1000000 517,18779564 1014,517,1014,2.03,6.24,2.5,1.15,17.86,223.77,165,165,165,165,165
1000000 451,18760003 882,451,882,2.05,6.2,2.49,1.15,22.45,179.45,166,166,166,166,166
1000000 4697,18764259 9481,4697,9481,2.07,6.4,2.61,1.19,20.93,,167,167,167,167,
1000000 790,18762325 1562,790,1562,2.05,6.31,2.53,1.15,25.32,288.61,168,168,168,168,168
1000000 603,18768885 1187,603,1187,2.06,6.21,2.51,1.3,25.59,,169,169,169,169,
1000000 636,18772622 1252,636,1252,2.03,6.26,2.5,1.14,26.32,442.82,170,170,170,170,170
1000000 1083,18763249 2151,1083,2151,2.05,6.33,2.52,1.43,29.52,430.91,171,171,171,171,171
1000000 625,18755903 1230,625,1230,2.04,6.23,2.5,1.15,35.14,343.45,172,172,172,172,172
1000000 498,18752502 976,498,976,2.05,6.28,2.5,1.15,34.17,541.18,173,173,173,173,173
1000000 758,18760875 1496,758,1496,2.05,6.23,2.5,1.15,42.98,570.39,174,174,174,174,174
1000000 639,18761487 1258,639,1258,2.05,6.24,2.5,1.14,47.11,711.18,175,175,175,175,175
1000000 479,18744904 938,479,938,2.04,6.26,2.5,1.15,42.98,491.98,176,176,176,176,176
1000000 824,18753844 1629,824,1629,2.05,6.26,2.51,1.15,50.1,741.57,177,177,177,177,177
1000000 436,18757283 853,436,853,2.04,6.23,2.48,1.16,58.23,644.88,178,178,178,178,178
1000000 1217,18760997 2419,1217,2419,2.06,6.3,2.52,1.15,59.77,867.64,179,179,179,179,179
1000000 816,18764816 1613,816,1613,2.05,6.22,2.51,1.15,68.82,,180,180,180,180,
1000000 868,18758176 1716,868,1716,2.05,6.28,2.51,1.15,103.09,1402.45,181,181,181,181,181
1000000 2117,18762995 4222,2117,4222,2.05,6.36,2.56,1.15,87.46,,182,182,182,182,
1000000 614,18763594 1208,614,1208,2.05,6.26,2.49,1.15,124.71,,183,183,183,183,
1000000 742,18762792 1464,742,1464,2.05,6.25,2.51,1.14,115.92,2026.76,184,184,184,184,184
1000000 729,18763622 1438,729,1438,2.05,6.3,2.51,1.16,152.4,2601.62,185,185,185,185,185
1000000 596,18772695 1172,596,1172,2.05,6.28,2.5,1.15,150.19,,186,186,186,186,
1000000 575,18765267 1130,575,1130,2.04,6.19,2.49,1.15,147.16,,187,187,187,187,
1000000 1916,18759066 3827,1916,3827,2.06,6.34,2.55,1.15,157.14,1630.15,188,188,188,188,188
1000000 602,18758051 1184,602,1184,2.04,6.19,2.51,1.15,180.95,,189,189,189,189,
1000000 698,18759017 1376,698,1376,2.04,6.21,2.5,1.14,239.23,,190,190,190,190,
1000000 6702,18765728 13631,6702,13631,2.07,6.49,2.67,1.15,343.69,,191,191,191,191,
1000000 794,18764629 1571,794,1571,2.04,6.2,2.51,1.19,420.52,,192,192,192,192,
1000000 806,18760556 1594,806,1594,2.05,6.28,2.5,1.15,346.09,,193,193,193,193,
1000000 670,18759452 1321,670,1321,2.06,6.3,2.52,1.16,372.98,,194,194,194,194,
1000000 555,18760070 1091,555,1091,2.06,6.25,2.49,1.18,450.42,,195,195,195,195,
1000000 4102,18759508 8255,4102,8255,2.07,6.48,2.61,1.16,516.69,,196,196,196,196,
1000000 820,18759429 1620,820,1620,2.05,6.29,2.39,1.13,521.22,,197,197,197,197,
1000000 922,18756012 1825,922,1825,2.06,6.27,2.52,1.15,593.02,,198,198,198,198,
1000000 1847,18753561 3680,1847,3680,2.04,6.35,2.42,1.14,,,199,199,199,,
1000000 1151,18754132 2284,1151,2284,2.05,6.35,2.52,1.15,748.9,,200,200,200,200,
};
      
      \addlegendentry{no data reduction};
    \end{axis}
  \end{tikzpicture}\hfill{}
  \begin{tikzpicture}
    \begin{axis}[ylabel=running time, xlabel=weight~$k$ of
      optimal \partSet{}, y unit=s, xmax=200.5, xmin=159.5, ymax=400,
      width=0.45\textwidth]
      
      \addplot[color=black,mark=o, only marks] table
      [x=STiDRk, y=STiDRs, col sep=comma] {V,E,DRV,DRE,DRs,MCiDRs,MCDRs,MCs,STiDRs,STDRs,MCiDRk,MCDRk,MCk,STiDRk,STDRk
1000000 547,18764908 1074,547,1074,2.05,6.27,2.5,1.16,14.4,98.77,160,160,160,160,160
1000000 714,18770910 1408,714,1408,2.03,6.25,2.51,1.15,13.52,136.53,161,161,161,161,161
1000000 507,18766714 994,507,994,2.06,6.24,2.5,1.15,15.54,192.47,162,162,162,162,162
1000000 384,18746771 748,384,748,2.04,6.2,2.49,1.15,17.38,117.24,163,163,163,163,163
1000000 491,18759591 962,491,962,2.06,6.21,2.51,1.15,15.81,168.18,164,164,164,164,164
1000000 517,18779564 1014,517,1014,2.03,6.24,2.5,1.15,17.86,223.77,165,165,165,165,165
1000000 451,18760003 882,451,882,2.05,6.2,2.49,1.15,22.45,179.45,166,166,166,166,166
1000000 4697,18764259 9481,4697,9481,2.07,6.4,2.61,1.19,20.93,,167,167,167,167,
1000000 790,18762325 1562,790,1562,2.05,6.31,2.53,1.15,25.32,288.61,168,168,168,168,168
1000000 603,18768885 1187,603,1187,2.06,6.21,2.51,1.3,25.59,,169,169,169,169,
1000000 636,18772622 1252,636,1252,2.03,6.26,2.5,1.14,26.32,442.82,170,170,170,170,170
1000000 1083,18763249 2151,1083,2151,2.05,6.33,2.52,1.43,29.52,430.91,171,171,171,171,171
1000000 625,18755903 1230,625,1230,2.04,6.23,2.5,1.15,35.14,343.45,172,172,172,172,172
1000000 498,18752502 976,498,976,2.05,6.28,2.5,1.15,34.17,541.18,173,173,173,173,173
1000000 758,18760875 1496,758,1496,2.05,6.23,2.5,1.15,42.98,570.39,174,174,174,174,174
1000000 639,18761487 1258,639,1258,2.05,6.24,2.5,1.14,47.11,711.18,175,175,175,175,175
1000000 479,18744904 938,479,938,2.04,6.26,2.5,1.15,42.98,491.98,176,176,176,176,176
1000000 824,18753844 1629,824,1629,2.05,6.26,2.51,1.15,50.1,741.57,177,177,177,177,177
1000000 436,18757283 853,436,853,2.04,6.23,2.48,1.16,58.23,644.88,178,178,178,178,178
1000000 1217,18760997 2419,1217,2419,2.06,6.3,2.52,1.15,59.77,867.64,179,179,179,179,179
1000000 816,18764816 1613,816,1613,2.05,6.22,2.51,1.15,68.82,,180,180,180,180,
1000000 868,18758176 1716,868,1716,2.05,6.28,2.51,1.15,103.09,1402.45,181,181,181,181,181
1000000 2117,18762995 4222,2117,4222,2.05,6.36,2.56,1.15,87.46,,182,182,182,182,
1000000 614,18763594 1208,614,1208,2.05,6.26,2.49,1.15,124.71,,183,183,183,183,
1000000 742,18762792 1464,742,1464,2.05,6.25,2.51,1.14,115.92,2026.76,184,184,184,184,184
1000000 729,18763622 1438,729,1438,2.05,6.3,2.51,1.16,152.4,2601.62,185,185,185,185,185
1000000 596,18772695 1172,596,1172,2.05,6.28,2.5,1.15,150.19,,186,186,186,186,
1000000 575,18765267 1130,575,1130,2.04,6.19,2.49,1.15,147.16,,187,187,187,187,
1000000 1916,18759066 3827,1916,3827,2.06,6.34,2.55,1.15,157.14,1630.15,188,188,188,188,188
1000000 602,18758051 1184,602,1184,2.04,6.19,2.51,1.15,180.95,,189,189,189,189,
1000000 698,18759017 1376,698,1376,2.04,6.21,2.5,1.14,239.23,,190,190,190,190,
1000000 6702,18765728 13631,6702,13631,2.07,6.49,2.67,1.15,343.69,,191,191,191,191,
1000000 794,18764629 1571,794,1571,2.04,6.2,2.51,1.19,420.52,,192,192,192,192,
1000000 806,18760556 1594,806,1594,2.05,6.28,2.5,1.15,346.09,,193,193,193,193,
1000000 670,18759452 1321,670,1321,2.06,6.3,2.52,1.16,372.98,,194,194,194,194,
1000000 555,18760070 1091,555,1091,2.06,6.25,2.49,1.18,450.42,,195,195,195,195,
1000000 4102,18759508 8255,4102,8255,2.07,6.48,2.61,1.16,516.69,,196,196,196,196,
1000000 820,18759429 1620,820,1620,2.05,6.29,2.39,1.13,521.22,,197,197,197,197,
1000000 922,18756012 1825,922,1825,2.06,6.27,2.52,1.15,593.02,,198,198,198,198,
1000000 1847,18753561 3680,1847,3680,2.04,6.35,2.42,1.14,,,199,199,199,,
1000000 1151,18754132 2284,1151,2284,2.05,6.35,2.52,1.15,748.9,,200,200,200,200,
};

      \addplot[color=black,mark=+, only marks] table
      [x=STDRk, y=STDRs, col sep=comma] {V,E,DRV,DRE,DRs,MCiDRs,MCDRs,MCs,STiDRs,STDRs,MCiDRk,MCDRk,MCk,STiDRk,STDRk
1000000 547,18764908 1074,547,1074,2.05,6.27,2.5,1.16,14.4,98.77,160,160,160,160,160
1000000 714,18770910 1408,714,1408,2.03,6.25,2.51,1.15,13.52,136.53,161,161,161,161,161
1000000 507,18766714 994,507,994,2.06,6.24,2.5,1.15,15.54,192.47,162,162,162,162,162
1000000 384,18746771 748,384,748,2.04,6.2,2.49,1.15,17.38,117.24,163,163,163,163,163
1000000 491,18759591 962,491,962,2.06,6.21,2.51,1.15,15.81,168.18,164,164,164,164,164
1000000 517,18779564 1014,517,1014,2.03,6.24,2.5,1.15,17.86,223.77,165,165,165,165,165
1000000 451,18760003 882,451,882,2.05,6.2,2.49,1.15,22.45,179.45,166,166,166,166,166
1000000 4697,18764259 9481,4697,9481,2.07,6.4,2.61,1.19,20.93,,167,167,167,167,
1000000 790,18762325 1562,790,1562,2.05,6.31,2.53,1.15,25.32,288.61,168,168,168,168,168
1000000 603,18768885 1187,603,1187,2.06,6.21,2.51,1.3,25.59,,169,169,169,169,
1000000 636,18772622 1252,636,1252,2.03,6.26,2.5,1.14,26.32,442.82,170,170,170,170,170
1000000 1083,18763249 2151,1083,2151,2.05,6.33,2.52,1.43,29.52,430.91,171,171,171,171,171
1000000 625,18755903 1230,625,1230,2.04,6.23,2.5,1.15,35.14,343.45,172,172,172,172,172
1000000 498,18752502 976,498,976,2.05,6.28,2.5,1.15,34.17,541.18,173,173,173,173,173
1000000 758,18760875 1496,758,1496,2.05,6.23,2.5,1.15,42.98,570.39,174,174,174,174,174
1000000 639,18761487 1258,639,1258,2.05,6.24,2.5,1.14,47.11,711.18,175,175,175,175,175
1000000 479,18744904 938,479,938,2.04,6.26,2.5,1.15,42.98,491.98,176,176,176,176,176
1000000 824,18753844 1629,824,1629,2.05,6.26,2.51,1.15,50.1,741.57,177,177,177,177,177
1000000 436,18757283 853,436,853,2.04,6.23,2.48,1.16,58.23,644.88,178,178,178,178,178
1000000 1217,18760997 2419,1217,2419,2.06,6.3,2.52,1.15,59.77,867.64,179,179,179,179,179
1000000 816,18764816 1613,816,1613,2.05,6.22,2.51,1.15,68.82,,180,180,180,180,
1000000 868,18758176 1716,868,1716,2.05,6.28,2.51,1.15,103.09,1402.45,181,181,181,181,181
1000000 2117,18762995 4222,2117,4222,2.05,6.36,2.56,1.15,87.46,,182,182,182,182,
1000000 614,18763594 1208,614,1208,2.05,6.26,2.49,1.15,124.71,,183,183,183,183,
1000000 742,18762792 1464,742,1464,2.05,6.25,2.51,1.14,115.92,2026.76,184,184,184,184,184
1000000 729,18763622 1438,729,1438,2.05,6.3,2.51,1.16,152.4,2601.62,185,185,185,185,185
1000000 596,18772695 1172,596,1172,2.05,6.28,2.5,1.15,150.19,,186,186,186,186,
1000000 575,18765267 1130,575,1130,2.04,6.19,2.49,1.15,147.16,,187,187,187,187,
1000000 1916,18759066 3827,1916,3827,2.06,6.34,2.55,1.15,157.14,1630.15,188,188,188,188,188
1000000 602,18758051 1184,602,1184,2.04,6.19,2.51,1.15,180.95,,189,189,189,189,
1000000 698,18759017 1376,698,1376,2.04,6.21,2.5,1.14,239.23,,190,190,190,190,
1000000 6702,18765728 13631,6702,13631,2.07,6.49,2.67,1.15,343.69,,191,191,191,191,
1000000 794,18764629 1571,794,1571,2.04,6.2,2.51,1.19,420.52,,192,192,192,192,
1000000 806,18760556 1594,806,1594,2.05,6.28,2.5,1.15,346.09,,193,193,193,193,
1000000 670,18759452 1321,670,1321,2.06,6.3,2.52,1.16,372.98,,194,194,194,194,
1000000 555,18760070 1091,555,1091,2.06,6.25,2.49,1.18,450.42,,195,195,195,195,
1000000 4102,18759508 8255,4102,8255,2.07,6.48,2.61,1.16,516.69,,196,196,196,196,
1000000 820,18759429 1620,820,1620,2.05,6.29,2.39,1.13,521.22,,197,197,197,197,
1000000 922,18756012 1825,922,1825,2.06,6.27,2.52,1.15,593.02,,198,198,198,198,
1000000 1847,18753561 3680,1847,3680,2.04,6.35,2.42,1.14,,,199,199,199,,
1000000 1151,18754132 2284,1151,2284,2.05,6.35,2.52,1.15,748.9,,200,200,200,200,
};
    \end{axis}
  \end{tikzpicture}
  \caption{Comparisons of the running time of \citet{LBK09}'s heuristic
    (left) with the running time of our search tree algorithm (right).
    All graphs have $10^6$~vertices and roughly $18\cdot 10^6$~arcs and
    were generated by adding $k$~random arcs between ten connected
    components, each being a preferential attachment graph on
    $10^5$~vertices with outdegree twenty and a single sink. The
    heuristic solved all 40~instances optimally.  On the right hand
    side, no instance was solved in less than an hour without data
    reduction.}
  \label{fig:ktime}
\end{figure}

\paragraph{Experimental results}

\autoref{fig:ktime} compares the running time of the heuristic of
\citet{LBK09} to the running time of our \autoref{alg:simple-st} with
increasing optimal \partSet{} size~$k$. On the left side, it can be seen
that using the data reduction from \autoref{reducealg} slows down the
heuristic. This is not surprising, since the heuristic itself is
implemented to run in linear time and, hence, instead of first shrinking
the input instance by \autoref{reducealg} in linear time, one might
right away solve the instance heuristically. On the right side, one can
observe that, as expected, the running time of \autoref{alg:simple-st}
increases exponentially in~$k$. We only show the running time of the
implementations with data reduction: without data reduction, we could
not solve any instance in less than an hour. We can solve instances
with~$k\leq 190$ optimally within five minutes. This allowed us to
verify that the heuristic solved all 40~generated instances optimally,
regardless of the type of data reduction applied.

\begin{figure}
  \centering\ref{dagp-legends}
  \begin{tikzpicture}
    \begin{axis}[ylabel=running time, xmax=80000000, xlabel=input arcs,
      y unit=s, change x base, axis base prefix={axis x base -6 prefix
        {}}, x unit=10^6, width=0.45\textwidth]
      
      \addplot[color=black,mark=o,only marks, each nth point=2] table [x=E,
      y=MCiDRs, col sep=comma] {V,E,DRV,DRE,DRs,MCiDRs,MCDRs,MCs,STiDRs,STDRs,MCiDRk,MCDRk,MCk,STiDRk,STDRk
1000000,18767086,651,1283,2.08,6.49,2.52,1.12,225.68,,190,190,190,190,
1050000,19709159,660,1300,2.17,6.89,2.65,1.16,276.93,,190,190,190,190,
1100000,20655825,469,919,2.27,7.21,2.76,1.22,232.47,,190,190,190,190,
1150000,21604341,2901,5801,2.4,7.79,2.99,1.28,235.17,,190,190,190,190,
1200000,22564319,820,1621,2.49,7.92,3.04,1.34,276.79,,190,190,190,190,
1250000,23531472,1394,2769,2.6,8.49,3.21,1.42,241.42,,190,190,190,190,
1300000,24480730,501,982,2.72,8.81,3.3,1.49,277.68,,190,190,190,190,
1350000,25440966,467,914,2.82,9.04,3.26,1.53,233.7,,190,190,190,190,
1400000,26383694,658,1296,2.93,9.45,3.56,1.59,237.37,,190,190,190,190,
1450000,27342944,585,1151,3.03,9.8,3.7,1.67,232.26,,190,190,190,190,
1500000,28291769,445,870,3.13,10.16,3.8,1.71,237,,190,190,190,190,
1550000,29269274,2200,4399,3.28,10.78,4.03,1.88,222.51,,190,190,190,190,
1600000,30211184,1070,2120,3.34,11.01,4.09,1.84,220.52,,190,190,190,190,
1650000,31153242,1061,2102,3.47,11.38,4.23,1.96,259.86,,190,190,190,190,
1700000,32118855,1043,2067,3.59,11.85,4.37,1.97,197.58,,190,190,190,190,
1750000,33087214,1549,3083,3.7,12.16,4.53,2.1,223.84,,190,190,190,190,
1800000,34052317,1038,2056,3.79,12.57,4.64,2.1,304.66,,190,190,190,190,
1850000,34969772,711,1402,3.9,12.93,4.75,2.17,247.05,,190,190,190,190,
1900000,35955033,664,1309,4.01,13.38,4.87,2.23,227.51,,190,190,190,190,
1950000,36919935,593,1166,4.12,13.69,5.02,2.34,240.66,,190,190,190,190,
2000000,37879080,846,1674,4.25,14.25,5.17,2.42,265.78,,190,190,190,190,
2050000,38826564,562,1104,4.33,14.41,5.29,2.44,238.87,,190,190,190,190,
2100000,39796860,1075,2130,4.45,14.89,5.45,2.51,200.9,,190,190,190,190,
2150000,40753264,698,1376,4.57,15.47,5.58,2.59,225.85,,190,190,190,190,
2200000,41687628,1652,3284,4.68,15.78,5.74,2.64,191.86,,190,190,190,190,
2250000,42676754,949,1878,4.78,16.06,5.84,2.7,211.87,,190,190,190,190,
2300000,43640247,1108,2199,4.9,16.6,5.99,2.83,219.43,,190,190,190,190,
2350000,44569595,7875,15886,5.05,17.64,6.35,2.85,231.67,,190,190,190,190,
2400000,45529873,509,999,5.11,17.22,6.22,2.92,199.19,,190,190,190,190,
2450000,46502969,1385,2752,5.24,17.96,6.4,3.25,224.79,,190,190,190,190,
2500000,47473942,739,1458,5.34,18.14,6.51,3.41,204.44,,190,190,190,190,
2550000,48407592,532,1044,5.47,18.53,6.64,3.13,236.06,,190,190,190,190,
2600000,49356528,869,1718,5.57,19.45,6.78,3.26,225.7,,190,190,190,190,
2650000,50353329,734,1448,5.69,19.34,6.92,3.37,207.39,,190,190,190,190,
2700000,51309157,567,1114,5.76,19.91,7.05,3.36,231.68,,190,190,190,190,
2750000,52280543,1271,2524,5.91,20.35,7.24,3.42,218.43,,190,190,190,190,
2800000,53230505,511,1002,6.03,20.42,7.32,3.58,209.18,,190,190,190,190,
2850000,54187746,563,1106,6.16,21.03,7.49,3.57,190.19,,190,190,190,190,
2900000,55161765,1032,2045,6.24,21.38,7.61,3.62,199.89,,190,190,190,190,
2950000,56118167,1224,2429,6.35,21.78,7.78,3.74,225.98,,190,190,190,190,
3000000,57082768,980,1942,6.46,22.25,7.9,3.74,241.76,,190,190,190,190,
3050000,58006703,746,1472,6.59,22.8,8.02,3.82,237.72,,190,190,190,190,
3100000,58997271,1432,2845,6.68,23.26,8.2,3.98,196.85,,190,190,190,190,
3150000,60003314,496,974,6.8,23.44,8.3,3.96,229.18,,190,190,190,190,
3200000,60919212,1745,3474,6.92,24.05,8.5,4.13,197.47,,190,190,190,190,
3250000,61874877,474,928,7.04,24.43,8.59,4.11,191.67,,190,190,190,190,
3300000,62849494,1218,2417,7.13,24.83,8.74,4.63,211.13,,190,190,190,190,
3350000,63798424,1131,2242,7.23,25.41,8.85,4.25,217.74,,190,190,190,190,
3400000,64745452,699,1378,7.36,25.69,9,4.42,235.04,,190,190,190,190,
3450000,65700700,434,848,7.47,26.04,9.1,4.4,205.8,,190,190,190,190,
3500000,66733434,669,1319,7.59,26.4,9.31,4.71,200.11,,190,190,190,190,
3550000,67646422,698,1376,7.68,26.82,9.38,4.54,209.94,,190,190,190,190,
3600000,68604668,647,1274,7.79,27.53,9.53,4.6,227.55,,190,190,190,190,
3650000,69587650,484,948,7.92,27.66,9.69,4.81,238.75,,190,190,190,190,
3700000,70541366,883,1746,8.01,28.28,9.84,4.77,246.52,,190,190,190,190,
3750000,71517645,450,880,8.12,28.81,9.94,4.83,205.23,,190,190,190,190,
3800000,72483078,588,1156,8.28,29.11,10.1,4.93,229.07,,190,190,190,190,
3850000,73419536,571,1122,8.32,29.37,10.24,5.35,247.33,,190,190,190,190,
3900000,74381081,918,1816,8.47,30.02,10.37,5.15,215.47,,190,190,190,190,
3950000,75353990,715,1410,8.57,30.13,10.5,5.18,205.36,,190,190,190,190,
4000000,76311956,769,1518,8.7,30.67,10.66,5.21,231.4,,190,190,190,190,
4050000,77313266,668,1316,8.8,30.95,10.77,5.39,228.89,,190,190,190,190,
4100000,78249789,1168,2317,8.94,31.71,10.97,5.35,224.48,,190,190,190,190,
4150000,79221466,549,1078,9.05,32.28,11.08,5.53,265.09,,190,190,190,190,
4200000,80154009,512,1004,7.38,35.59,10.97,5.4,226.14,,190,190,190,190,
};

      \addplot[color=black,mark=star, only marks, each nth point=2]
      table [x=E, y=MCs, col sep=comma] {V,E,DRV,DRE,DRs,MCiDRs,MCDRs,MCs,STiDRs,STDRs,MCiDRk,MCDRk,MCk,STiDRk,STDRk
1000000,18767086,651,1283,2.08,6.49,2.52,1.12,225.68,,190,190,190,190,
1050000,19709159,660,1300,2.17,6.89,2.65,1.16,276.93,,190,190,190,190,
1100000,20655825,469,919,2.27,7.21,2.76,1.22,232.47,,190,190,190,190,
1150000,21604341,2901,5801,2.4,7.79,2.99,1.28,235.17,,190,190,190,190,
1200000,22564319,820,1621,2.49,7.92,3.04,1.34,276.79,,190,190,190,190,
1250000,23531472,1394,2769,2.6,8.49,3.21,1.42,241.42,,190,190,190,190,
1300000,24480730,501,982,2.72,8.81,3.3,1.49,277.68,,190,190,190,190,
1350000,25440966,467,914,2.82,9.04,3.26,1.53,233.7,,190,190,190,190,
1400000,26383694,658,1296,2.93,9.45,3.56,1.59,237.37,,190,190,190,190,
1450000,27342944,585,1151,3.03,9.8,3.7,1.67,232.26,,190,190,190,190,
1500000,28291769,445,870,3.13,10.16,3.8,1.71,237,,190,190,190,190,
1550000,29269274,2200,4399,3.28,10.78,4.03,1.88,222.51,,190,190,190,190,
1600000,30211184,1070,2120,3.34,11.01,4.09,1.84,220.52,,190,190,190,190,
1650000,31153242,1061,2102,3.47,11.38,4.23,1.96,259.86,,190,190,190,190,
1700000,32118855,1043,2067,3.59,11.85,4.37,1.97,197.58,,190,190,190,190,
1750000,33087214,1549,3083,3.7,12.16,4.53,2.1,223.84,,190,190,190,190,
1800000,34052317,1038,2056,3.79,12.57,4.64,2.1,304.66,,190,190,190,190,
1850000,34969772,711,1402,3.9,12.93,4.75,2.17,247.05,,190,190,190,190,
1900000,35955033,664,1309,4.01,13.38,4.87,2.23,227.51,,190,190,190,190,
1950000,36919935,593,1166,4.12,13.69,5.02,2.34,240.66,,190,190,190,190,
2000000,37879080,846,1674,4.25,14.25,5.17,2.42,265.78,,190,190,190,190,
2050000,38826564,562,1104,4.33,14.41,5.29,2.44,238.87,,190,190,190,190,
2100000,39796860,1075,2130,4.45,14.89,5.45,2.51,200.9,,190,190,190,190,
2150000,40753264,698,1376,4.57,15.47,5.58,2.59,225.85,,190,190,190,190,
2200000,41687628,1652,3284,4.68,15.78,5.74,2.64,191.86,,190,190,190,190,
2250000,42676754,949,1878,4.78,16.06,5.84,2.7,211.87,,190,190,190,190,
2300000,43640247,1108,2199,4.9,16.6,5.99,2.83,219.43,,190,190,190,190,
2350000,44569595,7875,15886,5.05,17.64,6.35,2.85,231.67,,190,190,190,190,
2400000,45529873,509,999,5.11,17.22,6.22,2.92,199.19,,190,190,190,190,
2450000,46502969,1385,2752,5.24,17.96,6.4,3.25,224.79,,190,190,190,190,
2500000,47473942,739,1458,5.34,18.14,6.51,3.41,204.44,,190,190,190,190,
2550000,48407592,532,1044,5.47,18.53,6.64,3.13,236.06,,190,190,190,190,
2600000,49356528,869,1718,5.57,19.45,6.78,3.26,225.7,,190,190,190,190,
2650000,50353329,734,1448,5.69,19.34,6.92,3.37,207.39,,190,190,190,190,
2700000,51309157,567,1114,5.76,19.91,7.05,3.36,231.68,,190,190,190,190,
2750000,52280543,1271,2524,5.91,20.35,7.24,3.42,218.43,,190,190,190,190,
2800000,53230505,511,1002,6.03,20.42,7.32,3.58,209.18,,190,190,190,190,
2850000,54187746,563,1106,6.16,21.03,7.49,3.57,190.19,,190,190,190,190,
2900000,55161765,1032,2045,6.24,21.38,7.61,3.62,199.89,,190,190,190,190,
2950000,56118167,1224,2429,6.35,21.78,7.78,3.74,225.98,,190,190,190,190,
3000000,57082768,980,1942,6.46,22.25,7.9,3.74,241.76,,190,190,190,190,
3050000,58006703,746,1472,6.59,22.8,8.02,3.82,237.72,,190,190,190,190,
3100000,58997271,1432,2845,6.68,23.26,8.2,3.98,196.85,,190,190,190,190,
3150000,60003314,496,974,6.8,23.44,8.3,3.96,229.18,,190,190,190,190,
3200000,60919212,1745,3474,6.92,24.05,8.5,4.13,197.47,,190,190,190,190,
3250000,61874877,474,928,7.04,24.43,8.59,4.11,191.67,,190,190,190,190,
3300000,62849494,1218,2417,7.13,24.83,8.74,4.63,211.13,,190,190,190,190,
3350000,63798424,1131,2242,7.23,25.41,8.85,4.25,217.74,,190,190,190,190,
3400000,64745452,699,1378,7.36,25.69,9,4.42,235.04,,190,190,190,190,
3450000,65700700,434,848,7.47,26.04,9.1,4.4,205.8,,190,190,190,190,
3500000,66733434,669,1319,7.59,26.4,9.31,4.71,200.11,,190,190,190,190,
3550000,67646422,698,1376,7.68,26.82,9.38,4.54,209.94,,190,190,190,190,
3600000,68604668,647,1274,7.79,27.53,9.53,4.6,227.55,,190,190,190,190,
3650000,69587650,484,948,7.92,27.66,9.69,4.81,238.75,,190,190,190,190,
3700000,70541366,883,1746,8.01,28.28,9.84,4.77,246.52,,190,190,190,190,
3750000,71517645,450,880,8.12,28.81,9.94,4.83,205.23,,190,190,190,190,
3800000,72483078,588,1156,8.28,29.11,10.1,4.93,229.07,,190,190,190,190,
3850000,73419536,571,1122,8.32,29.37,10.24,5.35,247.33,,190,190,190,190,
3900000,74381081,918,1816,8.47,30.02,10.37,5.15,215.47,,190,190,190,190,
3950000,75353990,715,1410,8.57,30.13,10.5,5.18,205.36,,190,190,190,190,
4000000,76311956,769,1518,8.7,30.67,10.66,5.21,231.4,,190,190,190,190,
4050000,77313266,668,1316,8.8,30.95,10.77,5.39,228.89,,190,190,190,190,
4100000,78249789,1168,2317,8.94,31.71,10.97,5.35,224.48,,190,190,190,190,
4150000,79221466,549,1078,9.05,32.28,11.08,5.53,265.09,,190,190,190,190,
4200000,80154009,512,1004,7.38,35.59,10.97,5.4,226.14,,190,190,190,190,
};

      \addplot[color=black,mark=+, only marks, each nth point=2] table
      [x=E, y=MCDRs, col sep=comma] {V,E,DRV,DRE,DRs,MCiDRs,MCDRs,MCs,STiDRs,STDRs,MCiDRk,MCDRk,MCk,STiDRk,STDRk
1000000,18767086,651,1283,2.08,6.49,2.52,1.12,225.68,,190,190,190,190,
1050000,19709159,660,1300,2.17,6.89,2.65,1.16,276.93,,190,190,190,190,
1100000,20655825,469,919,2.27,7.21,2.76,1.22,232.47,,190,190,190,190,
1150000,21604341,2901,5801,2.4,7.79,2.99,1.28,235.17,,190,190,190,190,
1200000,22564319,820,1621,2.49,7.92,3.04,1.34,276.79,,190,190,190,190,
1250000,23531472,1394,2769,2.6,8.49,3.21,1.42,241.42,,190,190,190,190,
1300000,24480730,501,982,2.72,8.81,3.3,1.49,277.68,,190,190,190,190,
1350000,25440966,467,914,2.82,9.04,3.26,1.53,233.7,,190,190,190,190,
1400000,26383694,658,1296,2.93,9.45,3.56,1.59,237.37,,190,190,190,190,
1450000,27342944,585,1151,3.03,9.8,3.7,1.67,232.26,,190,190,190,190,
1500000,28291769,445,870,3.13,10.16,3.8,1.71,237,,190,190,190,190,
1550000,29269274,2200,4399,3.28,10.78,4.03,1.88,222.51,,190,190,190,190,
1600000,30211184,1070,2120,3.34,11.01,4.09,1.84,220.52,,190,190,190,190,
1650000,31153242,1061,2102,3.47,11.38,4.23,1.96,259.86,,190,190,190,190,
1700000,32118855,1043,2067,3.59,11.85,4.37,1.97,197.58,,190,190,190,190,
1750000,33087214,1549,3083,3.7,12.16,4.53,2.1,223.84,,190,190,190,190,
1800000,34052317,1038,2056,3.79,12.57,4.64,2.1,304.66,,190,190,190,190,
1850000,34969772,711,1402,3.9,12.93,4.75,2.17,247.05,,190,190,190,190,
1900000,35955033,664,1309,4.01,13.38,4.87,2.23,227.51,,190,190,190,190,
1950000,36919935,593,1166,4.12,13.69,5.02,2.34,240.66,,190,190,190,190,
2000000,37879080,846,1674,4.25,14.25,5.17,2.42,265.78,,190,190,190,190,
2050000,38826564,562,1104,4.33,14.41,5.29,2.44,238.87,,190,190,190,190,
2100000,39796860,1075,2130,4.45,14.89,5.45,2.51,200.9,,190,190,190,190,
2150000,40753264,698,1376,4.57,15.47,5.58,2.59,225.85,,190,190,190,190,
2200000,41687628,1652,3284,4.68,15.78,5.74,2.64,191.86,,190,190,190,190,
2250000,42676754,949,1878,4.78,16.06,5.84,2.7,211.87,,190,190,190,190,
2300000,43640247,1108,2199,4.9,16.6,5.99,2.83,219.43,,190,190,190,190,
2350000,44569595,7875,15886,5.05,17.64,6.35,2.85,231.67,,190,190,190,190,
2400000,45529873,509,999,5.11,17.22,6.22,2.92,199.19,,190,190,190,190,
2450000,46502969,1385,2752,5.24,17.96,6.4,3.25,224.79,,190,190,190,190,
2500000,47473942,739,1458,5.34,18.14,6.51,3.41,204.44,,190,190,190,190,
2550000,48407592,532,1044,5.47,18.53,6.64,3.13,236.06,,190,190,190,190,
2600000,49356528,869,1718,5.57,19.45,6.78,3.26,225.7,,190,190,190,190,
2650000,50353329,734,1448,5.69,19.34,6.92,3.37,207.39,,190,190,190,190,
2700000,51309157,567,1114,5.76,19.91,7.05,3.36,231.68,,190,190,190,190,
2750000,52280543,1271,2524,5.91,20.35,7.24,3.42,218.43,,190,190,190,190,
2800000,53230505,511,1002,6.03,20.42,7.32,3.58,209.18,,190,190,190,190,
2850000,54187746,563,1106,6.16,21.03,7.49,3.57,190.19,,190,190,190,190,
2900000,55161765,1032,2045,6.24,21.38,7.61,3.62,199.89,,190,190,190,190,
2950000,56118167,1224,2429,6.35,21.78,7.78,3.74,225.98,,190,190,190,190,
3000000,57082768,980,1942,6.46,22.25,7.9,3.74,241.76,,190,190,190,190,
3050000,58006703,746,1472,6.59,22.8,8.02,3.82,237.72,,190,190,190,190,
3100000,58997271,1432,2845,6.68,23.26,8.2,3.98,196.85,,190,190,190,190,
3150000,60003314,496,974,6.8,23.44,8.3,3.96,229.18,,190,190,190,190,
3200000,60919212,1745,3474,6.92,24.05,8.5,4.13,197.47,,190,190,190,190,
3250000,61874877,474,928,7.04,24.43,8.59,4.11,191.67,,190,190,190,190,
3300000,62849494,1218,2417,7.13,24.83,8.74,4.63,211.13,,190,190,190,190,
3350000,63798424,1131,2242,7.23,25.41,8.85,4.25,217.74,,190,190,190,190,
3400000,64745452,699,1378,7.36,25.69,9,4.42,235.04,,190,190,190,190,
3450000,65700700,434,848,7.47,26.04,9.1,4.4,205.8,,190,190,190,190,
3500000,66733434,669,1319,7.59,26.4,9.31,4.71,200.11,,190,190,190,190,
3550000,67646422,698,1376,7.68,26.82,9.38,4.54,209.94,,190,190,190,190,
3600000,68604668,647,1274,7.79,27.53,9.53,4.6,227.55,,190,190,190,190,
3650000,69587650,484,948,7.92,27.66,9.69,4.81,238.75,,190,190,190,190,
3700000,70541366,883,1746,8.01,28.28,9.84,4.77,246.52,,190,190,190,190,
3750000,71517645,450,880,8.12,28.81,9.94,4.83,205.23,,190,190,190,190,
3800000,72483078,588,1156,8.28,29.11,10.1,4.93,229.07,,190,190,190,190,
3850000,73419536,571,1122,8.32,29.37,10.24,5.35,247.33,,190,190,190,190,
3900000,74381081,918,1816,8.47,30.02,10.37,5.15,215.47,,190,190,190,190,
3950000,75353990,715,1410,8.57,30.13,10.5,5.18,205.36,,190,190,190,190,
4000000,76311956,769,1518,8.7,30.67,10.66,5.21,231.4,,190,190,190,190,
4050000,77313266,668,1316,8.8,30.95,10.77,5.39,228.89,,190,190,190,190,
4100000,78249789,1168,2317,8.94,31.71,10.97,5.35,224.48,,190,190,190,190,
4150000,79221466,549,1078,9.05,32.28,11.08,5.53,265.09,,190,190,190,190,
4200000,80154009,512,1004,7.38,35.59,10.97,5.4,226.14,,190,190,190,190,
};
    \end{axis}
  \end{tikzpicture}\hfill{}
  \begin{tikzpicture}
    \begin{axis}[ymin=0,ylabel=running time, xmax=80000000,
      xlabel=input arcs,change x base, axis base prefix={axis x base -6
        prefix {}}, x unit=10^6, y unit=s, width=0.45\textwidth,ymax=400]
      
      \addplot[color=black,mark=o,only marks] table
      [x=E, y=STiDRs, col sep=comma] {V,E,DRV,DRE,DRs,MCiDRs,MCDRs,MCs,STiDRs,STDRs,MCiDRk,MCDRk,MCk,STiDRk,STDRk
1000000,18767086,651,1283,2.08,6.49,2.52,1.12,225.68,,190,190,190,190,
1050000,19709159,660,1300,2.17,6.89,2.65,1.16,276.93,,190,190,190,190,
1100000,20655825,469,919,2.27,7.21,2.76,1.22,232.47,,190,190,190,190,
1150000,21604341,2901,5801,2.4,7.79,2.99,1.28,235.17,,190,190,190,190,
1200000,22564319,820,1621,2.49,7.92,3.04,1.34,276.79,,190,190,190,190,
1250000,23531472,1394,2769,2.6,8.49,3.21,1.42,241.42,,190,190,190,190,
1300000,24480730,501,982,2.72,8.81,3.3,1.49,277.68,,190,190,190,190,
1350000,25440966,467,914,2.82,9.04,3.26,1.53,233.7,,190,190,190,190,
1400000,26383694,658,1296,2.93,9.45,3.56,1.59,237.37,,190,190,190,190,
1450000,27342944,585,1151,3.03,9.8,3.7,1.67,232.26,,190,190,190,190,
1500000,28291769,445,870,3.13,10.16,3.8,1.71,237,,190,190,190,190,
1550000,29269274,2200,4399,3.28,10.78,4.03,1.88,222.51,,190,190,190,190,
1600000,30211184,1070,2120,3.34,11.01,4.09,1.84,220.52,,190,190,190,190,
1650000,31153242,1061,2102,3.47,11.38,4.23,1.96,259.86,,190,190,190,190,
1700000,32118855,1043,2067,3.59,11.85,4.37,1.97,197.58,,190,190,190,190,
1750000,33087214,1549,3083,3.7,12.16,4.53,2.1,223.84,,190,190,190,190,
1800000,34052317,1038,2056,3.79,12.57,4.64,2.1,304.66,,190,190,190,190,
1850000,34969772,711,1402,3.9,12.93,4.75,2.17,247.05,,190,190,190,190,
1900000,35955033,664,1309,4.01,13.38,4.87,2.23,227.51,,190,190,190,190,
1950000,36919935,593,1166,4.12,13.69,5.02,2.34,240.66,,190,190,190,190,
2000000,37879080,846,1674,4.25,14.25,5.17,2.42,265.78,,190,190,190,190,
2050000,38826564,562,1104,4.33,14.41,5.29,2.44,238.87,,190,190,190,190,
2100000,39796860,1075,2130,4.45,14.89,5.45,2.51,200.9,,190,190,190,190,
2150000,40753264,698,1376,4.57,15.47,5.58,2.59,225.85,,190,190,190,190,
2200000,41687628,1652,3284,4.68,15.78,5.74,2.64,191.86,,190,190,190,190,
2250000,42676754,949,1878,4.78,16.06,5.84,2.7,211.87,,190,190,190,190,
2300000,43640247,1108,2199,4.9,16.6,5.99,2.83,219.43,,190,190,190,190,
2350000,44569595,7875,15886,5.05,17.64,6.35,2.85,231.67,,190,190,190,190,
2400000,45529873,509,999,5.11,17.22,6.22,2.92,199.19,,190,190,190,190,
2450000,46502969,1385,2752,5.24,17.96,6.4,3.25,224.79,,190,190,190,190,
2500000,47473942,739,1458,5.34,18.14,6.51,3.41,204.44,,190,190,190,190,
2550000,48407592,532,1044,5.47,18.53,6.64,3.13,236.06,,190,190,190,190,
2600000,49356528,869,1718,5.57,19.45,6.78,3.26,225.7,,190,190,190,190,
2650000,50353329,734,1448,5.69,19.34,6.92,3.37,207.39,,190,190,190,190,
2700000,51309157,567,1114,5.76,19.91,7.05,3.36,231.68,,190,190,190,190,
2750000,52280543,1271,2524,5.91,20.35,7.24,3.42,218.43,,190,190,190,190,
2800000,53230505,511,1002,6.03,20.42,7.32,3.58,209.18,,190,190,190,190,
2850000,54187746,563,1106,6.16,21.03,7.49,3.57,190.19,,190,190,190,190,
2900000,55161765,1032,2045,6.24,21.38,7.61,3.62,199.89,,190,190,190,190,
2950000,56118167,1224,2429,6.35,21.78,7.78,3.74,225.98,,190,190,190,190,
3000000,57082768,980,1942,6.46,22.25,7.9,3.74,241.76,,190,190,190,190,
3050000,58006703,746,1472,6.59,22.8,8.02,3.82,237.72,,190,190,190,190,
3100000,58997271,1432,2845,6.68,23.26,8.2,3.98,196.85,,190,190,190,190,
3150000,60003314,496,974,6.8,23.44,8.3,3.96,229.18,,190,190,190,190,
3200000,60919212,1745,3474,6.92,24.05,8.5,4.13,197.47,,190,190,190,190,
3250000,61874877,474,928,7.04,24.43,8.59,4.11,191.67,,190,190,190,190,
3300000,62849494,1218,2417,7.13,24.83,8.74,4.63,211.13,,190,190,190,190,
3350000,63798424,1131,2242,7.23,25.41,8.85,4.25,217.74,,190,190,190,190,
3400000,64745452,699,1378,7.36,25.69,9,4.42,235.04,,190,190,190,190,
3450000,65700700,434,848,7.47,26.04,9.1,4.4,205.8,,190,190,190,190,
3500000,66733434,669,1319,7.59,26.4,9.31,4.71,200.11,,190,190,190,190,
3550000,67646422,698,1376,7.68,26.82,9.38,4.54,209.94,,190,190,190,190,
3600000,68604668,647,1274,7.79,27.53,9.53,4.6,227.55,,190,190,190,190,
3650000,69587650,484,948,7.92,27.66,9.69,4.81,238.75,,190,190,190,190,
3700000,70541366,883,1746,8.01,28.28,9.84,4.77,246.52,,190,190,190,190,
3750000,71517645,450,880,8.12,28.81,9.94,4.83,205.23,,190,190,190,190,
3800000,72483078,588,1156,8.28,29.11,10.1,4.93,229.07,,190,190,190,190,
3850000,73419536,571,1122,8.32,29.37,10.24,5.35,247.33,,190,190,190,190,
3900000,74381081,918,1816,8.47,30.02,10.37,5.15,215.47,,190,190,190,190,
3950000,75353990,715,1410,8.57,30.13,10.5,5.18,205.36,,190,190,190,190,
4000000,76311956,769,1518,8.7,30.67,10.66,5.21,231.4,,190,190,190,190,
4050000,77313266,668,1316,8.8,30.95,10.77,5.39,228.89,,190,190,190,190,
4100000,78249789,1168,2317,8.94,31.71,10.97,5.35,224.48,,190,190,190,190,
4150000,79221466,549,1078,9.05,32.28,11.08,5.53,265.09,,190,190,190,190,
4200000,80154009,512,1004,7.38,35.59,10.97,5.4,226.14,,190,190,190,190,
};
    \end{axis}
  \end{tikzpicture}

  \caption{Comparisons of the running time of \citet{LBK09}'s heuristic
    (left) with the running time of our search tree algorithm
    (right). Without interleaved data reduction, the search tree solved
    no instance in less than 5~minutes. The graphs were generated by
    adding $k=190$ random arcs between ten connected components, each
    being a preferential attachment graph with outdegree twenty and a
    single sink. The heuristic solved all 80~instances optimally.}
  \label{fig:runtime}
\end{figure}
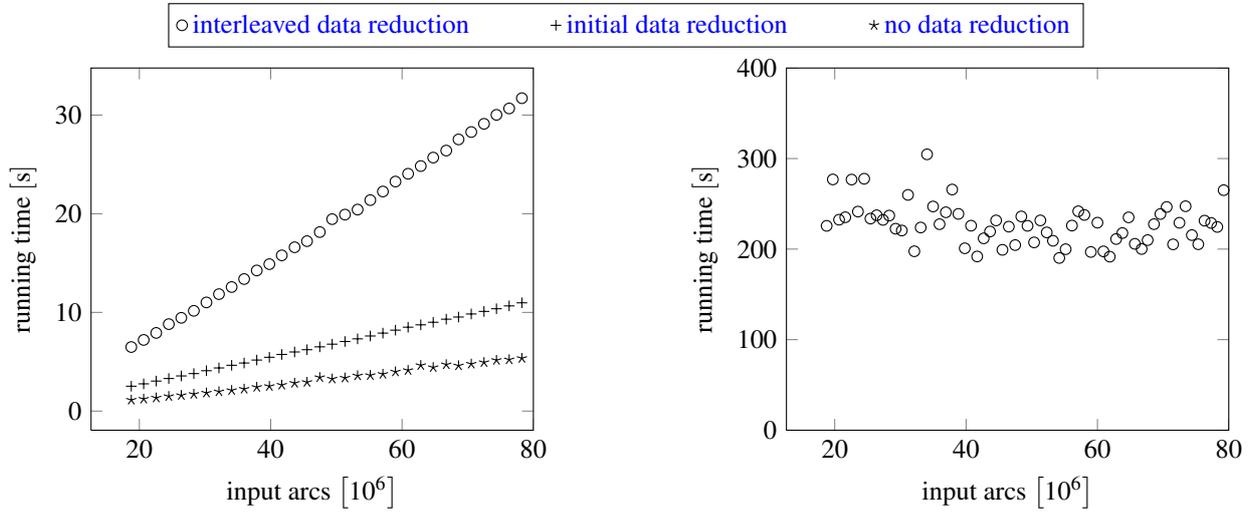

\autoref{fig:runtime} compares the running time of the heuristic of
\citet{LBK09} to the running time of our \autoref{alg:simple-st} with
increasing graph size. While the heuristic shows a linear increase of
running time with the graph size (on the left side), such a behavior
cannot be observed for the search tree algorithm (on the right
side). The reason for this can be seen in \autoref{fig:dreffect}: the
data reduction applied by \autoref{reducealg} initially shrinks most
input instances to about 2000~arcs in less than ten seconds. Thus, what
we observe in the right plot of \autoref{fig:runtime} is, to a large
extent, the running time of \autoref{alg:simple-st} for constant~$k=190$
and roughly constant graph size.  Our search tree algorithm allowed us
to verify that the heuristic by \citet{LBK09} solved all 80~generated
instances optimally regardless of the type of data reduction applied.

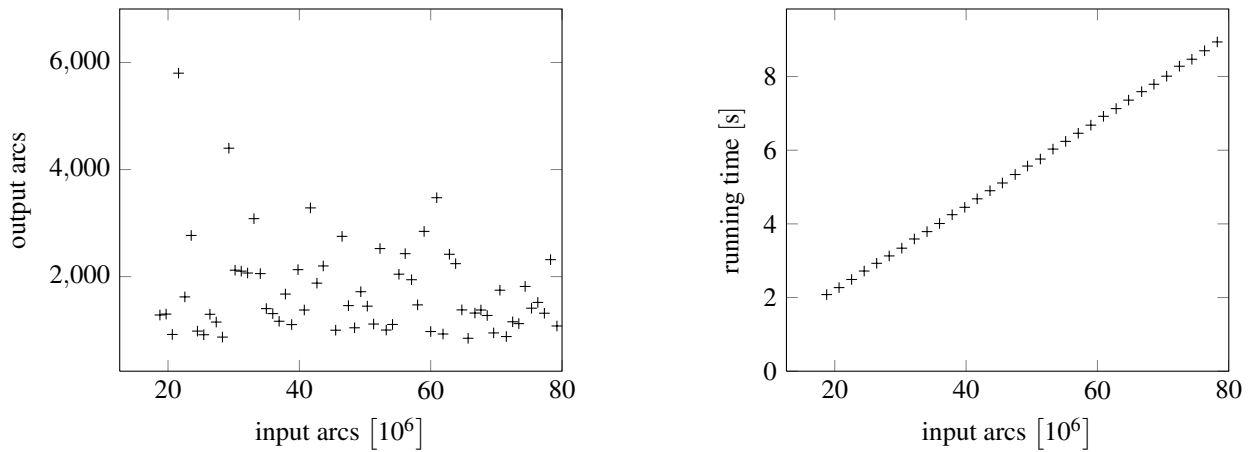
\begin{figure}
  \centering
  \begin{tikzpicture}
    \begin{axis}[ylabel=output arcs, xmax=80000000, xlabel=input arcs,
      change x base, axis base prefix={axis x base -6 prefix {}}, x
      unit=10^6, width=0.45\textwidth, ymax=7000]
      
      \addplot[color=black,mark=+,only marks] table [x=E,
      y=DRE, col sep=comma] {V,E,DRV,DRE,DRs,MCiDRs,MCDRs,MCs,STiDRs,STDRs,MCiDRk,MCDRk,MCk,STiDRk,STDRk
1000000,18767086,651,1283,2.08,6.49,2.52,1.12,225.68,,190,190,190,190,
1050000,19709159,660,1300,2.17,6.89,2.65,1.16,276.93,,190,190,190,190,
1100000,20655825,469,919,2.27,7.21,2.76,1.22,232.47,,190,190,190,190,
1150000,21604341,2901,5801,2.4,7.79,2.99,1.28,235.17,,190,190,190,190,
1200000,22564319,820,1621,2.49,7.92,3.04,1.34,276.79,,190,190,190,190,
1250000,23531472,1394,2769,2.6,8.49,3.21,1.42,241.42,,190,190,190,190,
1300000,24480730,501,982,2.72,8.81,3.3,1.49,277.68,,190,190,190,190,
1350000,25440966,467,914,2.82,9.04,3.26,1.53,233.7,,190,190,190,190,
1400000,26383694,658,1296,2.93,9.45,3.56,1.59,237.37,,190,190,190,190,
1450000,27342944,585,1151,3.03,9.8,3.7,1.67,232.26,,190,190,190,190,
1500000,28291769,445,870,3.13,10.16,3.8,1.71,237,,190,190,190,190,
1550000,29269274,2200,4399,3.28,10.78,4.03,1.88,222.51,,190,190,190,190,
1600000,30211184,1070,2120,3.34,11.01,4.09,1.84,220.52,,190,190,190,190,
1650000,31153242,1061,2102,3.47,11.38,4.23,1.96,259.86,,190,190,190,190,
1700000,32118855,1043,2067,3.59,11.85,4.37,1.97,197.58,,190,190,190,190,
1750000,33087214,1549,3083,3.7,12.16,4.53,2.1,223.84,,190,190,190,190,
1800000,34052317,1038,2056,3.79,12.57,4.64,2.1,304.66,,190,190,190,190,
1850000,34969772,711,1402,3.9,12.93,4.75,2.17,247.05,,190,190,190,190,
1900000,35955033,664,1309,4.01,13.38,4.87,2.23,227.51,,190,190,190,190,
1950000,36919935,593,1166,4.12,13.69,5.02,2.34,240.66,,190,190,190,190,
2000000,37879080,846,1674,4.25,14.25,5.17,2.42,265.78,,190,190,190,190,
2050000,38826564,562,1104,4.33,14.41,5.29,2.44,238.87,,190,190,190,190,
2100000,39796860,1075,2130,4.45,14.89,5.45,2.51,200.9,,190,190,190,190,
2150000,40753264,698,1376,4.57,15.47,5.58,2.59,225.85,,190,190,190,190,
2200000,41687628,1652,3284,4.68,15.78,5.74,2.64,191.86,,190,190,190,190,
2250000,42676754,949,1878,4.78,16.06,5.84,2.7,211.87,,190,190,190,190,
2300000,43640247,1108,2199,4.9,16.6,5.99,2.83,219.43,,190,190,190,190,
2350000,44569595,7875,15886,5.05,17.64,6.35,2.85,231.67,,190,190,190,190,
2400000,45529873,509,999,5.11,17.22,6.22,2.92,199.19,,190,190,190,190,
2450000,46502969,1385,2752,5.24,17.96,6.4,3.25,224.79,,190,190,190,190,
2500000,47473942,739,1458,5.34,18.14,6.51,3.41,204.44,,190,190,190,190,
2550000,48407592,532,1044,5.47,18.53,6.64,3.13,236.06,,190,190,190,190,
2600000,49356528,869,1718,5.57,19.45,6.78,3.26,225.7,,190,190,190,190,
2650000,50353329,734,1448,5.69,19.34,6.92,3.37,207.39,,190,190,190,190,
2700000,51309157,567,1114,5.76,19.91,7.05,3.36,231.68,,190,190,190,190,
2750000,52280543,1271,2524,5.91,20.35,7.24,3.42,218.43,,190,190,190,190,
2800000,53230505,511,1002,6.03,20.42,7.32,3.58,209.18,,190,190,190,190,
2850000,54187746,563,1106,6.16,21.03,7.49,3.57,190.19,,190,190,190,190,
2900000,55161765,1032,2045,6.24,21.38,7.61,3.62,199.89,,190,190,190,190,
2950000,56118167,1224,2429,6.35,21.78,7.78,3.74,225.98,,190,190,190,190,
3000000,57082768,980,1942,6.46,22.25,7.9,3.74,241.76,,190,190,190,190,
3050000,58006703,746,1472,6.59,22.8,8.02,3.82,237.72,,190,190,190,190,
3100000,58997271,1432,2845,6.68,23.26,8.2,3.98,196.85,,190,190,190,190,
3150000,60003314,496,974,6.8,23.44,8.3,3.96,229.18,,190,190,190,190,
3200000,60919212,1745,3474,6.92,24.05,8.5,4.13,197.47,,190,190,190,190,
3250000,61874877,474,928,7.04,24.43,8.59,4.11,191.67,,190,190,190,190,
3300000,62849494,1218,2417,7.13,24.83,8.74,4.63,211.13,,190,190,190,190,
3350000,63798424,1131,2242,7.23,25.41,8.85,4.25,217.74,,190,190,190,190,
3400000,64745452,699,1378,7.36,25.69,9,4.42,235.04,,190,190,190,190,
3450000,65700700,434,848,7.47,26.04,9.1,4.4,205.8,,190,190,190,190,
3500000,66733434,669,1319,7.59,26.4,9.31,4.71,200.11,,190,190,190,190,
3550000,67646422,698,1376,7.68,26.82,9.38,4.54,209.94,,190,190,190,190,
3600000,68604668,647,1274,7.79,27.53,9.53,4.6,227.55,,190,190,190,190,
3650000,69587650,484,948,7.92,27.66,9.69,4.81,238.75,,190,190,190,190,
3700000,70541366,883,1746,8.01,28.28,9.84,4.77,246.52,,190,190,190,190,
3750000,71517645,450,880,8.12,28.81,9.94,4.83,205.23,,190,190,190,190,
3800000,72483078,588,1156,8.28,29.11,10.1,4.93,229.07,,190,190,190,190,
3850000,73419536,571,1122,8.32,29.37,10.24,5.35,247.33,,190,190,190,190,
3900000,74381081,918,1816,8.47,30.02,10.37,5.15,215.47,,190,190,190,190,
3950000,75353990,715,1410,8.57,30.13,10.5,5.18,205.36,,190,190,190,190,
4000000,76311956,769,1518,8.7,30.67,10.66,5.21,231.4,,190,190,190,190,
4050000,77313266,668,1316,8.8,30.95,10.77,5.39,228.89,,190,190,190,190,
4100000,78249789,1168,2317,8.94,31.71,10.97,5.35,224.48,,190,190,190,190,
4150000,79221466,549,1078,9.05,32.28,11.08,5.53,265.09,,190,190,190,190,
4200000,80154009,512,1004,7.38,35.59,10.97,5.4,226.14,,190,190,190,190,
};
    \end{axis}
  \end{tikzpicture}\hfill{}
  \begin{tikzpicture}
    \begin{axis}[ymin=0,ylabel=running time, xmax=80000000,
      xlabel=input arcs,change x base, axis base prefix={axis x base -6
        prefix {}}, x unit=10^6, y unit=s, width=0.45\textwidth]
      
      \addplot[color=black,mark=+,only marks, each nth point=2] table
      [x=E, y=DRs, col sep=comma] {V,E,DRV,DRE,DRs,MCiDRs,MCDRs,MCs,STiDRs,STDRs,MCiDRk,MCDRk,MCk,STiDRk,STDRk
1000000,18767086,651,1283,2.08,6.49,2.52,1.12,225.68,,190,190,190,190,
1050000,19709159,660,1300,2.17,6.89,2.65,1.16,276.93,,190,190,190,190,
1100000,20655825,469,919,2.27,7.21,2.76,1.22,232.47,,190,190,190,190,
1150000,21604341,2901,5801,2.4,7.79,2.99,1.28,235.17,,190,190,190,190,
1200000,22564319,820,1621,2.49,7.92,3.04,1.34,276.79,,190,190,190,190,
1250000,23531472,1394,2769,2.6,8.49,3.21,1.42,241.42,,190,190,190,190,
1300000,24480730,501,982,2.72,8.81,3.3,1.49,277.68,,190,190,190,190,
1350000,25440966,467,914,2.82,9.04,3.26,1.53,233.7,,190,190,190,190,
1400000,26383694,658,1296,2.93,9.45,3.56,1.59,237.37,,190,190,190,190,
1450000,27342944,585,1151,3.03,9.8,3.7,1.67,232.26,,190,190,190,190,
1500000,28291769,445,870,3.13,10.16,3.8,1.71,237,,190,190,190,190,
1550000,29269274,2200,4399,3.28,10.78,4.03,1.88,222.51,,190,190,190,190,
1600000,30211184,1070,2120,3.34,11.01,4.09,1.84,220.52,,190,190,190,190,
1650000,31153242,1061,2102,3.47,11.38,4.23,1.96,259.86,,190,190,190,190,
1700000,32118855,1043,2067,3.59,11.85,4.37,1.97,197.58,,190,190,190,190,
1750000,33087214,1549,3083,3.7,12.16,4.53,2.1,223.84,,190,190,190,190,
1800000,34052317,1038,2056,3.79,12.57,4.64,2.1,304.66,,190,190,190,190,
1850000,34969772,711,1402,3.9,12.93,4.75,2.17,247.05,,190,190,190,190,
1900000,35955033,664,1309,4.01,13.38,4.87,2.23,227.51,,190,190,190,190,
1950000,36919935,593,1166,4.12,13.69,5.02,2.34,240.66,,190,190,190,190,
2000000,37879080,846,1674,4.25,14.25,5.17,2.42,265.78,,190,190,190,190,
2050000,38826564,562,1104,4.33,14.41,5.29,2.44,238.87,,190,190,190,190,
2100000,39796860,1075,2130,4.45,14.89,5.45,2.51,200.9,,190,190,190,190,
2150000,40753264,698,1376,4.57,15.47,5.58,2.59,225.85,,190,190,190,190,
2200000,41687628,1652,3284,4.68,15.78,5.74,2.64,191.86,,190,190,190,190,
2250000,42676754,949,1878,4.78,16.06,5.84,2.7,211.87,,190,190,190,190,
2300000,43640247,1108,2199,4.9,16.6,5.99,2.83,219.43,,190,190,190,190,
2350000,44569595,7875,15886,5.05,17.64,6.35,2.85,231.67,,190,190,190,190,
2400000,45529873,509,999,5.11,17.22,6.22,2.92,199.19,,190,190,190,190,
2450000,46502969,1385,2752,5.24,17.96,6.4,3.25,224.79,,190,190,190,190,
2500000,47473942,739,1458,5.34,18.14,6.51,3.41,204.44,,190,190,190,190,
2550000,48407592,532,1044,5.47,18.53,6.64,3.13,236.06,,190,190,190,190,
2600000,49356528,869,1718,5.57,19.45,6.78,3.26,225.7,,190,190,190,190,
2650000,50353329,734,1448,5.69,19.34,6.92,3.37,207.39,,190,190,190,190,
2700000,51309157,567,1114,5.76,19.91,7.05,3.36,231.68,,190,190,190,190,
2750000,52280543,1271,2524,5.91,20.35,7.24,3.42,218.43,,190,190,190,190,
2800000,53230505,511,1002,6.03,20.42,7.32,3.58,209.18,,190,190,190,190,
2850000,54187746,563,1106,6.16,21.03,7.49,3.57,190.19,,190,190,190,190,
2900000,55161765,1032,2045,6.24,21.38,7.61,3.62,199.89,,190,190,190,190,
2950000,56118167,1224,2429,6.35,21.78,7.78,3.74,225.98,,190,190,190,190,
3000000,57082768,980,1942,6.46,22.25,7.9,3.74,241.76,,190,190,190,190,
3050000,58006703,746,1472,6.59,22.8,8.02,3.82,237.72,,190,190,190,190,
3100000,58997271,1432,2845,6.68,23.26,8.2,3.98,196.85,,190,190,190,190,
3150000,60003314,496,974,6.8,23.44,8.3,3.96,229.18,,190,190,190,190,
3200000,60919212,1745,3474,6.92,24.05,8.5,4.13,197.47,,190,190,190,190,
3250000,61874877,474,928,7.04,24.43,8.59,4.11,191.67,,190,190,190,190,
3300000,62849494,1218,2417,7.13,24.83,8.74,4.63,211.13,,190,190,190,190,
3350000,63798424,1131,2242,7.23,25.41,8.85,4.25,217.74,,190,190,190,190,
3400000,64745452,699,1378,7.36,25.69,9,4.42,235.04,,190,190,190,190,
3450000,65700700,434,848,7.47,26.04,9.1,4.4,205.8,,190,190,190,190,
3500000,66733434,669,1319,7.59,26.4,9.31,4.71,200.11,,190,190,190,190,
3550000,67646422,698,1376,7.68,26.82,9.38,4.54,209.94,,190,190,190,190,
3600000,68604668,647,1274,7.79,27.53,9.53,4.6,227.55,,190,190,190,190,
3650000,69587650,484,948,7.92,27.66,9.69,4.81,238.75,,190,190,190,190,
3700000,70541366,883,1746,8.01,28.28,9.84,4.77,246.52,,190,190,190,190,
3750000,71517645,450,880,8.12,28.81,9.94,4.83,205.23,,190,190,190,190,
3800000,72483078,588,1156,8.28,29.11,10.1,4.93,229.07,,190,190,190,190,
3850000,73419536,571,1122,8.32,29.37,10.24,5.35,247.33,,190,190,190,190,
3900000,74381081,918,1816,8.47,30.02,10.37,5.15,215.47,,190,190,190,190,
3950000,75353990,715,1410,8.57,30.13,10.5,5.18,205.36,,190,190,190,190,
4000000,76311956,769,1518,8.7,30.67,10.66,5.21,231.4,,190,190,190,190,
4050000,77313266,668,1316,8.8,30.95,10.77,5.39,228.89,,190,190,190,190,
4100000,78249789,1168,2317,8.94,31.71,10.97,5.35,224.48,,190,190,190,190,
4150000,79221466,549,1078,9.05,32.28,11.08,5.53,265.09,,190,190,190,190,
4200000,80154009,512,1004,7.38,35.59,10.97,5.4,226.14,,190,190,190,190,
};
    \end{axis}
  \end{tikzpicture}

  \caption{Effect (left) and running time (right) of initially running
    \autoref{reducealg} for data reduction. The graphs were generated by
    adding $k=190$ random arcs between ten connected components, each
    being a preferential attachment graph with outdegree twenty and a
    single sink.}
  \label{fig:dreffect}
\end{figure}

Finally, \autoref{fig:prefat} presents instances that could not be
optimally solved by \citet{LBK09}'s heuristic. In the left plot, we see
that in instances with large embedded \partSet{}s of several hundred thousand arcs, the heuristic of
\citet{LBK09} does not find the embedded \partSet{} but an about
5\textperthousand{} larger one. In all cases, the heuristic found the
same \partSet{}s regardless of the type of data reduction applied. Note
that the plot only gives a lower bound on the deviation factor, since
there might be even better \partSet{}s in the instances than the
embedded one; we were unable to compute the optimal \partSet{}s in these
instances.  In the right plot of \autoref{fig:prefat}, we used smaller
preferential attachment graphs (this time without embedded \partSet{}s)
and see that \citet{LBK09}'s heuristic can be off by more than a factor of
two from the optimal \partSet{}. Data reduction had no effect on the
quality of the \partSet{}s found.

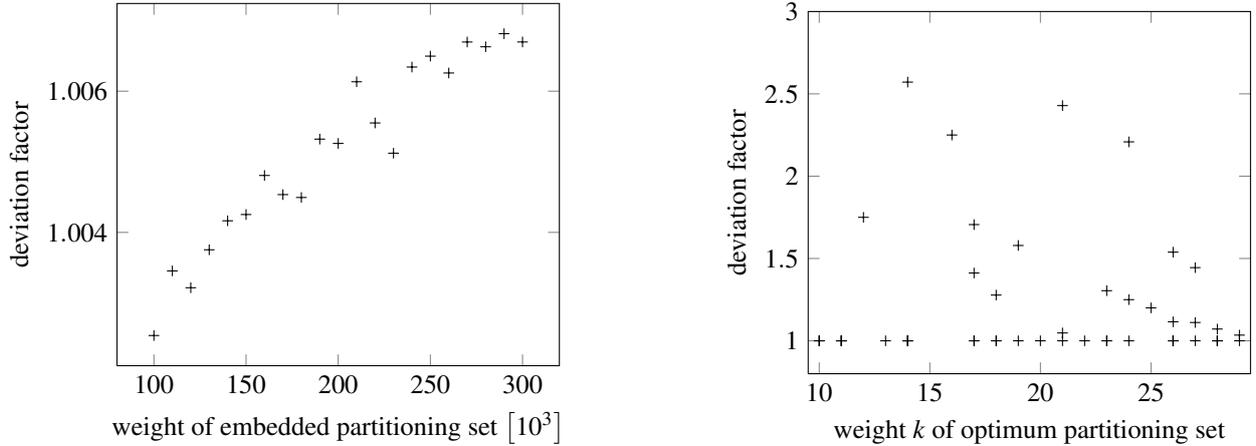
\begin{figure}
  \centering
  \begin{tikzpicture}
    \begin{axis}[xlabel=weight of embedded \partSet{}, ylabel=deviation
      factor, width=0.45\textwidth, yticklabel
      style={/pgf/number format/fixed, /pgf/number format/precision=4},
      change x base, axis base prefix={axis x base -3 prefix {}}, x
      unit=10^3]

      \addplot[color=black,mark=+, only marks] table [x=OPT1, y=DIV1,
      col sep=comma] {F,V1,E1,DRV1,DRE1,DRt1,MCiDRk1,MCiDRs1,MCDRk1,MCDRs1,MCk1,MCs1,STiDRk1,STiDRs1,STDRk1,STDRs1,STk1,STs1,DIViDR1,DIVDR1,DIV1,OPT1,V2,E2,DRV2,DRE2,DRt2,MCiDRk2,MCiDRs2,MCDRk2,MCDRs2,MCk2,MCs2,STiDRk2,STiDRs2,STDRk2,STDRs2,STk2,STs2,DIViDR2,DIVDR2,DIV2,OPT2,V3,E3,DRV3,DRE3,DRt3,MCiDRk3,MCiDRs3,MCDRk3,MCDRs3,MCk3,MCs3,STiDRk3,STiDRs3,STDRk3,STDRs3,STk3,STs3,DIViDR3,DIVDR3,DIV3,OPT3,V4,E4,DRV4,DRE4,DRt4,MCiDRk4,MCiDRs4,MCDRk4,MCDRs4,MCk4,MCs4,STiDRk4,STiDRs4,STDRk4,STDRs4,STk4,STs4,DIViDR4,DIVDR4,DIV4,OPT4,V5,E5,DRV5,DRE5,DRt5,MCiDRk5,MCiDRs5,MCDRk5,MCDRs5,MCk5,MCs5,STiDRk5,STiDRs5,STDRk5,STDRs5,STk5,STs5,DIViDR5,DIVDR5,DIV5,OPT5,
dagpart:-A:-c:10:-n:100000:-o:2:-k:100000:-t:5:1,1000000,2098871,169288,355285,1.08,100254,4,100254,2.48,100254,0.62,,TIMEOUT 41s,,,,,1.00254000000000000000,1.00254000000000000000,1.00254000000000000000,100000,1000000,2098727,166761,350133,1.07,100270,4.01,100270,2.48,100270,0.62,,TIMEOUT 41s,,,,,1.00270000000000000000,1.00270000000000000000,1.00270000000000000000,100000,1000000,2098718,174263,365775,1.07,100276,4.05,100276,2.49,100276,0.63,,TIMEOUT 42s,,,,,1.00276000000000000000,1.00276000000000000000,1.00276000000000000000,100000,1000000,2098748,170269,357350,1.08,100297,4.01,100297,2.49,100297,0.63,,TIMEOUT 41s,,,,,1.00297000000000000000,1.00297000000000000000,1.00297000000000000000,100000,1000000,2098716,167831,352032,1.08,100331,4,100331,2.49,100331,0.63,,TIMEOUT 41s,,,,,1.00331000000000000000,1.00331000000000000000,1.00331000000000000000,100000,
dagpart:-A:-c:10:-n:100000:-o:2:-k:110000:-t:5:1,1000000,2108657,180465,380707,1.09,110380,4.08,110380,2.49,110380,0.63,,TIMEOUT 41s,,,,,1.00345454545454545454,1.00345454545454545454,1.00345454545454545454,110000,1000000,2108649,182016,383844,1.1,110333,4.07,110333,2.5,110333,0.64,,TIMEOUT 42s,,,,,1.00302727272727272727,1.00302727272727272727,1.00302727272727272727,110000,1000000,2108745,193444,408125,1.11,110367,4.13,110367,2.5,110367,0.63,,TIMEOUT 41s,,,,,1.00333636363636363636,1.00333636363636363636,1.00333636363636363636,110000,1000000,2108690,187046,394383,1.09,110327,4.1,110327,2.51,110327,0.63,,TIMEOUT 42s,,,,,1.00297272727272727272,1.00297272727272727272,1.00297272727272727272,110000,1000000,2108686,184396,388793,1.1,110393,4.09,110393,2.49,110393,0.63,,TIMEOUT 41s,,,,,1.00357272727272727272,1.00357272727272727272,1.00357272727272727272,110000,
dagpart:-A:-c:10:-n:100000:-o:2:-k:120000:-t:5:1,1000000,2118693,198200,420015,1.13,120386,4.17,120386,2.52,120386,0.64,,,,,,,1.00321666666666666666,1.00321666666666666666,1.00321666666666666666,120000,1000000,2118784,196928,417565,1.13,120418,4.14,120418,2.49,120418,0.63,,,,,,,1.00348333333333333333,1.00348333333333333333,1.00348333333333333333,120000,1000000,2118731,207374,439512,1.13,120406,4.21,120406,2.52,120406,0.64,,,,,,,1.00338333333333333333,1.00338333333333333333,1.00338333333333333333,120000,1000000,2118641,202470,428975,1.13,120434,4.18,120434,2.51,120434,0.64,,,,,,,1.00361666666666666666,1.00361666666666666666,1.00361666666666666666,120000,1000000,2118780,198951,421599,1.12,120372,4.16,120372,2.35,120372,0.61,,,,,,,1.00310000000000000000,1.00310000000000000000,1.00310000000000000000,120000,
dagpart:-A:-c:10:-n:100000:-o:2:-k:130000:-t:5:1,1000000,2128766,216857,461629,1.14,130488,4.26,130488,2.53,130488,0.64,,,,,,,1.00375384615384615384,1.00375384615384615384,1.00375384615384615384,130000,1000000,2128602,205349,437192,1.14,130469,4.21,130469,2.51,130469,0.65,,,,,,,1.00360769230769230769,1.00360769230769230769,1.00360769230769230769,130000,1000000,2128689,209671,446517,1.14,130425,4.24,130425,2.52,130425,0.64,,,,,,,1.00326923076923076923,1.00326923076923076923,1.00326923076923076923,130000,1000000,2128607,208905,445017,1.14,130467,4.22,130467,2.52,130467,0.64,,,,,,,1.00359230769230769230,1.00359230769230769230,1.00359230769230769230,130000,1000000,2128804,207599,442001,1.14,130403,4.23,130403,2.51,130403,0.63,,,,,,,1.00310000000000000000,1.00310000000000000000,1.00310000000000000000,130000,
dagpart:-A:-c:10:-n:100000:-o:2:-k:140000:-t:5:1,1000000,2138600,220044,470818,1.17,140583,4.3,140583,2.53,140583,0.65,,,,,,,1.00416428571428571428,1.00416428571428571428,1.00416428571428571428,140000,1000000,2138649,229995,492265,1.17,140594,4.33,140594,2.54,140594,0.65,,,,,,,1.00424285714285714285,1.00424285714285714285,1.00424285714285714285,140000,1000000,2138716,223297,477814,1.17,140541,4.3,140541,2.53,140541,0.65,,,,,,,1.00386428571428571428,1.00386428571428571428,1.00386428571428571428,140000,1000000,2138743,224510,480283,1.17,140570,4.29,140570,2.52,140570,0.64,,,,,,,1.00407142857142857142,1.00407142857142857142,1.00407142857142857142,140000,1000000,2138769,216819,463559,1.16,140499,4.28,140499,2.52,140499,0.64,,,,,,,1.00356428571428571428,1.00356428571428571428,1.00356428571428571428,140000,
dagpart:-A:-c:10:-n:100000:-o:2:-k:150000:-t:5:1,1000000,2148684,239163,513971,1.18,150638,4.38,150638,2.54,150638,0.65,,,,,,,1.00425333333333333333,1.00425333333333333333,1.00425333333333333333,150000,1000000,2148735,231857,498184,1.18,150563,4.35,150563,2.55,150563,0.65,,,,,,,1.00375333333333333333,1.00375333333333333333,1.00375333333333333333,150000,1000000,2148768,231288,496921,1.19,150604,4.37,150604,2.54,150604,0.66,,,,,,,1.00402666666666666666,1.00402666666666666666,1.00402666666666666666,150000,1000000,2148732,236596,508555,1.18,150639,4.38,150639,2.52,150639,0.65,,,,,,,1.00426000000000000000,1.00426000000000000000,1.00426000000000000000,150000,1000000,2148768,233552,501908,1.19,150611,4.37,150611,2.55,150611,0.66,,,,,,,1.00407333333333333333,1.00407333333333333333,1.00407333333333333333,150000,
dagpart:-A:-c:10:-n:100000:-o:2:-k:160000:-t:5:1,1000000,2158694,252062,544405,1.2,160769,4.46,160769,2.56,160769,0.66,,,,,,,1.00480625000000000000,1.00480625000000000000,1.00480625000000000000,160000,1000000,2158608,250113,540249,1.2,160677,4.45,160677,2.55,160677,0.67,,,,,,,1.00423125000000000000,1.00423125000000000000,1.00423125000000000000,160000,1000000,2158692,245107,528934,1.21,160674,4.43,160674,2.54,160674,0.67,,,,,,,1.00421250000000000000,1.00421250000000000000,1.00421250000000000000,160000,1000000,2158835,304409,657829,1.2,160676,4.42,160676,2.44,160676,0.68,,,,,,,1.00422500000000000000,1.00422500000000000000,1.00422500000000000000,160000,1000000,2158740,245507,529670,1.22,160847,4.43,160847,2.57,160847,0.66,,,,,,,1.00529375000000000000,1.00529375000000000000,1.00529375000000000000,160000,
dagpart:-A:-c:10:-n:100000:-o:2:-k:170000:-t:5:1,1000000,2168649,257513,558369,1.22,170771,4.49,170771,2.56,170771,0.67,,,,,,,1.00453529411764705882,1.00453529411764705882,1.00453529411764705882,170000,1000000,2168683,262475,569692,1.23,170887,4.53,170887,2.54,170887,0.68,,,,,,,1.00521764705882352941,1.00521764705882352941,1.00521764705882352941,170000,1000000,2168872,257627,559037,1.22,170675,4.5,170675,2.55,170675,0.67,,,,,,,1.00397058823529411764,1.00397058823529411764,1.00397058823529411764,170000,1000000,2168740,257962,559665,1.23,170774,4.49,170774,2.55,170774,0.67,,,,,,,1.00455294117647058823,1.00455294117647058823,1.00455294117647058823,170000,1000000,2168735,266142,577316,1.22,170789,4.54,170789,2.58,170789,0.67,,,,,,,1.00464117647058823529,1.00464117647058823529,1.00464117647058823529,170000,
dagpart:-A:-c:10:-n:100000:-o:2:-k:180000:-t:5:1,1000000,2178654,275685,600881,1.25,180809,4.58,180809,2.58,180809,0.68,,,,,,,1.00449444444444444444,1.00449444444444444444,1.00449444444444444444,180000,1000000,2178561,271851,592462,1.23,180855,4.58,180855,2.54,180855,0.67,,,,,,,1.00475000000000000000,1.00475000000000000000,1.00475000000000000000,180000,1000000,2178751,271700,592271,1.25,180836,4.55,180836,2.57,180836,0.68,,,,,,,1.00464444444444444444,1.00464444444444444444,1.00464444444444444444,180000,1000000,2178581,274668,598300,1.24,180775,4.58,180775,2.58,180775,0.68,,,,,,,1.00430555555555555555,1.00430555555555555555,1.00430555555555555555,180000,1000000,2178662,271874,592746,1.24,180887,4.59,180887,2.58,180887,0.68,,,,,,,1.00492777777777777777,1.00492777777777777777,1.00492777777777777777,180000,
dagpart:-A:-c:10:-n:100000:-o:2:-k:190000:-t:5:1,1000000,2188659,284104,621994,1.26,191011,4.63,191011,2.59,191011,0.69,,,,,,,1.00532105263157894736,1.00532105263157894736,1.00532105263157894736,190000,1000000,2188721,285658,625533,1.27,191095,4.64,191095,2.6,191095,0.69,,,,,,,1.00576315789473684210,1.00576315789473684210,1.00576315789473684210,190000,1000000,2188691,280366,613496,1.25,190978,4.58,190978,2.58,190978,0.68,,,,,,,1.00514736842105263157,1.00514736842105263157,1.00514736842105263157,190000,1000000,2188708,286055,626530,1.26,190881,4.63,190881,2.6,190881,0.69,,,,,,,1.00463684210526315789,1.00463684210526315789,1.00463684210526315789,190000,1000000,2188785,282609,618678,1.26,190947,4.63,190947,2.58,190947,0.68,,,,,,,1.00498421052631578947,1.00498421052631578947,1.00498421052631578947,190000,
dagpart:-A:-c:10:-n:100000:-o:2:-k:200000:-t:5:1,1000000,2198744,292044,641931,1.28,201052,4.71,201052,2.59,201052,0.7,,,,,,,1.00526000000000000000,1.00526000000000000000,1.00526000000000000000,200000,1000000,2198738,300311,660321,1.26,201013,4.69,201013,2.58,201013,0.71,,,,,,,1.00506500000000000000,1.00506500000000000000,1.00506500000000000000,200000,1000000,2198608,296246,651487,1.28,201088,4.68,201088,2.57,201088,0.69,,,,,,,1.00544000000000000000,1.00544000000000000000,1.00544000000000000000,200000,1000000,2198536,291200,640097,1.28,201061,4.68,201061,2.59,201061,0.7,,,,,,,1.00530500000000000000,1.00530500000000000000,1.00530500000000000000,200000,1000000,2198747,293304,644659,1.28,201009,4.67,201009,2.6,201009,0.7,,,,,,,1.00504500000000000000,1.00504500000000000000,1.00504500000000000000,200000,
dagpart:-A:-c:10:-n:100000:-o:2:-k:210000:-t:5:1,1000000,2208632,305952,676009,1.28,211288,4.76,211288,2.6,211288,0.71,,,,,,,1.00613333333333333333,1.00613333333333333333,1.00613333333333333333,210000,1000000,2208704,306792,677816,1.29,211069,4.72,211069,2.61,211069,0.71,,,,,,,1.00509047619047619047,1.00509047619047619047,1.00509047619047619047,210000,1000000,2208700,312588,691087,1.28,211179,4.8,211179,2.61,211179,0.7,,,,,,,1.00561428571428571428,1.00561428571428571428,1.00561428571428571428,210000,1000000,2208705,306132,676462,1.29,211136,4.76,211136,2.58,211136,0.71,,,,,,,1.00540952380952380952,1.00540952380952380952,1.00540952380952380952,210000,1000000,2208580,307022,678542,1.29,211077,4.74,211077,2.62,211077,0.7,,,,,,,1.00512857142857142857,1.00512857142857142857,1.00512857142857142857,210000,
dagpart:-A:-c:10:-n:100000:-o:2:-k:220000:-t:5:1,1000000,2218626,313833,696341,1.3,221221,4.84,221221,2.61,221221,0.72,,,,,,,1.00555000000000000000,1.00555000000000000000,1.00555000000000000000,220000,1000000,2218688,318074,706218,1.3,221208,4.85,221208,2.46,221208,0.7,,,,,,,1.00549090909090909090,1.00549090909090909090,1.00549090909090909090,220000,1000000,2218756,315946,701035,1.31,221219,4.84,221219,2.6,221219,0.72,,,,,,,1.00554090909090909090,1.00554090909090909090,1.00554090909090909090,220000,1000000,2218745,312341,693310,1.3,221367,4.8,221367,2.62,221367,0.71,,,,,,,1.00621363636363636363,1.00621363636363636363,1.00621363636363636363,220000,1000000,2218769,320420,711378,1.31,221154,4.87,221154,2.62,221154,0.72,,,,,,,1.00524545454545454545,1.00524545454545454545,1.00524545454545454545,220000,
dagpart:-A:-c:10:-n:100000:-o:2:-k:230000:-t:5:1,1000000,2228658,361431,807046,1.32,231178,5.03,231178,2.64,231178,0.73,,,,,,,1.00512173913043478260,1.00512173913043478260,1.00512173913043478260,230000,1000000,2228608,318308,709697,1.31,231296,4.89,231296,2.62,231296,0.73,,,,,,,1.00563478260869565217,1.00563478260869565217,1.00563478260869565217,230000,1000000,2228638,321081,715399,1.32,231374,4.88,231374,2.61,231374,0.72,,,,,,,1.00597391304347826086,1.00597391304347826086,1.00597391304347826086,230000,1000000,2228782,323009,719750,1.32,231138,4.91,231138,2.63,231138,0.73,,,,,,,1.00494782608695652173,1.00494782608695652173,1.00494782608695652173,230000,1000000,2228569,323371,720475,1.31,231331,4.91,231331,2.63,231331,0.72,,,,,,,1.00578695652173913043,1.00578695652173913043,1.00578695652173913043,230000,
dagpart:-A:-c:10:-n:100000:-o:2:-k:240000:-t:5:1,1000000,2238768,337667,756498,1.35,241522,4.97,241522,2.65,241522,0.73,,,,,,,1.00634166666666666666,1.00634166666666666666,1.00634166666666666666,240000,1000000,2238769,340028,762203,1.33,241399,4.96,241399,2.64,241399,0.73,,,,,,,1.00582916666666666666,1.00582916666666666666,1.00582916666666666666,240000,1000000,2238645,332559,744899,1.34,241353,4.94,241353,2.64,241353,0.73,,,,,,,1.00563750000000000000,1.00563750000000000000,1.00563750000000000000,240000,1000000,2238713,334884,750345,1.34,241377,4.98,241377,2.65,241377,0.73,,,,,,,1.00573750000000000000,1.00573750000000000000,1.00573750000000000000,240000,1000000,2238731,332653,744982,1.35,241474,4.97,241474,2.64,241474,0.74,,,,,,,1.00614166666666666666,1.00614166666666666666,1.00614166666666666666,240000,
dagpart:-A:-c:10:-n:100000:-o:2:-k:250000:-t:5:1,1000000,2248692,350131,787773,1.35,251624,5.07,251624,2.66,251624,0.73,,,,,,,1.00649600000000000000,1.00649600000000000000,1.00649600000000000000,250000,1000000,2248662,348516,784112,1.35,251676,5.04,251676,2.68,251676,0.74,,,,,,,1.00670400000000000000,1.00670400000000000000,1.00670400000000000000,250000,1000000,2248776,343885,773650,1.35,251560,5.03,251560,2.65,251560,0.73,,,,,,,1.00624000000000000000,1.00624000000000000000,1.00624000000000000000,250000,1000000,2248683,345637,777686,1.36,251622,5.04,251622,2.67,251622,0.73,,,,,,,1.00648800000000000000,1.00648800000000000000,1.00648800000000000000,250000,1000000,2248597,346661,779628,1.35,251542,5.04,251542,2.64,251542,0.74,,,,,,,1.00616800000000000000,1.00616800000000000000,1.00616800000000000000,250000,
dagpart:-A:-c:10:-n:100000:-o:2:-k:260000:-t:5:1,1000000,2258807,357913,809077,1.37,261627,5.11,261627,2.7,261627,0.77,,,,,,,1.00625769230769230769,1.00625769230769230769,1.00625769230769230769,260000,1000000,2258697,356233,805208,1.36,261592,5.09,261592,2.67,261592,0.74,,,,,,,1.00612307692307692307,1.00612307692307692307,1.00612307692307692307,260000,1000000,2258732,354685,801795,1.37,261509,5.09,261509,2.68,261509,0.73,,,,,,,1.00580384615384615384,1.00580384615384615384,1.00580384615384615384,260000,1000000,2258757,351351,794011,1.36,261655,5.09,261655,2.65,261655,0.74,,,,,,,1.00636538461538461538,1.00636538461538461538,1.00636538461538461538,260000,1000000,2258631,352852,797149,1.37,261574,5.1,261574,2.7,261574,0.74,,,,,,,1.00605384615384615384,1.00605384615384615384,1.00605384615384615384,260000,
dagpart:-A:-c:10:-n:100000:-o:2:-k:270000:-t:5:1,1000000,2268691,361474,820697,1.38,271808,5.17,271808,2.7,271808,0.74,,,,,,,1.00669629629629629629,1.00669629629629629629,1.00669629629629629629,270000,1000000,2268644,362093,821866,1.37,271751,5.15,271751,2.68,271751,0.75,,,,,,,1.00648518518518518518,1.00648518518518518518,1.00648518518518518518,270000,1000000,2268726,362557,823030,1.38,271621,5.16,271621,2.69,271621,0.74,,,,,,,1.00600370370370370370,1.00600370370370370370,1.00600370370370370370,270000,1000000,2268742,376682,855524,1.39,271682,5.25,271682,2.7,271682,0.75,,,,,,,1.00622962962962962962,1.00622962962962962962,1.00622962962962962962,270000,1000000,2268685,362337,822312,1.38,271760,5.13,271760,2.66,271760,0.75,,,,,,,1.00651851851851851851,1.00651851851851851851,1.00651851851851851851,270000,
dagpart:-A:-c:10:-n:100000:-o:2:-k:280000:-t:5:1,1000000,2278722,380784,868590,1.4,281856,5.28,281856,2.58,281856,0.74,,,,,,,1.00662857142857142857,1.00662857142857142857,1.00662857142857142857,280000,1000000,2278756,380115,867063,1.39,281931,5.28,281931,2.7,281931,0.76,,,,,,,1.00689642857142857142,1.00689642857142857142,1.00689642857142857142,280000,1000000,2278611,369348,842441,1.39,281899,5.21,281899,2.71,281899,0.75,,,,,,,1.00678214285714285714,1.00678214285714285714,1.00678214285714285714,280000,1000000,2278759,374822,854933,1.39,281815,5.26,281815,2.71,281815,0.75,,,,,,,1.00648214285714285714,1.00648214285714285714,1.00648214285714285714,280000,1000000,2278765,374389,854229,1.4,281717,5.26,281717,2.72,281717,0.76,,,,,,,1.00613214285714285714,1.00613214285714285714,1.00613214285714285714,280000,
dagpart:-A:-c:10:-n:100000:-o:2:-k:290000:-t:5:1,1000000,2288773,384608,881648,1.41,291976,5.32,291976,2.73,291976,0.76,,,,,,,1.00681379310344827586,1.00681379310344827586,1.00681379310344827586,290000,1000000,2288649,390000,893495,1.41,292079,5.35,292079,2.72,292079,0.77,,,,,,,1.00716896551724137931,1.00716896551724137931,1.00716896551724137931,290000,1000000,2288673,381487,873873,1.41,292123,5.32,292123,2.73,292123,0.76,,,,,,,1.00732068965517241379,1.00732068965517241379,1.00732068965517241379,290000,1000000,2288661,379151,868496,1.41,291881,5.3,291881,2.73,291881,0.76,,,,,,,1.00648620689655172413,1.00648620689655172413,1.00648620689655172413,290000,1000000,2288760,377997,865306,1.41,291929,5.29,291929,2.72,291929,0.76,,,,,,,1.00665172413793103448,1.00665172413793103448,1.00665172413793103448,290000,
dagpart:-A:-c:10:-n:100000:-o:2:-k:300000:-t:5:1,1000000,2298603,391124,899883,1.42,302009,5.37,302009,2.74,302009,0.76,,,,,,,1.00669666666666666666,1.00669666666666666666,1.00669666666666666666,300000,1000000,2298666,399048,918565,1.42,302039,5.41,302039,2.74,302039,0.76,,,,,,,1.00679666666666666666,1.00679666666666666666,1.00679666666666666666,300000,1000000,2298862,387149,890492,1.42,302345,5.39,302345,2.74,302345,0.77,,,,,,,1.00781666666666666666,1.00781666666666666666,1.00781666666666666666,300000,1000000,2298739,388921,895051,1.42,302094,5.33,302094,2.73,302094,0.76,,,,,,,1.00698000000000000000,1.00698000000000000000,1.00698000000000000000,300000,1000000,2298647,390824,899043,1.43,302071,5.37,302071,2.73,302071,0.77,,,,,,,1.00690333333333333333,1.00690333333333333333,1.00690333333333333333,300000,
}; 
    \end{axis}
  \end{tikzpicture}\hfill{}
  \begin{tikzpicture}
    \begin{axis}[xlabel=weight~$k$ of optimum \partSet{},
      ylabel=deviation factor, width=0.45\textwidth, 
      xmin=9.5, ymax=3, xmax=29.5]

      \addplot[color=black,mark=+,only marks] table [x=STiDRk, y
      expr=(\thisrow{MCk}/\thisrow{STiDRk}), col sep=comma]
      {F,V,E,DRV,DRE,MCiDRk,MCDRk,MCk,STiDRk,STDRk
iter-dagpart:-C:-c:2:-n:101:-o:3:-t:4000,103,282,102,281,74,51,51,21,
iter-dagpart:-C:-c:2:-n:102:-o:3:-t:4000,104,289,104,289,28,21,21,21,
iter-dagpart:-C:-c:2:-n:104:-o:3:-t:4000,106,283,106,283,77,45,45,33,
iter-dagpart:-C:-c:2:-n:105:-o:3:-t:4000,107,291,57,126,11,11,11,11,
iter-dagpart:-C:-c:2:-n:106:-o:3:-t:4000,108,289,97,240,33,28,28,28,
iter-dagpart:-C:-c:2:-n:107:-o:3:-t:4000,109,297,109,297,75,21,21,12,
iter-dagpart:-C:-c:2:-n:110:-o:3:-t:4000,112,302,111,301,76,27,27,27,
iter-dagpart:-C:-c:2:-n:113:-o:3:-t:4000,115,313,115,313,74,34,34,34,
iter-dagpart:-C:-c:2:-n:115:-o:3:-t:4000,117,313,111,300,40,40,40,26,
iter-dagpart:-C:-c:2:-n:116:-o:3:-t:4000,118,317,73,161,22,22,22,21,
iter-dagpart:-C:-c:2:-n:118:-o:3:-t:4000,120,331,116,312,37,29,29,29,
iter-dagpart:-C:-c:2:-n:119:-o:3:-t:4000,121,327,60,127,29,29,29,17,
iter-dagpart:-C:-c:2:-n:120:-o:3:-t:4000,122,327,122,327,74,8,8,8,
iter-dagpart:-C:-c:2:-n:121:-o:3:-t:4000,123,334,123,334,88,53,53,24,
iter-dagpart:-C:-c:2:-n:122:-o:3:-t:4000,124,338,122,331,50,50,50,14,
iter-dagpart:-C:-c:2:-n:123:-o:3:-t:4000,125,344,123,334,39,39,39,27,
iter-dagpart:-C:-c:2:-n:126:-o:3:-t:4000,128,350,127,349,36,36,36,16,
iter-dagpart:-C:-c:2:-n:128:-o:3:-t:4000,130,343,117,293,38,32,32,32,
iter-dagpart:-C:-c:2:-n:50:-o:3:-t:4000,52,134,5,6,3,3,3,3,
iter-dagpart:-C:-c:2:-n:52:-o:3:-t:4000,54,141,52,138,21,19,19,19,
iter-dagpart:-C:-c:2:-n:53:-o:3:-t:4000,55,131,51,126,34,17,17,17,
iter-dagpart:-C:-c:2:-n:54:-o:3:-t:4000,56,141,43,94,30,26,26,26,
iter-dagpart:-C:-c:2:-n:55:-o:3:-t:4000,57,144,52,131,21,10,10,10,
iter-dagpart:-C:-c:2:-n:56:-o:3:-t:4000,58,145,57,142,18,13,13,13,
iter-dagpart:-C:-c:2:-n:57:-o:3:-t:4000,59,149,59,149,23,23,23,18,
iter-dagpart:-C:-c:2:-n:58:-o:3:-t:4000,60,158,58,152,25,23,23,23,
iter-dagpart:-C:-c:2:-n:59:-o:3:-t:4000,61,156,53,130,30,30,30,19,
iter-dagpart:-C:-c:2:-n:60:-o:3:-t:4000,62,158,62,158,48,28,28,28,
iter-dagpart:-C:-c:2:-n:61:-o:3:-t:4000,63,165,61,153,36,28,28,28,
iter-dagpart:-C:-c:2:-n:62:-o:3:-t:4000,64,169,51,118,10,10,10,10,
iter-dagpart:-C:-c:2:-n:63:-o:3:-t:4000,65,168,61,156,31,29,29,26,
iter-dagpart:-C:-c:2:-n:64:-o:3:-t:4000,66,170,66,170,43,30,30,23,
iter-dagpart:-C:-c:2:-n:65:-o:3:-t:4000,67,171,12,21,8,7,7,7,
iter-dagpart:-C:-c:2:-n:66:-o:3:-t:4000,68,171,45,97,15,14,14,14,
iter-dagpart:-C:-c:2:-n:67:-o:3:-t:4000,69,179,68,178,33,30,30,28,
iter-dagpart:-C:-c:2:-n:68:-o:3:-t:4000,70,179,70,179,25,18,18,18,
iter-dagpart:-C:-c:2:-n:69:-o:3:-t:4000,71,181,68,174,44,14,14,14,
iter-dagpart:-C:-c:2:-n:70:-o:3:-t:4000,72,192,71,191,35,30,30,24,
iter-dagpart:-C:-c:2:-n:71:-o:3:-t:4000,73,197,72,195,42,11,11,11,
iter-dagpart:-C:-c:2:-n:72:-o:3:-t:4000,74,190,74,190,55,22,22,22,
iter-dagpart:-C:-c:2:-n:73:-o:3:-t:4000,75,196,44,96,33,18,18,18,
iter-dagpart:-C:-c:2:-n:74:-o:3:-t:4000,76,204,7,10,4,4,4,4,
iter-dagpart:-C:-c:2:-n:76:-o:3:-t:4000,78,206,65,164,19,11,11,11,
iter-dagpart:-C:-c:2:-n:77:-o:3:-t:4000,79,215,79,215,54,24,24,17,
iter-dagpart:-C:-c:2:-n:78:-o:3:-t:4000,80,211,55,121,19,17,17,17,
iter-dagpart:-C:-c:2:-n:79:-o:3:-t:4000,81,226,72,178,41,41,41,37,
iter-dagpart:-C:-c:2:-n:80:-o:3:-t:4000,82,222,79,204,51,26,26,26,
iter-dagpart:-C:-c:2:-n:81:-o:3:-t:4000,83,220,54,114,11,10,10,10,
iter-dagpart:-C:-c:2:-n:82:-o:3:-t:4000,84,220,80,203,27,20,20,20,
iter-dagpart:-C:-c:2:-n:83:-o:3:-t:4000,85,221,81,209,52,11,11,11,
iter-dagpart:-C:-c:2:-n:84:-o:3:-t:4000,86,225,85,224,18,14,14,14,
iter-dagpart:-C:-c:2:-n:86:-o:3:-t:4000,88,238,85,224,48,48,48,31,
iter-dagpart:-C:-c:2:-n:87:-o:3:-t:4000,89,241,89,241,64,30,30,25,
iter-dagpart:-C:-c:2:-n:88:-o:3:-t:4000,90,239,20,37,16,14,14,14,
iter-dagpart:-C:-c:2:-n:89:-o:3:-t:4000,91,234,90,232,64,30,30,29,
iter-dagpart:-C:-c:2:-n:90:-o:3:-t:4000,92,245,92,245,61,36,36,14,
iter-dagpart:-C:-c:2:-n:91:-o:3:-t:4000,93,244,93,244,38,30,30,27,
iter-dagpart:-C:-c:2:-n:92:-o:3:-t:4000,94,251,91,247,58,35,35,35,
iter-dagpart:-C:-c:2:-n:93:-o:3:-t:4000,95,257,95,257,50,23,23,23,
iter-dagpart:-C:-c:2:-n:94:-o:3:-t:4000,96,252,92,238,35,35,35,32,
iter-dagpart:-C:-c:2:-n:96:-o:3:-t:4000,98,259,97,258,66,30,30,30,
iter-dagpart:-C:-c:2:-n:97:-o:3:-t:4000,99,262,99,262,24,24,24,24,
iter-dagpart:-C:-c:2:-n:99:-o:3:-t:4000,101,274,101,274,60,26,26,26,
};

    \end{axis}
  \end{tikzpicture}

  \caption{Comparison of \partSet{}s found by \citet{LBK09}'s heuristic
    with an embedded \partSet{} of size~$k$ (left) and an
    optimal \partSet{} (right).  All graphs on the left side have
    $10^6$~vertices and roughly $18\cdot 10^6$~arcs and were generated
    by adding $k$~random arcs between ten connected components, each
    being a preferential attachment graph on $10^5$~vertices with
    outdegree fifteen and a single sink. All graphs on the right side
    are preferential attachment graphs with varying number of vertices,
    two sinks and outdegree three.}
  \label{fig:prefat}
\end{figure}

\paragraph{Summary} We have seen that solving large instances
with \partSet{}s of small weight is realistic using our algorithm. In
particular, instances with more than $10^7$~arcs and $k\leq 190$ could
be solved in less than five minutes. A crucial ingredient in this
success is the data reduction executed by \autoref{reducealg}; without
its help, we could not solve any of our instances in less than five
minutes.

\looseness=-1 However, we also observed that our algorithm works best on those
instances that can already be solved mostly optimally by \citet{LBK09}'s
heuristic and that the data reduction executed by \autoref{reducealg}
slows down the heuristic.

Having seen that the heuristic by \citet{LBK09} can be off by more than a
factor of two from the optimum on random preferential attachment graphs
diminishes the hope that, in spite of the non-approximability results of
\DAGP{} by \citet{AM12}, the heuristic of \citet{LBK09} might find good
approximations on naturally occurring instances. As we see, we do not
have to construct adversarial instances to make the heuristic find
solutions far from optimal.

\subsection{Limits of data reduction and fixed-parameter
  algorithms}\label{sec:kkern}
\noindent In \autoref{sec:klintime}, we have seen linear-time data
reduction rules for \DAGP{}.  Moreover, the experiments in
\autoref{sec:experiments} have shown that, on all input instancse we
tested our algorithm on, the running time of our
$\bigO(2^k\cdot (n+m))$~time \autoref{alg:simple-st} merely depended
on~\(k\) because \autoref{reducealg} shrunk our random input instances
to roughly the same size.

Therefore it is natural to ask whether we can provide data reduction
rules that \emph{provably} shrink the size of each possible input
instance to some fixed polynomial in~$k$, that is, whether there is a
polynomial-size problem kernel for \DAGP{}.  Unfortunately, in this section, we give a negative answer to this question.  Specifically, %
we prove that \DAGP{} does not
admit problem kernels with size polynomial in~$k$, unless
\NoPolyKernelAssume. Moreover, we show that the running time
$\bigO(2^k\cdot (n+m))$ of \autoref{alg:simple-st} cannot be improved to
$2^{\bigo(k)}\mathrm{poly}(n)$, unless the Exponential Time Hypothesis
fails. Herein, the Exponential Time Hypothesis as well as
\NegNoPolyKernelAssume{} are hypotheses stronger than~P${}\neq{}$NP, but
widely accepted among complexity theorists~\citep{IPZ01,FS11}.

Towards proving these results, we first recall the polynomial-time
many-to-one reduction from \textsc{3-Sat} to \DAGP{} given
by~\citet{AM12}. The \textsc{3-Sat} problem is, given a
formula~$\varphi$ in conjunctive normal form with at most three literals
per clause, to decide whether~$\varphi$ admits a satisfying assignment.

\begin{construction}[{\citet{AM12}}]\label{lem:red-from-3Sat-WDAGP}
  \upshape Let $\varphi$ be an instance of \textsc{3-Sat} with the
  variables $x_1, \ldots, x_n$ and the clauses $C_1, \dots, C_m$. We
  construct a \DAGP{} instance~$(G,\wei{},k)$ with $k:=4n+2m$ that is a
  yes-instance if and only if $\varphi$ is satisfiable. The weight
  function~$\wei{}$ will assign only two different weights to the arcs:
  a \emph{normal arc} has weight one and a \emph{heavy arc} has weight
  $k+1$ and thus cannot be contained in any \partSet{} of
  weight~$k$. The remainder of this construction is illustrated in
  \autoref{fig:SAT-to-DAGP}.  \begin{figure} \centering
    \begin{tikzpicture}[shorten >= 0.5mm]
      \tikzstyle{vertex}=[circle,draw,fill=black,minimum size=3pt,inner sep=0pt]

      \node[vertex, label=right:$f$] at (6,0) (f) {};
      \node[vertex, label=right:$f'$] at (6,-3) (f') {};
      \node[vertex, label=left:$t$] at (0,0) (t) {};
      \node[vertex, label=left:$t'$] at (0,-3) (t') {};

      \node[vertex, label=left:$x_1^t$] at (1,-1) (x1t) {};
      \node[vertex, label=left:$x_1$] at (1,-2) (x1) {};
      \node[vertex, label=right:$x_1^f$] at (2,-1) (x1f) {};
      \node[vertex, label=right:$\bar x_1$] at (2,-2) (bx1) {};

      \node[vertex, label=left:$x_2^t$] at (4,-1) (x2t) {};
      \node[vertex, label=left:$x_2$] at (4,-2) (x2) {};
      \node[vertex, label=right:$x_2^f$] at (5,-1) (x2f) {};
      \node[vertex, label=right:$\bar x_2$] at (5,-2) (bx2) {};

      \node[vertex, label=below:$C_1$] at (3,-1.5) (C) {};
      
      \draw[very thick, ->] (f) -> (f');
      \draw[very thick, ->] (t) -> (t');
      \draw[very thick, ->] (t) -> (x1t);
      \draw[very thick, ->] (t) to[in=90,out=0,tension=0.1] (x2t);
      \draw[very thick, ->] (f) to[in=90,out=180] (x1f);
      \draw[very thick, ->] (f) -> (x2f);

      \draw[->] (x1t)--(x1);
      \draw[->,dotted] (x1t)--(bx1);
      \draw[->,dotted] (x1f)--(x1);
      \draw[->] (x1f)--(bx1);
      \draw[->] (x1) to[out=240,in=30] (t');
      \draw[->,dotted] (x1) to[out=-30,in=190] (f');
      \draw[->,dotted] (bx1) to[out=230,in=0] (t');
      \draw[->] (bx1) to[out=-30,in=180] (f');

      \draw[->,dotted] (x2t)--(x2);
      \draw[->] (x2t)--(bx2);
      \draw[->] (x2f)--(x2);
      \draw[->,dotted] (x2f)--(bx2);
      \draw[->,dotted] (x2) to[out=210,in=0] (t');
      \draw[->] (x2) to[out=-50,in=180] (f');
      \draw[->] (bx2) to[out=210,in=-10] (t');
      \draw[->,dotted] (bx2) to[out=-60,in=150] (f');

      \draw[->] (C) to[out=180,in=30] (x1);
      \draw[->] (C) to[out=0,in=150] (bx2);
      \draw[very thick, ->] (t) to[in=90,out=-20] (C);

      \begin{pgfonlayer}{background}
        \draw[fill=black!10,line width=35pt,line join=round,
        draw=black!10] (x1t.center) rectangle (bx1.center);

        \draw[fill=black!10,line width=35pt,line join=round,
        draw=black!10] (x2t.center) rectangle (bx2.center);
      \end{pgfonlayer}
    \end{tikzpicture}
    \caption{\DAGP{} instance constructed from the formula consisting
      only of the clause $C_1:=(x_1\vee\bar x_2)$. Heavy arcs are drawn
      bold; dotted arcs are a \partSet{} that corresponds to the
      satisfying assignment setting~$x_1$ to true and $x_2$ to
      false. The variable gadgets are drawn on a gray background.}
    \label{fig:SAT-to-DAGP}
  \end{figure}
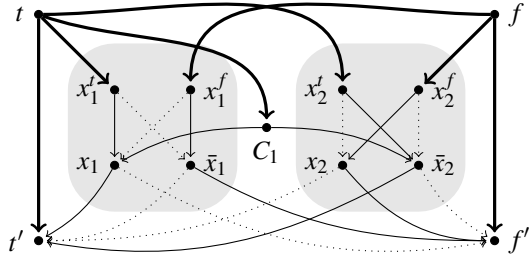

  We start constructing the directed acyclic graph~$G$ by adding the
  special vertices~$f,f',t,$ and~$t'$ together with the heavy
  arcs~$(f,f')$ and~$(t,t')$.  The vertices~$f'$ and~$t'$ will be the
  only sinks in~$G$.  For each variable $x_i$, introduce the
  vertices~$x^t_i, x_i^f, x_i$ and~$\bar x_i$ together with the
  heavy arcs~$(t,x_i^t)$ and $(f,x_i^f)$ and the normal arcs
  $(x_i^t,x_i)$, $(x_i^t,\bar x_i)$, $(x_i^f,x_i)$,
  $(x_i^f,\bar x_i)$, $(x_i,f')$, $(\bar x_i,f')$,
  $(x_i,t')$, and $(\bar x_i,t')$. For each clause~$C_j$, add a
  vertex~$C_j$ together with the heavy arc~$(t,C_j)$.  Finally, if some
  clause~$C_j$ contains the literal~$x_i$, then add the arc~$(C_j,x_i)$;
  if some clause~$C_j$ contains the literal~$\bar x_i$, then add the
  arc~$(C_j,\bar x_i)$.
\end{construction}

\noindent \citet{AM12} showed that, given a formula~$\varphi$ in \DREICNF{}
with $n$~variables and $m$~clauses, \autoref{lem:red-from-3Sat-WDAGP}
outputs a graph~$G$ with arc weights~$\wei{}$ such that $\varphi$~is
satisfiable if and only if there is a \partSet{}~$\Sol{}$ for~$G$ that
does not contain heavy arcs.

\looseness=-1 Since $G$~has only the two sinks~$t'$ and~$f'$, by \autoref{lem:no_new_sinks}, a minimal such \partSet{} has to partition~$G$ into two connected components, one connected component containing the heavy arc~$(t,t')$ and the other containing~$(f,f')$.  Moreover, if \(\varphi\)~is satisfiable, then such a \partSet{} has weight at most~$4n+2m$, since for each~$x_i$ of the $n$~variables of~$\varphi$, it deletes at most one of two arcs outgoing from each of the vertices~$x^t_i, x_i^f, x_i$, and~$\bar x_i$, and for each~$C_j$ of the $m$~clauses, it deletes at most two out of the three arcs outgoing from the clause vertex~$C_j$. We thus obtain the following lemma:

\begin{lemma}\label{lem:constr1}
  Given a formula~$\varphi$ in \DREICNF{} with $n$~variables and $m$~clauses,
  \autoref{lem:red-from-3Sat-WDAGP} outputs a graph~$G$ with arc
  weights~$\wei{}$ such that $\varphi$~is satisfiable if and only if
  there is a \partSet{}~$\Sol{}$ for~$G$ that does not contain heavy
  arcs. %

  Moreover, if \(\varphi\)~is satisfiable, then $\Sol{}$ has weight at most~$4n+2m$ and partitions~$G$ into
  one connected component containing the constructed vertices~$t$
  and~$t'$ and the other containing~$f$ and~$f'$.
\end{lemma}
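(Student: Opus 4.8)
The plan is to prove the biconditional in both directions and then read the \textbf{Moreover} part off the same construction; the underlying equivalence is already due to \citet{AM12}, so the genuinely new content is the weight bound~$4n+2m$ and the explicit two-component structure. First I would record the backbone forced by the heavy arcs. A heavy arc has weight~$k+1$, so no \partSet{} under consideration contains one; in particular, for a heavy-arc-free \partSet{}~\Sol{} the arcs $(t,t')$, $(t,x_i^t)$, and $(t,C_j)$ keep $t,t'$, all~$x_i^t$, and all~$C_j$ weakly connected, while $(f,f')$ and $(f,x_i^f)$ keep $f,f'$, and all~$x_i^f$ weakly connected. Since $t'$ and~$f'$ are the only sinks of~$G$, the definition of a \partSet{} forces these two backbones into distinct connected components of $G\setminus\Sol{}$, one with sink~$t'$ and one with sink~$f'$; this is exactly the partition asserted in the \textbf{Moreover} part. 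Throughout I would use \autoref{lem:dirundirequiv} to replace ``each component has one sink'' by ``each vertex reaches exactly one sink'', and \autoref{lem:no_new_sinks} to note that no non-sink loses all its outgoing arcs.

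For the forward direction, given a satisfying assignment~$\beta$, I would build~\Sol{} by routing literals to sides: when $\beta(x_i)$ is true, delete $(x_i,f')$, $(\bar x_i,t')$, $(x_i^t,\bar x_i)$, and $(x_i^f,x_i)$ (and the mirror-image four arcs otherwise), so that the vertex~$x_i$ reaches~$t'$ exactly when the literal~$x_i$ is satisfied; for each clause~$C_j$, delete the arcs $(C_j,\ell)$ to its unsatisfied literals, of which there are at most two since $\beta$ satisfies~$C_j$. A short check via \autoref{lem:dirundirequiv} then shows every vertex reaches exactly one sink: $x_i^t$ and $C_j$ reach~$t'$ through a retained satisfied literal, and $x_i^f$ reaches~$f'$ through a retained falsified literal. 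Counting deletions gives four per variable and at most two per clause, hence $\wei(\Sol{})\le 4n+2m=k$, which establishes the weight bound.

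For the backward direction, given a heavy-arc-free \partSet{}~\Sol{}, I would set $\beta(x_i)$ to true exactly when the vertex~$x_i$ lies in the $t'$-component. The crux is to show that exactly one of $x_i,\bar x_i$ lands in each component: because $x_i^t$ lies in the $t'$-component, reaches only~$t'$, and by \autoref{lem:no_new_sinks} retains an outgoing arc, at least one of $x_i,\bar x_i$ reaches~$t'$; symmetrically, via~$x_i^f$, at least one reaches~$f'$; hence the two literal vertices split across the two sinks and $\beta$ is well defined. Finally, each clause vertex~$C_j$ lies in the $t'$-component, reaches only~$t'$, and retains some arc $(C_j,\ell)$, so the literal~$\ell$ reaches~$t'$ and is therefore satisfied under~$\beta$; thus every clause is satisfied.

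I expect the main obstacle to be precisely this ``exactly one literal per side'' step, since it is the only place where the interplay between the no-new-sinks property (\autoref{lem:no_new_sinks}) and the reach-exactly-one-sink characterization (\autoref{lem:dirundirequiv}) is genuinely needed, and where one must resist assuming anything beyond what the heavy-arc-free hypothesis supplies. Note also the asymmetry in which hypothesis drives which direction: the heavy-arc-free assumption powers the backward direction, whereas the $4n+2m$ bound is only used to certify that the forward-direction set stays within budget. Given that the biconditional itself is from \citet{AM12}, I would either cite it directly or fold in the compact argument above, keeping the emphasis on the weight bound and the two-component partition, both of which follow immediately from the heavy-arc backbone together with the per-vertex deletion count.
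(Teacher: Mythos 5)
Your proposal is correct in substance but is considerably more self-contained than what the paper actually does. The paper does not reprove the biconditional at all: it cites \citet{AM12} for the equivalence ``$\varphi$ satisfiable $\iff$ there is a heavy-arc-free \partSet{}'' and proves only the \emph{Moreover} part, in the short paragraph preceding the lemma---arguing via \autoref{lem:no_new_sinks} that a minimal heavy-arc-free \partSet{} splits~$G$ into exactly the two components around the heavy arcs $(t,t')$ and~$(f,f')$, and counting exactly as you do: at most one deleted arc out of each of $x_i^t, x_i^f, x_i, \bar x_i$ per variable (your four deletions per variable match this) and at most two of the three arcs out of each clause vertex. Your forward direction (explicit routing of literals by the assignment) and backward direction (extracting~$\beta$ from which component the vertex~$x_i$ lies in, with the ``exactly one literal vertex per side'' argument driven by the heavy-arc backbone and retained outgoing arcs) is precisely the AM12-style argument the paper folds into the citation, so what your route buys is independence from the external reference at the cost of length; the paper's route buys brevity.

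One technical slip to repair: \autoref{lem:no_new_sinks} is stated only for \emph{minimal} \partSet{}s, but in your backward direction you invoke it for an arbitrary heavy-arc-free \partSet{}~$\Sol{}$ to conclude that $x_i^t$, $x_i^f$, and~$C_j$ each retain an outgoing arc. The fix is one sentence: replace~$\Sol{}$ by a minimal \partSet{}~$\Sol{}'\subseteq\Sol{}$ (which exists since the partitioning sets contained in~$\Sol{}$ form a finite nonempty family and any subset of a heavy-arc-free set is heavy-arc-free) and run the argument on~$\Sol{}'$. Note the paper itself is careful on this point, speaking of ``a minimal such \partSet{}'' when deriving the two-component structure. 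With that adjustment, your proof goes through; in particular your crux step is sound, since $x_i^t$ lies in the $t'$-component and $x_i^f$ in the $f'$-component by the undeletable heavy arcs, forcing the two literal vertices to split across the two components and making~$\beta$ well defined.
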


\noindent We will later exploit \autoref{lem:constr1} to show our hardness results.
Next, we show that arcs with non-unit weights in our constructions
can be simulated by arcs with unit weights.
This allows us to show stronger hardness results and
to keep our constructions simple.

\subsubsection{Strengthening of hardness results to unit-weight graphs}\label{sec:tounit}
\autoref{lem:red-from-3Sat-WDAGP} heavily relied on
forbidding the deletion of certain arcs by giving them a high weight.
The next lemma shows that we can replace these arcs by a gadget only
using unit-weight arcs without changing the weight of the \partSet{} sought.

\begin{lemma}
  \label{lem:red-WDAGP-to-DAGP} There is a \kred{} from \DAGP{} with
  polynomially bounded weights to unweighted \DAGP{} that does not change
  the weight~$k$ of the \partSet{} sought.
\end{lemma}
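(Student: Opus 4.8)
The plan is to realize the reduction by replacing every weighted arc with a \emph{bundle} of unit-weight parallel paths of length two. Given a \DAGP{} instance $(G,\wei{},k)$ with polynomially bounded weights, I build a unit-weight graph $G'$ as follows: each arc $a=(u,v)$ of weight $w:=\wei(a)$ is removed and replaced by $w$ internally vertex-disjoint directed paths $u\to z^a_i\to v$ through fresh vertices $z^a_1,\dots,z^a_w$, every new arc having weight one, and the budget is kept at $k$. Since each $z^a_i$ can be placed in the topological slot between $u$ and $v$, the graph $G'$ stays acyclic; it is simple because the $z^a_i$ are pairwise distinct with exactly one in-arc and one out-arc each; and because the weights are polynomially bounded, $G'$ has at most $m\cdot\poly(n)$ vertices, so the construction is computable in polynomial time. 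Thus $(G,\wei{},k)\mapsto(G',\mathbf 1,k)$ is a \kred{} that does not change~$k$, and it remains to prove equivalence of the instances.

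First I would record the invariants of a bundle. Every fresh vertex $z^a_i$ has the single out-arc $(z^a_i,v)$, so it is never a sink and reaches exactly the sinks that $v$ reaches; and an original vertex is a sink in $G'$ if and only if it is a sink in~$G$. For original vertices, reachability and weak connectivity are preserved by the replacement, since any use of an intact arc $a=(u,v)$ in $G$ corresponds to the detour through $z^a_1$ in $G'$ and vice versa; by \autoref{lem:dirundirequiv} it therefore suffices to track, for each original vertex, the unique sink it reaches. For the forward direction, given a \partSet{}~$\Sol{}$ of weight at most~$k$ for~$G$, I delete in $G'$ exactly the arcs $\{(u,z^a_i):a=(u,v)\in\Sol{},\ 1\le i\le w\}$ to obtain a set $\Sol{}'$ with $|\Sol{}'|=\wei(\Sol{})\le k$. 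Cutting a whole bundle detaches~$u$ from~$v$ through that gadget while leaving every $z^a_i$ attached to~$v$ via $(z^a_i,v)$, so the weakly connected components of $G'\setminus\Sol{}'$ restricted to the original vertices coincide with those of $G\setminus\Sol{}$, each fresh vertex lying in the component of its~$v$; as the fresh vertices are non-sinks, the number of sinks per component is unchanged and $\Sol{}'$ is a \partSet{} for~$G'$.

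For the backward direction I would take a \emph{minimal} \partSet{}~$\Sol{}'$ of weight (equivalently, cardinality) at most~$k$ for~$G'$. By \autoref{lem:no_new_sinks}, $\Sol{}'$ introduces no new sinks, and since $(z^a_i,v)$ is the unique out-arc of $z^a_i$, it follows that $\Sol{}'$ contains no arc of the form $(z^a_i,v)$; hence $\Sol{}'$ consists solely of arcs $(u,z^a_i)$. I then argue that $\Sol{}'$ cuts each bundle all-or-nothing: if some but not all arcs $(u,z^a_i)$ of a bundle were deleted, then removing these deletions from~$\Sol{}'$ would merge no components, because $u$ and~$v$ are already connected through a surviving path $u\to z^a_j\to v$ and each detached $z^a_i$ is already connected to~$v$ via $(z^a_i,v)$; this would contradict minimality. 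Consequently $\Sol{}:=\{a=(u,v):\text{all } (u,z^a_i)\in\Sol{}'\}$ satisfies $\wei(\Sol{})=|\Sol{}'|\le k$, and by the connectivity correspondence above it is a \partSet{} for~$G$.

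I expect the main obstacle to be exactly this last step: combining \autoref{lem:no_new_sinks} (to forbid deleting the $(z^a_i,v)$ arcs, which would otherwise offer a spurious way to create a sink out of a detour vertex) with the minimality argument that forces each bundle to be cut either completely or not at all. Once these two points are secured, the cost of separating~$u$ from~$v$ through a bundle is provably exactly~$w$ and cannot be undercut, and the remainder is bookkeeping verifying that the detour vertices distribute among the components without changing the number of sinks. The acyclicity, simplicity, polynomial size, and budget preservation are immediate from the construction.
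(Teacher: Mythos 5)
Your proof is correct and follows essentially the same route as the paper's: the same length-two detour gadget, the same use of \autoref{lem:no_new_sinks} to forbid deleting the arcs into the target vertex, and the same minimality argument to force each bundle to be cut all-or-nothing. The only cosmetic differences are that you replace the arc by $w$ parallel paths where the paper keeps the original arc at weight one plus $w-1$ detours, and that you process all arcs simultaneously rather than one arc at a time.
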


\begin{figure}[t]
  \centering
  \begin{tikzpicture}[shorten >= 0.5mm,baseline=(v)]
    \tikzstyle{vertex}=[circle,draw,fill=black,minimum size=3pt,inner sep=0pt]

    \node[vertex, label=left:$v$] (v) at (0,0) {};
    \node[vertex, label=right:$w$] (w) at (2,0) {};

    \draw[->] (v) -- node[midway,above] {5} (w);
  \end{tikzpicture}\hspace{2cm}
  \begin{tikzpicture}[shorten >= 0.5mm,baseline=(v)]
    \tikzstyle{vertex}=[circle,draw,fill=black,minimum size=3pt,inner sep=0pt]

    \node[vertex, label=left:$v$] (v) at (0,0) {};
    \node[vertex, label=right:$w$] (w) at (2,0) {};
    \node[vertex] (a1) at (1,1) {};
    \node[vertex] (a2) at (1,0.5) {};
    \node[vertex] (a3) at (1,-0.5) {};
    \node[vertex] (a4) at (1,-1) {};

    \draw[->] (v) -- (w);
    \draw[->] (v) -- (a1);
    \draw[->] (v) -- (a2);
    \draw[->] (v) -- (a3);
    \draw[->] (v) -- (a4);

    \draw[->] (a1) -- (w);
    \draw[->] (a2) -- (w);
    \draw[->] (a3) -- (w);
    \draw[->] (a4) -- (w);
  \end{tikzpicture}

  \caption{Replacing an arc of weight~5 on the left by the gadget of unit-weight arcs on the right.}
  \label{fig:unweighted}
\end{figure}
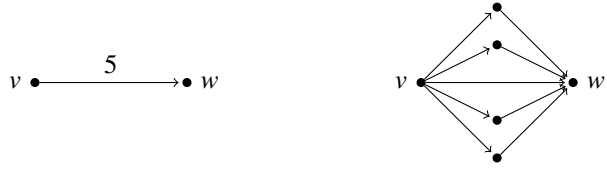

\begin{proof}
  \newcommand{\Va}{X} Let $(G,\wei{},k)$ be an instance of \DAGP. We
  show how to obtain an instance~$(G',\wei{}',k)$ by replacing a single
  arc of weight more than one by arcs of weight one such
  that~$(G,\wei{},k)$ is a yes-instance if and only if
  $(G',\wei{}',k)$~is a yes-instance. The replacement will be done as
  illustrated in \autoref{fig:unweighted}. The claim then follows by
  repeating this procedure for every arc of weight more than one.

  Consider an arc~$a=(v,w)$ in~$G$ with $\wei{}(a)>1$. We obtain~$G'$
  and~$\wei{}'$ from~$G$ and~$\wei{}$ by setting the
  weight~$\wei{}'(a)=1$, adding a set~$\Va{}$ of $\wei{}(a)-1$~vertices
  to~$G'$, and inserting for each $u\in \Va{}$ a weight-one arc~$(v,u)$
  and a weight-one arc~$(u,w)$.

  First, assume that $(G,\wei{},k)$ is a yes-instance and that
  $\Sol{}$~is a minimal \partSet{} for~$G$.  We show how to obtain
  a \partSet{} of weight~$k$ for~$G'$. Clearly, if $a\notin \Sol{}$,
  then~$\Sol{}$ is a \partSet{} of equal weight for~$(G',\wei{}',k)$. If
  $a\in \Sol{}$, then we get a \partSet{} of equal weight for
  $(G',\wei{}',k)$ by adding the arcs between~$v$ and~$\Va{}$
  to~$\Sol{}$.

  Second, assume that $(G',\wei{}',k)$ is a yes-instance and
  that~$\Sol{}$ is a minimal \partSet{} for~$G'$.  We show how to obtain
  a \partSet{} of weight~$k$ for~$G$. To this end, we consider two
  cases: $v$~and~$w$ are in a common or in separate connected components
  of~$G'\setminus\Sol{}$.

  \begin{caselist}
  \item If $v$~and~$w$ are in one connected component
    of~$G'\setminus\Sol{}$, then, by minimality, $\Sol{}$~does not
    contain~$a$ or any arc incident to vertices in~$\Va{}$. Hence,
    $\Sol{}$ is a \partSet{} of equal weight for~$G$.
  \item If $v$~and~$w$ are in separate connected components
    of~$G'\setminus\Sol{}$, then $a\in\Sol{}$. Moreover, the vertices
    in~$\Va{}$ have only one outgoing arc. Hence, by
    \autoref{lem:no_new_sinks}, $\Sol{}$~does not contain arcs
    from~$\Va{}$ to~$w$ but, therefore, contains all arcs from~$v$
    to~$\Va{}$.  Removing these arcs from~$\Sol{}$ results in
    a \partSet{} of equal weight for~$(G,\wei{},k)$.\qed
  \end{caselist}
\end{proof}

\subsubsection{Limits of fixed-parameter algorithms}
\label{sec:limits}
We now show that \DAGP{} cannot be solved in $2^{\bigo(k)}\textrm{poly}(n)$~time
unless the Exponential Time Hypothesis fails.
Thus, if our search tree algorithm for \DAGP{} can be improved, then only by
replacing the base of the exponential $2^k$-term by some smaller constant.

The Exponential Time Hypothesis was introduced by \citet{IP01} and
states that $n$-variable \textsc{3-Sat} cannot be solved
in~$2^{\bigo(n)}\poly(n)$ time. Using the reduction from \textsc{3-Sat} to
\DAGP{} given by \citet{AM12} (\autoref{lem:red-from-3Sat-WDAGP}), we
can easily show the following:

\begin{theorem}\label{cor:no-sub-exp-algorithm}
  Unless the Exponential Time Hypothesis fails, \DAGP{} cannot be solved
  in $2^{\bigo(k)}\poly(n)$ time even if all arcs have unit weight.
\end{theorem}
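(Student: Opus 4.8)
The plan is to chain together two reductions already established in the excerpt and then turn any hypothetical sub\-exponential \DAGP{} algorithm into a sub\-exponential \textsc{3-Sat} algorithm. Concretely, I would start from a \DREICNF{} formula~$\varphi$ with $n$~variables and $m$~clauses, apply \autoref{lem:red-from-3Sat-WDAGP} to obtain in polynomial time an equivalent (by \autoref{lem:constr1}) weighted \DAGP{} instance~$(G,\wei{},k)$ with $k=4n+2m$, and then apply the weight-removal \kred{} of \autoref{lem:red-WDAGP-to-DAGP} to obtain an equivalent \emph{unit-weight} instance $(G',\wei{}',k)$ with the \emph{same} parameter~$k$. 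The whole composition runs in polynomial time and preserves yes/no-answers, so it reduces \textsc{3-Sat} to unit-weight \DAGP{} with $k=4n+2m$.

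Next I would track the sizes carefully, since this is where the argument can break. In the graph~$G$ produced by \autoref{lem:red-from-3Sat-WDAGP} there are $\bigO(n+m)$~vertices and arcs, and the only non-unit weights are the heavy arcs, each of weight $k+1=\bigO(n+m)$, of which there are $\bigO(n+m)$ many; in particular the weights are polynomially bounded, so \autoref{lem:red-WDAGP-to-DAGP} indeed applies. That lemma replaces a heavy arc of weight $k+1$ by a gadget adding~$k$ new vertices, so across all $\bigO(n+m)$ heavy arcs the unit-weight graph~$G'$ still has only $\bigO((n+m)^2)$ vertices and arcs, i.e.\ size $N=\poly(n+m)$, while keeping $k=4n+2m$ unchanged. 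Thus I obtain a polynomial-time reduction from \textsc{3-Sat} to unit-weight \DAGP{} in which the parameter and the instance size are both polynomial in~$n+m$.

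Finally I would derive the contradiction. Suppose unit-weight \DAGP{} could be solved in $2^{\bigo(k)}\poly(N)$~time. Running this on $(G',\wei{}',k)$ decides~$\varphi$ in time
\[
  2^{\bigo(k)}\poly(N)=2^{\bigo(4n+2m)}\poly(n+m)=2^{\bigo(n+m)}\poly(n+m).
\]
The main obstacle is that $k$ is linear in $n+m$ rather than in $n$ alone, so the bare form of the Exponential Time Hypothesis (no $2^{\bigo(n)}\poly(n)$ algorithm for $n$-variable \textsc{3-Sat}) does not immediately suffice. I would therefore invoke the sparsification lemma of \citet{IPZ01}, which strengthens ETH to the statement that, unless ETH fails, \DREICNF{} satisfiability admits no $2^{\bigo(n+m)}\poly(n+m)$-time algorithm; the displayed running time contradicts exactly this, completing the proof. (Equivalently, one may use sparsification to assume $m=\bigO(n)$ at the outset, so that $k=\bigO(n)$ and the contradiction reduces to the bare ETH.)
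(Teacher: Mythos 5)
Your proposal is correct and follows essentially the same route as the paper's proof: reduce \textsc{3-Sat} via \autoref{lem:red-from-3Sat-WDAGP} to a \DAGP{} instance with $k=4n+2m$, invoke the Sparsification Lemma of \citet{IPZ01} to rule out a $2^{\bigo(k)}\poly(n)$-time algorithm under ETH (since $k$ is linear in $m$, not $n$), and transfer to unit weights via \autoref{lem:red-WDAGP-to-DAGP}, noting the weights are polynomially bounded. Your explicit size accounting for the heavy-arc gadgets is a welcome elaboration of a step the paper treats in one sentence, but it is not a different argument.
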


\begin{proof}
  \autoref{lem:red-from-3Sat-WDAGP} reduces an instance of
  \textsc{3-Sat} consisting of a formula with $n$~variables and
  $m$~clauses to an equivalent instance~$(G,\wei{},k)$ of \DAGP{} with
  $k=4n+ 2m$.  Thus, a $2^{\bigo(k)}\poly(n)$-time algorithm for \DAGP{}
  would yield a $2^{\bigo(m)}\poly(m)$-time algorithm for
  \textsc{3-Sat}. This, in turn, by the so-called \emph{Sparsification
    Lemma} of \citet[Corollary~2]{IPZ01}, would imply a
  $2^{\bigo(n)}\poly(n)$~time algorithm for \textsc{3-Sat}, which
  contradicts the Exponential Time Hypothesis.
  Since the weights used in \autoref{lem:red-from-3Sat-WDAGP} are
  polynomial in the number of created vertices and edges,
  we can apply \autoref{lem:red-WDAGP-to-DAGP} to transfer
  the result to the unit-weight case. \qed
\end{proof}

\subsubsection{Limits of problem kernelization} We now show that \DAGP{}
has no polynomial-size problem kernel with respect to the
parameter~$k$---the weight of the \partSet{} sought.  It follows that,
despite the effectiveness of data reduction observed in experiments in
\autoref{sec:experiments}, we presumably cannot generally shrink a \DAGP{}
instance in polynomial time to a size polynomial in~$k$.

To show that \DAGP{} does not allow for polynomial-size kernels, we
first provide the necessary concepts and techniques introduced by
\citet{BJK13}.
\begin{definition}[{\citet[Definition~3.3]{BJK13}}]
  For some finite alphabet~\(\Sigma\), a problem~$L\subseteq\Sigma^*$ \emph{(OR-)cross-composes} into a parameterized problem~$Q\subseteq\Sigma^*\times\mathbb N$ if there is an algorithm (a \emph{(OR-)cross-composition}) that transforms instances~$x_1,\ldots,x_s$ of~$L$ into an instance~$(x^*,k)$ for~$Q$ in time polynomial in~\(\sum_{i=1}^s|x_i|\) such that
  \begin{enumerate}[i)]
  \item $k$~is bounded by a polynomial in~$\max^s_{i=1}|x_i|+\log s$ and
  \item $(x^*,k)\in Q$ if and only if there is an~$i\in\{1,\dots,s\}$
    such that~$x_i\in L$.
  \end{enumerate}
  Furthermore, the cross-composition may exploit that the input
  instances~$x_1,\ldots,x_s$ belong to the same equivalence class of a
  \emph{polynomial equivalence relation}~$R\subseteq \Sigma^*\times
  \Sigma^*$, which is an equivalence relation such that
  \begin{enumerate}[i)]
  \item  it can be
    decided in polynomial time whether two instances are equivalent and
  \item any finite set~$S\subseteq \Sigma^*$ is partitioned into
    $\poly(\max_{x\in S}|x|)$ equivalence classes.
  \end{enumerate}
\end{definition}

\noindent The assumption that all instances belong to the same
equivalence class of a polynomial equivalence relation can make the
construction of a cross-composition remarkably easier: when giving a
cross-composition from \textsc{3-Sat}, we can, for example, assume that
the input instances all have the same number of clauses and variables.

Cross-compositions can be used to prove that a parameterized problem has
no polynomial-size kernel unless \NoPolyKernelAssume{}.

\begin{theorem}[{\citet[Corollary~3.6]{BJK13}}] \label{thm:Bod-No-Poly-Kernel} If some
  problem~$L\subseteq \Sigma^*$ is NP-hard under \kred{}s and
  $L$~cross-composes into the parameterized problem~$Q\subseteq
  \Sigma^*\times \mathbb{N}$, then there is no polynomial-size problem
  kernel for~$Q$ unless \NoPolyKernelAssume.
\end{theorem}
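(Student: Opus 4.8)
The plan is to argue by contradiction: I would assume that $Q$ admits a polynomial-size problem kernel and derive that \NoPolyKernelAssume. The guiding idea is that an OR-cross-composition followed by a kernelization yields an \emph{OR-distillation} of the NP-hard problem~$L$, and that such distillations are ruled out by the machinery of \citet{FS11} unless the polynomial hierarchy collapses.

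First I would fix the kernelization algorithm~$K$, which maps every instance~$(x,k)$ of~$Q$ in polynomial time to an equivalent instance~$(x',k')$ with $|x'|+k'\le p(k)$ for some polynomial~$p$, together with the OR-cross-composition and its underlying polynomial equivalence relation~$R$. Given input instances~$x_1,\ldots,x_s$ of~$L$, the first technical step is a grouping step: I partition them into the $\poly(\max_i|x_i|)$ classes of~$R$, so that the cross-composition becomes applicable within each class. Running the cross-composition on a class of~$s'$ instances produces a single instance~$(x^*,k^*)$ of~$Q$ with $k^*\le\poly(\max_i|x_i|+\log s')$ whose membership in~$Q$ equals the logical OR over the class, and composing with~$K$ shrinks this to an instance of total size at most $p(k^*)\le\poly(\max_i|x_i|+\log s')$ while preserving the OR-semantics.

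The key observation is that the resulting map is an OR-distillation of~$L$ into the (a~priori arbitrary) target language~$Q$: it compresses the OR of many instances down to a size that is polynomial in the largest instance and only polylogarithmic in the number~$s'$ of instances. Since $L$~is NP-hard under \kred{}s, I would then invoke the theorem of \citet{FS11}, refined for the cross-composition framework, which forbids any such OR-distillation of an NP-hard language unless \NoPolyKernelAssume. This contradiction with the assumed polynomial kernel completes the argument.

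The main obstacle I anticipate lies not in chaining the three maps, which is routine, but in the quantitative bookkeeping that makes the distillation genuine. One must feed in enough instances that the additive $\log s'$ term in the parameter bound is dominated by the instance sizes, and one must argue---using the polynomial equivalence relation---that the total output size over all groups stays small relative to the number of input instances. Equally delicate is the precise invocation of the complexity-theoretic core: the exact quantitative form of distillation produced here, namely compression into an arbitrary target with only a polylogarithmic dependence on~$s'$, must be matched against the hypotheses of the Fortnow--Santhanam theorem and combined with Yap's theorem relating the existence of such distillations to the collapse \NoPolyKernelAssume.
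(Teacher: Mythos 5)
The paper does not prove this statement at all: it is imported verbatim as Corollary~3.6 of \citet{BJK13}, so there is no in-paper proof to compare against. Your sketch correctly reconstructs the standard argument from that source --- group the inputs by the polynomial equivalence relation, compose and then kernelize each group, control the additive $\log s'$ term by ensuring it is dominated by the maximum instance size (e.g., after deduplication there are at most $2^{O(n)}$ distinct instances), and feed the polynomially many small outputs, combined into an OR, to the Fortnow--Santhanam theorem on OR-distillations of NP-hard languages into arbitrary target languages. The only blemish is the attribution at the end: the step from a distillation to \NoPolyKernelAssume{} is due to \citet{FS11} alone, whereas Yap's theorem only converts \NoPolyKernelAssume{} into a collapse of the polynomial hierarchy and is not needed for the statement as formulated here.
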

 
\noindent
In the following, we show that \textsc{3-Sat} cross-composes into \DAGP{} parameterized by~$k$, which yields the following theorem:

\begin{theorem}\label{thm:no-poly}
  Unless \NoPolyKernelAssume, \DAGP{} does not have a polynomial-size problem kernel with respect to
  the weight~$k$ of the \partSet{} sought even if all arcs have unit weight.
\end{theorem}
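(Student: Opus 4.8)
The plan is to apply the cross-composition framework of \autoref{thm:Bod-No-Poly-Kernel}: since \DREICNF{} is NP-hard, it suffices to design an OR-cross-composition from \DREICNF{} into \DAGP{} parameterized by~$k$. As the polynomial equivalence relation I would declare two strings equivalent iff both encode well-formed \DREICNF{} formulas with the same number~$n$ of variables and the same number~$m$ of clauses (all malformed strings forming one extra class, mapped to a fixed no-instance). Thus I may assume the input consists of formulas $\varphi_1,\dots,\varphi_s$ that all share the same~$n$ and~$m$, so that the per-formula budget $4n+2m$ from \autoref{lem:constr1} is uniform. By padding with trivially unsatisfiable dummy formulas (at most doubling~$s$) I may further assume $s=2^d$ with $d=\lceil\log s\rceil$.

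The construction produces a single instance $(G^*,\wei{},k)$ with $k=4n+2m+\bigO(\log s)$. For each~$\varphi_i$ I include the gadget of \autoref{lem:red-from-3Sat-WDAGP}, but instead of letting its two sinks $t_i'$ and~$f_i'$ be sinks of~$G^*$, I introduce two global sinks $T^*$ and~$F^*$, route the false side $f_i'\to F^*$, and connect the true side $t_i'$ to~$T^*$ only through a shared, logarithmic-depth \emph{selection gadget}. The intended behaviour is that a \partSet{} \emph{activates} exactly one formula: the true output of the activated formula reaches~$T^*$, while every non-activated gadget has both of its outputs effectively routed to~$F^*$. By \autoref{lem:dirundirequiv} a non-activated gadget then has all vertices reaching the single sink~$F^*$ and is therefore free, whereas the activated gadget~$i$ has its $t_i$-side forced to~$T^*$ and its $f_i$-side to~$F^*$; by \autoref{lem:constr1} separating it costs exactly $4n+2m$ and is possible iff $\varphi_i$ is satisfiable. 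I would realize selection by a binary-tree (bit-encoding) selector, so that disabling a whole block of $2^j$ non-selected formulas is charged only a single deletion and the total selection cost stays $\bigO(\log s)$.

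For correctness, if some $\varphi_i$ is satisfiable I activate it, pay $4n+2m$ to separate its gadget (\autoref{lem:constr1}), leave the other $s-1$ gadgets free, and pay the $\bigO(\log s)$ selection cost, giving a \partSet{} of weight $\le k$. Conversely, if all $\varphi_i$ are unsatisfiable, the construction must force at least one gadget to be activated, and separating any gadget then requires deleting a heavy arc of weight $k+1$ (by \autoref{lem:constr1} an unsatisfiable formula admits no heavy-arc-free \partSet{}, and by \autoref{lem:no_new_sinks} one may not instead delete all arcs out of a vertex), so every \partSet{} exceeds the budget. The parameter satisfies $k=\bigO(n+m+\log s)=\bigO(\max_i|\varphi_i|+\log s)$ and~$G^*$ is built in time polynomial in $\sum_i|\varphi_i|$, so this is a valid cross-composition and \autoref{thm:Bod-No-Poly-Kernel} rules out a polynomial kernel. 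Finally, since every weight used (the heavy arcs of weight $k+1$) is polynomial in the instance size, \autoref{lem:red-WDAGP-to-DAGP} transfers the statement to unit-weight \DAGP{} without changing~$k$.

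The hard part will be the selection gadget, and specifically reconciling two opposing requirements. On one hand, because $k$ may grow only polynomially in $\max_i|\varphi_i|+\log s$, the cost of selecting one formula and of deactivating the $s-1$ unused ones must be $\bigO(\log s)$, not $\Theta(s)$; this forces the deactivation mechanism to be \emph{shared} across formulas, so that one deletion disables an entire block. On the other hand, I must \emph{force} at least one activation, for otherwise routing everything to~$F^*$ and leaving~$T^*$ as an isolated one-sink component would be a cheap valid \partSet{}, making $G^*$ a yes-instance regardless of satisfiability. The tension is that any shared, cheap deactivation mechanism also tends to admit a cheap \emph{global} deactivation that disables every formula at once through the same bottleneck. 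Resolving this is the technical heart: I would add a dedicated enforcement gadget that pins a designated vertex into the component of~$T^*$ while structurally blocking it from~$F^*$, so that $T^*$'s component can be completed within budget only by exhibiting one fully separated (hence satisfied) formula gadget, thereby breaking the symmetry in favour of exactly-one activation.
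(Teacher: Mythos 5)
Your proposal is correct in outline and is essentially the paper's own proof: the same OR-cross-composition framework via \autoref{thm:Bod-No-Poly-Kernel}, the same polynomial equivalence relation (equal~$n$ and~$m$, padding with unsatisfiable formulas to make~$s$ a power of two), the same budget~$k=4n+2m+\Theta(\log s)$, per-formula gadgets from \autoref{lem:red-from-3Sat-WDAGP} with the false sides merged into a single sink, a logarithmic-depth binary-tree selector, and \autoref{lem:red-WDAGP-to-DAGP} to transfer the result to unit weights. The ``selection gadget'' you flag as the technical heart is instantiated in the paper exactly along the lines you sketch: a vertex~$t''$ pinned into the $t'$-component by heavy arcs~$(t,t')$ and~$(t,t'')$, an out-tree~$O$ from~$t''$ to the~$t_i$'s and an in-tree~$I$ from the~$t_i'$'s to~$t'$ (with escape arcs~$(v,f')$ from in-tree vertices so that deselected branches route to~$f'$), so that $t''$~must reach~$t'$ through some pair~$(t_i,t_i')$, which forces exactly one activation at selection cost~$3\log s$ and resolves the activation-forcing tension precisely as you anticipated.
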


\noindent Although the proof of \autoref{thm:no-poly} is based on the
following construction, which requires arc weights, by using \autoref{lem:red-WDAGP-to-DAGP}
from \autoref{sec:tounit}, we obtain that \autoref{thm:no-poly} holds even on graphs
with unit weights.

\begin{construction}\label{lem:3-sat-cross-comp-to-w-DAGP}\upshape
  \begin{figure}
    \centering
    \begin{tikzpicture}[x=2cm,y=1.25cm,shorten >= 0.5mm]
      \tikzstyle{vertex}=[circle,draw,fill=black,minimum size=3pt,inner sep=0pt]

      \node[vertex,label=left:$t$] (t) at (-0.5,1.5) {};
      \node[vertex,label=below:$t'$] (t') at (1.5,-2.5) {};
      \node[vertex,label=above:$t''$] (t'') at (1.5,1.5) {};
      \node[vertex] (i1) at (1,1) {};
      \node[vertex] (i2) at (2,1) {};

      \node at (2.1,1.6) {$O$};
      \node at (2.1,-2.6) {$I$};

      \node[vertex] (o1) at (1,-2) {};
      \node[vertex] (o2) at (2,-2) {};

      \node[vertex,label=above:$f$] (f) at (1.5,-0.5) {};
      \node[vertex,label=below:$f'$] (f') at (1.5,-1.5) {};

      \node[vertex,label=left:$t_1$] (t1) at (0,0) {};
      \node[vertex,label=left:$t_1'$] (t1') at (0,-1) {};

      \node[vertex,label=left:$t_2$] (t2) at (1,0) {};
      \node[vertex,label=left:$t_2'$] (t2') at (1,-1) {};

      \node[vertex,label=right:$t_3$] (t3) at (2,0) {};
      \node[vertex,label=right:$t_3'$] (t3') at (2,-1) {};

      \node[vertex,label=right:$t_4$] (t4) at (3,0) {};
      \node[vertex,label=right:$t_4'$] (t4') at (3,-1) {};

      \draw[very thick,->] (t) -- (t'');
      \draw[very thick,->] (t) -- (-0.5,-1) to[out=-90,in=180] (t');
      \draw[very thick,->] (f) -- (f');
      \draw[->] (t'') to (i1);
      \draw[->,dotted] (t'') to (i2);
      \draw[->] (i1) to (t1);
      \draw[->,dotted] (i1) to (t2);
      \draw[->] (i2) to (t3);
      \draw[->] (i2) to (t4);
      \draw[->] (t1') to (o1);
      \draw[->,dotted] (t2') to (o1);
      \draw[->] (t3') to (o2);
      \draw[->] (t4') to (o2);
      \draw[->] (o1) to (t');
      \draw[->,dotted ] (o2) to (t');
      \draw[->,dotted] (o1) to (f');
      \draw[->] (o2) to (f');
      \draw[->,dotted] (t1') to[out=-10,in=180] (f');
      \draw[->] (t2') to (f');
      \draw[->] (t3') to (f');
      \draw[->] (t4') to[out=-170,in=0] (f');

      \draw[->,very thick] (t1) to (t1');
      \draw[->,very thick] (t2) to (t2');
      \draw[->,very thick] (t3) to (t3');
      \draw[->,very thick] (t4) to (t4');

      \begin{pgfonlayer}{background}
        \tikzstyle{treedge}=[color=black!10, line width=25pt, line cap=round]
        \draw[treedge] (t''.center)--(i1.center);
        \draw[treedge] (i1.center)--(t2.center);
        \draw[treedge] (t''.center)--(i2.center);
        \draw[treedge] (i2.center)--(t4.center);
        \draw[treedge] (i2.center)--(t3.center);
        \draw[treedge] (i1.center)--(t1.center);
        \draw[treedge] (t1.center)--(t2.center);
        \draw[treedge] (t3.center)--(t4.center);

        \draw[treedge] (t'.center)--(o1.center);
        \draw[treedge] (o1.center)--(t2'.center);
        \draw[treedge] (t'.center)--(o2.center);
        \draw[treedge] (o2.center)--(t4'.center);
        \draw[treedge] (o2.center)--(t3'.center);
        \draw[treedge] (o1.center)--(t1'.center);
        \draw[treedge] (t1'.center)--(t2'.center);
        \draw[treedge] (t3'.center)--(t4'.center);
      \end{pgfonlayer}
    \end{tikzpicture}

    \caption{Cross composition of four formulas $\phi_1,\dots,\phi_4$
      into a \DAGP{} instance. Of each subgraph~$G_i$ corresponding to
      formula~$\phi_i$, only the vertices~$t_i$, $t_i'$, and their
      connecting heavy arc are shown. The introduced binary trees~$O$
      and~$I$ are highlighted in gray. Deleting the $3\log 4=6$~dotted
      arcs requires the graph~$G_1$ to be partitioned, since its
      vertices~$t_1$~and~$t'_1$ are in a different connected component
      than its vertices~$f_1=f$~and~$f_1'=f'$. The graphs~$G_i$ for
      $i>1$ do not have to be partitioned; they are completely
      contained in one connected component with~$f$ and~$f'$.}
    \label{fig:DAGP-crossco}
  \end{figure}
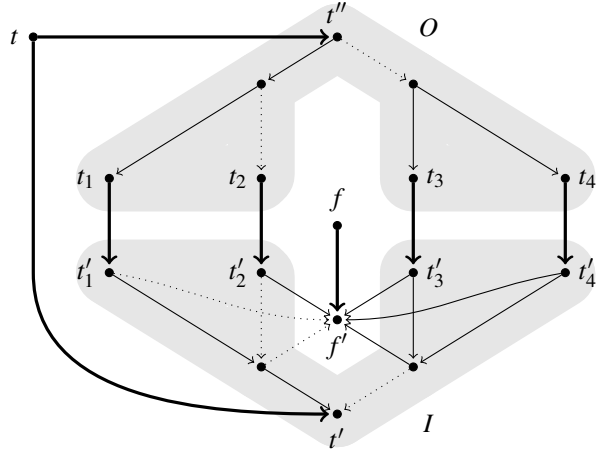
  Let $\varphi_1,\ldots, \varphi_s$ be instances of \textsc{3-Sat}.
  Since we may assume $\varphi_1,\dots,\varphi_s$ to be from the same
  equivalence class of a polynomial equivalence relation, we may assume
  that each of the formulas~$\varphi_1,\dots,\varphi_s$ has the same
  number~$n$ of variables and the same number~$m$ of clauses.  Moreover,
  we may assume that~$s$ is a power of two; otherwise we simply add
  unsatisfiable formulas to the list of instances.  We now construct a
  \DAGP{} instance~$(G,\wei{},k)$ with $k:=4n+2m+3 \log s$ that is a
  yes-instance if and only if $\varphi_i$~is satisfiable for at least
  one~$1\leq i\leq s$, where we use ``\(\log\)'' to denote the binary
  logarithm.  As in \autoref{lem:red-from-3Sat-WDAGP}, the weight
  function~$\wei{}$ will only assign two possible weight values: a
  \emph{heavy arc} has weight $k+1$ and thus cannot be contained in
  any \partSet{}. A \emph{normal arc} has weight one. The remainder of
  the construction is illustrated in \autoref{fig:DAGP-crossco}.

  For each instance~$\varphi_i$, let~$G_i$ be the graph produced by
  \autoref{lem:red-from-3Sat-WDAGP}. By \autoref{lem:constr1}, $G_i$~can
  be partitioned with $4n+2m$~arc deletions if and only if~$\varphi_i$
  is a yes-instance. We now build a gadget that, by means of additional
  $3\log s$~arc deletions, chooses exactly one graph~$G_i$ that has to
  be partitioned.

  \looseness=-1To distinguish between multiple instances, we denote the
  special vertices~$f,f',t,$ and~$t'$ in~$G_i$ by~$f_i,f'_i,t_i,$
  and~$t'_i$. For all $1\leq i\leq s$, we add~$G_i$ to the output
  graph~$G$ and merge the vertices~$f_1,f_2,\ldots,f_s$ into a
  vertex~$f$ and the vertices~$f'_1,f'_2,\ldots,f'_s$ into a
  vertex~$f'$.  Furthermore, we add the vertices~$t,t',$ and~$t''$ and
  the heavy arcs~$(t,t')$ and~$(t,t'')$ to~$G$.

  We add a balanced binary tree~$O$ rooted in~$t''$ and its leaves being
  the vertices~$t_1,\ldots,t_s$ that is formed by normal arcs directed
  from the root to the leaves. That is, $O$~is an \emph{out-tree}. %
  Moreover, add a balanced binary tree~$I$ rooted in~$t'$ and its leaves
  being the vertices~$t'_1,\ldots,t'_s$ that is formed by normal arcs
  directed from the leaves to the root. That is, $I$~is an
  \emph{in-tree}.  For each vertex~$v\ne t'$ in~$I$, add a normal
  arc~$(v,f')$.
\end{construction}

\noindent Using this construction, we can now prove
\autoref{thm:no-poly}.

\begin{proof}[of \autoref{thm:no-poly}]
  We only have to show that the instance~$(G,\wei,k)$ constructed by
  \autoref{lem:3-sat-cross-comp-to-w-DAGP} is a yes-instance if and only
  if at least one of the input formulas~$\varphi_i$ is
  satisfiable. Then, the theorem for the weighted case follows from
  \autoref{thm:Bod-No-Poly-Kernel}.  Since the weights used in
  \autoref{lem:3-sat-cross-comp-to-w-DAGP} are polynomial in the number
  of created vertices and edges, we can apply
  \autoref{lem:red-WDAGP-to-DAGP} to transfer the result to the
  unit-weight case.

  First, assume that a formula~$\varphi_i$ is satisfiable for
  some~$1\leq i\leq s$. By \autoref{lem:constr1} it follows that
  $G_i$ can be partitioned by $k':=4n+2m$~arc deletions into two
  connected components~$P_t$ and~$P_f$ such that $P_t$~contains~$t_i$
  and~$t'_i$ and such that $P_f$~contains~$f_i=f$ and~$f_i'=f'$. We
  apply these arc deletions to~$G$ and delete $3\log s$ additional arcs
  from~$G$ as follows.  Let $L$ be the unique \dpath{} in~$O$ from $t''$
  to $t_i$. Analogously, let~$L'$ be the unique \dpath{} in~$I$
  from~$t'_i$ to~$t'$.  Observe that each of these \dpath{}s has $\log
  s$ arcs.  We partition~$G$ into the connected component~$P'_t = P_t
  \cup \{t,t',t''\} \cup V(L) \cup V(L')$ with sink~$t'$ and into the
  connected component~$P'_f= V(G) \setminus P'_t$ with sink~$f'$. To
  this end, for each vertex~$v\in V(L) \setminus \{t_i\}$,
  we remove the outgoing arc that does not belong to~$L$. Hence, exactly
  $\log s$~arcs incident to vertices of~$O$ are removed. Similarly, for
  each vertex~$v\in V(L') \setminus \{t_i'\}$, we remove the
  incoming arc not belonging to~$L'$. For each vertex~$v\neq t'$ of~$L'$,
  we remove the arc to~$f'$. Hence, exactly $2\log s$~arcs incident to
  vertices of~$I$ are removed. Thus, in total, at most $k= 4n +2m + 3\log
  s$ normal arcs are removed to partition~$G$ into~$P'_t$ and~$P'_f$.

  Conversely, let $\Sol{}$ be a minimal \partSet{} for~$G$ with
  $\wei{}(\Sol{})\leq k$. Then, by \autoref{lem:no_new_sinks},
  $G\setminus\Sol{}$ has two connected components, namely $P'_t$~with
  sink~$t'$ and $P'_f$~with sink~$f'$. Since $\Sol{}$ cannot contain
  heavy arcs, $t$~and~$t''$~are in~$P'_t$. Hence, $t''$~can reach~$t'$
  in~$G\setminus\Sol{}$, since they are in the same component of~$G \setminus \Sol{}$. 
  As every \dpath{} from~$t''$ to~$t'$ goes
  through some vertices~$t_i$ and~$t'_i$, it follows that there is
  an~$i\in \{1,\ldots s\}$ such that~$t_i$ and~$t'_i$ are
  in~$P'_t$. Since $f=f_i$ and~$f'=f'_i$ are in~$P'_f$,
  the \partSet{}~$\Sol{} \cap A(G_i)$ partitions~$G_i$ into two
  connected components: one containing~$t_i$ and~$t'_i$ and the other
  containing~$f=f_i$ and~$f'=f_i'$. Since~$\Sol{}$ does not contain
  heavy arcs, from \autoref{lem:constr1} it follows that $\varphi_i$ is
  satisfiable.  \qed
\end{proof}

\section{Parameter treewidth}\label{sec:tw}

In \autoref{sec:smallsol}, we have seen that \DAGP{} is linear-time solvable when the weight~$k$ of the requested
\partSet{} is constant. %
\citet{AM12} asked whether \DAGP{} is fixed-parameter tractable with
respect to the parameter treewidth, which is a measure of the
``tree-likeness'' of a graph. We will answer this question
affirmatively.

In \autoref{sec:tree}, we first show that, if the input graph is indeed
a tree, then the heuristic by \citet{LBK09} solves the instance
optimally in linear time.  Afterwards, in \autoref{sec:twalg}, we prove
that this result can be generalized to graphs of bounded treewidth and
thus improve the algorithm for pathwidth given by \citet{AM12}, since the
treewidth of a graph is at most its pathwidth.

\subsection{Partitioning trees}\label{sec:tree}

In this section, we show that the heuristic by \citet{LBK09} solves
\DAGP{} in linear time on trees. This result will be generalized in the
next section, where we show how to solve \DAGP{} in linear time on
graphs of bounded treewidth.  The heuristic by \citet{LBK09} is similar
to our search tree algorithm presented in \autoref{alg:simple-st}:
instead of trying all possibilities of associating a vertex with one of
its feasible sinks, it associates each vertex with the sink that it
would be most expensive to disconnect from.  The algorithm is presented
in \autoref{alg:simple-trees}.

\begin{algorithm}[t]
  \caption{\citet{LBK09}'s heuristic to compute small \partSet{}s.}
  \label{alg:simple-trees}
  \KwIn{A directed acyclic graph~$G=(V,A)$ with arc weights~$\wei{}$.}
  \KwOut{A \partSet{}~$S$.}
  $(v_1,v_2,\dots,v_n)\gets{}$reverse topological order of the vertices
  of~$G$\;
  $L\gets{}$array with $n$~entries\; %
  $S\gets\emptyset$\;
  \For{$i=1$ to~$n$}{
    \lIf(\tcp*[f]{\textrm{associate~$v_i$ with itself}}\nllabel{treeL}){$v_i$ is a sink}{$L[v_i]\gets v_i$}
    \Else{
       $D\gets\{L[w]\mid w\in N^+(v_i)\}$\tcp*{\textrm{the set of feasible sinks for~$v_i$}}
       \For{$s\in D$}{$A_s\gets\{(v_i,w)\in A\mid w\in N^+(v_i)\wedge L[w]=s\}$\nllabel{lin:Ss}\tcp*{\textrm{set of arcs to keep if $v_i$~is associated with sink~$s$}}}
       $s^*\gets\arg\max_{s\in D}\wei(A_s)$\nllabel{lin:superstar}\tcp*{\textrm{cheapest sink $s^*$ to associate $v_i$ with}}
       $L[v_i]\gets s^*$\;
       $S\gets S\cup \{(v_i,w)\mid w\in N^+(v_i)\wedge L[w]\ne s^*\}$\;
    }
  }
  \Return{S}
\end{algorithm}

\begin{theorem}\label{thm:treelin}
  \autoref{alg:simple-trees} solves \DAGP{} optimally in linear time if
  the underlying undirected graph is a tree.
\end{theorem}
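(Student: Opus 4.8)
The plan is to show that, on a tree, \autoref{alg:simple-trees} is forced into the unique greedy choice at every vertex, that the weight of \emph{every} minimal \partSet{} decomposes additively over the vertices, and that the greedy rule minimises each summand independently.

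First I would exploit the tree structure to understand the set~$D$ computed for a non-sink vertex~$v$ with out-neighbours $w_1,\dots,w_d$ (so $d=d^+(v)$). Since the underlying undirected graph is a tree, deleting~$v$ splits it into $d(v)$~components, and the component~$T_j$ containing~$w_j$ can be left only through~$v$. Because $(v,w_j)$ is an arc and~$G$ is acyclic, $w_j$ cannot reach~$v$, so every vertex reachable from~$w_j$ lies in~$T_j$; in particular the sink~$L[w_j]$ lies in~$T_j$. As the subtrees $T_1,\dots,T_d$ are pairwise disjoint, the values $L[w_1],\dots,L[w_d]$ are pairwise distinct. Hence $|D|=d^+(v)$, and each set~$A_s$ in \autoref{lin:Ss} of \autoref{alg:simple-trees} consists of a single arc. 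Consequently the heuristic simply keeps the heaviest arc outgoing from~$v$ and deletes all other arcs outgoing from~$v$.

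Next I would argue that every minimal \partSet{}~$\Sol{}$ keeps exactly one arc outgoing from each non-sink vertex~$v$. It keeps at least one by \autoref{lem:no_new_sinks}, since otherwise~$v$ would become a new sink. It keeps at most one because, if it kept two arcs $(v,w_i)$ and $(v,w_j)$, then $w_i$~and~$w_j$ would still reach sinks in the distinct subtrees $T_i$ and $T_j$ within $G\setminus\Sol{}$ (reachability only shrinks when passing to a subgraph), so~$v$ would reach two sinks, contradicting \autoref{lem:dirundirequiv}. Writing~$\wei^{+}(v)$ for the total weight of the arcs leaving~$v$, every arc of~$\Sol{}$ leaves a unique vertex, so $\wei(\Sol{})=\sum_{v}\bigl(\wei^{+}(v)-r(v)\bigr)$, where~$r(v)$ is the weight of the single arc kept at~$v$ (and sinks contribute~$0$).

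Finally I would observe that the summands are independent: any choice that keeps exactly one arc out of each non-sink vertex gives out-degree one everywhere except at the sinks, so following the kept arcs from any vertex yields a single \dpath{} ending at exactly one sink, whence by \autoref{lem:dirundirequiv} the deleted arcs form a valid \partSet{}. Therefore minimising~$\wei(\Sol{})$ amounts to maximising each~$r(v)$ separately, that is, to keeping the heaviest outgoing arc at every vertex---precisely the choice made by \autoref{alg:simple-trees}---which proves optimality. For the running time, a reverse topological order is computed in $\bigO(n+m)$~time, and the heuristic then spends $\bigO(d^{+}(v))$~time per vertex (each~$A_s$ being a single arc); since a tree has $m=n-1$~arcs, the total time is $\bigO(n)$. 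The main obstacle is the middle step: once the tree structure forces each non-sink vertex to retain exactly one outgoing arc, the global optimisation collapses into independent local choices, and the greedy rule is visibly optimal.
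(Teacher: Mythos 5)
Your proof is correct, but it takes a genuinely different route from the paper's. The paper proves \autoref{thm:treelin} by induction along the reverse topological order with an exchange argument: assuming a minimum-weight \partSet{}~$\Sol{}$ agrees with the labels $L[v_1],\dots,L[v_{i-1}]$, it swaps the arc $(v_i,w)$ for $(v_i,u)$ and uses the tree structure (each sink region $R_s$ is entered from~$v_i$ by exactly one arc, and no path can bypass~$v_i$) to show the swapped set is still a \partSet{} of no larger weight. You instead characterize \emph{all} minimal \partSet{}s at once: on a tree the out-neighbors of a non-sink vertex reach pairwise disjoint subtrees, so every minimal \partSet{} retains exactly one outgoing arc per non-sink vertex (at least one by \autoref{lem:no_new_sinks}, at most one by \autoref{lem:dirundirequiv}), the weight decomposes as $\sum_v\bigl(\wei^+(v)-r(v)\bigr)$, and conversely \emph{any} retain-one choice gives outdegree at most one everywhere, hence a unique \dpath{} to a unique sink and thus a valid \partSet{} by \autoref{lem:dirundirequiv}. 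This collapses the problem into independent per-vertex maximizations and even yields a closed form for the optimum, $\sum_{v\text{ non-sink}}\bigl(\wei^+(v)-\max_{w\in N^+(v)}\wei(v,w)\bigr)$, something the paper's inductive argument never makes explicit; the paper's exchange proof, in turn, tracks the algorithm's labels directly and so transfers more naturally to settings where the independence of local choices fails (as in the treewidth dynamic program of \autoref{sec:twalg}). One small point you should make explicit: your claim that $L[w_j]$ lies in~$T_j$ silently uses that $L[w]$ is always a sink \emph{reachable} from~$w$, which holds by an easy induction over the algorithm's processing order (the paper states this fact explicitly); with that sentence added, your argument is complete, and your singleton observation about the sets~$A_s$ also justifies the $\bigO(n)$ running time on trees.
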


\begin{proof}
  \autoref{alg:simple-trees} clearly works in linear time: to implement
  it, we only have to iterate over the out-neighbors of each
  vertex~$v_i$ once.  In particular, all sets~$A_s$ in \autoref{lin:Ss}
  can be computed by one iteration over each~$w\in N^+(v_i)$ and adding
  the arc~$(v_i,w)$ to~$A_{L[w]}$.  Moreover, \autoref{alg:simple-trees}
  returns a \partSet{}: each vertex~$v$ is associated with exactly one
  sink~$L[v]$ of~$G$ and the returned set~$S$ deletes exactly the
  arcs~$(v,w)$ for which $L[v]\ne L[w]$.

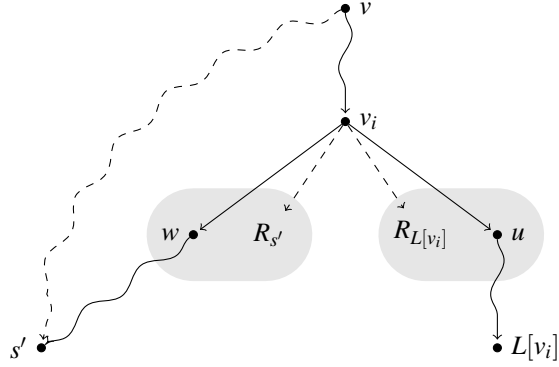
\begin{figure}
  \centering
    \begin{tikzpicture}[shorten >= 0.5mm,baseline=(v),y=0.75cm]
    \tikzstyle{vertex}=[circle,draw,fill=black,minimum size=3pt,inner sep=0pt]

    \node[vertex, label=right:$v_i$] (vi) at (0,0) {};
    \node[vertex, label=right:$v$] (v) at (0,2) {};
    \node[vertex, label=left:$w$] (w) at (-2,-2) {};
    \node (w') at (-1,-2) {$R_{s'}$};
    \node (u') at (1,-2) {$R_{L[v_i]}$};
    \node[vertex, label=left:$s'$] (s') at (-4,-4) {};
    \node[vertex, label=right:$u$] (u) at (2,-2) {};
    \node[vertex, label=right:${L[v_i]}$] (Lvi) at (2,-4) {};

    \draw[->] (vi)--(w);
    \draw[->,dashed] (vi)--(w');
    \draw[->,dashed] (vi)--(u');
    \draw[->] (vi)--(u);

    \draw[->, decorate, decoration = {snake, segment length = 0.9cm}, dashed] (v) to[bend right=30](s');
    \draw[->, decorate, decoration = {snake, segment length = 0.9cm}] (w)--(s');
    \draw[->, decorate, decoration = {snake, segment length = 0.9cm}] (u)--(Lvi);

    \draw[->, decorate, decoration = {snake, segment length = 0.9cm}] (v)--(vi);

      \begin{pgfonlayer}{background}
        \draw[fill=black!10,line width=35pt,line join=round, line cap=round,
        draw=black!10] (w.center) -- (w'.center);

        \draw[fill=black!10,line width=35pt,line join=round, line cap=round,
        draw=black!10] (u.center) -- (u'.center);
      \end{pgfonlayer}

  \end{tikzpicture}
  \caption{Illustration of the proof of \autoref{thm:treelin}.  Straight lines represent arcs. Wavy lines represent directed paths. Dashed arcs and paths represent arcs and paths that cannot exist since the underlying undirected graph is a tree.}
  \label{fig:treelin}
\end{figure}

  We show by induction on~$i$ that \autoref{alg:simple-trees} computes a
  \emph{minimum-weight} \partSet{} for~$G[\{v_1, \ldots, v_i\}]$.  For the induction base case,
  observe that $v_1$~is a sink and, thus, $v_1$ only reaches the
  sink $L[v_1]=v_1$ in~$G\setminus S$ for all possible
  minimum-weight \partSet{}s~$S$ for~$G$.  Now, assume that there is a
  minimum-weight \partSet{}~$S$ such that
  each~$v\in\{v_1,\dots,v_{i-1}\}$ only reaches the sink~$L[v]$
  in~$G\setminus S$.  We show that there is a
  minimum-weight \partSet{}~$S'$ such that
  each~$v\in\{v_1,\dots,v_{i}\}$ reaches only the sink~$L[v]$
  in~$G\setminus S'$.  If $v_i$~only reaches~$L[v_i]$ in~$G\setminus S$,
  then we are done.  Otherwise, $v_i$~reaches some sink~$s'\ne L[v_i]$
  in~$G\setminus S$ and, hence, is not itself a sink.  The graph~$G$ is partly illustrated in \autoref{fig:treelin}.

 \looseness=-1 Let $R_s$~be the set of vertices reachable from~$v_i$ that reach some sink~$s$ in~$G$. Since the underlying undirected graph of~$G$ is a tree, $v_i$~has exactly one arc into~$R_s$ for each sink~$s$ reachable from~$v_i$.  Let $(v_i,u)$~be the arc of~$v_i$ into~$R_{L[v_i]}$ and $(v_i,w)$~be the arc of~$v_i$ into~$R_{s'}$.  Observe that the arc $(v_i,u)$~exists since the algorithm can set $L[v_i]$ only to sinks reachable from~$v_i$.  To show that $S':=(S\setminus\{(v_i,w)\})\cup\{(v_i,u)\}$ is still a \partSet{}, we only have to verify that~$v_i$ and all vertices reaching~$v_i$ only reach one sink in~$G\setminus S'$.  For all other vertices, this follows from $S$~being a \partSet{}.
  \begin{enumerate}
  \item The vertex~$v_i$ only reaches the sink~$L[v_i]$ in~$G\setminus
    S'$: this is because $u=v_j$ for some $j<i$ which, by induction
    hypothesis, reaches exactly one sink in~$G\setminus S$ and, hence,
    in $G\setminus S'$.
  \item A vertex~$v$ that reaches~$v_i$ in~$G\setminus S'$ reaches only
    the sink~$L[v_i]$ in~$G\setminus S'$: otherwise $v$~reaches~$s'$
    in~$G\setminus S'$ since $v_i$ and, therefore $v$, reaches~$s'$
    in~$G\setminus S$.  This, however, means that $v$ has a path to $s'$ that bypasses~$v_i$ in $G\setminus S'$ and hence, in $G\setminus S$, which contradicts the undirected underlying graph of~$G\setminus S$ being a tree.
  \end{enumerate}
  It remains to show $\wei(S')\leq\wei(S)$, implying that $S'$ is also a
  minimum-weight \partSet{}.  To see this, we analyze the sets~$A_{s'}$
  and~$A_{L[v_i]}$ computed in \autoref{lin:Ss} of
  \autoref{alg:simple-trees}.  Observe that~$A_{s'}=\{(v_i,w)\}$ and
  that $A_{L[v_i]}=\{(v_i,u)\}$.  Since
  $\wei(A_{s'})\leq\wei(A_{L[v_i]})$ because of the choice of~$L[v_i]$
  in \autoref{lin:superstar}, we conclude that
  $\wei(v_i,u)\leq\wei(v_i,w)$ and hence, $\wei(S')\leq\wei(S)$.\qed
\end{proof}

\subsection{Partitioning DAGs of bounded treewidth}\label{sec:twalg}

\newcommand{\tab}{\textsl{Tab}}
\newcommand{\bagpart}{\mathcal{P}}
\newcommand{\bagreach}{\mathcal{G}}
\newcommand{\bagsol}{\mathcal{R}}
\newcommand{\q}{P}

We now give an algorithm that solves \DAGP{} in linear time on graphs of
bounded treewidth. In contrast to \autoref{sec:smallsol}, which
presented our search tree algorithm and an experimental evaluation
thereof, the algorithm presented below is of rather theoretical
interest: \citet{AM12} asked whether \DAGP{} is fixed-parameter
tractable with respect to the parameter treewidth. %
With a dynamic programming algorithm, we can prove the following
theorem, which answers their open question and is an improvement of
\citet{AM12}'s algorithm, since the treewidth of a graph is at most its
pathwidth.

\begin{theorem} \label{th:fpt-tw} Given a width-$\tw{}$ tree
  decomposition of the underlying undirected graph, \DAGP{} can be solved
  in $2^{\bigO(\tw^2)}\cdot n$ time.
\end{theorem}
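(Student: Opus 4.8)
The plan is to solve \DAGP{} by dynamic programming over the given tree decomposition, and the first step is to turn the problem into a purely combinatorial partitioning task that the decomposition can handle. By \autoref{lem:no_new_sinks}, a minimal \partSet{} creates no new sinks and therefore deletes \emph{only} arcs running between the weakly connected components it produces. Combining this with \autoref{lem:dirundirequiv} (in a weakly connected sub-DAG possessing a unique sink, every vertex reaches that sink), finding a minimum-weight \partSet{} is equivalent to finding a partition of~$V$ into parts~$C_1,\dots,C_r$ such that each induced subgraph~$G[C_i]$ is weakly connected and contains exactly one sink of~$G$, minimising the total weight~$\sum\wei(a)$ of the arcs whose endpoints lie in different parts. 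The point of this reformulation is that the grouping constraint is now an \emph{undirected} connectivity condition, while the arc directions enter only through the ``exactly one sink per part'' requirement.

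Next I would pass to a nice tree decomposition (with introduce-vertex, introduce-edge, forget, and join nodes) in linear time, and set up the table. For a node~$t$ with bag~$X_t$, let $G_t$ be the subgraph on the vertices introduced in the subtree rooted at~$t$. A partial solution on~$G_t$ induces on the bag two pieces of information: a partition~$\bagpart$ of~$X_t$ recording which bag vertices currently lie in the same weakly connected component of~$G_t$ after the already-decided arc deletions, and, for each block of~$\bagpart$, one bit saying whether that component already contains a sink of~$G$. The table~$\tab_t$ stores, for every such state, the minimum weight of arcs deleted inside~$G_t$ consistent with it. Encoding~$\bagpart$ as the equivalence relation on~$X_t$ (one bit per unordered pair, hence $\bigO(\tw^2)$ bits) together with the $\bigO(\tw)$ sink-bits bounds the number of states per node by~$2^{\bigO(\tw^2)}$; this deliberately crude ``relation'' encoding, rather than a Bell-number count of partitions, is exactly where the quadratic exponent comes from.

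The transitions are then routine but must be stated with care. An introduce-vertex node for~$v$ creates a new singleton block whose sink-bit is set iff $v$~is a sink of~$G$. An introduce-edge node for an arc~$(u,w)$ branches on keeping or deleting it: keeping merges the blocks of~$u$ and~$w$ and combines their sink-bits, rejecting the state if both were set (a component may hold only one sink); deleting adds~$\wei(u,w)$ to the stored weight. Because each arc is introduced at exactly one node, no weight is double-counted, even at join nodes. A forget node for~$v$ projects~$v$ out of~$\bagpart$ and keeps the minimum over states agreeing after projection; if $v$'s block becomes free of bag vertices it is \emph{finalised} and retained only if its sink-bit is~$1$, and, since all arcs incident to~$v$ have by now been introduced, the state is discarded if $v$~is a non-sink whose every outgoing arc was deleted (this rules out spurious sinks). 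A join node with equal bags takes the transitive closure of the union of the two child partitions, sums the two stored weights, and combines sink-bits blockwise, rejecting any merged block that would receive two sinks. At the root (empty bag) a single state remains, and its value is the minimum \partSet{} weight, which I compare against~$k$.

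For the running time, the nice decomposition has $\bigO(\tw\cdot n)$ nodes, each transition manipulates the relation in~$\poly(\tw)$ time, and the join ranges over pairs of states, i.e.\ over $\bigl(2^{\bigO(\tw^2)}\bigr)^2 = 2^{\bigO(\tw^2)}$ possibilities, so the total time is~$2^{\bigO(\tw^2)}\cdot n$ as claimed. I expect the main obstacle to be the correctness of the sink bookkeeping: proving that enforcing ``exactly one sink per block, no merging of two sink-bearing blocks, no spurious sink at forget time'' makes the DP optimum coincide with the true minimum \partSet{} weight. This is precisely where \autoref{lem:no_new_sinks} and the reachability reformulation of \autoref{lem:dirundirequiv} must be invoked, and where the directed nature of \DAGP{}---as opposed to a plain undirected partition-into-connected-parts problem---has to be argued explicitly, both in the forward direction (a minimal \partSet{} yields a feasible state of equal weight at every node) and in the backward direction (a root state of weight~$w$ can be realised by an actual \partSet{} of weight~$w$).
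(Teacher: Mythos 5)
Your proposal is correct in substance but takes a genuinely different route from the paper. The paper's dynamic program reasons about arbitrary partial \partSet{}s and therefore must remember, at every node of the decomposition, a whole \emph{pattern}: the graph of surviving arcs inside the bag, a directed reachability graph on the bag vertices together with up to $\tw+1$ bag-reachable sinks, and a partition of these vertices; it is the two graphs on $\bigO(\tw)$ vertices that produce the $2^{\bigO(\tw^2)}$ table size. You instead use \autoref{lem:no_new_sinks} (together with positivity of the weights, which makes every minimum-weight \partSet{} inclusion-minimal) to reduce the problem to a purely undirected partition task: split $V$ into weakly connected parts, each containing exactly one sink of~$G$ and such that every non-sink of~$G$ keeps an out-neighbor inside its own part, minimizing the weight of cross arcs. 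With that reformulation a Steiner-tree-style state (bag partition plus $\bigO(\tw)$ flag bits) suffices and directed reachability never needs to be tracked. One consequence you have backwards: the quadratic exponent in the paper comes from storing graphs on the bag, not from an encoding choice, so if you encode the bag partition directly (there are only $2^{\bigO(\tw\log\tw)}$ partitions of a $(\tw+1)$-element set) your DP runs in $2^{\bigO(\tw\log\tw)}\cdot n$, asymptotically \emph{better} than the theorem's bound; there is no need to inflate the state space with a relation encoding to match $2^{\bigO(\tw^2)}$.

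Two points need repair before the argument is airtight. First, your opening equivalence is false as literally stated: without the no-new-sink condition, a partition into weakly connected parts with one $G$-sink each can be strictly cheaper than every \partSet{}. For example, with sinks $s_1,s_2$, a vertex $v$ whose only out-arc is $(v,s_2)$ of weight~$1$, and a vertex $a$ with arcs $(a,v)$ and $(a,s_1)$ of weight~$5$ each, the partition $\{a,v,s_1\},\{s_2\}$ has cross weight~$1$, yet deleting $(v,s_2)$ turns $v$ into a second sink of its component, and the true optimum is~$5$. So the condition ``every non-sink of~$G$ retains an out-neighbor in its part'' must be part of the equivalence itself (proved from \autoref{lem:no_new_sinks} and inclusion-minimality), not only surface later as a transition rule. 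Second, your forget-time check ``discard if $v$ is a non-sink whose every outgoing arc was deleted'' is not computable from the state you defined (partition plus per-block sink bits): you must enlarge the state by one bit per bag vertex recording whether some outgoing arc of it has been kept so far, set at introduce-edge nodes and OR-ed at join nodes. Relatedly, the per-block sink bits should distinguish \emph{forgotten} sinks from sinks still present in the bag, lest a join wrongly reject two child blocks whose sink bits are both set only because they share the same bag sink. Both fixes cost only $2^{\bigO(\tw)}$ extra factors and leave your claimed running time intact.
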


\noindent We first formally define the tree decomposition of a graph and its width.

\begin{definition}[Treewidth, tree decomposition]\label{treedec}
  Let $G=(V,A)$~be a directed graph. A \emph{tree decomposition}~$(T,\beta)$ for~$G$ consists of a rooted tree~$T= (X,E)$ and
  a mapping~$\beta\colon X\to 2^V$ of each \emph{node}~$x$ of the
  tree~$T$ to a subset~$V_x:=\beta(x)\subseteq V$, called \emph{bag},
  such that
  \begin{enumerate}[i)]
  \item\label{treedec1} for each vertex~$v\in V$, there is a node~$x$ of~$T$
    with~$v\in V_x$,
  \item\label{treedec2} for each arc~$(u,w)\in A$, there is a node~$x$
    of~$T$ with $\{u,w\}\subseteq V_x$,
  \item\label{treedec3} for each vertex~$v\in V$, the nodes~$x$ of~$T$
    for which~$v\in V_x$ induce a subtree in~$T$.
  \end{enumerate}
  A tree decomposition is \emph{nice} if $V_r=\emptyset$ for the root~$r$ of~$T$ and each node~$x$ of~$T$ is either
  \begin{itemize}
  \item a \emph{leaf}: then, $V_x=\emptyset$,
  \item a \emph{forget node}: then, $x$ has exactly one child node~$y$
    and $V_x=V_y\setminus\{v\}$ for some~$v\in V_y$,
  \item an \emph{introduce node}: then, $x$ has exactly one child node~$y$ and
    $V_x=V_y\cup\{v\}$ for some~$v\in V\setminus V_y$, or
  \item a \emph{join node}: then, $x$ has exactly two child nodes~$y$ and~$z$
    such that~$V_x=V_y=V_z$.
  \end{itemize}
  The \emph{width of a tree decomposition} is one less
  than the size of its largest bag. The \emph{treewidth} of a graph~$G$
  is the minimum width of a tree decomposition for~$G$. For a node~$x$
  of $T$, we denote by~$U_x$ the union of~$V_y$ for all descendants~$y$
  of the node~$x$.
\end{definition}

\noindent\looseness=-1 For any constant~$\tw$, it can be decided in linear time whether a graph has treewidth~$\tw$ and the corresponding tree decomposition of width~$\tw$ can be constructed in linear time~\cite{Bod96}. Also in $\bigO(\tw n)$~time, the tree decomposition of width~$\tw$ can be transformed into a nice tree decomposition with the same width and $\bigO(\tw n)$~nodes~\citep{Klo94}. Hence, we assume without loss of generality that we are given a nice tree decomposition. 

\bigskip\noindent Our algorithm is based on leaf-to-root dynamic
programming. That is, intuitively, we start from the leaf nodes of the
tree decomposition and compute possible partial \partSet{}s for
each bag from the possible partial \partSet{}s for its child
bags.  Since our algorithm for \DAGP{} on graphs of bounded treewidth is relatively intricate, we refer the reader that is yet inexperienced with dynamic programming on tree decompositions to introductory chapters in corresponding text books \citep{Nie06,DF13,CFK+15,Klo94}.

We will now precisely define a partial \partSet{} and show that
any partial \partSet{} for the root bag is a \partSet{} for the entire
graph. The definition is illustrated in \autoref{fig:partsol}.

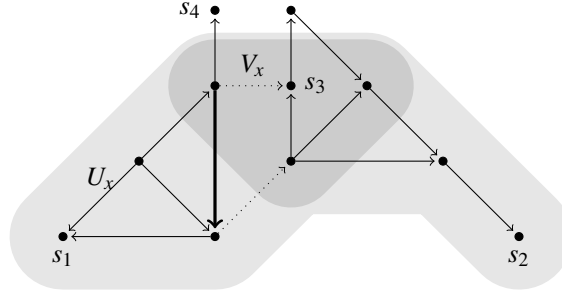
\begin{figure}
  \centering
  \begin{tikzpicture}[shorten >= 0.5mm]
    \tikzstyle{vertex}=[circle,draw,fill=black,minimum size=3pt,inner
    sep=0pt]

    \node[vertex] (v1) at (0,0) {};
    \node[vertex] (v2) at (4,0) {};
    \node[vertex,label=below:$s_1$] (v3) at (-1,-1) {};
    \node[vertex] (v4) at (1,1) {};
    \node[vertex,label=left:$s_4$] (v14) at (1,2) {};
    \node[vertex] (v5) at (3,1) {};
    \node[vertex,label=below:$s_2$] (v6) at (5,-1) {};
    \node[vertex] (v10) at (2,2) {};
    \node[vertex] (v11) at (1,-1) {};
    \node[vertex,label=right:$s_3$] (v12) at (2,1) {};
    \node[vertex] (v13) at (2,0) {};

    \draw[->] (v1)--(v11);
    \draw[->] (v11)--(v3);
    \draw[->, very thick] (v4)--(v11);
    \draw[->,dotted] (v11)--(v13);
    \draw[->] (v13)--(v12);
    \draw[->] (v13)--(v5);
    \draw[->] (v12)--(v10);
    \draw[->] (v1)--node[pos=0.5,label=above:$U_x$]{}(v3);
    \draw[->] (v1)--(v4);

    \draw[->] (v5)--(v2);
    \draw[->] (v2)--(v6);

    \draw[->] (v4)--(v14);
    \draw[->,dotted] (v4)-- node[pos=0.5,label=above:$V_x$]{} (v12);
    \draw[->] (v13)--(v2);
    \draw[->] (v10)--(v5);
     \begin{pgfonlayer}{background}
       \tikzstyle{edge} = [color=black!10,line cap=round, line join=round,
       line width=35pt]

       \draw[edge,line width=40pt,fill] (v6.center)--(v5.center)--(v4.center)--(v3.center)--(v11.center)--(v13.center)--(v2.center);

       \draw[edge,fill,color=black!20] (v4.center)--(v5.center)--(v13.center)--cycle;
     \end{pgfonlayer}
  \end{tikzpicture}
  \caption{The set~$\Sol{}$ consisting of the dotted arc is not a
    \partSet{} for the shown graph~$G$, but a partial \partSet{}
    for~$G[U_x]$. Note that $\Sol{}$~would not be a partial \partSet{} if
    it would additionally contain the bold arc, since then the
    sink~$s_1$ would be in a connected component with a vertex of~$V_x$
    but not reachable from~$V_x$. Also note that $\Sol{}$ does not
    separate~$s_2$ from~$s_3$, which are sinks
    in~$G[U_x]$ but not in~$G$.}
  \label{fig:partsol}
\end{figure}

\begin{definition}[Partial \partSet{}]\label{def:partsol}
  A \emph{partial \partSet{}}~$\Sol{}$ for~$G[U_x]$ is an arc
  set~$\Sol{}\subseteq A(G[U_x])$ such that
\begin{enumerate}[(i)]
\item\label{partsol1} no connected component of $G[U_x] \setminus
  \Sol{}$ contains two different sinks of $U_x \setminus V_x$, and
\item\label{partsol2} every sink in a connected component of
  $G[U_x]\setminus\Sol{}$ that contains a vertex of~$V_x$ can be reached
  from some vertex of~$V_x$ in $G[U_x]\setminus\Sol{}$.
\end{enumerate}
\end{definition}

\noindent Since we assumed to work on a tree decomposition with a
root~$r$ such that the bag~$V_r$ is empty, any partial \partSet{}
for~$G[U_r]=G$ will be a \partSet{} for the entire graph~$G$. Moreover,
\autoref{def:partsol}(\ref{partsol1}) does not require
partial \partSet{}s for~$G[U_x]$ to separate sinks in the
bag~$V_x$. This is because vertices in~$V_x$ that are sinks in~$G[U_x]$
might be non-sinks for a supergraph, as illustrated in
\autoref{fig:partsol}. Thus, it might be unnecessary to separate the
vertices in~$V_x$. However, due to
\autoref{treedec}(\ref{treedec2}~and~\ref{treedec3}) of a tree
decomposition, sinks in~$U_x\setminus V_x$ are sinks in all
supergraphs~$G[U_{q}]$ for $q$~being an ancestor node of~$x$.
\autoref{def:partsol}(\ref{partsol2}), by
\autoref{lem:dirundirequiv}, allows us to ensure that components containing both a
sink in~$U_x\setminus V_x$ and a vertex of~$V_x$ end up with
only one sink in some supergraph~$G[U_{q}]$.  The precise purpose of \autoref{def:partsol}(\ref{partsol2}) will be explained in more detail after the upcoming \autoref{patsat}.

To keep the notation free from clutter, note that \autoref{def:partsol}
implicitly relies on the bag~$V_x$ belonging to each set~$U_x$. Thus,
when a tree decomposition has a node~$x$ and a child node~$y$ such
that~$V_x\subsetneq V_y$ but~$U_x=U_y$, a partial \partSet{}~$\Sol{}$
for~$G[U_y]$ is not necessarily a partial \partSet{} for~$G[U_x]$,
although~$G[U_y]\setminus\Sol{}=G[U_x]\setminus\Sol{}$.

\bigskip\noindent Now, assume that we want to compute
partial \partSet{}s for~$G[U_x]$ from partial \partSet{}s for child
nodes of~$x$. These partial \partSet{}s might, for example, disagree on
which arcs to delete in the child bags or which connected components of
the child bags are meant to end up in a common connected component of
the entire graph: for a child node~$y$ of~$x$, multiple connected
components of~$G[U_y]\setminus\Sol{}$ might be one connected component
of~$G[U_x]\setminus\Sol{}$. To prevent such incompatibilities, we only
consider those partial \partSet{}s for~$G[U_x]$ that agree with
partial \partSet{}s for the child nodes of~$x$ on certain
\emph{patterns}.

On a high level, our algorithm will store for each node of the tree
decomposition a table with one row for each possible pattern.  The value
of a row will be the minimum weight of a partial \partSet{}
\emph{satisfying} this pattern.  To compute this value, our algorithm
will use the rows with \emph{corresponding} patterns in the tables of
the child nodes.  In the following, we first formalize the terms
\emph{patterns} and \emph{satisfying} partial \partSet{s}.  Then, we
present our algorithm and we specify the \emph{corresponding} patterns.
We start by formally defining patterns, see \autoref{fig:pattern} for an
illustration.

\begin{definition}[Pattern]\label{pat}
  Let $x$~be a node of a tree decomposition~$T$. A \emph{pattern
    for~$x$} is a triple~$(\bagsol{},\bagreach{},\bagpart)$ such that
  \begin{enumerate}[i)]
  \item\label{pat3} $\bagsol{}$ is a directed acyclic graph with the
    vertices~$V_x$.
  \item\label{pat1} $\bagreach{}$ is a directed acyclic graph with the
    vertices~$V_x$ and at most $|V_x|$~additional vertices such that
    each vertex in~$V(\bagreach{})\setminus V_x$ is a non-isolated sink, and
  \item\label{pat2} $\bagpart$ is a partition of the vertices
    of~$\bagreach{}$
    such that each connected
    component of $\bagreach{}$
    is within one set~$\q_i\in\bagpart$ and such that each $\q_i$
    contains at most one vertex of $V(\bagreach{})\setminus V_x$.
  \end{enumerate}
\end{definition}
We will use a pattern~$(\bagsol{},\bagreach{},\bagpart)$ for~$x$ to
capture important properties of partial \partSet{}s for~$G[U_x]$.
Intuitively, the graph~$\bagsol{}$ will describe which arcs between the
vertices in the bag~$V_x$ a partial \partSet{}~$\Sol{}$ for~$G[U_x]$
will not delete. The graph~$\bagreach{}$ will
describe %
which vertices of~$V_x$ can reach each other in~$G[U_x]\setminus\Sol{}$
and which sinks outside of~$V_x$ they can reach. The
partition~$\bagpart$ describes which vertices are meant to end up in the
same connected component of~$G\setminus\Sol{}$ for a \partSet{}~$\Sol{}$
of the entire graph. For this reason, the sets of the
partition~$\bagpart$ are allowed to contain only one vertex
of~$V(\bagreach{})\setminus V_x$ as these vertices are sinks.

We will now explain precisely what it means for a partial \partSet{} to
\emph{satisfy} a pattern. The following definition is illustrated in
\autoref{fig:pattern}.

\begin{figure}
  \centering
  \begin{tikzpicture}[shorten >= 0.5mm]
    \tikzstyle{vertex}=[circle,draw,fill=black,minimum size=3pt,inner
    sep=0pt]
    \node[vertex,label=below:$s_1$] (v3) at (-1,-1) {};
    \node[vertex] (v4) at (1,1) {};
    \node[vertex] (v5) at (3,1) {};
    \node[vertex,label=below:$s_2$] (v6) at (5,-1) {};
    \node[vertex,label=right:$s_3$] (v12) at (2,1) {};
    \node[vertex] (v13) at (2,0) {};

    \node at (2,2) {$V_x$};
    \node at (-1,0) {$P_1$};
    \node at (5,0) {$P_2$};

    \draw[->] (v13)--(v12);
    \draw[->] (v13)--(v5);
    \draw[->] (v4)--(v3);
    \draw[->] (v5)--(v6);
    \draw[->] (v13)--(v6);
     \begin{pgfonlayer}{background}
       \tikzstyle{edge} = [color=black!10,line cap=round, line join=round,
       line width=35pt]
       \draw[edge,fill,color=black!10] (v4.center)--(v5.center)--(v13.center)--cycle;

       \draw[edge,fill,color=black!20,line width=25pt] (v4.center)--(v3.center);

       \draw[edge,fill,color=black!20,line width=25pt] (v6.center)--(v5.center)--(v12.center)--(v13.center)--cycle;

     \end{pgfonlayer}
  \end{tikzpicture}
  \caption{A pattern~$(\bagsol,\bagreach,\bagpart)$, where shown are the graph~$\bagreach{}$
    and a partition~$\bagpart{}$ of its vertices into two sets~$\q_1$
    and~$\q_2$. If the graph~$\bagsol$ is the subgraph of~$\bagreach{}$ induced by the vertices~$V_x$, then, in terms of \autoref{patsat}, $\bagsol=\bagsol_x(\Sol{})$ and~$\bagreach{}=\bagreach_x(\Sol{})$ for the partial
    \partSet{}~$\Sol{}$ for~$G[U_x]$ shown in \autoref{fig:partsol}.  In
    this case, the partial \partSet{}~$\Sol{}$ satisfies the shown
    pattern. Moreover, a vertex is a sink in this figure if and only if
    it is a sink in~$G[U_x]\setminus\Sol{}$ shown in
    \autoref{fig:partsol}.}
  \label{fig:pattern}
\end{figure}

\newcommand{\interesting}{bag-reachable}

\begin{definition}[Pattern satisfaction]\label{patsat}
  Let $\Sol{}$ be a partial \partSet{} for~$G[U_x]$. A~sink~$s$ in $U_x
  \setminus V_x$ is \emph{\interesting{} in $G[U_x]\setminus\Sol{}$} if
  some vertex in~$V_x$ can reach~$s$ in~$G[U_x]\setminus\Sol{}$. We
  define a canonical
  pattern~$(\bagsol_x(\Sol{}),\bagreach_x(\Sol{}),\bagpart_x(\Sol{}))$
  at~$x$ for~$\Sol$, where
  \begin{itemize}
  \item\label{defbagsol} $\bagsol_x(\Sol{})$ is~$G[V_x]\setminus\Sol{}$,
  \item\label{defbagreach} $\bagreach{}_x(\Sol{})$ is the directed
    acyclic graph on the vertices~$V_x \cup V'$, where $V'$ is the set
    of \interesting{} sinks in~$G[U_x]\setminus\Sol{}$, and there is an
    arc~$(u,v)$ in $\bagreach{}_x(\Sol{})$ if and only if the vertex~$u$
    can reach the vertex~$v$ in~$G[U_x]\setminus\Sol{}$, and

  \item\label{defbagpart} $\bagpart_x(\Sol{})$ is the the partition of
    the vertices of~$\bagreach{}_x(\Sol{})$ such that the vertices $u$ and
    $v$ are in the same set of $\bagpart_x(\Sol{})$ if and only if they
    are in the same connected component of~$G[U_x]\setminus\Sol{}$.
  \end{itemize}

  \noindent Let $(\bagsol,\bagreach{},\bagpart)$ be a pattern
  for~$x$. We say that~$\Sol{}$ \emph{satisfies} the pattern
  $(\bagsol,\bagreach{},\bagpart)$ at~$x$ if
  \begin{enumerate}[i)]
  \item\label{patsatsol} $\bagsol=\bagsol_x(\Sol{})$,
  \item\label{patsat1} $\bagreach{}=\bagreach{}_x(\Sol{})$, and
  \item\label{patsat2} for each set~$\q \in \bagpart_x(\Sol{})$ there
    exists a set~$\q' \in \bagpart$ such that $\q \subseteq \q'$, that is,
    $\bagpart$ is a \emph{coarsening} of $\bagpart_x(\Sol{})$.
  \end{enumerate}
\end{definition}

\noindent It is easy to verify that a partial \partSet{}~$\Sol{}$
for~$G[U_x]\setminus\Sol{}$ satisfies its canonical
pattern~$(\bagsol_x(\Sol{}), \bagreach{}_x(\Sol{}),\bagpart_x(\Sol{}))$
at node~$x$: to this end, observe that
$(\bagsol_x(\Sol{}),\bagreach{}_x(\Sol{}),\bagpart_x(\Sol{}))$~is indeed
a pattern for~$x$: for the vertex set~$V_x \cup V'$
of~$\bagreach{}_x(\Sol{})$, we have $|V'|\leq |V_x|$ since each vertex
in~$V_x$ can reach at most one distinct sink in~$V'\subseteq
U_x\setminus V_x$ in~$G[U_x]\setminus\Sol{}$.

Note that, since $\bagreach{}_x(\Sol{})$~contains an arc $(u,v)$ if and
only if $u$~\emph{can reach}~$v$ instead of requiring them to be merely
connected, a vertex is a sink in~$\bagreach{}_x(\Sol{})$ if and only if
it is a sink in~$G[U_x]\setminus\Sol{}$.  Herein,
\autoref{def:partsol}(\ref{partsol2}) ensures that any sink~$s$
connected to a vertex in~$V_x$ is a vertex in~$\bagreach{}_x(\Sol{})$. %

While it might seem more natural to replace the condition (\ref{patsat2}) in \autoref{patsat} by simply $\bagpart=\bagpart_x(S)$, 
we prefer the current definition, because it allows for several connected components of~$G[U_x]\setminus\Sol{}$ becoming a part of one connected component of the entire graph. This greatly simplifies some parts of the algorithm.

\paragraph{The Algorithm} We now describe a dynamic programming
algorithm. Starting from the leaves of the tree decomposition~\(T\) and
working our way to its root, with each node~$x$ of~\(T\), we associate a table~$\tab_x$ that is indexed by all
possible patterns for~$x$. Semantically, we want that
\begin{align*}
  \tab_x(\bagsol,\bagreach{}, \bagpart) &= \text{minimum weight of a partial
    \partSet{} for~$G[U_x]$ that satisfies the pattern
    $(\bagsol,\bagreach{},\bagpart)$ at~$x$}.
\end{align*}
Since %
we have~$V_r=\emptyset$, there is exactly one
pattern~$(\bagsol,\bagreach{},\bagpart)$ for the root~$r$:
$\bagsol=\bagreach{}$~is the empty graph and~$\bagpart=\emptyset$. Thus,
$\tab_r$ has exactly one entry and it contains the minimum weight of a
partial \partSet{}~$\Sol{}$ for~$G[U_r]$, which is equivalent
to~$\Sol{}$ being a \partSet{} for~$G$. It follows that once the tables
are correctly filled, to decide the \DAGP{} instance~$(G,\wei{},k)$, it
is enough to test whether the only entry of $\tab_r$ is at most~$k$.

We now present an algorithm to fill the tables and prove its
correctness.  First, we initialize all table entries of all tables
by~$\infty$. By \emph{updating the entry
  $\tab_x(\bagsol,\bagreach{},\bagpart)$ with~$m$} we mean setting
$\tab_x(\bagsol,\bagreach{},\bagpart):= m$ if
$m < \tab_x(\bagsol,\bagreach{},\bagpart)$. For each leaf node~$x$, it
is obviously correct to set~$\tab_x(\bagsol,\bagreach{},\bagpart)=0$ for
the only pattern $(\bagsol,\bagreach{},\bagpart)$ at~$x$, which has the
empty graph as~$\bagsol$ and~$\bagreach{}$ and the empty set
as~$\bagpart$.
In the following, for each type of a node~\(x\) of a tree decomposition, that is, for forget nodes, introduce nodes, and join nodes, we independently show how to compute the table~$\tab_x$ given that we correctly computed the tables for all children of~$x$.  To show that the table \(\tab_x\) is filled correctly, we prove the following lemma for each node type.
\begin{lemma}\label{lem:ulobo}\leavevmode
  \begin{enumerate}[(i)]
  \item \label{upperbound}
    There is a partial \partSet{} for~$G[U_x]$ satisfying a
    pattern~$(\bagsol,\bagreach{},\bagpart)$ at~$x$ with weight at
    most~$\tab_x(\bagsol,\bagreach{},\bagpart)$.
  \item \label{lowerbound}
    The minimum weight of a partial \partSet{} for~$G[U_x]$ satisfying a
    pattern~$(\bagsol,\bagreach{},\bagpart)$ at~$x$ is at
    least~$\tab_x(\bagsol,\bagreach{},\bagpart)$.
  \end{enumerate}
\end{lemma}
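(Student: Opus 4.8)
The plan is to prove the lemma by induction along the nice tree decomposition, processing nodes from the leaves towards the root, and to read the two inequalities~(\ref{upperbound}) and~(\ref{lowerbound}) as the two halves of the single invariant that $\tab_x(\bagsol,\bagreach,\bagpart)$ equals the minimum weight of a partial \partSet{} for~$G[U_x]$ satisfying $(\bagsol,\bagreach,\bagpart)$ at~$x$, with the convention that the minimum over the empty set---and hence any entry for which no such partial \partSet{} exists---is~$\infty$. The base case is the leaf node, where $U_x=V_x=\emptyset$, the only pattern has empty $\bagsol$, $\bagreach$, and $\bagpart$, the empty set is the unique partial \partSet{}, and the initialisation $\tab_x=0$ is manifestly correct. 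For the inductive step I would fix a node~$x$ of one of the three remaining types (forget, introduce, join), assume that both parts of the lemma already hold for every child of~$x$, and verify them for~$x$ separately once the corresponding recurrence has been specified.

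For the soundness direction~(\ref{upperbound}) I would argue that whenever the recurrence updates $\tab_x(\bagsol,\bagreach,\bagpart)$ to a finite value~$m$, this update reads off child entries that, by the induction hypothesis, are realised by concrete partial \partSet{}s of the child subgraphs. From these I would explicitly assemble an arc set~$\Sol$ for~$G[U_x]$: for an introduce node the child set is reused together with the decisions recorded about the arcs incident to the newly introduced bag vertex; for a forget node the child set is kept verbatim, since only the bag shrinks while $U_x=U_y$; and for a join node $\Sol$ is the union of the two children's sets. I would then check that the assembled~$\Sol$ is a partial \partSet{}, that is, that it meets \autoref{def:partsol}(\ref{partsol1}) and~(\ref{partsol2}), and that it satisfies the pattern in the sense of \autoref{patsat}; the weight bound $\wei(\Sol)\le m$ then follows by construction.

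For the completeness direction~(\ref{lowerbound}) I would start from an arbitrary partial \partSet{}~$\Sol$ for~$G[U_x]$ satisfying $(\bagsol,\bagreach,\bagpart)$ and restrict it to the child subgraph(s), writing $\Sol_y:=\Sol\cap A(G[U_y])$ for a child~$y$. The separator properties \autoref{treedec}(\ref{treedec2}) and~(\ref{treedec3})---no arc leaves a subtree except through the shared bag, and every vertex occupies a contiguous subtree---guarantee that each~$\Sol_y$ is again a partial \partSet{} of the child~$y$ and that reachability in~$G[U_x]\setminus\Sol$ decomposes consistently with reachability in each~$G[U_y]\setminus\Sol_y$. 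I would then identify the canonical pattern $(\bagsol_y(\Sol_y),\bagreach_y(\Sol_y),\bagpart_y(\Sol_y))$ that $\Sol_y$ satisfies at~$y$ and show that the recurrence, applied to exactly these child patterns, updates $\tab_x(\bagsol,\bagreach,\bagpart)$; combined with the induction hypothesis $\tab_y(\cdots)\le\wei(\Sol_y)$ and the min-taking semantics of the update, this yields $\tab_x(\bagsol,\bagreach,\bagpart)\le\wei(\Sol)$.

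The hard part will be the join node, where two partial \partSet{}s sharing the bag~$V_x$ must be fused: the combined graph~$\bagreach$ arises by merging the two children's reachability relations on~$V_x$ and their respective \interesting{} sinks and taking the induced reachability, while the combined partition must be formed without letting any component absorb two sinks of~$U_x\setminus V_x$---precisely the place where \autoref{def:partsol}(\ref{partsol1}) could be violated and must be guarded by the recurrence. A secondary subtlety lies in the forget node: a forgotten vertex that is a sink either was reachable from~$V_x$, in which case it must be retained as one of the extra \interesting{}-sink vertices of~$\bagreach$, or it was not, in which case \autoref{def:partsol}(\ref{partsol1}) forces it to sit alone in its component so that it may be dropped. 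Here the coarsening relaxation in \autoref{patsat}(\ref{patsat2}) pays off, since it lets several components of a child merge into one without forcing the patterns to agree exactly, which keeps the case analysis of both directions manageable.
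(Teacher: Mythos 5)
Your plan coincides with the paper's proof in almost every respect: the same leaf-to-root induction with the two parts read as the invariant $\tab_x=\text{minimum weight}$, soundness by assembling a solution for~$G[U_x]$ from child solutions (verbatim at forget nodes, the child set plus $B\setminus B'$ at introduce nodes, the union at join nodes), completeness by restricting to $\Sol_y:=\Sol\cap A(G[U_y])$ and invoking \autoref{treedec}(\ref{treedec2}) and~(\ref{treedec3}), and the same identification of the delicate spots. However, one step of your completeness direction would fail as written: you propose to feed the recurrence the \emph{canonical} child pattern $(\bagsol_y(\Sol_y),\bagreach_y(\Sol_y),\bagpart_y(\Sol_y))$. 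Part~\eqref{lowerbound} must hold for \emph{every} pattern, and \autoref{patsat}(\ref{patsat2}) only requires the pattern's partition to be a \emph{coarsening} of the canonical one, so $\Sol$ may satisfy a pattern whose partition~$\bagpart$ is strictly coarser than $\bagpart_x(\Sol)$. Since Procedures~\ref{forgproc}--\ref{joiproc} compute the output partition as a function of the input partitions (essentially deterministically, up to the enumerated placement of an introduced vertex), the canonical child pattern can only produce a (near-)canonical pattern at~$x$, never the strictly coarser target. The entry $\tab_x(\bagsol,\bagreach{},\bagpart)$ for such a coarser~$\bagpart$ would then never receive the update and would remain~$\infty$, while a solution satisfying that pattern exists with finite weight --- contradicting part~\eqref{lowerbound}, which forces $\tab_x(\bagsol,\bagreach{},\bagpart)\leq\wei{}(\Sol)$.

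The repair --- and what the paper actually does --- is to keep the canonical graphs $\bagsol_y(\Sol_y)$ and $\bagreach{}_y(\Sol_y)$ but to choose the child \emph{partition} aligned with the target rather than canonically: at a forget node, take the partition that agrees with~$\bagpart_x$ on $V(\bagreach{}_x)$ and insert the forgotten vertex~$v$ (and a sink~$u$ reachable only from~$v$, if present) either into the set of a bag vertex of its connected component or as a fresh set $\{v\}$ or $\{u,v\}$; at introduce and join nodes, derive the child partitions from~$\bagpart_x$ by removing~$v$, respectively by restricting to~$V_x$ and re-attaching the \interesting{} sinks. The restricted solution satisfies these non-canonical patterns precisely because of the coarsening allowance in \autoref{patsat}(\ref{patsat2}), and the induction hypothesis covers all patterns, including the coarse ones, so the argument then closes. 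Your final paragraph shows you understand why the coarsening relaxation is there; the fix is to deploy it when \emph{selecting the child patterns} in direction~\eqref{lowerbound}, instead of defaulting to the canonical ones.
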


\noindent We present the algorithm and the proof for \autoref{lem:ulobo} independently for each node type in Sections~\ref{sec:fnod}, \ref{sec:inod}, and \ref{sec:jnod}, respectively, where we assume that all tables~$\tab_y$ for child nodes~$y$ of~$x$ have been computed correctly.

\subsubsection{Forget nodes}\label{sec:fnod}

We use the following procedure to compute the table~\(\tab_x\) of a forget node~\(x\) under the assumption that the table~$\tab_y$ for the child node~$y$ of~$x$ has been computed correctly.

\begin{proc}[Forget node]\label{forgproc}\upshape
  Let $x$~be a forget node with a single child $y$. Assume that $v$~is
  the vertex being ``forgotten'', that is, $v$~is in the child bag~$V_y$
  but not in the current bag~$V_x$. From the weights of optimal partial
  \partSet{}s for~$G[U_y]$, we want to compute the weight of optimal
  partial \partSet{}s for~$G[U_x]$.

  To this end, for each pattern $(\bagsol,\bagreach{},\bagpart)$
  for~$y$, we distinguish four cases. In each case, we will construct a
  pattern~$(\bagsol',\bagreach{}',\bagpart')$ for~$x$ such that a
  partial \partSet{} for~$G[U_y]$ that
  satisfies~$(\bagsol,\bagreach{},\bagpart)$ is a partial
  \partSet{} for~$G[U_x]$ and
  satisfies~$(\bagsol',\bagreach{}',\bagpart')$. Then, we update
  $\tab_x(\bagsol',\bagreach{}',\bagpart')$ with the value
  of~$\tab_y(\bagsol,\bagreach{},\bagpart)$. 
Herein, the following case distinction is not exhaustive.  We do not
  take action for patterns~$(\bagsol,\bagreach{},\bagpart)$ that do not
  satisfy any of the following conditions (for the reasons informally
  explained in the cases).
In all cases, we set
  $\bagsol':=\bagsol-\{v\}$.
  \begin{caselist}
  \item\label{forgcase1} If $v$~is isolated in~$\bagreach{}$ and there is
    a set~$\{v\}$ in~$\bagpart$, then we let $\bagreach{}':=\bagreach{}
   -\{v\}$ and $\bagpart':=\bagpart\setminus\{\{v\}\}$ and update
    $\tab_x(\bagsol',\bagreach{}',\bagpart')$ with the value
    of~$\tab_y(\bagsol,\bagreach{},\bagpart)$: an isolated vertex that is alone in its part of~$\bagpart$ can simply be forgotten. %

  \item\label{forgcase2} If $v$ is a non-isolated sink in~$\bagreach{}$
    and $v \in \q_i \in \bagpart$ such that $\q_i \subseteq V_y$, then
    we let $\bagreach{}':=\bagreach{}$ and~$\bagpart':=\bagpart$.  We
    update $\tab_x(\bagsol',\bagreach{}',\bagpart')$ with the value
    of~$\tab_y(\bagsol,\bagreach{},\bagpart)$: in this case, the sink~$v$ ``moves'' from~\(V_y\) to %
 $V(\bagreach{}')\setminus V_x$. To ensure that $(\bagsol',\bagreach{}',\bagpart')$ is
    a pattern, the part~\(P_i\) containing~$v$ cannot contain any additional sink in 
    $V(\bagreach{})\setminus V_y$, thus we require \(P_i\subseteq V_y\).

  \item\label{forgcase3} If $v$ is not a sink in $\bagreach{}$ and there
    is no sink in $V(\bagreach{})\setminus V_y$ such that $v$~is its
    only in-neighbor, then let $\bagreach{}':=\bagreach{} - \{v\}$ and
    $\bagpart'$ be the partition of the vertices of~$\bagreach{}'$
    obtained from~$\bagpart$ by removing~$v$ from the set it is
    in. Update $\tab_x(\bagsol',\bagreach{}',\bagpart')$ with the value
    of~$\tab_y(\bagsol,\bagreach{},\bagpart)$.
        This the simplest case, where the vertex is somewhat unimportant to partial partitioning sets satisfying the pattern $(\bagsol,\bagreach{},\bagpart)$ at $y$, so we simply forget it.

  \item\label{forgcase4} If there is a sink~$u\in V(\bagreach{})\setminus V_y$ such that $v$~is its only in-neighbor
    and $\{u,v\}$ is a set of~$\bagpart$, then let
    $\bagreach{}':=\bagreach{} - \{u,v\}$ and $\bagpart'$ be the partition
    of the vertices of~$\bagreach{}'$ obtained from $\bagpart$ by removing
    the set $\{u,v\}$. Update $\tab_x(\bagsol',\bagreach{}',\bagpart')$ with the
    value of~$\tab_y(\bagsol,\bagreach{},\bagpart)$:    
    If there was a sink~$u$ in $V(\bagreach{}) \setminus V_y$ only reachable from~$v$, then it would be unreachable from $V_x$ since $v$~is forgotten. Therefore, if the part~$P_i$ of~$\bagpart$ containing~$u$ and~\(v\) contained more vertices, then we could not be sure that a partial partitioning set satisfying the pattern $(\bagsol,\bagreach{},\bagpart)$ at~$y$ is a partial partitioning set for~$G[U_x]$ at all. Namely, it may break \autoref{def:partsol}(\ref{partsol2}).
  \end{caselist}

\end{proc}

\noindent We show that \autoref{forgproc} fills the table~\(\tab_x\) associated with a forget node~\(x\) correctly.  First, we show that there is a partial \partSet{} for~$G[U_x]$ satisfying a pattern~$(\bagsol,\bagreach{},\bagpart)$ at~$x$ and having weight at most~$\tab_x(\bagsol,\bagreach{},\bagpart)$ as computed by \autoref{forgproc}.

\begin{proof}[of \autoref{lem:ulobo}\eqref{upperbound} for forget nodes]
  Let $x$~be a forget node with child node~$y$ and let $v$~be the vertex
  ``forgotten'', that is, $v$~is in the child bag~$V_y$ but not in the
  current bag~$V_x$.  For any table
  entry~$\tab_x(\bagsol',\bagreach',\bagpart')<\infty$, we show that
  there is a partial \partSet{}~$\Sol{}$ for~$G[U_x]$ satisfying
  $(\bagsol',\bagreach',\bagpart')$ and having weight at
  most~$\tab_x(\bagsol',\bagreach',\bagpart')$. To this end, observe
  that, since $\tab_x(\bagsol',\bagreach',\bagpart')<\infty$ there is a
  pattern~$(\bagsol,\bagreach,\bagpart)$ for~$y$ from which
  \autoref{forgproc} generates~$(\bagsol',\bagreach',\bagpart')$ and
  such that
  $\tab_x(\bagsol',\bagreach',\bagpart')=\tab_y(\bagsol,\bagreach,\bagpart)$.
  Since there is a \partSet{} for~$G[U_y]$ that
  satisfies~$(\bagsol,\bagreach,\bagpart)$ and has weight at
  most~$\tab_y(\bagsol,\bagreach,\bagpart)$, it is sufficient to show
  that \emph{any} partial \partSet{}~$\Sol{}$ for~$G[U_y]$ that
  satisfies the pattern~$(\bagsol,\bagreach{},\bagpart)$ at~$y$ is also
  a partial \partSet{} for~$G[U_x]$ that satisfies at~$x$ the
  pattern~$(\bagsol',\bagreach{}',\bagpart')$ generated in each of the
  cases~(\ref{forgcase1})--(\ref{forgcase4}) of
  \autoref{forgproc}.

  We first argue that~$\Sol{}$ is a partial \partSet{} for~$G[U_x]$ if
  any of the cases~(\ref{forgcase1})--(\ref{forgcase4}) of
  \autoref{forgproc} applies. We first verify
  \autoref{def:partsol}(\ref{partsol1}).  To this end, observe that by
  \autoref{patsat}, a vertex~$u \in V_y$ is a sink
  in~$\bagreach{}_y(\Sol{})$ if and only if it is a sink
  in~$G[U_y]\setminus\Sol{}$. Now, assume that there is a connected
  component of~$G[U_x]\setminus\Sol{}=G[U_y]\setminus\Sol{}$ that
  contains two different sinks~$s_1,s_2$ in~$U_x \setminus V_x$. Then,
  one of these sinks, say~$s_1$, must be~$v$. Since, then, $v\in V_y$~is
  a sink in~$G[U_x]\setminus\Sol{}=G[U_y]\setminus\Sol{}$, it is a sink
  in $\bagreach{}=\bagreach{}_y(\Sol{})$ and none of the
  cases~(\ref{forgcase3}) and~(\ref{forgcase4}) apply.  Moreover, since
  $s_2$~is connected to~$v\in V_y$ in~$G[U_y]\setminus\Sol{}$, by
  \autoref{def:partsol}(\ref{partsol2}), some vertex in~$V_y$ can
  reach~$s_2$, implying that $s_2$~is a vertex
  of~$\bagreach{}=\bagreach_y(\Sol{})$. Thus, by
  \autoref{patsat}(\ref{patsat2}), $s_2$~is in the same
  set~$\q_i\in\bagpart$ as $s_1=v$ and, hence, (\ref{forgcase1})~does
  not apply. Since $s_2\notin V_y$, also (\ref{forgcase2})~does not
  apply.

  We now verify \autoref{def:partsol}(\ref{partsol2}). It can only be
  violated if $v$~is the only vertex of~$V_y$ that can reach some
  sink~$u$ in the connected component of~$v$ in
  $G[U_y]\setminus\Sol{}$. However, then, $v$~is the only in-neighbor
  of~$u$ in~$\bagreach{}_y(\Sol{})=\bagreach{}$. Hence, only
  case~(\ref{forgcase4}) might become applicable. When this case
  applies, however, $\{u,v\}\in\bagpart{}$ implies that no vertex
  in~$V_y\supseteq V_x$ is connected to~$v$ or~$u$
  in~$G[U_y]\setminus\Sol=G[U_x]\setminus\Sol$. Thus,
  \autoref{def:partsol}(\ref{partsol2}) is satisfied.

  It remains to show that $\Sol{}$ satisfies the generated
  pattern~$(\bagsol',\bagreach{}',\bagpart')$, that is, to verify
  $\bagsol'=\bagsol_x(\Sol{})$ (\autoref{patsat}(\ref{patsatsol})),
  $\bagreach{}'=\bagreach{}_x(\Sol{})$ (\autoref{patsat}(\ref{patsat1}))
  and that~$\bagpart'$ is a coarsening of $\bagpart_x(\Sol{})$
  (\autoref{patsat}(\ref{patsat2})). Herein,
  $\bagsol'=\bagsol-\{v\}=\bagsol_y(\Sol)-\{v\}=\bagsol_x(\Sol)$ is
  trivial. To show $\bagreach{}'=\bagreach{}_x(\Sol{})$, we
  distinguish between the case of \autoref{forgproc} applied.

  \begin{caselist}
  \item[Case~\ref{forgcase1})~~] In this case, $v$ is not in~$V_x$ and,
    obviously, not a \interesting{} sink
    in~$G[U_x]\setminus\Sol{}$. Hence, $v$~is not
    in~$\bagreach_x(\Sol{})$. Moreover, $v$~is isolated
    in~$\bagreach{}=\bagreach{}_y(\Sol{})$.
    Therefore, \autoref{forgproc} sets
    $\bagreach{}':=\bagreach-\{v\}=\bagreach{}_y(\Sol{})-\{v\} =
    \bagreach{}_x(\Sol{})$.

  \item[Case~\ref{forgcase2})~~] In this case, $v$ is not in $V_x$ but
    it is a \interesting{} sink in $G[U_x]\setminus\Sol{}$, since it is
    not isolated in~$\bagreach{}$. Therefore, \autoref{forgproc} sets
    $\bagreach':=\bagreach=\bagreach{}_y(\Sol{})=\bagreach{}_x(\Sol{})$.

  \item[Case~\ref{forgcase3})~~] In this case, $v$ is not a sink in
    $\bagreach{}$ (and thus also not in $G[U_x]\setminus\Sol{}$) and,
    therefore, clearly does not appear in~$\bagreach{}_x(\Sol{})$. Moreover, any sink not in~$V_y$ is
    reachable from a vertex in~$V_y\setminus\{v\}=V_x$
    in~$\bagreach{}_y(\Sol{})$ if and only if it is reachable
    in~$\bagreach{}_y(\Sol{})-\{v\}$. Hence, \autoref{forgproc} sets
    $\bagreach':= \bagreach{} - \{v\} = \bagreach{}_y(\Sol{}) - \{v\} =  \bagreach{}_x(\Sol{})$.

  \item[Case~\ref{forgcase4})~~] In this case, $u$~was a \interesting{}
    sink in $G[U_y]\setminus\Sol{}$ but is not \interesting{} in
    $G[U_x]\setminus\Sol{}$. Moreover, since $\{v,u\}\in\bagpart$, no
    vertex of~$V_x$ is connected to~$v$ in
    $G[U_y]\setminus\Sol=G[U_x]\setminus\Sol{}$. Hence, neither~$v$
    nor~$u$ are vertices of~$\bagreach_x(\Sol{})$. Hence,
    \autoref{forgproc} sets $\bagreach{}':=\bagreach{} -
    \{u,v\}=\bagreach{}_y(\Sol{})-\{u,v\}=\bagreach{}_x(\Sol{})$.
  \end{caselist}

  Finally, we verify \autoref{patsat}(\ref{patsat2}) by showing that
  $\bagpart'$~is a coarsening of~$\bagpart_x(\Sol{})$. Assume the
  contrary. Then, there are two vertices~$u,w$ of~$\bagreach{}_x(\Sol{})$
  in the same set of~$\bagpart_x(\Sol{})$ but in different sets of
  $\bagpart'$. By construction of~$\bagpart'$ from~$\bagpart$, they are
  also in different sets of~$\bagpart=\bagpart_y(\Sol{})$. It follows
  that~$u$ and~$w$ lie in the same connected component
  of~$G[U_x]\setminus\Sol{}$ but in different connected components
  of~$G[U_y]\setminus\Sol{}$. Since these two graphs are the same, we have a contradiction.\qed
\end{proof}

\noindent We now show that the minimum weight of a partial \partSet{} for~$G[U_x]$
satisfying a pattern~$(\bagsol,\bagreach{},\bagpart)$ at~$x$ is at
least~$\tab_x(\bagsol,\bagreach{},\bagpart)$ as computed by \autoref{forgproc}.

\begin{proof}[of \autoref{lem:ulobo}\eqref{lowerbound} for forget nodes]
  Let $x$~be a forget node with child node~$y$. Let $v$~be the vertex ``forgotten'' that is, $v\in V_y$ but $v\notin V_x$. Assume that $\Sol{}$~is a partial \partSet{} for~$G[U_x]$ satisfying the pattern~$(\bagsol_x,\bagreach{}_x, \bagpart_x)$ at~$x$. It is sufficient to construct a pattern~$(\bagsol,\bagreach{},\bagpart{})$ that $\Sol{}$~satisfies at~$y$ and from which \autoref{forgproc} generates exactly the pattern~$(\bagsol_x,\bagreach{}_x,\bagpart_x)$ to update the table $\tab_x(\bagsol_x,\bagreach{}_x,\bagpart_x)$ with $\tab_y(\bagsol,\bagreach{},\bagpart)$. Then, \autoref{lem:ulobo}\eqref{lowerbound} follows for forget nodes, because we have $\tab_x(\bagsol_x,\bagreach{}_x,\bagpart_x)\leq \tab_y(\bagsol,\bagreach{},\bagpart) \leq \wei{}(\Sol{})$. Herein, the last inequality follows from the induction hypothesis.

  We first show that $\Sol{}$ is a partial \partSet{} for~$G[U_y]$, that
  is, we verify \autoref{def:partsol}.
  \autoref{def:partsol}(\ref{partsol1}) is easy to verify: since
  $\Sol{}$ is a partial \partSet{} for~$G[U_x]$,
  each connected component of
  $G[U_x]\setminus\Sol{}=G[U_y]\setminus\Sol{}$ contains at most one
  sink in~$U_x\setminus V_x\supseteq U_y \setminus V_y$. It remains to
  verify
  \autoref{def:partsol}(\ref{partsol2}). %
  Assume, for a contradiction, that there is a connected component~$C$ in~$G[U_y]\setminus\Sol{}$ that contains a vertex of~$V_y$ such that no vertex of~$V_y\supseteq V_x$ can reach some sink~$s\in C\setminus V_y$. Then, since $\Sol{}$ is a partial \partSet{} for~$G[U_x]\setminus\Sol{}$, the connected component~$C$ cannot contain vertices of~$V_x$ and, hence, $C \cap V_y=\{v\}$. However, since $G[U_y]\setminus\Sol{}$~is a directed acyclic graph, $v$~reaches some sink in~$C$. Since $v$~cannot reach~$s\in C$, it follows that $C$~contains two sinks. Since $C\cap V_x=\emptyset$, this contradicts~$\Sol{}$ being a partial \partSet{} for~$G[U_x]\setminus\Sol{}$. It follows that $\Sol{}$ is a partial
  \partSet{} for~$G[U_y]$.

  We now construct a pattern~$(\bagsol,\bagreach,\bagpart)$ that~$\Sol{}$ satisfies at~$y$.  Consider $\bagsol:=\bagsol{}_y(\Sol{})$ and $\bagreach{} := \bagreach{}_y(\Sol{})$.  %
  Note that there are at most two vertices in~$\bagreach{}$ that are not in~$\bagreach{}_x$: one of them is~$v$, %
  the possibly other vertex is a sink~$u$ in $V(\bagreach{})\setminus V_y$ only reachable from~$v$.  We define $\bagpart$~as a partition of the vertices of~$\bagreach{}$ that partitions the set $V(\bagreach{}_x)$ in the same way as~$\bagpart_x$. We add the possibly missing vertices~$v$ and~$u$ to that partition as follows: if there is a vertex~$w\in V_x$ in the same connected component of $G[U_y]\setminus\Sol{} = G[U_x]\setminus\Sol{}$ as~$v$, then we put~$v$ and~$u$ into same set as~$w$. Otherwise, we add the set~$\{v\}$ or~$\{u,v\}$, respectively, to~$\bagpart$.  By choice of~$\bagsol,\bagreach{},$ and~$\bagpart$, the partial
  \partSet{}~$\Sol{}$ clearly satisfies the
  pattern~$(\bagsol,\bagreach{},\bagpart)$ at~$y$.

  We have shown that $\Sol{}$~satisfies the pattern~$(\bagsol,\bagreach{}, \bagpart{})$ at~$y$.  Moreover, if any of the cases (\ref{forgcase1})--(\ref{forgcase4}) of \autoref{forgproc} applies to~$(\bagsol,\bagreach{},\bagpart)$, then it generates a pattern~$(\bagsol',\bagreach{}',\bagpart')$ with $\bagsol'=\bagsol_x$ and $\bagreach{}'=\bagreach{}_x$, since we showed in the proof of \autoref{lem:ulobo}\eqref{upperbound} for forget nodes that $\Sol{}$~satisfies the pattern generated by \autoref{forgproc} at~$x$.  Hence, it remains to show that indeed at least one of the cases (\ref{forgcase1})--(\ref{forgcase4}) of \autoref{forgproc} applies and that in all cases~$\bagpart'=\bagpart_x$.
  \begin{caselist}
  \item[Case \ref{forgcase1})~~] If $v$ is an isolated sink
    in~$\bagreach{}$, then no vertex in~$V_y\supseteq V_x$ can reach~$v$
    in~$G[U_y]\setminus\Sol{}=G[U_x]\setminus\Sol{}$. Hence, there is no
    vertex of~$V_x$ in the same connected component of
    $G[U_x]\setminus\Sol{}$ as~$v$, as otherwise $v$~would be a sink
    in~$U_x \setminus V_x$ not reachable from the vertices
    of~$V_x$. Hence, by construction of~$\bagpart$, we have
    $\{v\}\in\bagpart$ and case~(\ref{forgcase1}) of \autoref{forgproc}
    applies. It sets $\bagpart':=\bagpart \setminus
    \{\{v\}\}=\bagpart_x$. %

  \item[Case \ref{forgcase2})~~] If $v$~is a non-isolated sink
    in~$\bagreach{}$, then $v$~is a \interesting{} sink in
    $G[U_x]\setminus\Sol{}$. Hence, it is contained in $\bagreach{}_x$
    and we have $V(\bagreach{}_x)=V(\bagreach)$. By construction
    of~$\bagpart$, we also have $\bagpart_x=\bagpart$.  The
    set~$\q_i\in\bagpart_x$ containing~$v\notin V_x$ cannot contain any other
    vertex in~$V(\bagreach{}_x) \setminus V_x$ by
    \autoref{pat}(\ref{pat2}). Thus, $\q_i\subseteq V_y$ and
    case~(\ref{forgcase2}) of \autoref{forgproc} applies. It
    sets~$\bagpart':=\bagpart=\bagpart_x$. %

  \item[Case \ref{forgcase3})~~] If $v$~is not a sink in~$\bagreach{}$
    and there is no sink in~$V(\bagreach{}) \setminus V_y$ only
    reachable from~$v$ in~$\bagreach{}$, then case~(\ref{forgcase3}) of
    \autoref{forgproc} applies.  Since the sink~$s$ reachable from~$v$
    is also reachable from some vertex~$u\notin\{v,s\}$, and thus,
    connected to~$u$ in~$G[U_y]\setminus\Sol{}$, the set in~$\bagpart$
    containing~$v$ also contains~$u$. \autoref{forgproc}
    sets~$\bagpart'$ to be~$\bagpart$ with $v$~removed from the set it
    is in. This, by construction of~$\bagpart$, is exactly~$\bagpart_x$.

  \item[Case \ref{forgcase4})~~] Finally, if there is a sink~$u$
    in~$V(\bagreach{}) \setminus V_y$ only reachable from $v$, then the
    connected component of~$G[U_y]\setminus\Sol{}=G[U_x]\setminus\Sol{}$
    containing the vertex~$v$ does not contain any vertex of~$V_x$,
    since $u$~is not reachable from any vertex of~$V_x$.  It follows
    that $\{u,v\}$~is a set of~$\bagpart$. Case~(\ref{forgcase4}) of
    \autoref{forgproc} applies. It sets
    $\bagpart':=\bagpart\setminus\{\{v,u\}\}=\bagpart_x$.\qed
  \end{caselist}
\end{proof}

\subsubsection{Introduce nodes}\label{sec:inod}

We use the following procedure to compute the table~\(\tab_x\) of an introduce node~\(x\) under the assumption that the table~$\tab_y$ for the child node~$y$ of~$x$ has been computed correctly.

\begin{proc}[Introduce node]\label{intproc}\upshape Let $x$~be an
  introduce node with a single child~$y$. Assume that~$v$ is the node being ``introduced'', that is, $v$~is not in the child bag~$V_y$ but in the current bag~$V_x$. Moreover, let $B\subseteq A(G[U_x])$~be the set of arcs incident to~$v$. By \autoref{treedec}(\ref{treedec2} and \ref{treedec3}) of a tree decomposition, one actually has $B\subseteq A(G[V_x])$.

  We now try each possible subset~$B'\subseteq B$ and consider it \emph{not} deleted by a partial \partSet{} for the graph~$G[U_x]$. Similarly as in the case for forget nodes, we will transform each pattern~$(\bagsol,\bagreach{},\bagpart)$ for~$y$ into a pattern~$(\bagsol',\bagreach{}',\bagpart')$ for~$x$ such that if a partial
  \partSet{}~$\Sol{}$ for~$G[U_y]$
  satisfies~$(\bagsol,\bagreach{},\bagpart)$, then
  $\Sol{}\cup(B\setminus B')$ is a partial \partSet{} for~$G[U_x]$ and
  satisfies~$(\bagsol',\bagreach{}',\bagpart')$. Then, we update
  $\tab_x(\bagsol',\bagreach{}',\bagpart')$ with the value
  of~$\tab_y(\bagsol,\bagreach{},\bagpart)+\wei(B)-\wei(B')$.

  For each pattern~$(\bagsol,\bagreach{},\bagpart)$ for~$y$ such that all vertices incident to the arcs in~$B'$ (if any) except for~$v$ are contained in the same set~$\q_i\in\bagpart$, we obtain~$\bagsol'$ from~$\bagsol$ by adding~$v$ and the arcs in~$B'$ to~$\bagsol$. Similarly, we obtain $\bagreach'$ from~$\bagreach$ by adding~$v$ and the arcs in~$B'$ to~$\bagreach$. 
  Moreover, for each~$u,w \in V(\bagreach{}')$ such that $u$~can reach~$w$ in~$\bagreach{}'$ we add the arc~$(u,w)$ to~$\bagreach{}'$. For obtaining~$\bagpart'$, we distinguish two cases.

  \begin{caselist}
  \item If $B'=\emptyset$, then we try all possibilities of adding~$v$
    to a set in~$\bagpart$. That is, for every~$\q_i \in \bagpart$, we
    get a set~$\bagpart'$ from $\bagpart$ by adding~$v$ to~$\q_i$ and
    update $\tab_x(\bagsol',\bagreach{}',\bagpart')$ with
    $\tab_y(\bagsol,\bagreach{},\bagpart) + \wei{}(B)$.  Additionally,
    for $\bagpart':=\bagpart\cup\{\{v\}\}$, we update the
    entry~$\tab_x(\bagsol',\bagreach{}',\bagpart')$ with
    $\tab_y(\bagsol,\bagreach{},\bagpart) + \wei{}(B)$.

  \item If $B'\ne\emptyset$, then let $\q_i$~be the set of~$\bagpart$
    that contains all vertices incident to arcs in~$B'$ except~$v$ and
    let $\bagpart'$~be obtained from~$\bagpart$ by adding~$v$ to the
    set~$\q_i$. We update $\tab_x(\bagsol',\bagreach{}',\bagpart')$ with
    $\tab_y(\bagsol,\bagreach{},\bagpart) + \wei{}(B) - \wei{}(B')$.
  \end{caselist}
  
  Note that, since $\bagpart$ simulates the connected components of the resulting graph, 
  all arcs incident on~$v$ remaining in the graph must be  within one set of $\bagpart$, i.e., their endpoints different from $v$ must be in one set of $\bagpart$.
\end{proc}

\noindent We show that \autoref{intproc} fills the table associated
with an introduce node~\(x\) correctly.  First, we show that there is a
partial \partSet{} for~$G[U_x]$ satisfying a
pattern~$(\bagsol,\bagreach{},\bagpart)$ at~$x$ and having weight at
most~$\tab_x(\bagsol,\bagreach{},\bagpart)$ as computed by
\autoref{intproc}.

\begin{proof}[of \autoref{lem:ulobo}\eqref{upperbound} for introduce nodes]
  Let $x$~be an introduce node with child node~$y$ and let~$v$ be the vertex ``introduced'' that is, $v$~is not in the child bag~$V_y$ but in the current bag~$V_x$. Let $B\subseteq A(G[V_x])$ be the arcs incident to~$v$, $B'\subseteq B$ and, finally, $(\bagsol,\bagreach{},\bagpart)$~be some pattern for~$y$ such that all vertices incident to the arcs in~$B'$ (if any) except for~$v$ are contained in the same set~$\q_i\in\bagpart$. For any partial \partSet{} $\Sol{}$ for~$G[U_y]$ satisfying the pattern~$(\bagsol,\bagreach{},\bagpart)$ at~$y$, we show that $\Sol{}'= \Sol{} \cup (B \setminus B')$ is a partial \partSet{} for~$G[U_x]$ that satisfies the pattern~$(\bagsol',\bagreach{}',\bagpart')$ constructed by \autoref{intproc}. From this, since $\wei(\Sol{}')=\wei(\Sol{}) + \wei{}(B) - \wei{}(B')$, \autoref{lem:ulobo}\eqref{upperbound} follows (as already discussed in the beginning of the proof of \autoref{lem:ulobo}\eqref{upperbound} for forget nodes).

  We start by showing that~$\Sol{}'$ is a partial \partSet{} for~$G[U_x]$. First, we verify \autoref{def:partsol}(\ref{partsol1}).  For the sake of a contradiction, assume that there is a connected component of~$G[U_x]\setminus\Sol{}'$ that contains two distinct sinks~$s_1,s_2$ in~$U_x \setminus V_x$. Since there is no such connected component in~$G[U_y]\setminus\Sol{} = (G[U_x]\setminus\Sol{}') - \{v\}$, there are vertices~$s_1', s_2'\in V_y$ in the same connected components of $G[U_y]\setminus\Sol{}$ as~$s_1$ and $s_2$, respectively, that are incident to some arcs in~$B'$. By \autoref{def:partsol}(\ref{partsol2}), there are vertices in~$V_y$ that can reach $s_1$ and $s_2$ in~$G[U_y]\setminus\Sol{}$. Hence, $s_1$~and~$s_2$ are \interesting{} and, therefore, in $\bagreach{}=\bagreach_y(\Sol)$.  Since $s_1',s_2'\in\q_i$ and $\Sol{}$~satisfies $(\bagsol,\bagreach{},\bagpart)$, by \autoref{pat}(\ref{pat2}), we also have $s_1,s_2\in\q_i$. Then, however, $\q_i$~contains the two different vertices~$s_1$ and~$s_2$ of $V(\bagreach{}) \setminus V_y$, which contradictions \autoref{pat}(\ref{pat2}).

  To show that $\Sol{}'$ is a partial \partSet{}, it remains to verify
  \autoref{def:partsol}(\ref{partsol2}). For the sake of contradiction,
  assume that some connected component of~$G[U_x]\setminus\Sol{}'$
  contains some sink~$s\in U_x\setminus V_x$, some vertex in~$V_x$, but
  $s$~is not reachable from any vertex in~$V_x$. Then, $s$~is not
  reachable from any vertex in~$V_y\subseteq V_x$ in the
  subgraph~$G[U_y]\setminus\Sol{}$ either. Thus, the connected component
  does not contain any vertex of~$V_y$ and, therefore, not of~$V_x$,
  since the only vertex in~$V_x\setminus V_y$ is~$v$ and the added
  arcs~$B'$ connect only vertices in~$V_y$.

  We have shown that~$\Sol{}'$ is a partial \partSet{} for~$G[U_x]$. We
  now show that it satisfies the
  pattern~$(\bagsol',\bagreach{}',\bagpart{}')$ generated by
  \autoref{intproc}; we verify \autoref{patsat}.
  \autoref{patsat}(\ref{patsatsol}), that is,
  $\bagsol'=\bagsol_x(\Sol{}')$ is trivial by the construction
  of~$\bagsol'$. 

  We verify \autoref{patsat}(\ref{patsat1}), that is,
  $\bagreach{}_x(\Sol{}') = \bagreach{}'$. First, observe
  that~$V(\bagreach{}_x(\Sol{}'))=V(\bagreach_y(\Sol{}))\cup\{v\}=V(\bagreach')$. We
  have to show that %
  there is an arc~$(u,w)$ in~$\bagreach'$ %
  if and only if~$u$ can reach~$w$ in~$G[U_x]\setminus\Sol{}'$. Let
  $(u,w)$~be such an arc in~$\bagreach{}'$. If $(u,w)$ is already
  in~$\bagreach{}$, then $u$ can reach~$w$ in~$G[U_y]\setminus\Sol{} =
  (G[U_x]\setminus\Sol{}') - \{v\}$. Otherwise, $u$ reaches~$w$
  in~$\bagreach{}'$ via some arcs~$(u',v),(v,w')\in B'$, that is, $u$
  can reach~$u'$ and $w'$ can reach~$w$ in~$G[U_x]\setminus\Sol{}'$. It
  follows that $u$ can reach $w$ in $G[U_x]\setminus\Sol{}'$. Now, for
  the opposite direction, let $u,w$ be vertices of~$\bagreach'$ such
  that $u$~can reach~$w$ in~$G[U_x]\setminus\Sol{}'$. If $u$ can
  reach~$w$ in~$(G[U_x]\setminus\Sol{}')-\{v\}=G[U_y]\setminus\Sol{}$,
  then the arc $(u,w)$~is already present in~$\bagreach{}$. Otherwise,
  $u$ reaches~$w$ via some arcs~$(u',v),(v,w')\in B'$. The arcs~$(u',v)$
  and~$(v,w')$ are in~$\bagreach{}'$, since $u',w'\in V_x$. Moreover,
  $u$~reaches $u'$ and $w'$ reaches~$w$
  in~$G[U_x]\setminus\Sol{}'$. Hence, there are arcs~$(u,u')$ and
  arc~$(w',w)$ in~$\bagreach{}'$ and~$u$ reaches~$w$ in~$\bagreach{}'$
  via $u'$ and~$w'$. By construction of~$\bagreach{}'$, it follows that
  $\bagreach{}'$~contains the arc~$(u,w)$.

  Finally, we verify \autoref{patsat}(\ref{patsat2}); we show that~$\bagpart'$ is a coarsening of $\bagpart_x(\Sol{}')$. For the sake of a contradiction, assume that there are two vertices~$u,w$ that are in the same set of~$\bagpart_x(\Sol{}')$ but in different sets of~$\bagpart'$. By construction of~$\bagpart'$ from~$\bagpart$, this implies that $u$~and~$w$ are in different sets of~$\bagpart$ and, therefore, in different connected components of~$G[U_y]\setminus\Sol$. Thus, in order for $u$~and~$w$ to be connected in~$G[U_x]\setminus\Sol'$, there are vertices~$u', w'$ in the same connected components of~$G[U_y]\setminus\Sol$ as~$u$ and~$w$, respectively, that are incident to arcs in~$B'$ and, hence, $u',w'\in \q_i\in\bagpart$.  But then, also $u,w \in \q_i\in\bagpart{}$ --- a~contradiction.\qed
\end{proof}

\noindent We now show that the minimum weight of a partial \partSet{} for~$G[U_x]$
satisfying a pattern~$(\bagsol,\bagreach{},\bagpart)$ at~$x$ is at
least~$\tab_x(\bagsol,\bagreach{},\bagpart)$ as computed by by
\autoref{intproc}.

\begin{proof}[of \autoref{lem:ulobo}\eqref{lowerbound} for introduce nodes]
  Let $x$~be an introduce node with child node~$y$. Let $v$~be the vertex ``introduced'', that is, $v\notin V_y$ but $v\in V_x$. Assume that $\Sol{}$~is a minimum-weight partial \partSet{} for~$G[U_x]$ satisfying the pattern~$(\bagsol_x,\bagreach{}_x, \bagpart_x)$ at~$x$. Let $B$~be the set of arcs incident to~$v$ in~$G[U_x]$ and~$B'':=B\cap\Sol{}$. It is sufficient to construct a pattern~$(\bagsol,\bagreach{},\bagpart{})$ that~$\Sol{}\setminus B''$ satisfies at~$y$ and from which \autoref{intproc} generates exactly the pattern~$(\bagsol_x,\bagreach{}_x,\bagpart{}_x)$ to update the table~$\tab_x(\bagsol_x,\bagreach{}_x,\bagpart{}_x)$ with~$\tab_y(\bagsol,\bagreach{},\bagpart{})+\wei(B)-\wei(B')$, where~$B'=B\setminus B''$. Then, \autoref{lem:ulobo}\eqref{lowerbound} follows for introduce nodes, since $\tab_x(\bagsol_x,\bagreach{}_x,\bagpart{}_x) \leq \tab_y(\bagsol, \bagreach{},\bagpart{}) + \wei{}(B) -\wei{}(B') \leq \wei{}(\Sol{} \setminus B'') + \wei{}(B) -\wei{}(B') =\wei{}(\Sol{})$.

  It is easy to verify that~$\Sol{}\setminus B''$ is a partial \partSet{} for~$G[U_y]$ (\autoref{def:partsol}), since $G[U_y]\setminus(\Sol{}\setminus B'')=(G[U_x]\setminus\Sol{})-\{v\}$ and $\Sol{}$ is a partial \partSet{} for~$G[U_x]$; to this end, observe that, by \autoref{treedec}(\ref{treedec2} and \ref{treedec3}) of a tree decomposition, $v$~only has arcs~$B\subseteq A(G[V_x])$ incident to vertices in~$V_y$.

  We now construct a pattern. Let $\bagsol=\bagsol{}_y(\Sol{}\setminus B'')$ and $\bagreach{}=\bagreach{}_y(\Sol{}\setminus B'')$.  Let $\bagpart$~be the partition obtained from~$\bagpart_x$ by removing the vertex~$v$ from the set it is in or by removing the set $\{v\}$ if it exists in~$\bagpart{}_x$. It is easy to verify that $\Sol{} \setminus B''$ satisfies~$(\bagsol,\bagreach{},\bagpart)$ at~$y$: \autoref{patsat}(\ref{patsatsol}) and~(\ref{patsat1}) are trivially satisfied by choice of~$\bagsol$ and~$\bagreach{}$; \autoref{patsat}(\ref{patsat2}) holds by construction of~$\bagpart{}$ from~$\bagpart{}_x$, since $G[U_y]\setminus(\Sol{}\setminus B'')$ is a subgraph of~$G[U_x]\setminus\Sol{}$.

  It remains to show that \autoref{intproc} applies to the pattern~$(\bagsol,\bagpart{},\bagreach{})$ and the set~$B'$ in order to generate the pattern~$(\bagsol_x,\bagpart{}_x,\bagpart{}_x)$. Since~$\Sol{}$ satisfies~$(\bagsol_x,\bagreach{}_x,\bagpart{}_x)$ at~$x$, all vertices incident to arcs in~$B'$ (if any) are contained in the same set~$\q_i\in\bagpart_x$ and, hence, all of them except~$v$ are contained in the set~$\q_i \setminus \{v\}\in\bagpart$. Therefore, \autoref{intproc} applies to~$B'$ and the pattern~$(\bagsol,\bagreach{},\bagpart)$, produces some new pattern~$(\bagsol',\bagreach{}',\bagpart')$, and updates $\tab_x(\bagsol',\bagreach{}',\bagpart')$ with $\tab_y(\bagsol,\bagreach{},\bagpart{}) + \wei{}(B) -\wei{}(B')$.

  It remains to show that, for at least one of the generated patterns, $\bagsol'=\bagsol_x$, $\bagreach{}'=\bagreach{}_x$, and $\bagpart'=\bagpart_x$. If $B' \neq \emptyset$ then $\bagpart'=\bagpart_x$ by construction of~$\bagpart'$ from~$\bagpart{}$ in \autoref{intproc}. If $B' = \emptyset$, then $\bagpart_x$ is clearly among the partitions~$\bagpart'$ generated from~$\bagpart{}$ by \autoref{intproc}.  Moreover, we already proved in the proof of \autoref{lem:ulobo}\eqref{upperbound} that~$\Sol{}=(S\setminus B'') \cup (B \setminus B')$ satisfies the pattern generated by \autoref{intproc} at~$x$. Hence, $\bagsol'=\bagsol_x$ and $\bagreach{}'=\bagreach_x$.  \qed
\end{proof}

\subsubsection{Join nodes}\label{sec:jnod}

We use the following procedure to compute the table~\(\tab_x\) of a join node~\(x\) under the assumption that the tables~$\tab_y$ for all child nodes~$y$ of~$x$ have been computed correctly.

\begin{proc}[Join node]\label{joiproc}\upshape
  Let $x$~be a join node with children $y$ and $z$, that is, $V_x=V_y=V_z$.  %
  For each pair of patterns~$(\bagsol,\bagreach{}_y,\bagpart_y)$ for~$y$ and $(\bagsol,\bagreach{}_z,\bagpart_z)$ for~$z$ such that~$\bagpart_y$ and~$\bagpart_z$ partition the vertices of~$V_y=V_z=V_x$ in the same way, we construct a new pattern~$(\bagsol,\bagreach,\bagpart)$ as follows.

  Let $\bagreach{}'$~be the graph containing all vertices and arcs of~$\bagreach_y$ and $\bagreach_z$, and 
  for each~$u,w \in V(\bagreach{}')$ such that $u$~can reach~$w$ in~$\bagreach{}'$ add the arc~$(u,w)$ to~$\bagreach{}'$.
  Note that by
  \autoref{treedec}(\ref{treedec3}) of a tree decomposition,
  $\bagreach_y$ and~$\bagreach_z$ have only the vertices in~$V_x$ in
  common.

  Let $\bagpart'$~be the partition of~$V_x$ that partitions~$V_x$ in the
  same way as~$\bagpart_y$ and~$\bagpart_z$. We extend~$\bagpart'$ to a
  partition for the vertices of~$\bagreach{}'$: for each $u \in
  V(\bagreach{}') \setminus V_x$, add~$u$ to a set~$\q_i$ of~$\bagpart'$
  that contains a vertex~$v$ with $(v,u)$ being an arc
  of~$\bagreach{}'$.  Since there are no arcs between different sets
  of~$\bagpart$ in~$\bagreach{}_y$ or~$\bagreach{}_z$, there is exactly
  one such set~$\q_i\in\bagpart'$.

  If we created some set~$\q \in \bagpart'$ with more than one vertex
  of~$V(\bagreach{}') \setminus V_x$, then continue with a different
  pair of patterns.  Otherwise, we update
  $\tab_x(\bagsol,\bagreach{}',\bagpart')$ with
  $\tab_y(\bagsol,\bagreach{}_y,\bagpart_y) +
  \tab_z(\bagsol,\bagreach{}_z,\bagpart_z)-\wei{}(A(G[V_x]))+\wei{}(A(\bagsol))$.
\end{proc}

\noindent We show that \autoref{joiproc} fills the table associated with a join node~\(x\) correctly.  First, we show that there is a partial \partSet{} for~$G[U_x]$ satisfying a pattern~$(\bagsol,\bagreach{},\bagpart)$ at~$x$ and having weight at most~$\tab_x(\bagsol,\bagreach{},\bagpart)$ as computed by \autoref{joiproc}.

\begin{proof}[of \autoref{lem:ulobo}\eqref{upperbound} for join nodes]
  Let $x$ be a join node with child nodes~$y$ and~$z$, that is,
  $V_x=V_y=V_z$. Let $\Sol{}_y$ be a partial
  \partSet{} for~$G[U_y]$ satisfying the pattern~$(\bagsol,\bagreach{}_y,\bagpart_y)$ at~$y$ and let $\Sol{}_z$~be a partial \partSet{} for~$G[U_z]$ satisfying the pattern~$(\bagsol,\bagreach{}_z,\bagpart_z)$ at $z$.  We show that~$\Sol{} = \Sol{}_y \cup \Sol{}_z$ is a partial \partSet{} for~$G[U_x]$ that satisfies the pattern~$(\bagsol,\bagreach{}',\bagpart')$ constructed by \autoref{joiproc}. Since $\wei(\Sol{})=\wei(\Sol{}_y)+\wei(\Sol{}_z)-\wei{}(\Sol{}_y \cap \Sol{}_z)$, wherein $\Sol{}_y \cap \Sol{}_z = A(G[V_x]) \setminus A(\bagsol)$, \autoref{lem:ulobo}\eqref{upperbound} follows for join nodes.

  We show that~$\Sol{}$ is indeed a partial \partSet{} for~$G[U_x]$,
  that is, we verify \autoref{def:partsol}. We first verify
  \autoref{def:partsol}(\ref{partsol2}) and then use it to verify
  \autoref{def:partsol}(\ref{partsol1}).  Let $s \in U_x \setminus V_x$
  be a sink such that the connected component containing~$s$
  in~$G[U_x]\setminus\Sol{}$ contains a vertex of~$V_x$. Then, $s\in
  U_y\setminus V_x$ or $s\in U_z\setminus V_x$. Without loss of
  generality, let $s\in U_y\setminus V_x$.  From
  \autoref{treedec}(\ref{treedec3}) of a tree decomposition, we see
  that~$U_z\cap U_y\subseteq V_x$ and, hence, $G[U_y]\setminus\Sol{}_y=
  G[U_x]\setminus\Sol{}-(U_z\setminus V_z)$. It follows that there is
  also a connected component of~$G[U_y]\setminus\Sol{}_y=
  G[U_x]\setminus\Sol{}-(U_z\setminus V_z)$ that contains~$s$ and a
  vertex of~$V_x$ and, therefore, $s$~is reachable from some
  vertex~$v\in V_x$ in~$G[U_x]\setminus\Sol{}-(U_z\setminus V_z)$ and,
  hence, in~$G[U_x]\setminus\Sol{}$. It also follows that $(v,s)$~is an
  arc in~$\bagreach_y$ and, by construction in \autoref{joiproc},
  of~$\bagreach'$.

  To verify \autoref{def:partsol}(\ref{partsol1}), for the sake of a
  contradiction, assume that there is a connected component
  of~$G[U_x]\setminus\Sol{}$ that contains two sinks~$s_1,s_2$ in~$U_x
  \setminus V_x$. Note that, by \autoref{treedec}(\ref{treedec3}) of a
  tree decomposition, there are no arcs between~$U_y\setminus V_x$
  and~$U_z\setminus V_x$. Hence, this connected component contains a
  vertex~$v$ of $V_x$; otherwise, it would be a connected component with
  two sinks outside of~$V_x$ already in either~$G[U_y]\setminus\Sol{}_y$
  or~$G[U_z]\setminus\Sol{}_z$. Thus, as seen in the previous paragraph,
  we have arcs~$(s_1',s_1)$~and~$(s_2',s_2)$ with $s_1',s_2'\in V_x$.
  It follows by construction of~$\bagpart'$ from~$\bagreach'$ in
  \autoref{joiproc} that $s_1$~and~$s_1'$ are in a
  set~$\q_i\in\bagpart'$ and $s_2$~and~$s_2'$ are in a
  set~$\q_j\in\bagpart'$. We show~$i=j$, which contradicts the
  construction of~$\bagpart'$, since then~$\q_i=\q_j$ contains two
  vertices~$s_1\notin V_x$ and~$s_2\notin V_x$.

  Since $s_1$~and~$s_2$ are in the same connected component
  of~$G[U_x]\setminus\Sol{}$, also $s_1'$~and~$s_2'$ are, since they can
  reach~$s_1$ and~$s_2$, respectively. Hence, there is an \upath{}~$p$
  between~$s_1$ and~$s_2$ in~$G[U_x]\setminus\Sol{}$. It consists of
  consecutive path segments~$p'$ that only have their endpoints~$u,w$
  in~$V_x$ (possibly, such a path segment only consists of one arc). It
  follows that such a path segment~$p'$ is entirely contained in
  $G[U_y]\setminus\Sol{}_y$ or $G[U_z]\setminus\Sol{}_y$ and, hence, its
  endpoints $u$~and~$w$ are in the same set of~$\bagpart_y$
  or~$\bagpart_z$. Since $u,w\in V_x$, by construction of~$\bagpart'$ in
  \autoref{joiproc}, $u$~and~$w$ are in the same set of~$\bagpart'$. It
  follows that~$s_1'$ and~$s_2'$ are in the same set of~$\bagpart'$, and
  so are~$s_1$ and~$s_2$.

  It follows that $\Sol{}$ is indeed a partial \partSet{} for~$G[U_x]\setminus\Sol{}$. It remains to verify that~$\Sol{}$ satisfies the pattern~$(\bagsol,\bagreach',\bagpart')$ (\autoref{patsat}). Herein, \autoref{patsat}(\ref{patsatsol}), $\bagsol{}=\bagsol_x(\Sol{})$, is trivial.  We verify~(\ref{patsat1}), that is, $\bagreach{}_x(\Sol{})=\bagreach{}'$. Herein, $V(\bagreach_x(\Sol{}))\subseteq V(\bagreach')$ we already verified when verifying \autoref{def:partsol}(\ref{partsol2}). Now, assume that there are two vertices $u,w$ in~$\bagreach'$ such that $u$~can reach~$w$ in $G[U_x]\setminus\Sol{}$. Since, then, $u$~is not a sink, it is in~$V_x$. The \dpath{} from~$u$ to~$w$ consists of consecutive subpaths, each being entirely contained in~$G[U_y]\setminus\Sol{}_y$ or~$G[U_z]\setminus\Sol{}_z$ and thus, causing an arc in~$\bagreach_y$ or~$\bagreach_z$ and, therefore, in~$\bagreach'$. It follows that~$u$ can reach~$w$ in~$\bagreach'$, which therefore has an arc~$(u,w)$.  In the opposite direction, for every arc~$(u,w)$ in~$\bagreach'$ that is already in~$\bagreach_y$ or~$\bagreach_z$, there is an \dpath{} in either $G[U_y]\setminus\Sol{}_y$ or in~$G[U_z]\setminus\Sol{}_z$ from~$u$ to~$w$ and, thus, $u$~can reach~$w$ in $G[U_x]\setminus\Sol{}$. For an arc~$(u,w)$ in~$\bagreach'$ that is neither present in~$\bagreach_y$ nor~$\bagreach_z$, there is an \dpath{} in~$\bagreach'$ from~$u$ to~$w$ consisting only of arcs that are already present in~$\bagreach_y$ or~$\bagreach_z$. Since we have seen that for each such arc there is a corresponding \dpath{} in~$G[U_x]\setminus\Sol{}$, we have that $u$~can reach~$w$ in~$G[U_x]\setminus\Sol{}$.

  For \autoref{patsat}(\ref{patsat2}), it has been shown above that if
  two vertices of~$\bagreach'$ are in the same connected component
  of~$G[U_x]\setminus\Sol{}$, then they are in the same set
  in~$\bagpart'$.  \qed
\end{proof}

\noindent We now show that the minimum weight of a partial \partSet{} for~$G[U_x]$
satisfying a pattern~$(\bagsol,\bagreach{},\bagpart)$ at~$x$ is at
least~$\tab_x(\bagsol,\bagreach{},\bagpart)$ as computed by by
\autoref{joiproc}.

\begin{proof}[of \autoref{lem:ulobo}\eqref{lowerbound} for join nodes]
  Let $x$~be a join node with the child nodes~$y$ and~$z$, that
  is~$V_x=V_y=V_z$. Assume that $\Sol{}$ is a minimum-weight partial
  \partSet{} for~$G[U_x]$ satisfying the pattern
  $(\bagsol,\bagreach{}_x, \bagpart_x)$ at~$x$. It is sufficient to
  construct patterns~$(\bagsol,\bagreach{}_y,\bagpart{}_y)$
  and~$(\bagsol,\bagreach{}_z,\bagpart{}_z)$ that are satisfied
  by~$S_y:=\Sol{}\cap A(G[U_y])$ at~$y$ and by~$S_z:=\Sol{}\cap
  A(G[U_z])$ at~$z$, respectively, such that from these patterns
  \autoref{joiproc} generates exactly the
  pattern~$(\bagsol,\bagreach_x,\bagpart_x)$ to update
  $\tab_x(\bagsol,\bagreach_x,\bagpart_x)$ with
  \begin{align*}
    &\tab_y(\bagsol,\bagreach{}_y,\bagpart_y) +
    \tab_z(\bagsol,\bagreach{}_z,\bagpart_z)-\wei{}(A(G[V_x]))
    +\wei{}(A(\bagsol))\\
    &\leq \wei{}(\Sol_y) +\wei{}(\Sol{}_z)
    -\wei{}(\Sol_y\cap\Sol_z)\\
    &= \wei{}(\Sol{}).
  \end{align*}
  We first show that $S_y$~is a partial \partSet{}
  for~$G[U_y]$. Symmetrically, it follows that $S_z$~is a
  partial \partSet{} for~$G[U_z]$.  We first verify
  \autoref{def:partsol}(\ref{partsol1}).  Since by
  \autoref{treedec}(\ref{treedec3}), there are no arcs between vertices
  in~$U_y\setminus V_y$ and~$U_z\setminus V_z$ in~$G[U_x]$, it follows
  from $G[U_y]\setminus S_y = G[U_x]\setminus\Sol{}-(U_z\setminus V_z)$
  that no connected component of~$G[U_y]\setminus S_y$ contains two
  sinks not in~$V_y=V_x$. It remains to verify
  \autoref{def:partsol}(\ref{partsol2}). To this end, let $u$~be a sink
  in~$U_y \setminus V_y$ in a connected component of~$G[U_y]\setminus
  S_y$ containing a vertex of~$V_y$. Then, by
  \autoref{def:partsol}(\ref{partsol2}), the connected component of~$G[U_x]\setminus\Sol{}$ containing~$u$ contains a \dpath{} from some
  vertex in~$V_x$ to~$u$. The subpath of this \dpath{} that contains only
  one vertex of~$V_x$ is preserved in~$G[U_y]\setminus \Sol{}_y$. Hence,
  $u$~is reachable from some vertex of~$V_x=V_y$.

  We now construct the patterns. To this end, let
  $\bagreach{}_y:=\bagreach{}_y(\Sol{}_y)$ and
  $\bagreach{}_z:=\bagreach{}_z(\Sol{}_z)$.  Moreover, we
  choose~$\bagpart_y$ and~$\bagpart_z$ such that they partition the
  set~$V_x=V_y=V_z$ in the same way as~$\bagpart_x$ and such that the
  vertices of~$V(\bagreach{}_y) \setminus V_y$ (or $V(\bagreach{}_z)
  \setminus V_z$) are in the same set as the other vertices of their
  connected components in $G[U_y]\setminus\Sol_y$ (or
  $G[U_z]\setminus\Sol_z$).

  We show that $\Sol{}_y$~satisfies $(\bagsol,\bagreach{}_y,
  \bagpart_y)$ at~$y$. Analogously, it then follows that
  $\Sol{}_z$~satisfies~$(\bagsol,\bagreach{}_z, \bagpart_z)$ at~$z$. We
  verify \autoref{patsat}. Since
  $\bagsol=\bagsol_x(\Sol)=\bagsol_y(\Sol_y)=\bagsol_z(\Sol_z)$ and
  $\bagreach{}_y=\bagreach{}_y(S_y)$ hold by definition, it remains to
  verify \autoref{patsat}(\ref{patsat2}). To this end, observe that
  $G[U_y]\setminus\Sol{}_y=G[U_x]\setminus\Sol{}-(U_y\setminus
  V_y)$. Now, assume, for the sake of a contradiction, that there are
  two vertices~$v,w$ of~$\bagreach_y$ in different sets of~$\bagpart_y$
  but in the same connected component of~$G[U_y]\setminus\Sol{}_y$. It
  follows that $v$~and~$w$ are in the same connected component
  of~$G[U_x]\setminus\Sol{}$.  If $v,w\in V_y$, then, by construction
  of~$\bagpart_y$ from~$\bagpart_x$, the vertices~$v$ and~$w$ are in
  different sets of~$\bagpart_x$, contradicting
  $\Sol{}_x$~satisfying~$(\bagsol,\bagreach{}_x,\bagpart{}_x)$.  If
  exactly one of~$v,w$ is in~$V_y$, then $v$~and~$w$ being in different
  sets of~$\bagpart_y$ contradicts the construction of~$\bagpart_y$.  If
  both $v,w\notin V_y$, then $v$~and~$w$ are two \interesting{} sinks
  in~$G[U_y]$, which contradicts~$v$ and~$w$ being in the same connected
  component of~$G[U_y]\setminus\Sol{}_y$.

  Hence, indeed $\Sol{}_y$ satisfies $(\bagsol,\bagreach{}_y,
  \bagpart_y)$ at~$y$ and $S_z$ satisfies
  $(\bagsol,\bagreach{}_z,\bagpart_z)$ at~$z$. Moreover, since
  $\bagpart_y$ and~$\bagpart_z$ partition~$V_x$ in the same way,
  \autoref{joiproc} applies to the patterns~$(\bagsol,\bagreach{}_y,
  \bagpart_y)$ and~$(\bagsol,\bagreach{}_z,\bagpart_z)$ and produces a
  pattern $(\bagsol,\bagreach{}',\bagpart')$. If no set of~$\bagpart'$
  contains more than one vertex of $V(\bagreach{}') \setminus V_x$, it
  indeed updates $\tab_x(\bagsol,\bagreach{}',\bagpart')$.

  Hence it remains to show that $\bagreach{}'=\bagreach{}_x$ and $\bagpart'=\bagpart_x$, as no set of~$\bagpart_x$ contains two vertices of $V(\bagreach{}_x) \setminus V_x$ by \autoref{pat}(\ref{pat2}). We already showed in the proof of \autoref{lem:ulobo}\eqref{upperbound} for join nodes that $\Sol{}$~satisfies the pattern~$(\bagsol,\bagreach{}',\bagpart{}')$ generated by \autoref{joiproc}. Hence, $\bagreach{}'=\bagreach{}_x$.  Finally, by construction of~$\bagpart'$ in \autoref{joiproc}, the vertices of~$V_x$ are partitioned the same way by~$\bagpart'$ and~$\bagpart_x$. For a vertex~$v \in V(\bagreach{}') \setminus V_x$, there is a vertex~$u$ in~$V_x$ that can reach~$v$ in~$\bagreach{}'$ and, therefore, in~$G[U_x]\setminus\Sol{}$. Hence, $u$ and~$v$ must be in the same set of both $\bagpart_x$ and~$\bagpart'$ by construction of~$\bagpart'$ in \autoref{joiproc}.
  \qed\end{proof}

\subsubsection{Running time}

\noindent Having shown the correctness of the
Procedures~\ref{forgproc}--\ref{joiproc}, we can finally complete the proof of \autoref{th:fpt-tw} by analyzing the running time of the procedures.

\begin{proof}[of \autoref{th:fpt-tw}]
  \autoref{lem:ulobo} showed that the presented dynamic programming algorithm is correct, that is, it solves \DAGP{} given a tree decomposition of the input graph. It remains to analyze the running time.

  To this end, recall that each bag in a tree decomposition of width~$\tw{}$ contains at most~$\tw{}+1$~vertices. This allows us to give an upper bound on the number of possible patterns~$(\bagsol,\bagreach{},\bagpart{})$. There are at most $3^{\binom{\tw+1}{2}}$ directed acyclic graphs~$\bagsol$ on at most $\tw{}+1$ vertices: for each pair~$(v,w)$ of vertices: there is either no arc, or an arc from~$v$ to~$w$, or an arc from~$w$ to~$v$. Similarly, there are at most $3^{\binom{2\tw+2}{2}}$ directed graphs~$\bagreach$ on at most $2\tw{}+2$ vertices.  Moreover, there are at most $(2\tw+2)^{2\tw+2}$ partitions~$\bagpart{}$ of at most $2\tw+2$~vertices into at most $2\tw+2$~sets. Hence, each table has at most $3^{\binom{2\tw+2}{2}}\cdot3^{\binom{\tw+1}{2}}\cdot(2\tw+2)^{2\tw+2}=3^{\bigO(\tw^2+\tw \log \tw)}=2^{\bigO(\tw^2)}$~entries and looking up entries in the tables can be implemented to run in $\bigO(\log 2^{\tw^2})$~time, which is polynomial in~$\tw$.

  In each leaf node, we set the single table entry to~$0$ in constant
  time.

  In each forget node, \autoref{forgproc} iterates over the
  entries of the table of the child node and for each entry spends time
  polynomial in~$\tw$. Thus, it spends a total of $2^{\bigO(\tw^2)}$~time in
  each forget node.

  To analyze the running time \autoref{intproc} spends in an introduce
  node, observe that there are at most $\tw$~arcs in~$A(G[V_x])$
  incident to the introduced vertex~$v$. Hence, there are at most
  $2^\tw$ subsets of them. For each of these subsets and for each entry
  of the child node, \autoref{intproc} spends time polynomial
  in~$\tw$. This makes a total of $2^{\bigO(\tw^2)}$~time spent in each
  introduce node.

  Finally, in a join node, \autoref{joiproc} considers every pair of
  patterns of its two child nodes and for each combination spends time
  polynomial in~$\tw$. Hence, the total time spent in a join node
  is~$(2^{\bigO(\tw^2)})^2=2^{O{(\tw^2)}}$.

  Since the nice tree decomposition has $\bigO(\tw n)$~nodes, the algorithm
  runs in $2^{\bigO(\tw^2)}\cdot n$~time. \qed
\end{proof}

\section{Other parameters yield stronger NP-hardness
  results}\label{sec:classical}

In Sections~\ref{sec:smallsol} and~\ref{sec:tw}, we have seen that
\DAGP{} is solvable in linear time when fixing the weight of the
\partSet{} sought or the treewidth of the input graph. %
The question whether fixed-parameter algorithms can be obtained for
parameters that are smaller than the solution weight or the
treewidth naturally arises~\cite{Nie10,KN12,FJR13}.
One parameter of interest is the maximum vertex outdegree in the graph:
a citation network of journal articles will, for example, have a small
outdegree since journal articles seldom contain more than 50~references.
In this section, however, we will show that, among others, this parameter
being small will not help solving \DAGP{} efficiently.

\citet{AM12} already showed that \DAGP{} remains NP-hard even if the input graph has only two sinks.  We complement this negative result by showing that the problem remains NP-hard even if the diameter or the maximum vertex degree of the input graph are constant. In conclusion, parameters like the number of sinks, the graph diameter or maximum degree cannot lead to fixed-parameter algorithms unless P${}={}$NP.

\begin{theorem} \label{thm:diam2} \DAGP{} is solvable in linear time on
  graphs of diameter one, but NP-complete on graphs of diameter two even if all arcs have unit weight.
\end{theorem}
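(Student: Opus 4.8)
The plan is to prove the two halves of \autoref{thm:diam2} separately. For \textbf{diameter one}, note that then the underlying undirected graph has every pair of distinct vertices at distance one, i.e.\ it is complete; between any two vertices there is at least one arc, and since $G$ is acyclic there cannot be arcs in both directions (that would be a $2$-cycle), so there is exactly one arc between every pair and $G$ is an acyclic tournament, hence a transitive tournament. Such a graph is weakly connected and has a unique topological order and therefore a single sink, so the whole graph already forms one component with exactly one sink, the empty set is a \partSet{}, and $(G,\wei{},k)$ is a yes-instance for every $k\ge0$; the algorithm simply outputs $\emptyset$ after reading the input in linear time. Membership in NP is immediate: a \partSet{} $\Sol{}$ with $\wei{}(\Sol{})\le k$ has at most $k$ arcs and is a certificate, and checking that each weakly-connected component of $G\setminus\Sol{}$ has exactly one sink takes linear time.

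For \textbf{diameter two} I would reduce from \textsc{3-Sat}, reusing the graph $G$ and \autoref{lem:constr1} of \autoref{lem:red-from-3Sat-WDAGP}. That graph has diameter larger than two (for instance $t$ and $f$ share no neighbour), so I force diameter two by adding one dominating vertex $a$ together with a normal arc $(a,v)$ to every vertex $v$ of $G$; this keeps the graph acyclic since $a$ becomes a new source, and now every pair of vertices is at distance at most two through $a$. The point is that arcs \emph{leaving} $a$ do not change which sinks any other vertex can reach, so the \partSet{} structure of $G$ is untouched; the only new effect is that $a$ reaches both sinks and must be separated from one of them. I set the new budget to $k^{+}:=(4n+2m)+(2n+2)$, where $2n+2$ is the size of the $f'$-side component that the variable gadgets of \autoref{lem:red-from-3Sat-WDAGP} keep independent of the assignment, and I raise the weight of the heavy arcs to $k^{+}+1$ so they stay undeletable under the larger budget. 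For the forward direction I take a satisfying assignment, obtain from \autoref{lem:constr1} a \partSet{} of weight $4n+2m$ splitting $G$ into $P_t\ni t,t'$ and $P_f\ni f,f'$, place $a$ into $P_t$, and additionally delete the $2n+2$ arcs from $a$ into $P_f$; then $a$ reaches only $t'$ and the total weight is exactly $k^{+}$.

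The backward direction is where the structural observations carry the argument. Given a \partSet{} $\Sol{}'$ for the augmented graph $G^{+}$ with $\wei{}(\Sol{}')\le k^{+}$, I would show that $\Sol{}:=\Sol{}'\cap A(G)$ is a \partSet{} for $G$: deleting the non-sink $a$ from $G^{+}\setminus\Sol{}'$ cannot split a one-sink component into several parts, since in a directed acyclic graph every vertex reaches a sink (so a second part would force a second sink), using \autoref{lem:no_new_sinks} and \autoref{lem:dirundirequiv}. As $\wei{}(\Sol{})\le\wei{}(\Sol{}')\le k^{+}$ lies strictly below the heavy-arc weight, $\Sol{}$ contains no heavy arc, and \autoref{lem:constr1} then yields a satisfying assignment. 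This settles NP-hardness for the \emph{weighted} diameter-two case; since all weights are polynomially bounded, I would finally invoke \autoref{lem:red-WDAGP-to-DAGP} to transfer the result to unit weights.

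\textbf{The main obstacle} is reconciling unit weights with diameter two in this last step. The simulation of a heavy arc $(v,w)$ in \autoref{lem:red-WDAGP-to-DAGP} introduces many new degree-two vertices adjacent only to $v$ and $w$; these sit at distance three from most of the graph and thus destroy the diameter-two property. Connecting them to the dominating vertex $a$ repairs the distance but is self-defeating, because separating $a$ from a component would then require deleting $a$'s arcs to all of these simulation vertices, inflating the budget so much that the formerly heavy connections become cheaper than the budget and hence cuttable, which breaks the equivalence. I therefore expect the real work to be a \emph{diameter-preserving} simulation of heaviness---for example arranging that every non-sink, including all simulation vertices, is adjacent to both sinks $t'$ and $f'$, so that diameter two is realized through a sink rather than through $a$---while verifying via \autoref{lem:no_new_sinks} and \autoref{lem:dirundirequiv} that these extra arcs to the sinks neither create new sinks nor open up cheap alternative \partSet{}s that would decouple the choice of truth values from the clause gadgets.
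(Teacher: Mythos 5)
Your diameter-one argument and the NP-membership remark are fine and agree with the paper. The weighted diameter-two reduction you build is also essentially sound: since the dominating vertex~$a$ is a source, removing it changes neither reachability among the remaining vertices nor the one-sink property of components (each part of a split component would have to contain its own sink), so $\Sol'\cap A(G)$ is indeed a \partSet{} for~$G$, and \autoref{lem:constr1} only needs heavy-arc-freeness rather than a tight budget, which your weight~$k^{+}+1$ guarantees. The genuine gap is the one you yourself flag at the end: the theorem asserts NP-completeness on diameter-two graphs \emph{even with unit weights}, and your proof never gets there. Invoking \autoref{lem:red-WDAGP-to-DAGP} destroys diameter two, as you note, and your closing sketch of a ``diameter-preserving simulation of heaviness'' is a hope, not an argument---you neither define the gadget nor verify that arcs from all simulation vertices into the sinks preserve the equivalence (such arcs do create cheap alternative cuts that must be analyzed). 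So the unit-weight half of the claim, which is part of the statement, remains unproven.

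The paper avoids this dead end by not using weights at all: instead of reducing from \textsc{3-Sat}, it reduces from unit-weight \DAGP{} itself, which is already NP-hard. Given an instance $(G,\wei,k)$ on $n$~vertices, it adds an acyclic tournament~$T$ on $k+n+2$~vertices whose source~$s$ additionally has a unit-weight arc to every vertex of~$V(G)$, and sets $k':=k+n$. Every vertex is at distance one from~$s$, so the diameter is two; and the role your heavy arcs play is taken over by \emph{structure}: since no vertex of~$V(G)$ can reach the sink of~$T$, that sink must be separated from~$V(G)$, but separating~$s$ from the sink inside a tournament on $k+n+2$~vertices costs more than~$k'$, so any budget-$k'$ solution is forced to delete exactly the $n$~arcs from~$s$ into~$V(G)$, leaving at most $k$~deletions that must form a \partSet{} of~$G$. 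Enforcing undeletability by the size of an all-unit-weight gadget, rather than by a large weight that must later be simulated, is precisely the idea your proposal is missing.
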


\begin{proof}
  On graphs of diameter one, the problem is trivial: a directed acyclic
  graph with diameter one is an acyclic \emph{tournament}, that is,
  there is no pair of vertices not joined by an arc. As such, it already
  contains exactly one source and one sink. Hence, we just verify in
  linear time whether the input graph is an acyclic tournament and
  answer ``yes'' or ``no'' accordingly.

  For graphs of diameter two, we show NP-hardness by means of a
  polynomial-time many-one reduction from \DAGP, which is NP-hard even
  when all arcs have weight one. Therefore, we agree on all arcs in this
  proof having weight one.

  Given an arbitrary instance~$(G,\wei,k)$ of \DAGP, we add a gadget
  to~$G$ to obtain in polynomial time an instance~$(G',\wei', k')$ such
  that~$G'$ is a graph of diameter two and such that~$(G,\wei,k)$ is a
  yes-instance if and only if~$(G',\wei',k')$ is a yes-instance.  We
  obtain a graph~$G'$ from~$G$ by adding an acyclic tournament~\(T\)
  consisting of $k+n+2$~vertices and outgoing arcs from the source~$s$
  of~\(T\) to all vertices of~$V(G)$ in~$G'$.  We set~$k':= k +
  n$. Since every vertex in~$G'$ is in distance one from~$s$, the
  constructed graph~$G'$ has diameter two.

  If \((G,\wei,k)\)~is a yes-instance, then let \(S\)~be a \partSet{}
  with \(k\)~arcs for~\(G\).  A \partSet{}~\(S'\) with \(k'\)~arcs
  for~\(G'\) is obtained by adding to~\(S\) the \(n\)~arcs from the
  source~\(s\) of~\(T\) to all vertices of~\(V(G)\) in~\(G'\).  Thus,
  \((G',\wei',k)\)~is a yes-instance.

  If \((G',\wei',k')\)~is a yes-instance, then let \(S'\)~be a \partSet{}
  with \(k'\)~arcs for~\(G'\).  By \autoref{lem:dirundirequiv}, every
  vertex in~$V(G)$ reaches at least one sink in~\(G'\setminus S'\).
  This sink cannot be the sink of~\(T\), since no vertex in~\(T\) is
  reachable from~\(V(G)\).  Thus, \(S'\)~has to disconnect the sink
  of~\(T\) from all vertices of~$V(G)$ in~$G'$, where all paths between
  the sink of~\(T\) and~\(V(G)\) are via the source~\(s\) of~\(T\).
  Since \(T\) has $k+n+2$~vertices, \(S'\)~cannot disconnect~\(s\) from
  the sink of~\(T\) and thus, has to remove from~\(G'\) the $n$~arcs
  connecting~$s$ to the vertices of~$V(G)$.  Then, the remaining
  \(k\)~arcs in~\(S'\) have to be a \partSet{} for~\(G'\) without the
  tournament~\(T\), which is precisely the original graph~$G$. Thus,
  $(G,\wei,k)$~is a yes-instance. \qed
\end{proof}

\begin{theorem}
  \label{thm:maxdeg3}
  \DAGP{} is solvable in linear time on graphs of maximum degree two,
  but NP-complete on graphs of maximum degree three even if all arcs have unit weight.
\end{theorem}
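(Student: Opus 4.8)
The plan is to prove the two halves separately: the polynomial part is quick and the hardness part is the substance.

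For the tractable side I would observe that a graph of maximum degree two has an underlying undirected graph that is a disjoint union of simple paths and simple cycles, and hence treewidth at most two (a path has treewidth one, a cycle two). Since a width-two tree decomposition of a constant-treewidth graph can be computed in linear time, \autoref{th:fpt-tw} immediately solves \DAGP{} in $2^{\bigO(4)}\cdot n=\bigO(n)$ time. Alternatively one can argue directly: process each path or cycle component on its own; in each component the sinks are exactly the \dpath{}-local minima, two consecutive sinks are separated by a unique source of outdegree two, and cutting one of that source's two outgoing arcs separates them without creating a new sink, so each component can be optimally partitioned by a single linear scan that cuts near sources. Membership in NP is clear, since a candidate \partSet{} is verifiable in linear time via \autoref{lem:dirundirequiv}.

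For NP-hardness on maximum degree three I would give a reduction from (bounded-occurrence) \textsc{3-Sat} producing a unit-weight directed acyclic graph of maximum degree three. The guiding reformulation is \autoref{lem:dirundirequiv}: a solution assigns to every vertex the unique sink it reaches, an arc must be deleted exactly when its endpoints receive different sinks, and---by \autoref{lem:no_new_sinks}---every non-sink must retain at least one outgoing arc to a vertex with the same assignment. The central device replacing the ``heavy'' arcs of \autoref{lem:red-from-3Sat-WDAGP} is an \emph{uncuttable funnel}: a binary in-tree whose internal vertices have outdegree one. By \autoref{lem:no_new_sinks} no minimal \partSet{} can delete an arc leaving such a vertex, since that would create a new sink; hence funnels are uncuttable \emph{at unit weight and bounded degree}, and they let arbitrarily many vertices reach a designated sink while keeping every internal vertex at total degree three. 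This is essential because \autoref{lem:red-WDAGP-to-DAGP}, the usual way to eliminate large weights, replaces a weight-$w$ arc by $w-1$ parallel length-two paths and thus blows up the degree; for this theorem the unit-weight bounded-degree instance must be built natively.

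Concretely I would use two sinks $\top$ and $\bot$, each fed by a funnel whose leaves serve as ports. A variable is encoded by a \emph{selector} vertex with exactly two outgoing arcs, one to a $\top$-port and one to a $\bot$-port; deleting one of them (cost one, unavoidable) fixes the variable's value. The value is propagated to all occurrences through uncuttable funnels for fan-in and a tree of outdegree-two \emph{copy} vertices for fan-out, both using only degree-three vertices, with negative literals handled by a small fixed gadget forcing a complementary selector to the opposite sink. Each clause is realised by a gadget that stays within its local budget if and only if at least one incident literal is true, so that with $k$ set to the sum of the forced per-variable and per-clause costs, a \partSet{} of weight at most~$k$ exists iff the formula is satisfiable; the equivalence would then be shown in the two standard directions, using \autoref{lem:no_new_sinks} and \autoref{lem:dirundirequiv} to translate between satisfying assignments and sink-assignments and to rule out cheaper partitions.

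The hard part will be the fan-out side together with the clause and negation gadgets. A genuine obstacle is that under maximum degree three the edge connectivity between any two vertices is at most three, so one cannot make a high-fan-out ``one-to-many'' connection uncuttable the way heavy arcs do; uncuttability is available only through the outdegree-one funnels, which are inherently fan-\emph{in}. I therefore expect the delicate points to be (i)~designing clause and negation gadgets of degree three whose only cheap behaviours correspond to genuinely satisfying assignments, and (ii)~pinning down~$k$ so exactly that no unintended saving---for example from cutting inside a copy tree---can beat the intended cost, which is where most of the careful case analysis will go.
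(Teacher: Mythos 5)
Your tractable half is correct and is exactly the paper's argument: a graph of maximum degree two decomposes into \upath{}s and \ucycle{}s, so its treewidth is at most two, and \autoref{th:fpt-tw} gives linear time; the alternative linear scan is dispensable (and, as stated, slightly shaky for weighted cycles, but you do not need it). Two further points in your favor: you correctly diagnose that \autoref{lem:red-WDAGP-to-DAGP} destroys degree bounds, so \autoref{lem:red-from-3Sat-WDAGP} cannot simply be recycled here, and your ``uncuttable funnel''---internal vertices of outdegree one, whose outgoing arcs no minimal \partSet{} may delete by \autoref{lem:no_new_sinks}---is precisely the device the paper itself uses for its indegree-reducing chains of indegree-two, outdegree-one vertices.

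The hardness half, however, has a genuine gap: it is a plan, not a proof, and the missing pieces are exactly where the reduction lives. You leave the clause gadget, the negation gadget, and the exact budget~$k$ unconstructed, and you yourself identify the obstruction that makes them hard: funnels provide uncuttable fan-\emph{in} only, so variable values must fan out through outdegree-two copy vertices, and cutting a single arc inside a copy tree costs just one---an adversarial solution can therefore decouple two occurrences of the same variable at unit cost, and nothing in your sketch ensures that the saving obtainable from a clause gadget is smaller than that, or that the counting is globally exact. Until those gadgets are specified and this case analysis is carried out, no equivalence can be proved. The paper sidesteps all of this by reducing not from \textsc{3-Sat} but from \MC{} with three terminals and unit weights, NP-hard by \citet{DJP+94}, via the reduction of \citet{LBK09}: the resulting \DAGP{} instance needs no clause or variable gadgetry at all, and the only remaining work is local degree reduction---an auxiliary vertex~$v'$ splits each vertex's outdegree-three fan to the sinks~$s_1,s_2,s_3$, and your funnel-style chain reduces indegrees, with \autoref{lem:no_new_sinks} yielding a one-to-one correspondence between arc deletions before and after the transformation. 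Completing your route would amount to redoing a full bounded-occurrence satisfiability-to-cut construction from scratch; the paper's route inherits correctness almost for free from the known \MC{} reduction.
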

\begin{proof}
  \newcommand{\Veins}{X}
  \newcommand{\Vzwei}{Y}
  \newcommand{\Vdrei}{Z}
  Any graph of maximum degree two consists of \ucycle{}s or \upath{}s. Thus, the
  underlying graph has treewidth at most two. We have seen in
  \autoref{th:fpt-tw} that \DAGP{} is linear-time solvable when the
  treewidth of the input graph is bounded by a constant.

  We prove the NP-hardness on graphs of maximum degree three. To this
  end, we adapt the polynomial-time many-one reduction from \MC{} to
  \DAGP{} presented by \citet{LBK09}. In their reduction, we replace
  vertices of degree greater than three by equivalent structures of
  lower degree.

  \decprob{\MC}{%
    An undirected graph $G=(V,E)$, a weight function $\wei:E \to \N$,
    a set~$T \subseteq V$ of terminals, and an integer $k$.  
  }{%
    Is there a subset $\Sol{} \subseteq E$ with $\sum_{e \in \Sol{}} \wei(e)
    \leq k$ such that the removal of~$\Sol{}$ from $G$ disconnects
    each terminal from all the others?
  }
  \noindent We first recall the reduction from \MC{} to \DAGP{}. From a
  \MC{} instance~$I_1:=(G_1,\wei_1,T,k_1)$, we construct in polynomial
  time a \DAGP{} instance~$I_2:=(G_2,\wei_2,k_2)$ such that $I_1$~is a
  yes-instance if and only if~$I_2$ is. From $I_2$, we then obtain an
  instance~$I_3$ with maximum degree three. Since \MC{} remains NP-hard
  even for three terminals and unit weights~\cite{DJP+94}, we may
  assume $|T|=3$ and, similarly as in the proof of \autoref{thm:diam2},
  we agree on all arcs in this proof having weight one.  We now
  construct the \DAGP{} instance~$I_2=(G_2,\wei_2,k_2)$
  from~$I_1=(G_1,\wei_1,T,k_1)$ as follows.  The construction is
  illustrated in \autoref{fig:mc-to-dagp}.  

  \begin{figure}[t] \centering
    \begin{tikzpicture}
      \tikzstyle{vertex}=[circle,draw,fill=black,minimum size=3pt,inner sep=0pt]

      \node[vertex, label=above:$t_1$] (t1) {};
      \node[vertex, label=above:$v_1$] (v1) at (1,0) {};
      \node[vertex, label=above:$t_2$] (t2) at (2,0) {};
      \node[vertex, label=above:$v_2$] (v2) at (3,0) {};
      \node[vertex, label=above:$t_3$] (t3) at (4,0) {};

      \draw(t1) -- (v1);
      \draw[dotted] (v1) -- (t2);
      \draw (t2) -- (v2);
      \draw[dotted] (v2) -- (t3);
      \draw[dotted] (t1) to[out=-45,in=-135] (t2);
    \end{tikzpicture}

    \bigskip\noindent
    \begin{tikzpicture}[x=3cm, shorten >= 0.5mm]
      \tikzstyle{vertex}=[circle,draw,fill=black,minimum size=3pt,inner sep=0pt]

      \node[vertex, label=above:$t_1$] (t1) {};
      \node[vertex, label=above:$v_1$] (v1) at (1,0) {};
      \node[vertex, label=above:$t_2$] (t2) at (2,0) {};
      \node[vertex, label=above:$v_2$] (v2) at (3,0) {};
      \node[vertex, label=above:$t_3$] (t3) at (4,0) {};
      
      \node[vertex, label=below:$s_1$] (r1) at (0,-1) {};
      \node[vertex, label=below:$s_2$] (r2) at (2,-1) {};
      \node[vertex, label=below:$s_3$] (r3) at (4,-1) {};

      \node[vertex, label=above:$e_{\{t_1,v_1\}}$] (e1) at (0.5,1) {};
      \node[vertex, label=above:$e_{\{v_1,t_2\}}$] (e2) at (1.5,1) {};
      \node[vertex, label=above:$e_{\{t_2,v_2\}}$] (e3) at (2.5,1) {};
      \node[vertex, label=above:$e_{\{v_2,t_3\}}$] (e4) at (3.5,1) {};
      \node[vertex, label=above:$e_{\{t_1,t_2\}}$] (e5) at (1,2) {};

      \draw[->] (t1) -> (r1);
      \draw[->] (t2) -> (r2);
      \draw[->] (t3) -> (r3);

      \draw[->] (v1) -> (r1);
      \draw[->,dotted] (v1) -> (r2);
      \draw[->,dotted] (v1) -> (r3);

      \draw[->,dotted] (v2) -> (r1);
      \draw[->] (v2) -> (r2);
      \draw[->,dotted] (v2) -> (r3);

      \draw[->] (e1) -> (t1);
      \draw[->] (e1) -> (v1);
      \draw[->,dotted] (e2) -> (v1);
      \draw[->] (e2) -> (t2);
      \draw[->] (e3) -> (t2);
      \draw[->] (e3) -> (v2);
      \draw[->] (e4) -> (v2);
      \draw[->,dotted] (e4) -> (t3);

      \draw[->,shorten >= 0.35cm,dotted] (e5) to[out=180,in=90] (t1);
      \draw[->,shorten >= 0.35cm] (e5) to[out=0,in=90] (t2);

      \node[label=below right:$\Vdrei{}$] (recta) at (-0.75,2.5) {};
      \node[label=below right:$\Vzwei{}$] (rectb) at (-0.75,0.6) {};
      \node[label=below right:$\Veins{}$] (rectc) at (-0.75,-0.5) {};

      \begin{pgfonlayer}{background}
        \path[fill=black!10,rounded corners] (recta)
        rectangle (4.25,0.7);

        \path[fill=black!10,rounded corners] (rectb)
        rectangle (4.25,-0.4);
        
        \path[fill=black!10,rounded corners] (rectc)
        rectangle (4.25,-1.5);
      \end{pgfonlayer}

    \end{tikzpicture}

    \caption{Reduction from a \MC{} instance with the terminals~$t_1,t_2$,
      and~$t_3$ to \DAGP{}. The top shows an instance~$I_1$ of
      \MC{}, where the dotted edges are a \mcset{} of size~$k_1=3$. The
      bottom shows the corresponding instance~$I_2$ of \DAGP{},
      where the dotted arcs are a corresponding \partSet{} of
      size~$k_2=k_1+2(n-3)=7$ ($n$~is the number of vertices in the
      graph of the \MC{} instance). The constructed vertex
      sets~$\Veins{},\Vzwei{},$ and $\Vdrei{}$ are highlighted using a gray
      background.}
    \label{fig:mc-to-dagp}
  \end{figure}

  \begin{enumerate}
  \item Add three vertices $s_1,s_2,s_3$ to~$G_2$, forming the vertex
    set~$\Veins{}$,
  \item add each vertex of~$G_1$ to~$G_2$, forming the vertex
    set~$\Vzwei{}$,
  \item for each edge~$\{u,v\}$ of~$G_1$, add a vertex $e_{\{u,v\}}$
    to~$G_2$, forming the vertex set~$\Vdrei{}$,
  \item for each terminal $t_i \in T$, add the arc~$(t_i,s_i)$ to~$G_2$,
  \item for each vertex~$v \in \Vzwei{} \setminus T$, add the
    arcs~$(v,s_i)$ for $i = 1,2,3$ to~$G_2$, and
  \item for each edge~$\{u,v\}$ of~$G_1$, add the arcs~$(e_{\{u,v\}},
    u)$ and~$(e_{\{u,v\}}, v)$ to~$G_2$.
  \end{enumerate}
  Set $k_2 = k_1 + 2(n-3)$, where $n$~is the number of vertices
  of~$G_1$. We claim that $I_1$ is a yes-instance if and only if $I_2$
  is a yes-instance.

  First, suppose that there is a \mcset{}~$\Sol{}$ of size at most~$k_1$
  for~$G_1$. Then, we obtain a \partSet{} of size at most~$k_2$ for
  $G_2$ as follows: if a vertex~$v$ belongs to the same connected
  component of $G_1\setminus\Sol{}$ as terminal~$t_i$, then remove every
  arc $(v,s_j)$ with~$j \neq i$ from~$G_2$. Furthermore, for each
  edge~$\{u,v\} \in \Sol{}$, remove \emph{either} the arc~$(e_{\{u,v\}},
  u)$ or the arc~$(e_{\{u,v\}}, v)$ from~$G_2$. One can easily check
  that we end up with a valid \partSet{} of size $k_2=k+2(n-3)$ for~$G_2$:
  we delete at most $k$~arcs from~$\Vdrei{}$ to~$\Vzwei{}$ and, for each
  of the $n-3$~vertices in~$\Vzwei{}\setminus T$, delete two arcs
  from~$\Vzwei{}$ to~$\Veins{}$. There are no arcs from~$\Veins{}$
  to~$\Vdrei{}$.

  Conversely, suppose that we are given a minimal \partSet{}~$\Sol{}$ of
  size at most~$k_2$ for~$G_2$. Note that it has to remove at least two
  of the three outgoing arcs of each vertex~$v_2 \in \Vzwei{}\setminus
  T$ but cannot remove all three of them: contrary to
  \autoref{lem:no_new_sinks}, this would create a new sink. Thus,
  $\Sol{}$~deletes $2(n-3)$~arcs from~$\Vzwei{}$ to~$\Veins{}$ and the
  remaining $k_2-2(n-3)=k_1$~arcs from~$\Vdrei{}$
  to~$\Vzwei{}$. Therefore, we can define the following \mcset{} of
  size~$k_1$ for $G_1$: remove an edge $\{u,v\}$ from~$G_1$ if and only
  if one of the arcs~$(e_{\{u,v\}}, u)$ and~$(e_{\{u,v\}}, v)$ is
  removed from~$G_2$ by~$\Sol{}$.  Again, one can easily check that we
  end up with a valid \mcset{}.

  \begin{figure}[t]
    \begin{center}
      \begin{tikzpicture}[shorten >= 0.5mm]
        \tikzstyle{vertex}=[circle,draw,fill=black,minimum size=3pt,inner sep=0pt]
        \tikzstyle{neighbor}=[circle,draw,fill=black,minimum size=3pt,inner sep=0pt]
        \tikzstyle{annot} = [text width=4em, text centered]
        
        \node[neighbor, label=below:$s_1$] (U-1) at ( 0, 2) {};
        \node[neighbor, label=below:$s_2$] (U-2) at ( 1, 2) {};
        \node[neighbor, label=below:$s_3$] (U-3) at ( 2, 2) {};
        
        \node[neighbor, label=right:$v$] (V) at (1,4) {};

        \node[neighbor, label=above:$w_1$] (B-0) at (-0.5,8) {};
        \node[neighbor, label=above:$w_2$] (B-1) at (0.5,8) {};
        \node[neighbor, label=above:$w_3$] (B-2) at (1.5,8) {};
        \node[neighbor, label=above:$w_4$] (B-3) at (2.5,8) {};

        \draw[->] (B-0) to[out=-90,in=126] (V);
        \draw[->] (B-1) to[out=-90,in=102] (V);
        \draw[->] (B-2) to[out=-90,in=78] (V);
        \draw[->] (B-3) to[out=-90,in=54] (V);

        \draw[->] (V) to[out=-126,in=90] (U-1);
        \draw[->] (V) edge (U-2);
        \draw[->] (V) to[out=-54,in=90] (U-3);
      \end{tikzpicture}\hspace{2cm}
      \begin{tikzpicture}[node distance=1cm, shorten >= 0.5mm]
        \tikzstyle{vertex}=[circle,draw,fill=black,minimum size=3pt,inner sep=0pt]
        \tikzstyle{neighbor}=[circle,draw,fill=black,minimum size=3pt,inner sep=0pt]
        \tikzstyle{annot} = [text width=4em, text centered]

        \node[neighbor, label=above:$w_1$] (C-1) {};
        \node[neighbor, label=above:$w_2$] (C-2) [right of=C-1] {};
        \node[neighbor, label=above:$w_3$] (C-3) [right of=C-2] {};
        \node[neighbor, label=above:$w_4$] (C-4) [right of=C-3] {};

        \node[neighbor, label=left:$w_2'$] (C-5) [below of=C-2] {};
        \node (C-6-1) [below of=C-3] {};
        \node[neighbor, label=left:$w_3'$] (C-6) [below of=C-6-1] {};
        \node (C-7-1) [below of=C-4] {};
        \node (C-7-2) [below of=C-7-1] {};
        \node[neighbor, label=left:$w_4'$] (C-7) [below of=C-7-2] {};

        \node[neighbor, label=left:$v$] (C-8) [below of=C-7] {};

        \node (C-12-1) [below of=C-8] {};
        \node[neighbor, label=below:$s_3$] (C-12) [below of=C-12-1] {};

        \node[neighbor, label=left:$v'$] (C-9)  [left of=C-12-1] {};

        \node[neighbor, label=below:$s_2$] (C-11) [below of=C-9] {};
        \node[neighbor, label=below:$s_1$] (C-10) [left of=C-11] {};

        \node[annot, xshift=0.5cm] (T) [right of=C-9] {$T_{v}$};

        \foreach \src / \dest in {1/5, 2/5, 3/6, 4/7, 5/6, 6/7, 7/8, 8/9, 8/12, 9/10, 9/11}
        \draw[->] (C-\src) edge (C-\dest);

        \begin{pgfonlayer}{background}

          \path[fill=black!10,rounded corners]
          (0.5,-3.5) rectangle (4,-6.5);
        \end{pgfonlayer}

      \end{tikzpicture}
    \end{center}
    \vspace{-15pt}
    \caption{Reduction of the degree of a vertex~$v$ to three. On the
      left, the original neighborhood of~$v$ is shown. The right side
      shows~$v$ after modification. The tree structure~$T_{v}$
      constructed in the proof is highlighted using a gray background.}
    \label{fig:tree-structure}
  \end{figure}
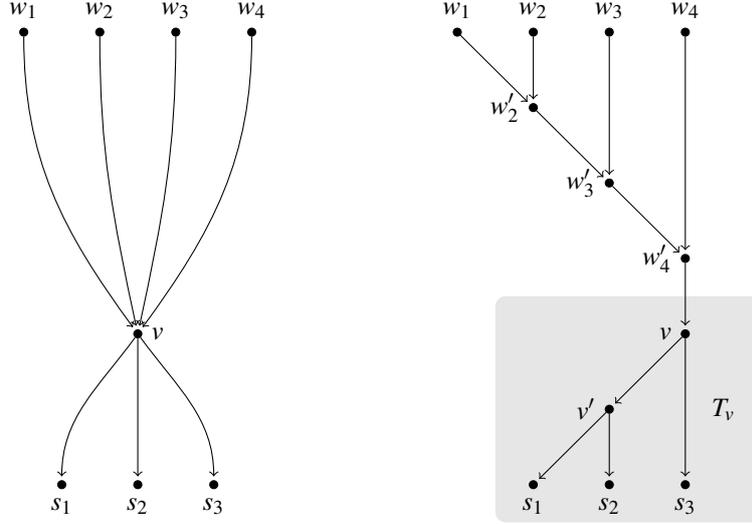
      
  It remains to modify the instance $I_2=(G_2,\wei_2,k_2)$ to get an
  instance $I_3=(G_3,\wei_3,k_3)$ of maximum degree three.  To this end,
  first we show how to reduce the outdegree of each vertex of~$G_2$ to
  two.  Thereafter, we show how to reduce the indegree of each vertex of~$G_2$
  to one by introducing gadget vertices, each having
  indegree two and outdegree one.  The construction is illustrated in
  \autoref{fig:tree-structure}.

  Note that all vertices of~$G_2$ with outdegree larger than two are
  in~$\Vzwei{}$. In order to decrease the degree of these vertices, we
  obtain a graph~$G_3'$ from~$G_2$ by carrying out the following
  modifications (see \autoref{fig:tree-structure}) to~$G_2$: for each
  vertex~$v \in \Vzwei{}$, with $N^+(v) = \{s_1,s_2,s_3\}$, remove
  $(v,s_1)$ and $(v,s_2)$ add a new vertex~$v'$, and insert the three
  arcs~$(v,v')$, $(v',s_1)$, and~$(v',s_2)$.

  \looseness=-1We show that~$(G_3',\wei_3',k_2)$ is a yes-instance if and only
  if~$(G_2,\wei_2,k_2)$~is. To this end, for~$v\in\Vzwei{}$, let $T_{v}$~be the
  induced subgraph~$G_3'[\{v,v', s_1,s_2,s_3\}]$.  In~$G_2$, a
  minimal \partSet{} removes exactly two of the outgoing arcs of~$v$,
  since~$s_1,s_2$, and~$s_3$ are sinks.  It is enough to show that a
  minimal \partSet{}~$S$ removes exactly two arcs in~$T_{v}$ from~$G_3'$
  in such a way that there remains exactly one \dpath{} from~$v$ to exactly
  one of~$s_1,s_2$, or~$s_3$. This remaining \dpath{} then one-to-one
  corresponds to the arc that a \partSet{} would leave in~$G_2$
  between~$v$ and~$s_1,s_2$, or~$s_3$.  Since $s_1,s_2$, and~$s_3$ are
  sinks, $\Sol{}$ indeed has to remove at least two arcs from~$T_{v}$:
  otherwise, two sinks will belong to the same connected component.
  However, due to \autoref{lem:no_new_sinks}, $\Sol{}$ cannot remove
  more than two arcs from~$T_{v}$.  Moreover, again exploiting
  \autoref{lem:no_new_sinks}, the two arcs removed by~$\Sol{}$ leave a
  single \dpath{} from~$v$ to exactly one of~$s_1,s_2$, or~$s_3$.

  We have seen that $(G_3',\wei_3',k_2)$ is equivalent to~$(G_2,\wei_2,k_2)$ and that
  all vertices of~$G_3'$ have outdegree two. To shrink the overall
  maximum degree to three, it remains to reduce the indegrees. 
  Note that the vertices newly introduced in the previous step already have indegree one. 
  We obtain
  graph~$G_3$ of maximum degree three from~$G_3'$ as follows.  For each
  vertex~$v$ with $|N^-(v)| = |\{w_1,\ldots,w_{d^-(v)}\}| \geq 2$, do
  the following (see \autoref{fig:tree-structure}): for $j = 2, \ldots,
  d^-(v)$, remove the arc~$(w_j,v)$ and add a vertex~$w_j'$ together
  with the arc~$(w_j,w_j')$. Moreover, add the arcs $(w_{1},w_{2}')$,
  $(w_{d^-(v)},v)$, and~$(w_j',w_{j+1}')$ for each $j\in\{2, \ldots,
  d^-(v)-1\}$.  Now, every vertex of~$V(G_3')$ in~$G_3$ has indegree
  one and outdegree two, while the newly introduced vertices have
  indegree two and outdegree one. It follows that all vertices in~$G_3$
  have degree at most three. 
  
  It remains to show that~$(G_3,\wei_3,k_2)$ is a yes-instance if and
  only if~$(G_3',\wei_3',k_2)$ is. It then follows that
  $(G_3,\wei_3,k_2)$ is a yes-instance if and only if~$(G_2,\wei_2,k_2)$
  is.  To this end, note that by \autoref{lem:no_new_sinks}, among the
  introduced arcs, only the arcs~\(w_1,w'_2\) and~$(w_j,w_j')$ can be
  removed by a minimal \partSet{}.  From this, there is a one-to-one
  correspondence between deleting the arc \(w_1,w'_2\) or $(w_j,w_j')$
  in the graph~$G_3$ and deleting the arc~$(w_j,v)$ in the
  graph~$G_3'$. \qed
\end{proof}

\section{Outlook}\label{sec:outlook}

We have presented two fixed-parameter algorithms for \DAGP, one with
respect to the weight~$k$ of the \partSet{} sought and one with respect
to the parameter treewidth~$\tw$.

We demonstrated the feasibility of applying the fixed-parameter algorithm for the
parameter~$k$ (\autoref{alg:simple-st}) to large input instances with optimal
\partSet{}s of small weight. However, we were unable to solve the
instances in the data set of~\citet{LBK09}, since the weight of
optimal \partSet{}s is too large. We found out that the heuristic
presented by \citet{LBK09} finds nearly optimal \partSet{}s on the
instances that our algorithm works on best. However, we have also seen
that one does not have to specially craft adversarial instances to make
the heuristic perform badly.  Surprisingly, our algorithm for \DAGP{} is
much simpler and more efficient than the algorithm for \MC{} by
\citet{Xia10}, although \MC{} is much easier to approximate than \DAGP{}
\citep{KKS+04, AM12}.

On the theoretical side, we improved a fixed-parameter algorithm by
\citet{AM12} such that the running time now depends on the treewidth of the input graph rather than on its pathwidth. 
However, our algorithm, as well as the algorithm
of~\citet{AM12}, are practically inapplicable.

Towards solving the instances of \citet{LBK09} exactly in reasonable
time, a challenging task would be to analyze the data used by
\citet{LBK09} in order to find parameters that are small \emph{and} make
\DAGP{} fixed-parameter tractable, that is, to take a data-driven
approach to parameterizing \DAGP{}.

\section*{Acknowledgments}
\noindent  We are thankful to Rolf Niedermeier and to the anonymous referees
of CIAC'13 and \emph{Discrete Applied Mathematics} for helpful comments.

René van Bevern acknowledges supported by the Russian Foundation for
Basic Research (RFBR), project~16-31-60007
mol\textunderscore{}a\textunderscore{}dk, while working at Novosibirsk
State University, and by the German Research Foundation (DFG), project
DAPA (NI 369/12), while working at TU Berlin. Robert Bredereck
acknowledges support by DFG project PAWS (NI~369/10). Morgan Chopin was
supported by the DAAD during a three-month visit to TU~Berlin in
summer~2012.  Falk Hüffner acknowledges support by DFG projects PABI
(NI~369/7) and ALEPH (HU~2139/1).  Ondřej Suchý acknowledges support by
DFG project AREG (NI~369/9) while at TU~Berlin and by the Czech Science
Foundation, project 14-13017P.

\section*{References}
\bibliographystyle{abbrvnat}
\bibliography{dag}
\end{document}